\theoremstyle{definition}
\newtheorem{definition}{Definition}
\newtheorem{assumption}{Assumption}
\newtheorem{examplex}{Example}[section]
\newenvironment{conditionp}[1]{
  
  \conditionalt
}{\endconditionalt}
\theoremstyle{remark}
\newtheorem*{remark}{Remark}
\newtheorem{claim}{Claim}
\theoremstyle{plain}
\newtheorem{theorem}{Theorem}
\newtheorem{lemma}{Lemma}
\newtheorem{proposition}{Proposition}
\newtheorem{corollary}{Corollary}
\popQED\end{examplex}}
\newcommand{\E}{\mathbf{E}}
\newcommand{\pr}{\mathbf{P}}
\newcommand{\bbR}{\mathbb{R}}
\newcommand{\calA}{\mathcal{A}}
\newcommand{\calB}{\mathcal{B}}
\newcommand{\calC}{\mathcal{C}}
\newcommand{\calD}{\mathcal{D}}
\newcommand{\calF}{\mathcal{F}}
\newcommand{\calJ}{\mathcal{J}}
\newcommand{\calM}{\mathcal{M}}
\newcommand{\calQ}{\mathcal{Q}}
\newcommand{\calT}{\mathcal{T}}
\newcommand{\tarc}{\mbox{\large$\frown$}}
\newcommand{\arc}[1]{\stackrel{\tarc}{#1}}
\newcommand{\indicate}[1]{{\bf 1}\left[#1\right]}
\newcommand{\nullset}{\varnothing}
\newcommand{\compl}{\textsc{c}}
\DeclareMathOperator*{\argmin}{arg\,min}
\newcommand{\feature}{x}
\newcommand{\features}{\boldsymbol{\feature}}
\newcommand{\Featurespace}{X}
\newcommand{\orifeature}{\feature^0}
\newcommand{\orifeatures}{\features^0}
\newcommand{\firstfeatures}{\features^1}
\newcommand{\secondfeatures}{\features^2}
\newcommand{\dist}{\calF}
\newcommand{\dists}{\boldsymbol{\dist}}
\newcommand{\density}{\boldsymbol{f}}
\newcommand{\genericfeature}{z}
\newcommand{\genericfeatures}{\boldsymbol{\genericfeature}}
\newcommand{\qualregion}{H}
\newcommand{\ballcenter}{\boldsymbol{o}}
\newcommand{\weight}{w}
\newcommand{\weights}{\boldsymbol{\weight}}
\newcommand{\classifier}{h}
\newcommand{\Line}{L}
\newcommand{\strategy}{\sigma}
\newcommand{\strategies}{\boldsymbol{\strategy}}
\newcommand{\cost}{c}
\newcommand{\onecost}{C}
\newcommand{\mc}{\eta}
\newcommand{\metric}{d}
\newcommand{\util}{u}
\newcommand{\test}{t}
\newcommand{\probprincipal}{q}
\newcommand{\disclose}{D}
\newcommand{\simultaneous}{S_0}
\newcommand{\sequential}{S_1}
\newcommand{\lineartest}{\calT_{\ell}}
\newcommand{\twolineartest}{\calT_{s}}
\newcommand{\setperp}{\calJ} 
\newcommand{\setO}{\calB} 
\newcommand{\setonetwo}{\calC} 
\newcommand{\settwoone}{\calD} 
\newcommand{\manipulation}{\calM}
\newcommand{\qualified}{\calQ}
\newcommand{\symmetric}{\boldsymbol{\text{Sym}}}
\newcommand{\conv}{\textbf{conv}}
\newcommand{\pointonetwo}{E}
\newcommand{\pointtwoone}{G}
\newcommand*{\thisdraft}{This draft: } 
\title{Multi-dimensional Test Design\thanks{Xiaoyun Qiu thanks Asher Wolinsky, Bruno Strulovici, and Wojciech Olszewski for their advice, time, and constructive suggestions. She also acknowledges Abhishek Sarkar for inspiring conversations throughout the development of the project, and Junko Oguri, Matthew Thomas, Siqi Zheng, Yoshimasa Katayama, Yuichiro Tsuji for providing domain expertise on the applications.  
We also thank  Nemanja Antic, Ian Ball, Federico Bugni, Kwok Yan Chiu, Hershdeep Chopra, Jose Higueras Corona, Eddie Dekel, David Dranove, Piotr Dworczak, Jeff Ely, Guillaume Gex, Yingni Guo, Igal Hendel, Panagiotis Kyriazis, Joshua Mollner, Kyohei Okumura, Alessandro Pavan,  Harry Pei,  Marciano Siniscalchi, Alex Smolin, Alison Zhao for helpful suggestions, comments, and discussions. All errors are our own.}
}
\author{Xiaoyun Qiu\thanks{Northwestern University. Email: \texttt{xiaoyun.qiu@u.northwestern.edu}},
Liren Shan\thanks{Toyota Technological Institute at Chicago. Email: \texttt{lirenshan@ttic.edu}}}
\date{\thisdraft \today}
\begin{document}

\maketitle
\begin{center}
\Large\href{https://drive.google.com/file/d/1lUeOxCeY-h_rutxn6S_oWLtbuwvSkTwy/view?usp=sharing}{[Click here for the latest version]}
\end{center}

\begin{abstract}
   How should one jointly design tests and the arrangement of agencies to administer these tests (testing procedure)?
  To answer this question, we analyze a model where a principal must use multiple tests to screen an agent with a multi-dimensional type, knowing that the agent can change his type at a cost.
  We identify a new tradeoff between setting difficult tests and using a difficult testing procedure. 
  We compare two settings: (1) the agent only misrepresents his type (manipulation) and (2) the agent improves his actual type (investment).
  Examples include interviews, regulations, and data classification.
  We show that in the manipulation setting,   \emph{stringent} tests combined with an easy procedure, i.e., offering tests sequentially in a fixed order, is optimal.
  In contrast, in the investment setting,  \emph{non-stringent} tests with a difficult procedure, i.e., offering tests simultaneously, is optimal;
  however, under mild conditions offering them sequentially in a random order may be as good. Our results suggest that whether the agent manipulates or invests in his type determines which arrangement of agencies is optimal.
\end{abstract}


\newpage
 \section{Introduction}
Assessing job candidates through interviews, evaluating banks' financial health, judging the legitimacy of firms' activities, and classifying data with algorithms are different forms of testing.
Testing is an imperfect means to evaluate another party's information, especially when such information is unverifiable. 
Such situations arise when the interested party can fabricate their type at a cost \citep{perez2022test,perez2024score,li2023screening}.
This type of information is termed semi-hard, lying between hard information, which cannot be fabricated \citep{green1986partially}, and soft information, which can be fabricated at no cost.\footnote{This is usually the case in classic mechanism design settings and in cheap-talk communication settings where each agent can freely report their own type.}

Within the context of testing, many real-world scenarios involve multiple principals, each with different criteria and the power to veto the participant being tested.
Examples include a hiring committee with multiple members and multiple agencies collaborating on regulations. 
  Monetary transfer is typically unavailable. 
  More realistically, the principal(s) can use two tools that are more restricted: (1) tests, which establish the standards for what is deemed acceptable, and (2) a testing procedure, which outlines how tests are conducted, either simultaneously (all tests conducted together) or sequentially (tests conducted one after another).  For example, governments regulate firms to ensure they act in the best interests of both consumers and shareholders. This regulatory task can be carried out either by a single agency evaluating both criteria simultaneously or by two separate agencies, each focusing on one criterion and evaluating them sequentially.\footnote{For example, the Japan Financial Services Agency (FSA) works in coordination with the the Japan Fair Trade Commission (JFTC) to ensure that M\&A activities are compliant with both securities and antitrust laws.}\footnote{This also includes the special case of one principal with multiple criteria. For example, data classification with multiple classifiers and bank regulation, where the regulator uses leverage and liquidity ratios to assess banks' financial health. In Europe, banks report these ratios separately, whereas in the U.S., banks report them together.}

Given this, we ask: what are the optimal tests and testing procedure to select a strategic participant meeting multiple criteria? Specifically, the choice of testing procedures can also be viewed as the arrangement of agencies responsible for administering the tests. While \citet{perez2022test} have studied the optimal test when both the screening criterion (or the participant's type) and the test are one-dimensional, an interesting trade-off concerning the choice of testing procedure arises when both the type and the tests are multi-dimensional.\footnote{The problem is trivial when only one side (either the type or the tests) is multi-dimensional. See \cref{sec:discussion}.}

While manipulation, either by costly lying or by costly fabricating the truth, is common in many of these scenarios,\footnote{Examples include job applicants tailoring their profiles to stand out in interviews, firms misrepresenting their impacts to bypass regulatory scrutiny, or banks involving in `window dressing' by temporarily adjusting their financial statements right before the reporting dates to appear safer than they actually are. See \citet{perez2022test,perez2024score, hardt2016strategic} for more examples.} it is also common for participants to invest in improving their actual types. For instance, job candidates seeking promotion acquire new skills, while firms and banks may restructure their business models to meet regulatory requirements.
Therefore, we compare the optimal policy under two different technologies: (1) the participant misrepresents themselves (\emph{manipulation}), or (2) the participant improves their actual type (\emph{investment}).\footnote{In \cref{sec:discussion}, we consider allowing the participant to choose between manipulation and investment. We argue that it does not add to the analysis. }




In practice, we observe that  the arrangement of agencies is different in different settings. We argue that the critical difference is whether participants manipulate or invest in their type. To show this, we build on a stylized principal-agent framework introduced by \citet{zigzag} and conduct a comprehensive mechanism design analysis. 
The principal (hereafter she) only wants to select an agent (he) whose true (two-dimensional) type meets her criteria.
The principal uses two tests to judge the agent's true type, together with a testing procedure, which could be simultaneous, sequential with a fixed order or sequential with a random order combined with a disclosure decision.\footnote{\citet{zigzag} did not study all possible sequential mechanisms. In contrast, we extend the analysis to the entire class of sequential mechanisms (\cref{sec: distance cost}).}
The principal commits to select the agent if and only if he passes both tests.
The agent seeks to be selected and could change his type at a type-dependent cost before each test either by manipulation or by investment. When the type is changed multiple times, the cost adds up.

Within this framework, we first look for the optimal sequential mechanism.
Consider the manipulation setting.
Intuitively, to exclude unqualified agent, the principal could use either tests stricter than her true criteria (\emph{stringent} tests) or a testing procedure that makes manipulation more costly (a \emph{stringent} procedure), given that every agent can pay a type-dependent cost to inflate his type.
For example, a procedure with fixed order of tests is less stringent than one that randomizes over the sequence of tests.\footnote{In the evidence literature \citep[e.g.,][]{glazer2004optimal, sher2014persuasion}, randomization takes a different form and usually refers to which attribute to check because the principal can check only one attribute.}
The reason is as follows.
In any sequential mechanism, the agent can choose a strategy to pass one test (potentially failing another) at a time. 
While the way the agent manipulates his type to pass one test is quite different from that to pass another test, such a strategy results in a lower chance of selection when  the order of tests is sometimes secretly reversed.
Hence, the agent may switch to passing both tests together under randomization, which is more costly but guarantees selection.
Similarly, not disclosing the first (realized) test after it takes place makes a random-order procedure even more stringent because the agent has to accomplish the second test even after he fails the first one.
While \citet{perez2022test,zigzag} show that stringent tests can enhance outcomes by shutting down the channel of testing procedures, it is unclear how tests and a testing procedure interact and which combination is optimal.
 
We show that a fixed-order sequential mechanism with stringent tests is optimal in the manipulation setting (\cref{thm: optimal max qualified}).
This is because any testing procedure has to be combined with stringent tests to exclude any unqualified agent.
However, stringent tests also exclude some qualified agents, which needs to be compensated by a less stringent procedure.
This is a simplified intuition because it separates the choice of tests and testing procedures.
In general, the set of tests available under random-order procedures is larger than that under fixed-order procedures.  
 In the formal analysis, we offer a new constructive argument to show that, given any feasible random-order mechanism with arbitrary tests, there exists a feasible fixed-order mechanism that dominates it.
 To show dominance, we introduce a feasible mechanism that clearly dominates the random-order mechanism using a set inclusion argument and is dominated by a fixed-order mechanism using a probability argument.
 The key challenge comes from constructing such a feasible mechanism, which will be discussed in details in \cref{sec: distance cost}.
  We further show that this result generalizes to  \emph{perfect tests} (tests that output the agent's attributes; \cref{thm: sequential perfect tests}), and that when cheap-talk communication is allowed upfront, a fixed-order sequential mechanism can achieve the first best (\cref{thm: optimal max qualified cheap talk}).

 In contrast,  we show that under mild conditions, a random-order mechanism using  tests that coincide with the true criteria (\emph{non-stringent} tests) can achieve the first best in the investment setting (\cref{thm: true effort sequential}).
As discussed earlier, the agent may switch to passing both tests together facing randomization.\footnote{Whether to disclose the first test to the agent also affects his preference over different strategies. Detailed discussions are in \cref{sec: distance cost}.}
When this happens and when \emph{non-stringent} tests are used, the agent is selected if and only if he invests to some type satisfying the true criteria.
We show that this result generalizes to other cost functions (\cref{thm:optimal sequential metric investement}).

The second question we ask is whether the simultaneous procedure can achieve a better outcome than any sequential one.
Consider bank regulations where the central bank sets minimum requirements on liquidity and leverage ratios.
Suppose the liquidity ratio is tested before the leverage ratio.
A bank can temporarily acquire more liquid assets before the first test to increase the liquidity ratio and after passing the first test, it can get rid of extra liquid assets to boost its leverage ratio for the second test.
However, if the two tests are conducted simultaneously, such a manipulation strategy would not be available.
This suggests that a simultaneous procedure is more stringent than a sequential one.
\citet{zigzag} argue that in the manipulation setting, there is a fixed-order sequential mechanism better than any simultaneous mechanism. We give a more precise proof of this claim (\cref{lem:fix-simultaneous-distance cost}).
On the other hand, we prove that in the investment setting, a simultaneous procedure combined with non-stringent tests can achieve the first best (\cref{thm:optimal investment}).
Since the agent is forced to pass both tests together in any simultaneous mechanism, for the same reason discussed earlier, everyone selected becomes qualified under non-stringent tests.
We also show that these results generalize to general cost functions (\cref{thm:opt_manipulation}), \emph{perfect tests} and convex qualified region (\cref{thm: perfect tests arbitrary qualified region}), and an alternative design objective (\cref{prop:Pii manipulation} and \ref{prop:Pii investment}).

To summarize, we study the optimal tests and testing procedure involving multiple criteria. 
We conduct a non-traditional mechanism design analysis in the context of testing. 
Unlike the classic framework, we study how the stringency of the tests interact with the testing procedure. 
Our theoretical contribution is a new constructive argument to tackle this problem.

 \section{Model}
A principal (hereafter she) is going to test whether an agent (hereafter he)
has desired attributes. Initially, the agent's attributes (or type) are $\orifeatures=(\orifeature_{A},\orifeature_{B})\in \Featurespace = 
\mathbb{R}
^{2}$. The attributes are privately known by the agent. The principal only
knows their distribution: $\orifeatures\sim \dists$. The agent can change his attributes to $\features=(\feature_{A},\feature_{B})$
at a cost $\onecost(\orifeatures,\features)$. In large parts of the paper, we will assume this cost
is the Euclidean distance between $\orifeatures$ and $\features$. We model such a change in two ways: In the \textit{investment}
setting,  $\features$
become the agent's new \textit{true} attributes.  In the \textit{manipulation}
setting,  the agent's true attributes stay unaltered at $\orifeatures$ and $\features$ are fake attributes, acquired only for the purpose of testing. 

The principal is interested in learning  the agent's true attributes belong to
a set $\qualregion\subset 
\mathbb{R}
^{2}$. If they do, we will say that the agent is \textit{qualified}. For
most of the analysis,  we assume that the principal has two criteria $\classifier_A$ and $\classifier_B$.
Each $\classifier_i$ is a linear constraint on the agent's attributes, which can be represented by a half plane.
Hence, the qualified region is $\qualregion=\classifier_{A}\cap \classifier_{B}$. 
The half planes are marked in red and blue in \cref{fig: hiring}. 
Let $\theta $ stand for the angle between the lines that
bound the two half planes. 

We will assume that the principal can choose two tests $\widetilde{h}_{A}$ and $\widetilde{h}_{B}$. 
Each test $\widetilde{h}_{i}$ is a linear constraint (half plane) that specifies which attributes pass the test. 
Define $\widetilde{\qualregion}=\widetilde{h}_{A}\cap \widetilde{h}_{B}$%
, and we will study two kinds of testing mechanisms:

\paragraph{Simultaneous mechanisms} determine whether $\features\in \widetilde{H}$.
 Recall that this $\features$ are true attributes in the investment  setting; in the manipulation setting, $\features$ are fake attributes.

\paragraph{Sequential mechanisms} are somewhat richer and more involved. Let $(M,s)$ denote a sequential mechanism, where $M$ is a set (of messages) and $s: M\rightarrow\sequential$ is the set of all possible sequential testing procedures. 
Suppose first $M=\nullset$, meaning no upfront communication is allowed.
In
each such a mechanism, the principal selects whether test $\widetilde{h}_{A}$
or test $\widetilde{h}_{B}$ will be performed first. This choice can be
random. Let the triple $(\widetilde{h}_{A},\widetilde{h}_{B},q)$ denote the
mechanism that applies test $\widetilde{h}_{A}$ as first with probability $q$%
, and test $\widetilde{h}_{B}$ as first with the complementary probability.\footnote{Note that $(\widetilde{h}_{A},\widetilde{h}_{B},q)=(\widetilde{h}_{B},%
\widetilde{h}_{A},1-q)$.}
 Test $\widetilde{h}_{i}$, $i=A,B$, determines
whether the agent's attributes at the moment of taking the test belong to the
half plane $\widetilde{h}_{i}$. 
If $q\neq 0,1$, the agent does not know which test will be applied first.
After the first test takes place, the principal can decide whether to disclose this information.
Let $\widetilde{h}_{1}$ denote the first realized test.
Let $s= (\widetilde{h}_{A},\widetilde{h}_{B},q,\disclose)$ denote the sequential mechanism with disclosure decision after the first test $\disclose\in \{\widetilde{h}_{1},\nullset\}$, where
 $\disclose=\widetilde{h}_{1}$ ($\disclose=\nullset$) means the mechanism does (not) disclose the first test after it takes place. 

We allow the agent to change her attributes twice: (1) from $\orifeatures$ to $\firstfeatures$
before taking the first test, and (2) from $\firstfeatures$ to $\secondfeatures$ before taking
the second test. Again, $\firstfeatures$ and $\secondfeatures$ are true attributes in the
investment setting, but are only fake attributes in the manipulation setting.
If $q\neq 0,1$, the agent  chooses $\firstfeatures$ when he knows that the first test will be $\widetilde{h}_{A}$
with probability $q$. 
In the case of disclosure $\disclose=\widetilde{h}_{1}$, he knows which test will be applied
second when he chooses $\secondfeatures$. 
In the case of no disclosure $\disclose=\nullset$, he only knows  test  $\widetilde{h}_{B}$ will be applied
second with probability $q$ when he chooses $\secondfeatures$.
The cost of changing her attributes from $\orifeatures
$ to $\firstfeatures$ and then from $\firstfeatures$ to $\secondfeatures$ is denoted by $%
\cost(\orifeatures,\firstfeatures,\secondfeatures)$. In large parts of the paper, this cost will be
assumed to be additive $\cost(\orifeatures,\firstfeatures,\secondfeatures)= \onecost(\orifeatures,\firstfeatures)+\onecost(\firstfeatures,\secondfeatures)$.

The agent just wants to pass the tests.
His utility is 
\[
\indicate{\features\in \widetilde{\qualregion}}-\onecost(\orifeatures,\features)
\]%
for simultaneous mechanisms, and 
        \[
\E\left\{\indicate{\firstfeatures\in \widetilde h_{1}\cap \secondfeatures\in \widetilde  h_{2}}\right\}
-\cost(\orifeatures,\firstfeatures,\secondfeatures)
\]%
for sequential mechanisms. Here, $\indicate{\cdot}$ denotes the
indicator function of the set $[ \cdot ]$, $ \widetilde h_{1}$ and $ \widetilde h_{2}$ denote the
tests that are applied first and second, respectively, and the expectation
refers to the randomness in choosing $ \widetilde h_{1}$ and $ \widetilde h_{2}$.

Suppose $M\neq\nullset$, meaning there is a cheap-talk communication before the principal chooses the sequential procedure. A direct mechanism is one where $M=\Featurespace$ and a menu of sequential procedures $s=<\widetilde{h}_{A}(m),\widetilde{h}_{B}(m),q(m),\disclose(m); m\in M>$. 

 For every menu of sequential procedures $s=<\widetilde{h}_{A}(m),\widetilde{h}_{B}(m),q(m),\disclose(m); m\in M>$ and every message $\features$, define $s(\features)$ to be the sequential procedure (mechanism) chosen by the principal after seeing message $\features$.
 We assume that given a mechanism $(M,s)$, an agent of type $\features$ will choose a message such that  his utility under mechanism $s(\features)$ is maximized for all $\features\in M$.
 Moreover, once the mechanism $s(\features)$ is announced, the agent chooses a strategy to change his attributes to maximize his utility under the mechanism $s(\features)$.
 
We will explore various objectives of the principal. However, the most
interesting results will be obtained when the principal's utility is the
probability that a qualified agent (the one whose true attributes are in $\qualregion$)
passes the tests subject to the constraint that unqualified agent never
passes the tests, i.e.,
\begin{equation}\tag{$\mathcal{P}_{I}$}\label{max qualified}
    \begin{aligned}
        \max & \quad \pr[\text{selecting qualified agent}] \\
        s.t. & \quad \text{ no unqualified agent is selected};\\
    \end{aligned}
\end{equation}
This case will be studied in most  sections. The objective represents lexicographic preferences:  avoiding any unqualified agent is a primary objective, since the error of commission is much more costly than the error of omission in many economic settings, including regulation and hiring.

We will now recap the timeline of the game induced by any mechanism:

1. The agent sends a message $m\in M$ to the principal when $M\neq\nullset$.

2. Principal announces and commits to a mechanism (simultaneous or
sequential).

3. In a simultaneous mechanism, the agent chooses $\features$; A passes the test if $\features\in \widetilde{\qualregion}=\widetilde{h}_{A}\cap \widetilde{h}_{B}$, and the game ends.

4. In a sequential mechanism, 

(a) the agent chooses $\firstfeatures$;

(b) a random device chooses the order of the tests, and test $\widetilde h_{1}$ is
applied; 

(c) if $\disclose=\widetilde{h}_{1}$, the first test is announced to the agent;

(d) the agent chooses $\secondfeatures$, and test $\widetilde h_{2}$ is applied.

Finally, A passes the tests if $\firstfeatures\in \widetilde h_{1}$ and $\secondfeatures\in \widetilde h_{2}$, and the game ends.\footnote{%
Of course, if $\firstfeatures\notin \widetilde h_{1}$ and the mechanism discloses the first test, the agent fails the first test, and the
game could end at this point.}

\begin{remark}
    In any simultaneous mechanism, the agent can change his attributes once.
We call this a  \emph{one-step strategy}.
To ease notations, we assume that there is no cost of maintaining any attributes.

\begin{assumption}\label{assump: cost function one step}
	For any $\orifeatures,\firstfeatures$, $\onecost(\orifeatures,\firstfeatures)=\cost(\orifeatures,\firstfeatures,\firstfeatures)=\cost(\orifeatures,\orifeatures,\firstfeatures).$
 
\end{assumption}

\end{remark}


 \section{An example and preliminary analysis}\label{subsec:example}
In this section, we provide a stylized example to motivate our model. We discuss other applications in \cref{sec:applications}. Suppose a biotech company X is seeking candidates for a senior scientist
position. Senior scientists must balance technical expertise and domain
knowledge with managerial skills and leadership. Suppose X is looking for
candidates that have some management experience and especially strong
scientific expertise. Suppose, not unrealistically, that excessive
experience as a manager may be an indication of career goal as a manager
rather than a scientist, and even be a signal of declining technical
expertise.

We use \cref{fig: hiring} to illustrate X's requirements. The horizontal axis 
represents managerial experience and the vertical axis  represents
scientific experience. The blue vertical line represents the minimum
requirement on managerial experience, say managing one direct report. The
red diagonal line represents the joint requirement on both aspects. The purple dashed region represents the set of
attributes that X's considers as qualified for senior scientist position.

\begin{figure}[h] 
\centering 
\begin{tikzpicture}[xscale=7.8,yscale=7.8]
\draw [thick,->] (0.6,0.8) -- (1.5,0.8);
\draw [thick, ->] (0.9,0.66) -- (0.9,1.33);
\node [below] at (1.5,0.8) {$\feature_A=$managerial experience};
\node [above] at (0.9,1.33) {$\feature_B=$scientific experience};

\draw [domain=0.66:1.4, thick, red] plot (\x, {3/4*\x+1/4});
\draw [thick, blue] (1,0.66) -- (1,5.2/4);
\node [left] at (1.2,1.15) {$+$};
\node [right] at (1, 1.15) {$+$};
\node [left,font=\tiny] at (0.9, 0.78 ) {\footnotesize$O$};
\path[pattern color=purple,pattern=north west lines] (1,5.2/4) -- (1,1)-- (1.4,5.2/4); 

\end{tikzpicture}
\caption{Hiring requirements for a senior scientist position}
\label{fig: hiring}
  \end{figure}
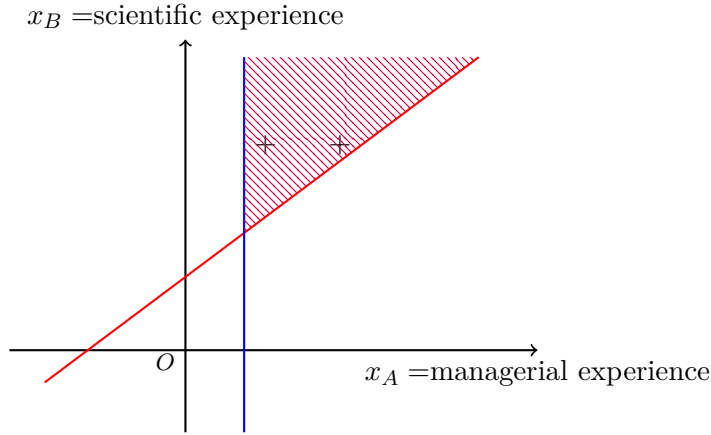
  
 Past experience in other biotech companies can reflect a candidate's
 suitability for the position. The resume provides some information, but does not fully reveal which of the two functions (manager or scientist) was the candidate's main duty. So, the company finds a verbal interview helpful to assessing the candidates' qualifications. 
Consider two types of hiring
procedures: one consists of resume screening, followed by an interview of
each candidate who was not rejected in reviewing resumes (sequential
procedure), and a joint procedure in which a committee reviews resumes and
interviews candidates at the same time (simultaneous procedure).

Candidates can obviously tailor their resumes and answers for common
interview questions to influence the interviewers' impression of their
profiles. So, we analyze this example within our manipulation setting.
Suppose that the cost of tailoring is $\onecost(\orifeatures,\features)$, where $%
\orifeatures=(x_{A}^{0},x_{B}^{0})$ are the true attributes and $\features=(\feature_{A},\feature_{B})$ are
the attributes tailored for the purpose of hiring, is equal to the Euclidean
distance between $\orifeatures$ and $\features$. Under the sequential procedure, if a
candidate chooses to tailor his resume, for example, to emphasize managerial
experience, then he is constrained during the interview. He cannot present
himself in a way that contradicts his resume. Hence, the total cost of
manipulating is $\onecost(\orifeatures,\firstfeatures)+\onecost(\firstfeatures,\secondfeatures)$, where $\firstfeatures$ stands for
attributes presented in the resume and $\secondfeatures$ stands the attributes presented
in the interview. The candidate's payoff is the probability of being hired
minus the cost of tailoring, be it one-step or two-step.

Suppose that as in our model, the company's preferences are lexicographic:
avoiding unqualified applicants is a primary objective, perhaps because
hiring an unqualified candidate is very costly. For the sake of execution,
the company must set some specific cutoffs in testing in practice. The
question the company faces is whether to use a simultaneous or a sequential
procedure (i.e., mechanism).\bigskip 

\paragraph{Preliminary analysis}
Next we provide a reflection argument, which is the key to characterize the set of attributes that are selected under any fixed-order mechanism. Suppose now the principal uses a sequential mechanism that (1) offers test $h_{A}$ at the first stage, and (2) among those
selected in (1), those whose attributes belong to $h_{B}$ will be eventually
selected.

Notice that any agent with true attributes $\orifeatures\in h_{A}\cap h_{B}$ will
not manipulate and will be selected.  Consider
any agent with unqualified true attributes $\orifeatures\notin h_{A}\cap h_{B}$.
He has now two options of getting selected: (1) to adopt attributes that are in
the qualified region, and (2) to adopt attributes $\firstfeatures\in h_{A}$ for the
first test, and after passing the first test, to adopt yet other attributes $%
\secondfeatures\in h_{B}$. To illustrate the two types of strategies, consider an agent with attributes $\orifeatures$ in \cref{fig: zig zag}. The least costly one-step
strategy that guarantees selection is to move to the  point $O$, at which
the boundaries of $h_{A}$ and $h_{B}$ intersect. The utility of adopting
this one-step strategy is $1-\onecost(\orifeatures,O)$.

\begin{figure}[t]
\centering
\begin{tikzpicture}[xscale=10,yscale=10]

\draw [domain=0.66:1.36, thick] plot (\x, {3/4*\x+1/4});
\node [left] at (1.32, 1.25 ) {$\classifier_B$};
\draw [thick] (1,0.66) -- (1,1.25);
\node [right] at (1, 1.25 ) {$\classifier_A$};
\node [left] at (1.27,1.2) {$+$};
\node [right] at (1, 1.2) {$+$};

\draw [domain=1+0.25*0.6:1.45, loosely dashed] plot (\x, {3/4*\x+1/4-5/16}); 
\draw [red,ultra thick] (1,1) -- (1+0.25*0.6,1-0.25*0.8);
\node [right] at (1+0.25*0.6,1-0.25*0.8) {\footnotesize$B$};
\node [left] at (1-0.25*0.6,1-0.25*0.8) {\footnotesize$B'$};

\draw [loosely dashed] (1-0.25,1) -- (1-0.25,1.25);

\draw [loosely dashed] (1-0.25,1) -- (1,1);
\node [left] at (1-0.25, 1 ) {\footnotesize$A$};

\draw[blue,ultra thick] (1-0.25,1) arc (180:233.1:0.25);
\draw[blue,ultra thick] (1,1) -- ++(180:0.25);
\draw[blue,ultra thick] (1,1) -- ++(233.1:0.25) ;

\draw [red,ultra thick] (1,1) -- (1-0.25*0.6,1-0.25*0.8);
\draw [ red,ultra thick] (1,11/16) -- (1-0.25*0.6,1-0.25*0.8);
\node [left] at (0.98,11/16) {\footnotesize$C$};
\draw [ red,ultra thick] (1,11/16) -- (1+0.25*0.6,1-0.25*0.8);

\draw [dotted, ultra thick] (1-0.1,1-0.2) -- (1+0.1,1-0.2);
\node [below] at (1-0.1, 1-0.2 ) {\footnotesize$\orifeatures$};
\draw [dotted,ultra thick]  (1+0.1,1-0.2) -- (1+0.1-5.45/24*0.6,1-0.2+5.45/24*0.8);
\node [below] at (1+0.1,1-0.2) {\footnotesize$\tilde\features$};
\node [left] at  (1+0.1-5.5/24*0.6,1-0.2+5.5/24*0.8) {\footnotesize$\secondfeatures$};
\draw [dotted,ultra thick] (1-0.1,1-0.2) -- (1,1-0.072);
\node [right] at (1, 1-0.072 ) {\footnotesize$\firstfeatures$};

\node [above] at (1.016, 1 ) {\footnotesize$O$};
\end{tikzpicture}
\caption{Two-step strategy: $\orifeatures\rightarrow \firstfeatures\rightarrow \secondfeatures$}
\label{fig: zig zag}
\end{figure}

What is the least costly two-step strategy such that first attributes $%
\firstfeatures\in h_{A}$ and next attributes $\secondfeatures\in h_{B}$ are attained? This question has been solved by \citet{zigzag} under the same additive Euclidean cost function.  Here we provide a geometric intuition. Observe that
the two chosen attributes $\firstfeatures$ and $\secondfeatures$ must belong to the boundary of $%
h_{A}$ and $h_{B}$, respectively. Reflect point $\orifeatures$ over line $h_{A}$.
This gives point $\widetilde{\features}$ in \cref{fig: zig zag}. Under the additive costs $%
\cost(\orifeatures,\firstfeatures,\secondfeatures)=\onecost(\orifeatures,\firstfeatures)+\onecost(\firstfeatures,\secondfeatures)$, where $\onecost(\features^{k},x^{l})$
is equal to the Euclidean distance between $\features^{k}$ and $\features^{l}$, 
\[
\cost(\orifeatures,\firstfeatures,\secondfeatures)=\cost(\widetilde{\features},\firstfeatures,\secondfeatures)\text{,}
\]%
and by the triangle inequality, $\cost(\widetilde{\features},\firstfeatures,\secondfeatures)$ is the
shortest when $\widetilde{\features}$, $\firstfeatures$ and $\secondfeatures$ are co-linear.\footnote{This is also equivalent to the following shortest path problem in optics.
Suppose there is a light emitting from point $\orifeatures$.
The light has to first reach mirror $A$ (line $A$) before it reaches mirror $B$ (line $A$). We find the shortest path in a similar way.}

Observe that 
there exists some types such that the least costly two-step strategy of the agent is always cheaper than the
least costly one-step strategy (for example, any type in $OBCB'$ in \cref{fig: zig zag}). As we have already noticed, for these types, the cost of the
former strategy is $\cost(\widetilde{\features},\firstfeatures,\secondfeatures)$ when $\widetilde{\features}$, $%
\firstfeatures$ and $\secondfeatures$ are co-linear. The cost of the latter strategy is $%
\onecost(\orifeatures,O)=\onecost(\widetilde{\features},O)$. And since points $\widetilde{\features}$, $\secondfeatures$
and $O$ form a triangle, $\onecost(\widetilde{\features},O)>\cost(\widetilde{\features},\firstfeatures,\secondfeatures)$
by triangle inequality, unless $\secondfeatures=O$ (see points $B$ and $B^{\prime }$
in \cref{fig: zig zag}), in which case $\onecost(\widetilde{\features},O)=\cost(\widetilde{\features},\firstfeatures,\secondfeatures)$%
.

Despite its equivalence to the algebraic argument provided by \citet{zigzag}, we  highlight that this geometric intuition is very convenient and powerful when characterizing the agent's best response in any sequential mechanism, especially those with randomization.

 \section{Manipulation}\label{sec: distance cost}

In this and next section, we consider the cost function $\cost$ induced by the Euclidean distance. For any two attributes $\firstfeatures, \secondfeatures$, we have $\onecost(\firstfeatures,\secondfeatures) = \eta\cdot \|\firstfeatures - \secondfeatures\|_2$, where $\eta>0$ and $\| \cdot \|_2$ is the $\ell^2$ norm. It is without loss of generality to assume that $\eta=1$, which implies that $\onecost(\firstfeatures,\secondfeatures)$ is just the Euclidean distance between $\firstfeatures$ and $\secondfeatures$. We also assume the cost is additive, which means for a path with three attributes $\orifeatures, \firstfeatures, \secondfeatures$, we have 
$\cost(\orifeatures,\firstfeatures,\secondfeatures) = \onecost(\orifeatures,\firstfeatures)+\onecost(\firstfeatures,\secondfeatures) $.

In this section, we focus on the manipulation setting.

\subsection{Optimal sequential mechanism}\label{subsec:seq manipulation}
First consider the scenario where the principal has to  conduct one test at a time.
That is, she is constrained to use sequential mechanisms.
Suppose also that tests cannot be tailored for each agent, i.e., cheap-talk communication is not possible.
Our first theorem answers the question: What is the optimal sequential mechanism? 

\begin{theorem}\label{thm: optimal max qualified}
Suppose $M=\nullset$. 
   For any distribution $\dist$, the optimal sequential mechanism uses
    \begin{enumerate}
        \item  \emph{stringent} tests $\tilde \classifier_A$ and $\tilde \classifier_B$, such that $\tilde\classifier_A \cap \tilde \classifier_B \subset \classifier_A \cap \classifier_B$;
        \item a fixed-order procedure.
    \end{enumerate}
\end{theorem}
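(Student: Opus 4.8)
The plan is to prove the two claims in sequence, handling the "stringent tests" part first and the "fixed-order procedure" part second, since the second is where the real work lies.

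\medskip

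\textbf{Step 1: Stringency of the tests.} I would start from the feasibility constraint: no unqualified agent is ever selected. Fix any feasible sequential mechanism $s=(\tilde h_A,\tilde h_B,q,D)$. I claim $\tilde h_A\cap\tilde h_B\subseteq\classifier_A\cap\classifier_B$. Suppose not; then there is a point $z\in\widetilde\qualregion\setminus\qualregion$, i.e.\ $z$ passes both tests but is unqualified. Consider an agent whose true type $\orifeatures$ is at (or arbitrarily close to) $z$ but still unqualified. Regardless of the order the tests are applied or whether the first test is disclosed, the one-step strategy "stay at $\orifeatures=z$" (or move a vanishingly small distance to $z$) passes $\tilde h_1$ and $\tilde h_2$ at cost $0$ (or arbitrarily small), yielding strictly positive utility, so the agent will get selected with probability $1$. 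This contradicts feasibility. Hence $\widetilde\qualregion\subseteq\qualregion$, which is exactly the stringency condition. (A small subtlety: if $H$ is not closed, or the unqualified set has empty interior near $z$, one needs a limiting argument; I would phrase $H=\classifier_A\cap\classifier_B$ as a closed convex region and take $\orifeatures$ in the open complement approaching $z$.)

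\medskip

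\textbf{Step 2: A fixed-order procedure is optimal.} This is the heart of the theorem and, per the introduction, the place where the "new constructive argument" is needed. I would prove: for every feasible random-order mechanism $s=(\tilde h_A,\tilde h_B,q,D)$ with $q\in(0,1)$, there is a feasible fixed-order mechanism $s'$ whose set of selected (hence, by Step 1, qualified) types contains that of $s$; since the objective is monotone in the selected set of qualified types, $s'$ weakly dominates $s$. The argument has two moves. First, using the reflection/geometric characterization from the preliminary analysis, I would characterize the selected set $\selected(s)$ under the random-order mechanism: an agent is selected iff his best response — either a one-step move into $\widetilde\qualregion$, or an order-$AB$ two-step zig-zag, or an order-$BA$ two-step zig-zag, with the $q$-weighted payoff comparison (and the $D$-dependence of what he knows when choosing $\secondfeatures$) — gives nonnegative utility and selection probability $1$ (feasibility forces probability-$1$ selection for anyone selected, because a type selected with probability strictly between $0$ and $1$ who is qualified could be pushed, and an unqualified type can never be selected with positive probability). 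Second — the constructive step — I would introduce an intermediate mechanism $\hat s$ that (a) clearly dominates $s$ by a set-inclusion argument (its selected set contains $\selected(s)$), and (b) is itself dominated by some fixed-order mechanism by a probability argument. Concretely, the natural candidate for $\hat s$ is one that keeps whichever of the two fixed orders yields the larger selected region while adjusting the tests; the selected region of a fixed-order mechanism with test order $\tilde h_A$-then-$\tilde h_B$ is exactly the "funnel" $OB C B'$-type region generated by reflecting $\tilde h_A$, and one shows this region, for a suitable choice of (possibly more stringent) tests, contains $\selected(s)$.

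\medskip

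\textbf{Main obstacle.} The delicate point is exactly the construction of the intermediate feasible mechanism $\hat s$: the set of tests available under a random-order procedure is genuinely richer than under a fixed order (because randomization lets the principal use tests that would individually admit unqualified types, relying on the order-randomization to block them), so one cannot simply "use the same tests with $q\in\{0,1\}$." One must argue that collapsing to a fixed order while simultaneously re-choosing the half-planes $\tilde h_A,\tilde h_B$ can only enlarge the selected-qualified set. I expect to handle this by a case analysis on the angle $\theta$ between the two boundary lines and on the disclosure bit $D$, using the reflection picture to compare the zig-zag funnels of the two fixed orders against the (smaller, because it must survive both orders) selected set of the randomized mechanism, and then invoke monotonicity of objective \eqref{max qualified} in the selected set of qualified types together with Step 1 to conclude. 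I would also separately dispose of the corner cases $q\in\{0,1\}$ (already fixed-order) and $D=\varnothing$ versus $D=\tilde h_1$ (no disclosure only shrinks the agent's selected set further, so it never helps relative to the best fixed-order mechanism, which has no disclosure to speak of).
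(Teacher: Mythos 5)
Your Step 1 matches the paper's treatment of stringency and is fine. Step 2, however, has a genuine gap at exactly the point the theorem turns on: you propose to exhibit, for each feasible random-order mechanism $s$, a \emph{single} fixed-order mechanism $s'$ (with possibly re-chosen tests) whose selected set contains $\selected(s)$, and to conclude by monotonicity. No such $s'$ exists in general. Under a feasible random-order mechanism the accepted types split into a branch $\setonetwo_q$ accepted with probability $q$ (two-step strategies targeting $\tilde\classifier_A$ first) and a branch $\settwoone_q$ accepted with probability $1-q$ (targeting $\tilde\classifier_B$ first), and these lie on opposite sides of the corner; a fixed-order mechanism with order $A$-then-$B$ picks up the first branch but not the second, and the paper explicitly notes (for the disclosure case) that there is \emph{no} nested structure between the selected set of either fixed order and that of the random-order mechanism. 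Your parenthetical claim that ``feasibility forces probability-$1$ selection for anyone selected'' is also false -- qualified types selected with probability $q\in(0,1)$ via a two-step strategy are entirely consistent with feasibility -- and it is precisely because such types exist that a pure set-inclusion comparison cannot close the argument.

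The paper's intermediate mechanism $\hat s$ is not a fixed-order mechanism at all: it is a \emph{mixture} that announces $(\tilde\classifier_A,\cdot,1)$ with probability $q$ and $(\cdot,\tilde\classifier_B,0)$ with probability $1-q$ (with the same tests in the disclosure case, and with one test of each component replaced by $\classifier_A'$ or $\classifier_B'$ -- the half-planes parallel to the true criteria through $\tilde O$ -- in the no-disclosure case, to restore feasibility of each component). The set-inclusion step is then branch-by-branch: $\setonetwo_q$ sits inside the accepted set of the $A$-first component, $\settwoone_q$ inside that of the $B$-first component, and the always-accepted region inside both, so $\hat s$ accepts every type with at least the probability $s$ does. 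The probability argument is the final averaging step: the better of the two component fixed-order mechanisms weakly exceeds the $q$-weighted mixture, hence dominates $s$. Your sketch collapses these two steps into one and thereby loses the $(1-q)$-branch; to repair it you would need to (i) introduce the mixture as the intermediate object, (ii) prove feasibility of each component separately (this is where the $\classifier_A',\classifier_B'$ construction is unavoidable for non-parallel tests under no disclosure), and (iii) keep the accounting in terms of acceptance probabilities rather than selected sets.
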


We call a test that is stricter than the true criterion a stringent test.
    When the agent's type is one-dimensional, there is no room for the design of testing procedure, and the notion of stringent test is a raised threshold \citep{perez2022test}.  When the agent's type is two-dimensional, the first property in \cref{thm: optimal max qualified} points out that the correct notion of stringent tests is $\tilde\classifier_A \cap \tilde \classifier_B \subset \classifier_A \cap \classifier_B$.\footnote{One might conjecture that stringent tests are two shifted parallel tests, i.e.,  $\tilde\classifier_i  \subset \classifier_i$ for $i\in \{A,B\}$. However, \cref{fig:cheap-talk} provides an example on why shifted parallel tests might not be always optimal. Depending on the distribution, the gain from using non-parallel tests could sometimes offset its loss in terms of selecting qualified agents.}
    The second property in \cref{thm: optimal max qualified} is our key innovation. It highlights that when using multiple tests to screen an agent with multi-dimensional  types, the choice of testing procedure is non-trivial: offering the tests sequentially in a fixed-order is (sometimes strictly) better than other procedures.

    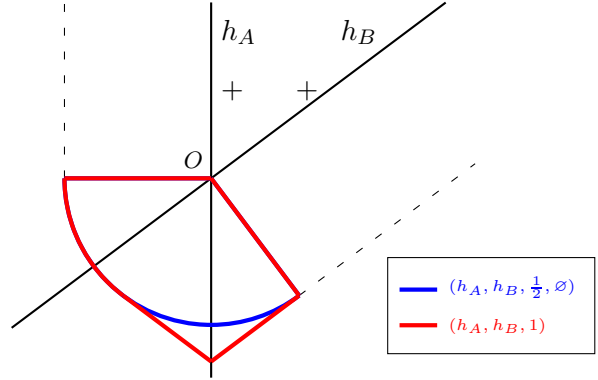
\begin{figure}
    \centering
\begin{tikzpicture}[xscale=7.8,yscale=7.8,
    pics/legend entry/.style={code={%
        \draw[pic actions] 
        (-0.25,0.25) -- (0.25,0.25);}}]

\draw [domain=0.66:1.4, thick] plot (\x, {3/4*\x+1/4});
\node [left] at (1.3, 1.25 ) {$\classifier_B$};
\draw [thick] (1,0.66) -- (1,1.3);
\node [right] at (1, 1.25) {$\classifier_A$};
\node [above,font=\tiny] at (0.97, 1 ) {\footnotesize$O$};
\node [left] at (1.2,1.15) {$+$};
\node [right] at (1, 1.15) {$+$};

\draw [domain=1+0.25*0.6:1.45, loosely dashed] plot (\x, {3/4*\x+1/4-5/16}); 

\draw [loosely dashed] (1-0.25,1) -- (1-0.25,1.3);

\draw[blue, ultra thick] (1-0.25,1) arc (180:360-53.1:0.25);
\draw[blue, ultra thick] (1,1) -- ++(180:0.25);
 \draw[blue, ultra thick] (1,1) -- ++(360-53.1:0.25) ;

\draw[red, ultra thick] (1-0.25,1) arc (180:233.1:0.25);
\draw[red, ultra thick] (1,1) -- ++(180:0.25);

\draw [ red, ultra thick] (1,11/16) -- (1-0.25*0.6,1-0.25*0.8);
\draw [red, ultra thick] (1,11/16) -- (1+0.25*0.6,1-0.25*0.8);
\draw [red, ultra thick] (1,1) -- (1+0.25*0.6,1-0.25*0.8);

\matrix [draw, above right] at (1.3,0.7) {
\pic[blue, ultra thick]{legend entry}; &  \node[blue,font=\tiny] {$(\classifier_A,\classifier_B,\frac12,\nullset)$}; \\
  \pic[red, ultra thick]{legend entry}; &  \node[red,font=\tiny] {$(\classifier_A,\classifier_B,1)$}; \\
};
\end{tikzpicture}
\caption{Stringency of procedures: random-order without disclosure vs fixed-order} 
 \rule{0in}{1.2em}$^\dag$\scriptsize In this graph, the two procedures share the same tests $\classifier_A$ and $\classifier_B$. Under the same tests, the set in blue and the set in red highlights the critical difference between the set of types that have profitable strategies to pass both tests under a random-order procedure without disclosure $(\classifier_A,\classifier_B,\frac12,\nullset)$ and that under a fixed-order procedure  $(\classifier_A,\classifier_B,1)$.
\label{fig: random uninformed vs fixed}
\end{figure}

    We call a testing procedure X is more stringent than a testing procedure Y if under the same choice of tests,  the set of types that have profitable strategies to pass both tests is larger (or has higher probability) than that under X than that under Y.  As an example, \cref{fig: random uninformed vs fixed} illustrates that a random-order procedure without disclosure is more stringent than a fixed-order procedure because of the set inclusion relation. This is true in general, regardless of the randomization probability.
    As another example, the left panel of \cref{fig: same graph ran_informed fixed 2} compares a fixed-order procedure and a random-order procedure with disclosure. Although there is no direct set inclusion relation here, we prove that the probability of the set of types that can profitably pass both tests under a random-order procedure with disclosure, i.e., the cost of passing both tests is less than one, is smaller than that under a fixed-order procedure, regardless of the randomization probability.
    Hence, a random-order procedure with disclosure is also more stringent than a fixed-order procedure.
    As a third example, we argue that a random-order procedure without disclosure is more stringent than a random-order procedure with disclosure, for the same randomization probability. Consider  point C  in the left panel of \cref{fig: same graph ran_informed fixed 2}. Since this type already passes test $\classifier_A$, he could just wait in the first round and if he is told that the first test was $\classifier_A$, he could pass the second test with a cost $1$. However, this strategy is not profitable if he does not know the first test.
    So without disclosure, this type does not have a profitable strategy to get selected since pass both tests together has a cost higher than $1$.
    
    Given that the principal can use either stringent tests or stringent testing procedure to screen the agent, the critical question is which combination is optimal. \cref{thm: optimal max qualified} states that the least stringent testing procedure combined with stringent tests are optimal.

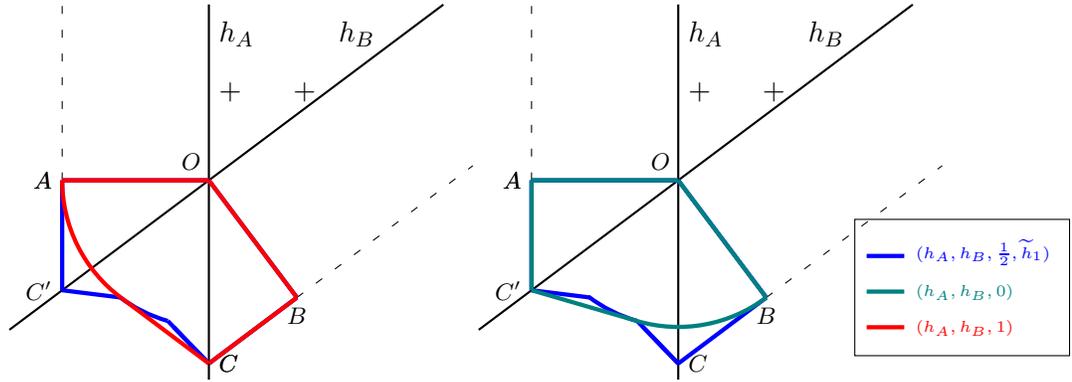
\begin{figure}
    \centering
    \begin{subfigure}[b]{0.45\linewidth}
\begin{tikzpicture}[xscale=7.8,yscale=7.8,
    pics/legend entry/.style={code={%
        \draw[pic actions] 
        (-0.25,0.25) -- (0.25,0.25);}}]

\draw [domain=0.66:1.4, thick] plot (\x, {3/4*\x+1/4});
\node [left] at (1.3, 1.25 ) {$\classifier_B$};
\draw [thick] (1,0.66) -- (1,1.3);
\node [right] at (1, 1.25) {$\classifier_A$};
\node [above,font=\tiny] at (0.97, 1 ) {\footnotesize$O$};
\node [left] at (1.2,1.15) {$+$};
\node [right] at (1, 1.15) {$+$};

\draw [domain=1+0.25*0.6:1.45, loosely dashed] plot (\x, {3/4*\x+1/4-5/16}); 
\draw [loosely dashed] (1,1) -- (1+0.25*0.6,1-0.25*0.8);

\draw [loosely dashed] (1-0.25,1) -- (1-0.25,1.3);
\draw [loosely dashed] (1-0.25,1) -- (1,1);

\draw [blue, ultra thick] (1-0.25,1) -- (1-0.25,{3/4*0.75+1/4});
\node [left, font=\tiny] at (1-0.25, 1 ) {\footnotesize$A$};

\node [left, font=\tiny] at (1-0.25, {3/4*0.75+1/4} ) {\footnotesize$C'$};

\draw[blue, ultra thick]  (0.93 ,{-1/sqrt(1/0.72^2-1)*(0.93-1)+11/16}) arc (254.1:232.8:0.25);

\draw[blue, ultra thick] (1,1) -- ++(180:0.25);
\draw[ blue, ultra thick] (1,1) -- ++(360-53.1:0.25) ;

\draw [blue, ultra thick] (1,11/16) -- (0.94,{-1/sqrt(1/0.72^2-1)*(0.94-1)+11/16});
\node [right] at (1,11/16) {\footnotesize$C$};
\draw [blue, ultra thick] (1,11/16) -- (1+0.25*0.6,1-0.25*0.8);

\draw [domain=0.93:1, blue, ultra thick] plot (\x, {-1/sqrt(1/0.72^2-1)*(\x-1)+11/16}); 

 \draw [domain=1-0.25:0.85, blue, ultra thick] plot (\x, {tan(-7.075621914)*(\x-0.75)+13/16});

\draw[red, ultra thick] (1-0.25,1) arc (180:233.1:0.25);
\draw[red, ultra thick] (1,1) -- ++(180:0.25);
\draw [ red, ultra thick] (1,11/16) -- (1-0.25*0.6,1-0.25*0.8);
\draw [red, ultra thick] (1,11/16) -- (1+0.25*0.6,1-0.25*0.8);
\draw [red, ultra thick] (1,1) -- (1+0.25*0.6,1-0.25*0.8);

\node [left, font=\tiny] at (1-0.25, 1 ) {\footnotesize$A$};
\node [below, font=\tiny] at (1+0.25*0.6,1-0.25*0.8) {\footnotesize$B$};
\node [right, font=\tiny] at (1,11/16) {\footnotesize$C$};

\end{tikzpicture}
 \end{subfigure}
\begin{subfigure}[b]{0.45\linewidth}
    \centering
\begin{tikzpicture}[xscale=7.8,yscale=7.8,
    pics/legend entry/.style={code={%
        \draw[pic actions] 
        (-0.25,0.25) -- (0.25,0.25);}}]

\draw [domain=0.66:1.4, thick] plot (\x, {3/4*\x+1/4});
\node [left] at (1.3, 1.25 ) {$\classifier_B$};
\draw [thick] (1,0.66) -- (1,1.3);
\node [right] at (1, 1.25) {$\classifier_A$};
\node [above,font=\tiny] at (0.97, 1 ) {\footnotesize$O$};
\node [left] at (1.2,1.15) {$+$};
\node [right] at (1, 1.15) {$+$};

\draw [domain=1+0.25*0.6:1.45, loosely dashed] plot (\x, {3/4*\x+1/4-5/16}); 
\draw [loosely dashed] (1,1) -- (1+0.25*0.6,1-0.25*0.8);

\draw [loosely dashed] (1-0.25,1) -- (1-0.25,1.3);
\draw [loosely dashed] (1-0.25,1) -- (1,1);

\node [left, font=\tiny] at (1-0.25, 1 ) {\footnotesize$A$};

\node [left, font=\tiny] at (1-0.25, {3/4*0.75+1/4} ) {\footnotesize$C'$};

\draw [blue, ultra thick] (1-0.25,1) -- (1-0.25,{3/4*0.75+1/4});
\draw[blue, ultra thick]  (0.93 ,{-1/sqrt(1/0.72^2-1)*(0.93-1)+11/16}) arc (254.1:232.8:0.25);

\draw[blue, ultra thick] (1,1) -- ++(180:0.25);
\draw[ blue, ultra thick] (1,1) -- ++(360-53.1:0.25) ;

\draw [blue, ultra thick] (1,11/16) -- (0.94,{-1/sqrt(1/0.72^2-1)*(0.94-1)+11/16});
\node [right] at (1,11/16) {\footnotesize$C$};
\draw [blue, ultra thick] (1,11/16) -- (1+0.25*0.6,1-0.25*0.8);

\draw [domain=0.93:1, blue, ultra thick] plot (\x, {-1/sqrt(1/0.72^2-1)*(\x-1)+11/16}); 

 \draw [domain=1-0.25:0.85, blue, ultra thick] plot (\x, {tan(-7.075621914)*(\x-0.75)+13/16});

\draw[teal, ultra thick] (1+0.25*0.6,1-0.25*0.8) arc (360-53.1:360-53.1*2:0.25);

\draw [teal, ultra thick]  (1,1 ) -- (1-0.25, 1 );
\draw [teal, ultra thick]  (1-0.25, {3/4*0.75+1/4} ) -- (1-0.25, 1 );
\draw [teal, ultra thick]  (1-0.25, {3/4*0.75+1/4} ) -- ({1-0.25*sin(106.2-90)},{1-0.25*cos(106.2-90)});
\draw[teal, ultra thick] (1,1) -- (1+0.25*0.6,1-0.25*0.8);

\node [left, font=\tiny] at (1-0.25, 1 ) {\footnotesize$A$};
\node [below, font=\tiny] at (1+0.25*0.6,1-0.25*0.8) {\footnotesize$B$};
\matrix [draw, above right] at (1.3,0.7) {
\pic[blue, ultra thick]{legend entry}; &  \node[blue,font=\tiny] {$(\classifier_A,\classifier_B,\frac12,\widetilde h_1)$}; \\
 \pic[teal, ultra thick]{legend entry}; &  \node[teal,font=\tiny] {$(\classifier_A,\classifier_B,0)$}; \\
  \pic[red, ultra thick]{legend entry}; &  \node[red,font=\tiny] {$(\classifier_A,\classifier_B,1)$}; \\
};
\end{tikzpicture}
 \end{subfigure}
 \caption{Lack of nested structure: random-order procedure with disclosure vs fixed-order procedures}
 \label{fig: same graph ran_informed fixed 2}
\end{figure}

We first provide a tempting but oversimplified intuition.
To exclude unqualified agents (feasibility), stringent tests must be used regardless of the testing procedure. 
Notice that this also excludes potential qualified agents.
Hence if the same stringent tests are used under all procedures, then a fixed-order mechanism is better than a random-order mechanism, because this former helps more qualified agents to get selected under the stringent tests (see \cref{fig: oversimplified intuition}).
However, this intuition is oversimplified, because the choice of tests usually interacts with the choice of testing procedures.
While the fixed-order procedure helps qualified agents to get selected more easily, it also helps unqualified agents to get selected more easily.
To exclude unqualified agents, the set of tests that can be used by a fixed-order procedure is also smaller than the set of tests that can be used by a random-order procedure.
Ex-ante, it is not clear which class of procedures is better when combined with different stringent tests.

\begin{figure} 
\centering 
      \begin{tikzpicture}[xscale=4,yscale=4]

\draw [thick] (1,0.65) -- (1,1.8);
\node [right] at (1, 1.9 ) {$\classifier_A$};
\node [right] at (1, 1.75 ) {$+$};

\draw [domain=0.6:2, thick] plot (\x, {3/4*\x+1/4});
\node [right] at (2, 1.75 ) {$\classifier_B$};
\node [left] at (2,1.75) {$+$};

\node [above] at (0.95, 1 ) {\footnotesize$O$};

\draw [domain=1+0.25*0.6:2, loosely dashed] plot (\x, {3/4*\x+1/4-5/16}); 
\draw [loosely dashed] (1,1) -- (1+0.25*0.6,1-0.25*0.8);
\draw[loosely dashed] (0.75,1) -- (0.75,1.8);

\draw[loosely dashed] (1-0.25,1) arc (180:233.1:0.25);
\draw [loosely dashed] (1-0.25,1) -- (1,1);
\draw [loosely dashed] (1,11/16) -- (1-0.25*0.6,1-0.25*0.8);
\draw [loosely dashed] (1,11/16) -- (1+0.25*0.6,1-0.25*0.8);
\draw [loosely dashed] (1,1) -- (1+0.25*0.6,1-0.25*0.8);

\draw [ultra thick,  red] (1.25,0.65) -- (1.25,1.88);
\node [right, red] at (1.25, 1.8 ) {$\classifier_A^{+}$};
\draw [<->, red, densely dashed] (1,1.5) -- (1+0.25,1.5);
\node [above, red] at (1+0.12, 1.5 ) {$1/\eta$};
\draw [domain=0.35:1.8, ultra thick, red] plot (\x, {3/4*\x+1/4+5/16});
\node [left, red] at (1.66, 3/4*1.68+1/4+5/16 ) {$\classifier_B^{+}$};


\draw [<->, red, densely dashed] (1.25,3/4*1.25+1/4+5/16) -- (1.25+0.25*0.6,3/4*1.25+0.25*0.6*3/4+1/4 );
\node [right, red] at (1.25+0.125*0.6,3/4*1.25+3/4*0.125*0.6+7/16) {$1/\eta$};

\draw[red, densely dashed] (1,3/4*1.25+1/4+5/16) arc (180:233.1:0.25);

\draw [densely dashed, red, ultra thick] (1.25,1.25*0.75+0.25) -- (1.25-0.25*0.6,3/4*1.25+1/4+5/16-0.25*0.8);
\draw [densely dashed,red, ultra thick] (1.25,1.25*0.75+0.25) -- (1.25+0.25*0.6,3/4*1.25+0.25*0.6*3/4+1/4);

\draw[blue,densely dashed, ultra thick] (1,3/4*1.25+1/4+5/16) arc (180:360-53.1:0.25);

\fill [blue!60,nearly transparent]  
(1.8,6.4/4+5/16) --
(1.25,3/4*1.25+1/4+5/16) -- (1.25+0.25*0.6,3/4*1.25+0.25*0.6*3/4+1/4 ) --(2,7/4)-- cycle;

\fill [blue!60,nearly transparent]  (1,3/4*1.25+1/4+5/16) -- (1.25,3/4*1.25+1/4+5/16) -- (1.25,1.88) -- (1,1.88)-- cycle;

\fill [blue!60,nearly transparent] (1,3/4*1.25+1/4+5/16) -- (1.25,3/4*1.25+1/4+5/16) -- (1.25-0.25*0.6,3/4*1.25+0.25*0.6*3/4+1/4 )-- cycle;

\fill [blue!60,nearly transparent](1,3/4*1.25+1/4+5/16) coordinate (a) arc (180:233.1:0.25) -- cycle;

\fill [blue!60,nearly transparent] (1.25,3/4*1.25+1/4+5/16) -- (1.25+0.25*0.6,3/4*1.25+0.25*0.6*3/4+1/4 )--(1.25,4.75/4)--(1.25-0.25*0.6,3/4*1.25+0.25*0.6*3/4+1/4 )--cycle;

\end{tikzpicture}
\caption{An oversimplified intuition}
\label{fig: oversimplified intuition}
  \rule{0in}{1.2em}$^\dag$\scriptsize  This graph shows that under the same stringent tests $\classifier_A^+$ and $\classifier_B^+$, the fixed-order procedure that offers $\classifier_A^+$ first selects a larger set of potential qualified agents than the random-order procedure without disclosure and randomization probability $\frac12$.
  However, the optimal tests may not be the same for the two procedures. Hence, This intuition is oversimplified.
\end{figure}
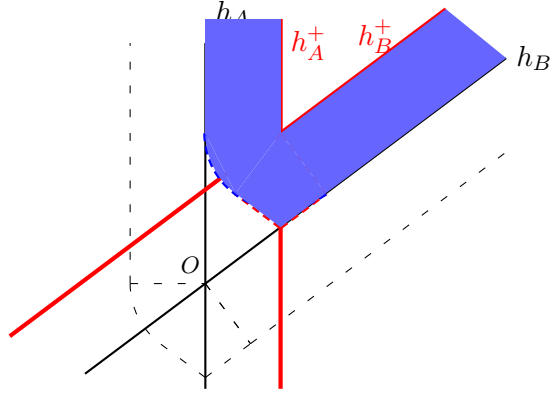

We propose a new constructive argument to prove this result.
The idea is that fixing any feasible random order mechanism, we construct another feasible fixed-order mechanism that dominates the random-order mechanism.
We distinguish two cases.
In \cref{lem:feasible-informed-rand-distance cost}, we deal with the class of random-order mechanisms with disclosure.
In \cref{lem:feasible-uninformed-rand-distance cost}, we deal with the class of random-order mechanisms without disclosure.
This is a more involved case because the set of tests feasible for random-order mechanisms without disclosure is usually strictly larger.
It requires extra steps to construct the feasible fixed order mechanisms.


\begin{lemma}\label{lem:feasible-informed-rand-distance cost}
 Suppose the  random-order mechanism  with disclosure $(\tilde\classifier_A,\tilde\classifier_B,q,\widetilde h_1)$ is feasible. 
 Then the two fixed-order mechanisms $(\tilde\classifier_A,\tilde\classifier_B,1)$ and $(\tilde\classifier_A,\tilde\classifier_B,0)$ are feasible and one of them is no worse than the  random-order mechanism with disclosure.
\end{lemma}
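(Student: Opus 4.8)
The plan is to show that the key quantity governing feasibility — the set of types $\orifeatures$ that have a profitable strategy to pass both tests — behaves well when we pass from the random-order-with-disclosure mechanism to one of the two fixed-order mechanisms. Fix the tests $\tilde\classifier_A,\tilde\classifier_B$ with intersection point $\tilde O$ (of the two bounding lines). Using the reflection/optics argument from the preliminary analysis, I would first write down, for each of the three mechanisms $(\tilde\classifier_A,\tilde\classifier_B,q,\widetilde h_1)$, $(\tilde\classifier_A,\tilde\classifier_B,1)$, $(\tilde\classifier_A,\tilde\classifier_B,0)$, the agent's optimal cost of getting selected as a function of $\orifeatures$. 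For the fixed-order mechanism that runs $\tilde\classifier_A$ first, call this cost $c_1(\orifeatures)$: it is the minimum of the one-step cost $\metric(\orifeatures,\tilde O)$ and the two-step ``zigzag'' cost (reflect $\orifeatures$ over the line bounding $\tilde\classifier_A$, then take the straight-line distance to the line bounding $\tilde\classifier_B$, truncated appropriately at $\tilde O$). Define $c_0(\orifeatures)$ symmetrically for the mechanism running $\tilde\classifier_B$ first. For the disclosed random-order mechanism, the agent chooses $\firstfeatures$ not knowing the order; after the first test is revealed he optimizes the second move. The crucial observation I would establish is that the disclosed-random mechanism's ``profitably selectable'' set is contained in the \emph{union} of the two fixed-order mechanisms' profitably-selectable sets — or, more precisely, that any type that can be selected with cost $<1$ under $(\tilde\classifier_A,\tilde\classifier_B,q,\widetilde h_1)$ can be selected with cost $<1$ under at least one of the two fixed orders.

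The reason, which I would argue carefully, is that under disclosure the agent's interim continuation problem after learning ``first test was $\tilde\classifier_i$'' is exactly the continuation problem he faces in the fixed-order mechanism running $\tilde\classifier_i$ first; the only difference is the ex-ante choice of $\firstfeatures$, which must hedge against both realizations. Any $\firstfeatures$ the agent picks induces expected cost $q\,[\text{cost given order }A] + (1-q)\,[\text{cost given order }B]$, and this is a convex combination, hence at least one of the two pure orders gives cost no larger than the random one. Pushing this through: if some type is feasible-violating (selectable with cost $<1$) under the random mechanism, it is selectable with cost $<1$ under the $A$-first or the $B$-first fixed mechanism — but we are told the random mechanism is \emph{feasible}, i.e. \emph{no} unqualified type is selectable with cost $<1$, so I actually need the direction controlling unqualified types. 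Here I'd invert the logic: feasibility of $(\tilde\classifier_A,\tilde\classifier_B,q,\widetilde h_1)$ means every unqualified type has random-cost $\ge 1$; I must show every unqualified type has $c_1\ge 1$ \emph{and} $c_0 \ge 1$ — i.e. \emph{both} fixed-order mechanisms are feasible, which is the first claim of the lemma. This requires the reverse comparison: the fixed-order cost is at least the random cost for unqualified types. This follows because removing the agent's ex-ante uncertainty can only help him, so $c_i(\orifeatures) \ge$ (random cost), hence $c_i(\orifeatures)\ge 1$ for unqualified $\orifeatures$; more carefully, I'd show the random mechanism's cost is a convex combination dominated below by $\min(c_1,c_0)$ is the wrong bound — instead, for an \emph{unqualified} type the relevant fact is that a fixed order is at least as stringent as the disclosed random order (as informally argued around Figure \ref{fig: same graph ran_informed fixed 2}), which I'd nail down via the zigzag-cost formulas directly.

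Having established both fixed-order mechanisms are feasible, the remaining task is to compare their \emph{objective values} — the measure under $\dist$ of qualified types that get selected — and show $\max$ of the two is $\ge$ that of the random mechanism. Every qualified type ($\orifeatures \in \tilde\classifier_A\cap\tilde\classifier_B \subset \classifier_A\cap\classifier_B$, and also those in $\classifier_A\cap\classifier_B$ with a cheap enough move) contributes if its selection cost is $<1$ (or $\le$ its willingness, but since utility from selection is $1$, the threshold is cost $\le 1$; I'd be careful about the boundary). By the convex-combination argument, for \emph{every} type $\orifeatures$, $\min\{c_1(\orifeatures), c_0(\orifeatures)\} \le$ (random cost of $\orifeatures$), so the set of qualified types selectable under the random mechanism is contained in the \emph{union} of the two fixed-order selectable sets. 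Therefore $\dist(\text{qualified selected under }A\text{-first}) + \dist(\text{qualified selected under }B\text{-first}) \ge \dist(\text{qualified selected under random})$, so at least one of the two fixed-order mechanisms has objective value at least that of the random mechanism. Combined with the feasibility of both, this gives the lemma.

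The main obstacle I anticipate is making the ``convex combination'' step fully rigorous at the level of the agent's optimal strategy, not just a fixed strategy: I need that the agent's \emph{optimal} expected cost under the disclosed random mechanism equals $\min_{\firstfeatures}\big[q\cdot(\text{opt continuation cost from }\firstfeatures\text{ if order }A) + (1-q)\cdot(\text{opt continuation cost from }\firstfeatures\text{ if order }B) + \metric(\orifeatures,\firstfeatures)\big]$, and then bound this below by $\min\{c_1,c_0\}$ by noting that for any $\firstfeatures$ the bracketed convex combination is at least $\min$ of its two endpoints evaluated at the \emph{same} $\firstfeatures$, each of which is itself at least the corresponding fully-optimized $c_i$. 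The subtlety is that $c_i$ optimizes over $\firstfeatures$ too, so the inequality goes the right way ($c_i \le$ the $\firstfeatures$-specific version), and I must track these quantifiers without sign errors — this is exactly where the geometric zigzag picture (reflections, triangle inequality, truncation at $\tilde O$) does the heavy lifting and where I'd draw the figure analogous to Figure \ref{fig: same graph ran_informed fixed 2} to verify the claim holds for all positions of $\orifeatures$ and all $q$.
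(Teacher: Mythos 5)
There are two genuine gaps, one in each half of your argument.

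\textbf{Feasibility.} Your plan rests on the claim that ``a fixed order is at least as stringent as the disclosed random order,'' so that an unqualified type with random-cost $\ge 1$ automatically has fixed-order cost $\ge 1$. This is false: the paper's comparison (illustrated in \cref{fig: same graph ran_informed fixed 2}) is that there is \emph{no} set inclusion between the two accepted sets, and to the extent a stringency ranking holds it goes the other way (the disclosed random order is the more stringent one). Concretely, take $\orifeatures$ satisfying neither test, with a two-step strategy of costs $\onecost(\orifeatures,\firstfeatures)=0.6$ and $\onecost(\firstfeatures,\secondfeatures)=0.4$: it is selected by $(\tilde\classifier_A,\tilde\classifier_B,1)$ at total cost $1$, yet under $(\tilde\classifier_A,\tilde\classifier_B,\tfrac12,\widetilde h_1)$ the same strategy yields expected utility $\tfrac12-0.6-\tfrac12\cdot 0.4<0$, so feasibility of the random mechanism tells you nothing about whether $\orifeatures$ is qualified. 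The paper closes this by splitting the fixed-order-accepted set: types satisfying at least one test are shown to be accepted with positive probability by the random mechanism under a suitably truncated strategy (hence qualified, by its feasibility), and types satisfying neither test lie in the convex set $\setonetwo_1\cup\setO$ whose extreme points are of the first kind, so convexity of $\classifier_A\cap\classifier_B$ finishes the job. Your proposal contains neither ingredient, and the zigzag formulas alone will not rescue the false stringency claim.

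\textbf{Dominance.} Your convex-combination observation --- that for any $\firstfeatures$ the random mechanism's expected payoff is a $q$-mixture of the two pure-order continuations, so every type profitably selectable under the random mechanism is profitably selectable under at least one fixed order --- is correct and is essentially the paper's mechanism. But your final accounting is an invalid inference: from $\dist(S_1)+\dist(S_0)\ge \dist(S_{\mathrm{rand}})$ you only get $\max\{\dist(S_1),\dist(S_0)\}\ge \dist(S_{\mathrm{rand}})/2$, not $\ge \dist(S_{\mathrm{rand}})$. Moreover the random mechanism's objective weights a type by its acceptance \emph{probability} ($1$, $q$, or $1-q$), which a union-of-sets comparison ignores. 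The paper's fix is the mixed mechanism that announces $(\tilde\classifier_A,\tilde\classifier_B,1)$ with probability $q$ and $(\tilde\classifier_A,\tilde\classifier_B,0)$ with probability $1-q$: one checks pointwise that each type's acceptance probability under the mix dominates its acceptance probability under the random mechanism, so $q\cdot\mathrm{obj}_1+(1-q)\cdot\mathrm{obj}_0\ge \mathrm{obj}_{\mathrm{rand}}$, and the better of the two fixed orders then attains at least $\mathrm{obj}_{\mathrm{rand}}$. You need this weighted version, not the union bound.
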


This is the simpler case because the set of tests feasible for random-order mechanisms with disclosure is the same as that for fixed-order mechanisms.
The challenge of this proof comes from that fixing the tests, neither of the two fixed-order mechanisms is directly comparable to any random-order mechanisms with disclosure.
This is because there is no nested structure between the set of types that can get selected by any of the two fixed-order mechanisms and that by any random-order mechanism with disclosure.
For example, see \cref{fig: same graph ran_informed fixed 2}. 
Our proof strategy is to introduce a mixed mechanism that randomizes over the two fixed-order mechanisms.
After that, we first apply a similar set inclusion argument to show that the mixed  mechanism dominates the random-order mechanism with disclosure.
And then we apply a probability argument to show that one of the fixed-order mechanisms used in the mixed mechanism dominates the mixed mechanism.


Next, consider any random order mechanism without disclosure $(\tilde \classifier_A,\tilde \classifier_B, q, \varnothing)$. 
Let $\tilde O$ be the intersection point of the boundary lines of $ \tilde\classifier_A$ and $\tilde \classifier_B$. Let $\classifier_A'$ and $\classifier_B'$ be the tests that contain $\tilde O$ on their boundary lines and have the same normal vectors $\weights_A$ and $\weights_B$ as the half plane $\classifier_A$ and $\classifier_B$ respectively. 

\begin{lemma}\label{lem:feasible-uninformed-rand-distance cost}
 Suppose the random-order mechanism  without disclosure $(\tilde\classifier_A,\tilde\classifier_B,q,\nullset)$ is feasible.  
 Then two fixed order mechanisms $(\tilde\classifier_A,\classifier_B',1)$ and $(\classifier_A',\tilde\classifier_B,0)$ are also feasible.
 Moreover, one of these two fixed order mechanisms is no worse than the random-order mechanism without disclosure.
\end{lemma}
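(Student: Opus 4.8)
Proof proposal for \cref{lem:feasible-uninformed-rand-distance cost}.

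The plan is to mirror the three-move template used for \cref{lem:feasible-informed-rand-distance cost}, but now accounting for the fact that without disclosure the feasible set of tests is strictly larger, so the second test must be replaced by the auxiliary test $\classifier_B'$ (resp.\ $\classifier_A'$). First I would restate the hypothesis geometrically. By the reflection argument of \cref{subsec:example}, in $(\tilde\classifier_A,\tilde\classifier_B,q,\nullset)$ an agent's only candidate best responses are: move both attributes into $\widetilde\qualregion:=\tilde\classifier_A\cap\tilde\classifier_B$ (cost $\metric(\orifeatures,\widetilde\qualregion)$, selected w.p.\ $1$); the $\tilde\classifier_A$-first zigzag, with $\zeta(\orifeatures;\tilde\classifier_A,\tilde\classifier_B)$ denoting the least cost of a two-step path meeting $\tilde\classifier_A$ then $\tilde\classifier_B$ (selected w.p.\ $q$); the $\tilde\classifier_B$-first zigzag $\zeta(\orifeatures;\tilde\classifier_B,\tilde\classifier_A)$ (selected w.p.\ $1-q$); or do nothing. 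Hence feasibility is equivalent to: for every $\orifeatures\notin\qualregion$, $\metric(\orifeatures,\widetilde\qualregion)\ge 1$, $\zeta(\orifeatures;\tilde\classifier_A,\tilde\classifier_B)\ge q$, and $\zeta(\orifeatures;\tilde\classifier_B,\tilde\classifier_A)\ge 1-q$.

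Next I extract the structural consequences that make $\classifier_B'$ usable. From $\metric(\orifeatures,\widetilde\qualregion)\ge1$ on $\qualregion^{\compl}$ and $\tilde O\in\widetilde\qualregion$ one gets $B(\tilde O,1)\subseteq\qualregion$ (a point within distance $1$ of $\tilde O$ is within $1$ of $\widetilde\qualregion$, hence qualified), so $\tilde O$ is at distance $\ge1$ from $\partial\classifier_A$ and from $\partial\classifier_B$; and $\widetilde\qualregion\subseteq\qualregion$ forces the two recession rays of the wedge $\widetilde\qualregion$ into the recession cone of $\qualregion$, so the $\classifier_A$- and $\classifier_B$-depths are non-decreasing along both edges of $\widetilde\qualregion$ issuing from $\tilde O$. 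Consequently $\widetilde\qualregion\subseteq\classifier_A'\cap\classifier_B'\subseteq\qualregion$, with $\classifier_A',\classifier_B'$ carrying a width-$\ge1$ collar inside $\classifier_A,\classifier_B$. With this in hand I prove the two constructed mechanisms are feasible; take $(\tilde\classifier_A,\classifier_B',1)$, whose selected set is $\{\orifeatures:\zeta(\orifeatures;\tilde\classifier_A,\classifier_B')\le1\}$. I split on which true half-plane $\orifeatures$ violates: if $\orifeatures\notin\classifier_B$, the path's endpoint lies in $\classifier_B'$ at $\classifier_B$-depth $\ge1$ while $\orifeatures$ has negative $\classifier_B$-depth, so its length exceeds $1$; if $\orifeatures\in\classifier_B\setminus\classifier_A$, I would combine the reflection formula for $\zeta(\cdot;\tilde\classifier_A,\classifier_B')$ with the depth structure above and the bound $\zeta(\orifeatures;\tilde\classifier_A,\tilde\classifier_B)\ge q$ to push the cost above $1$. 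I expect this second sub-case to be the main obstacle: $\classifier_B'$ is not nested with $\tilde\classifier_B$, so this is not a one-line inclusion and genuinely uses that, near $\tilde O$, $\tilde\classifier_A$ "covers" criterion $A$ and $\classifier_B'$ has its collar inside $\classifier_B$. (It is here that the labelling convention identifying which test plays the role of criterion $A$ is used, to guarantee $\tilde\classifier_A\cap\classifier_B'$ is a genuine wedge.)

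For domination, partition $\qualregion$ by the agent's optimal selection probability in $(\tilde\classifier_A,\tilde\classifier_B,q,\nullset)$ into $\calA_1,\calA_q,\calA_{1-q},\calA_0$, so the objective equals $\pr[\orifeatures\in\calA_1]+q\,\pr[\orifeatures\in\calA_q]+(1-q)\,\pr[\orifeatures\in\calA_{1-q}]$ with $\orifeatures\sim\dists$. I would prove three set inclusions. (i) $\calA_1$ lies in the selected set of \emph{both} fixed-order mechanisms, immediately from $\zeta(\orifeatures;\tilde\classifier_A,\classifier_B')\le\metric(\orifeatures,\widetilde\qualregion)\le1$ using $\widetilde\qualregion\subseteq\tilde\classifier_A\cap\classifier_B'$ (and symmetrically). (ii) $\calA_q$ lies in the selected set of $(\tilde\classifier_A,\classifier_B',1)$: because $\orifeatures$ is qualified and its optimal $\tilde\classifier_A$-first zigzag is cheap ($\zeta(\orifeatures;\tilde\classifier_A,\tilde\classifier_B)<q$) with positive slack toward $\widetilde\qualregion$, the zigzag endpoint either already sits in $\classifier_B'$ (when it lies on the $\widetilde\qualregion$-edge of $\partial\tilde\classifier_B$, by monotonicity of $\classifier_B$-depth there) or can be rerouted to $\tilde O\in\classifier_B'$ within the remaining unit budget; this is the other delicate point. (iii) symmetrically $\calA_{1-q}$ lies in the selected set of $(\classifier_A',\tilde\classifier_B,0)$.

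Finally I conclude with the mixture. Let $\calM_\lambda$ publicly toss a $\lambda$-coin and then run $(\tilde\classifier_A,\classifier_B',1)$ or $(\classifier_A',\tilde\classifier_B,0)$; it is feasible by the previous step, and by the inclusions its objective is at least $\pr[\orifeatures\in\calA_1]+\lambda\,\pr[\orifeatures\in\calA_q]+(1-\lambda)\,\pr[\orifeatures\in\calA_{1-q}]$, which is $\ge$ the objective of $(\tilde\classifier_A,\tilde\classifier_B,q,\nullset)$ once $\lambda$ is chosen $\ge q$ or $\le q$ according to the sign of $\pr[\orifeatures\in\calA_q]-\pr[\orifeatures\in\calA_{1-q}]$. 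Since the objective of $\calM_\lambda$ is the $\lambda$-convex combination of the objectives of the two fixed-order mechanisms, at least one of them is feasible and no worse than the random-order mechanism, which is the claim. The hard part is the geometry behind Step~2's $\classifier_B\setminus\classifier_A$ case and inclusion (ii): both hinge on controlling the substitution of $\classifier_B'$ for $\tilde\classifier_B$ in a zigzag via the reflection construction and the depth-collar structure, which is exactly the "extra step" relative to the disclosure case.
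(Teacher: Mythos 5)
Your architecture matches the paper's: replace the second test by its parallel-through-$\tilde O$ version, prove the two fixed-order mechanisms $(\tilde\classifier_A,\classifier_B',1)$ and $(\classifier_A',\tilde\classifier_B,0)$ feasible, dominate the random-order mechanism by a $q$-mixture of them via set inclusions on the acceptance-probability partition, and finish with the convex-combination (probability) argument. Your structural observations are also correct and are exactly what the paper uses: feasibility forces $\tilde O$ to lie at depth at least $1/\eta$ in both $\classifier_A$ and $\classifier_B$, and $\tilde\classifier_A\cap\tilde\classifier_B\subseteq\classifier_A'\cap\classifier_B'\subseteq\classifier_A\cap\classifier_B$.

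However, the two steps you yourself flag as "delicate" are precisely the content of the lemma, and your proposal does not close either of them. (a) For feasibility in the sub-case $\orifeatures\in\classifier_B\setminus\classifier_A$, the tool you reach for --- the feasibility bound that the $\tilde\classifier_A$-first zigzag costs at least $q$ --- cannot work: it degenerates as $q\to 0$, and the paper never invokes it. (b) For the inclusion of the probability-$q$ acceptance set into the accepted set of $(\tilde\classifier_A,\classifier_B',1)$, your "reroute the endpoint to $\tilde O$ within the remaining unit budget" is a guess, not an argument. The paper closes both gaps with a single device missing from your proposal: the explicit characterization of the zigzag-profitable set as a quadrilateral, $\setonetwo_{\probprincipal}=\tilde O E_{\probprincipal}\tilde E_{\probprincipal}E_{\probprincipal}'$ (\cref{lem: Cq}, via the reflection construction), together with the observation that both the qualified region and the accepted set of a fixed-order mechanism are convex. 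Containment of a convex quadrilateral then reduces to checking its four corners: $\tilde O$, $E_{\probprincipal}$, $E_{\probprincipal}'$ lie within $1/\eta$ of $\tilde O$ hence in $B(\tilde O,1/\eta)\subseteq\qualregion$, and $\tilde E_{\probprincipal}$ lies on $\partial\tilde\classifier_A$ between $\tilde O$ and the boundary of $\classifier_B$, which the depth-$1/\eta$ collar of $\classifier_B'$ places inside $\qualregion$ (for feasibility) and inside $O\tilde E_1$, hence inside the accepted set (for domination). Without this quadrilateral-plus-convexity reduction, your case analysis by which true half-plane is violated does not terminate, so as written the proof is incomplete at its two essential points.
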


As we have eluded earlier, the set of tests feasible for random-order mechanisms without disclosure is usually larger. \cref{fig:shifted non-parallel random} provides an example.\footnote{One might hope that for those non-parallel tests, we can show that they are dominated by some parallel tests. 
However, this is not the case.
There is usually no way to rank the  tests that are feasible for random-order mechanisms without disclosure, even if we fix the randomization probability $q$.
 One example is \cref{fig:theta<30 random}. When the angle of the tests increase, the set of types being selected does not contain the set of types being selected under smaller angle of tests.
 Hence the constructive argument we propose here is required to compare any random-order mechanism without disclosure and any fixed-order mechanism.} 
Under the random order mechanism without disclosure $ (\tilde\classifier_A,\tilde \classifier_B,\frac12,\nullset)$, every selected agent uses a one-step strategy to pass both tests together. Hence the set of attributes being selected under this mechanism is the blue region and $\tilde\classifier_A\cap\tilde \classifier_B$.
Since the blue region is contained in the qualified region, this  mechanism is feasible.
However, since $\tilde\classifier_i,i\in\{A,B\}$ is not parallel to $\classifier_i,i\in\{A,B\}$, the corresponding fixed-order mechanisms are not feasible.
To see this, consider the unqualified attributes  $\orifeatures\in \widetilde h_B$ that are very closed to $\classifier_A$ but do not satisfy $\classifier_A$,  as shown in \cref{fig:shifted non-parallel random}.
Because the tests are non-parallel, we can always find such $\orifeatures$ so that its cost to pass $\widetilde h_A$ is less than one.
This implies that the fixed-order mechanism with $\widetilde h_B$ as the first test is not  feasible.
Similarly, we can argue that the fixed-order mechanism with $\widetilde h_A$ as the first test is not  feasible.

\begin{figure}[h]  
\centering 
\begin{tikzpicture}[xscale=4,yscale=4]

\draw [thick] (1,0.65) -- (1,1.8);
\node [right] at (1, 1.9 ) {$\classifier_A$};
\node [right] at (1, 1.75 ) {$+$};

\draw [domain=0.6:2, thick] plot (\x, {3/4*\x+1/4});
\node [right] at (2, 1.75 ) {$\classifier_B$};
\node [left] at (2,1.75) {$+$};

\node [right, red] at (1.26, 1.84 ) {$\widetilde\classifier_A$};
\draw [domain=1.17:1.29, ultra thick, red] plot (\x, {10*(\x-5/4)+3/2});
\draw [domain=0.5:1.76, ultra thick, red] plot (\x, {3.5/4*(\x-5/4)+3/2});
\node [left, red] at (1.6, 1.84 ) {$\widetilde\classifier_B$};

\node [left, font=\tiny] at (0.98, {3.5/4*(0.98-5/4)+3/2}) {\footnotesize$\orifeatures$};
\node [black] at (0.98, {3.5/4*(0.98-5/4)+3/2}) {\textbullet};

\draw[red, densely dashed] (1,3/4*1.25+1/4+5/16) arc (180:360-53.1:0.25);

\fill [blue!60,nearly transparent]  
((1.76, {3.5/4*(1.76-5/4)+3/2}) --
(1.25,3/4*1.25+1/4+5/16) -- (1.25+0.25*0.6,3/4*1.25+0.25*0.6*3/4+1/4 ) --(2,{3.5/4+1})-- cycle;

\fill [blue!60,nearly transparent]  (1,3/4*1.25+1/4+5/16) -- (1.25,3/4*1.25+1/4+5/16) -- (1.28, {10*(1.28-5/4)+3/2}) -- (1.04, {10*(1.28-5/4)+3/2})-- cycle;

\fill [blue!60,nearly transparent] (1,3/4*1.25+1/4+5/16) -- (1.25,3/4*1.25+1/4+5/16) -- (1.25+0.25*0.6,3/4*1.25+0.25*0.6*3/4+1/4 )-- cycle;

\fill [blue!60,nearly transparent](1,3/4*1.25+1/4+5/16) coordinate (a) arc (180:360-53.1:0.25) -- cycle;

\end{tikzpicture}
\caption{Feasible non-parallel tests for random-order mechanism without disclosure}
\label{fig:shifted non-parallel random}
\end{figure}
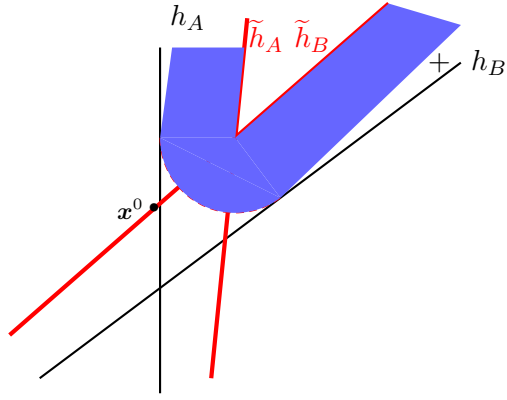

Here is a sketch of the proof for this lemma.
First we show that the two fixed order mechanisms $(\tilde\classifier_A,\classifier_B',1)$ and $(\classifier_A',\tilde\classifier_B,0)$ are feasible (See \cref{fig:construction uninformed random}). 
By combining one non-parallel test with one parallel test and a careful choice of the order of the tests, we make sure that no unqualified agent is selected under each of these fixed-order mechanisms.

Second, we use a similar argument to show dominance: we construct a mixed mechanism that randomizes over the two feasible fixed-order mechanisms $(\tilde\classifier_A,\classifier_B',1)$ and $(\classifier_A',\tilde\classifier_B,0)$, and then show that this mixed mechanism is weakly better than the random-order mechanism  without disclosure using the set inclusion argument. 
It follows naturally that  one of the two fixed-order mechanisms dominates the random-order mechanism without disclosure using a probability argument. All omitted proofs are in \cref{appendix: distance cost}.

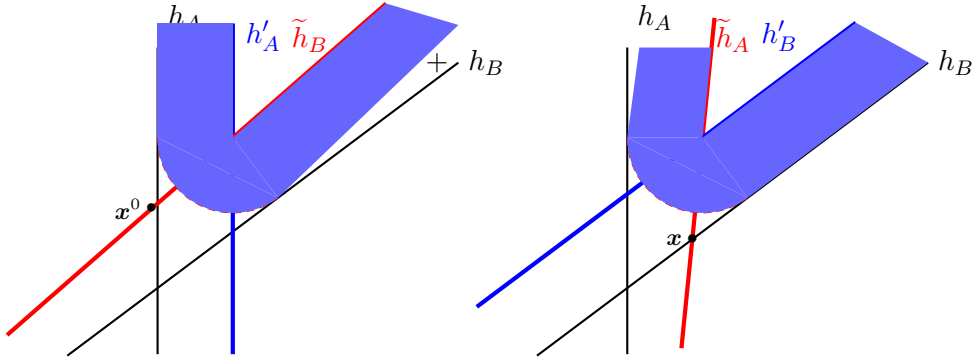
\begin{figure}[h]  
\centering 
\begin{subfigure}{0.45\textwidth}
\begin{tikzpicture}[xscale=4,yscale=4]

\draw [thick] (1,0.78) -- (1,1.8);
\node [right] at (1, 1.9 ) {$\classifier_A$};
\node [right] at (1, 1.75 ) {$+$};

\draw [domain=0.7:2, thick] plot (\x, {3/4*\x+1/4});
\node [right] at (2, 1.75 ) {$\classifier_B$};
\node [left] at (2,1.75) {$+$};

\draw [ultra thick, blue] (1.25,0.78) -- (1.25,1.88);
\node [right, blue] at (1.26, 1.84 ) {$\classifier_A'$};
\draw [domain=0.5:1.76, ultra thick, red] plot (\x, {3.5/4*(\x-5/4)+3/2});
\node [left, red] at (1.6, 1.84 ) {$\widetilde\classifier_B$};

\node [left, font=\tiny] at (0.98, {3.5/4*(0.98-5/4)+3/2}) {\footnotesize$\orifeatures$};
\node [black] at (0.98, {3.5/4*(0.98-5/4)+3/2}) {\textbullet};

\draw[red, densely dashed] (1,3/4*1.25+1/4+5/16) arc (180:360-53.1:0.25);

\fill [blue!60,nearly transparent]  
((1.76, {3.5/4*(1.76-5/4)+3/2}) --
(1.25,3/4*1.25+1/4+5/16) -- (1.25+0.25*0.6,3/4*1.25+0.25*0.6*3/4+1/4 ) --(2,{3.5/4+1})-- cycle;

\fill [blue!60,nearly transparent]  (1,3/4*1.25+1/4+5/16) -- (1.25,3/4*1.25+1/4+5/16) -- (1.25,1.88) -- (1,1.88)-- cycle;

\fill [blue!60,nearly transparent] (1,3/4*1.25+1/4+5/16) -- (1.25,3/4*1.25+1/4+5/16) -- (1.25+0.25*0.6,3/4*1.25+0.25*0.6*3/4+1/4 )-- cycle;

\fill [blue!60,nearly transparent](1,3/4*1.25+1/4+5/16) coordinate (a) arc (180:360-53.1:0.25) -- cycle;

\end{tikzpicture}
\end{subfigure}
\begin{subfigure}{0.45\textwidth}
\begin{tikzpicture}[xscale=4,yscale=4]

\draw [thick] (1,0.78) -- (1,1.8);
\node [right] at (1, 1.9 ) {$\classifier_A$};
\node [right] at (1, 1.75 ) {$+$};

\draw [domain=0.7:2, thick] plot (\x, {3/4*\x+1/4});
\node [right] at (2, 1.75 ) {$\classifier_B$};
\node [left] at (2,1.75) {$+$};

\node [right, red] at (1.26, 1.84 ) {$\widetilde\classifier_A$};
\draw [domain=1.18:1.29, ultra thick, red] plot (\x, {10*(\x-5/4)+3/2});
\draw [domain=0.5:1.76, ultra thick, blue] plot (\x, {3/4*(\x-5/4)+3/2});
\node [left, blue] at (1.6, 1.84 ) {$\classifier_B'$};

\node [left, font=\tiny] at (45/37, {3/4*(45/37-1)+1}) {\footnotesize$\features$};
\node [black] at (45/37, {3/4*(45/37-1)+1}) {\textbullet};

\draw[red, densely dashed] (1,3/4*1.25+1/4+5/16) arc (180:360-53.1:0.25);

\fill [blue!60,nearly transparent]  
((1.76, {3/4*(1.76-5/4)+3/2}) --
(1.25,3/4*1.25+1/4+5/16) -- (1.25+0.25*0.6,3/4*1.25+0.25*0.6*3/4+1/4 ) --(2,{3/4+1})-- cycle;

\fill [blue!60,nearly transparent]  (1,3/4*1.25+1/4+5/16) -- (1.25,3/4*1.25+1/4+5/16) -- (1.28, {10*(1.28-5/4)+3/2}) -- (1.04, {10*(1.28-5/4)+3/2})-- cycle;

\fill [blue!60,nearly transparent] (1,3/4*1.25+1/4+5/16) -- (1.25,3/4*1.25+1/4+5/16) -- (1.25+0.25*0.6,3/4*1.25+0.25*0.6*3/4+1/4 )-- cycle;

\fill [blue!60,nearly transparent](1,3/4*1.25+1/4+5/16) coordinate (a) arc (180:360-53.1:0.25) -- cycle;

\end{tikzpicture}
\end{subfigure}
\caption{Constructing feasible mixed mechanism for random-order mechanisms without disclosure}
\label{fig:construction uninformed random}
\end{figure}

\bigskip


\subsection{Cheap-talk}\label{subsec:cheap talk manipulation}
Third, we consider the case where the principal can tailor the tests to different types of agent, i.e., cheap-talk communication is allowed upfront.
We show that when cheap-talk communication is allowed,  a fixed-order sequential mechanism achieves the first best.

\begin{theorem}\label{thm: optimal max qualified cheap talk}
Suppose $M=\Featurespace$. 
    For any distribution $\dist$, there exists a feasible fixed-order sequential mechanism that achieves the first best.
\end{theorem}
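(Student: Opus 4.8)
The plan is to build an explicit direct mechanism in which the agent reports a type $\features\in\Featurespace$ and the principal commits to a fixed-order sequential procedure $s(\features)$ tailored to the report. If $\features\notin\qualregion$, the principal offers two disjoint half-planes whose boundary lines are more than $2$ apart, so that passing both tests costs strictly more than the gross benefit $1$ and no type is ever selected after such a report. If $\features\in\qualregion$, the principal offers $s(\features)=(\widetilde\classifier_A(\features),\widetilde\classifier_B(\features),1)$, where $\widetilde\classifier_A(\features)\cap\widetilde\classifier_B(\features)$ is a \emph{stringent} wedge contained in $\qualregion$, chosen so that the reporting type can still be selected, i.e.\ can pass the procedure at cost at most $1$. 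The simplest candidate is the wedge with the same normals as $\classifier_A,\classifier_B$ and apex at $\features$ (the translate of $\qualregion$ through $\features$, which sits inside $\qualregion$ exactly because $\features\in\qualregion$, and whose apex $\features$ passes at cost $0$); as explained below this candidate must be refined near $\partial\qualregion$. Unlike the no--cheap-talk results, here the mechanism is built directly rather than by reduction from a random-order one.

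With such a mechanism in hand I would verify two things. First, every qualified type is selected: a type $\orifeatures\in\qualregion$ can report itself, and by the reflection characterization of \cref{subsec:example} its cheapest way to pass $s(\orifeatures)$ — the one-step move into the offered wedge, or the two-step zig-zag onto the two boundary lines and then into the second test — costs at most $1$, so its utility is nonnegative; since no report ever yields utility above $1$, reporting so as to be selected is optimal, and all of $\qualregion$ is selected (with the natural tie-break for the indifferent boundary types). Second, feasibility: no unqualified type is selected after \emph{any} report. Fixing $\orifeatures\notin\qualregion$ and an arbitrary report $m$ — necessarily $m\in\qualregion$, since other reports are useless — I would use the reflection argument to lower-bound the cheapest cost of passing $s(m)$, translating the computation to the offered wedge and invoking the zig-zag cost formula, and show this cost is at least $1$, so $\orifeatures$ is rejected.

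The feasibility step is the main obstacle, and the subtlety is concentrated near $\partial\qualregion$ and near the corner $O$ of $\qualregion$. Because the cost-to-pass function of any fixed-order procedure is $1$-Lipschitz and vanishes on the offered wedge, the set of types that pass $s(m)$ at cost strictly below $1$ always contains a ball around $m$; that ball protrudes from $\qualregion$ whenever $m$ is within distance $1$ of $\partial\qualregion$, so the plain apex construction is infeasible there. The fix is to make $s(m)$ \emph{more} stringent as $m$ approaches $\partial\qualregion$: shrink the offered wedge (push its apex inward, or, near $O$, use a thin wedge whose zig-zag region stretches back toward the corner while staying inside $\qualregion$) just enough that the reporting type $m$ now passes at cost approaching $1$, so that the profitably-passable set of $s(m)$ shrinks to $\{m\}$ as $m\to\partial\qualregion$ and always stays inside $\qualregion$. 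Arranging this for every $m\in\qualregion$ simultaneously — equivalently, covering $\qualregion$ by the profitably-passable sets of the offered procedures while keeping each of them inside $\qualregion$ — is the geometric heart of the proof; it is where the wedge angle $\theta$ and the reflection/zig-zag cost bounds are used, and it must be carried out by hand. Once it is in place, feasibility, full selection of $\qualregion$, and the fact that every $s(m)$ is fixed-order together give the desired feasible fixed-order sequential mechanism attaining the first best.
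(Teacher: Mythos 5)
Your high-level strategy is the same as the paper's: build a menu of fixed-order procedures indexed by the report, each with its profitably-passable set contained in $\qualregion$, such that these sets jointly cover $\qualregion$; feasibility and first-best then follow. You have also correctly diagnosed where the difficulty lies (reports within distance $1/\eta$ of $\partial\qualregion$, and especially near the corner $O$, where any wedge containing the reporting type at cost below one has a manipulation set that threatens to protrude outside $\qualregion$).

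The problem is that you stop exactly at that point: you say the covering construction ``must be carried out by hand'' and is ``the geometric heart of the proof,'' but you do not carry it out, and it is not routine. In particular, the manipulation set of a fixed-order procedure is not just a ball around the offered wedge --- it includes the zig-zag regions and the strips along the two boundary lines (the sets $\setonetwo_1$, $\setperp_A$, $\setperp_B$ of \cref{sec: characterization BR}) --- so verifying that your ``thin wedge whose zig-zag region stretches back toward the corner'' both reaches every type in the corner triangle and stays inside $\qualregion$ is precisely the content that is missing. The paper resolves this with an explicit and much more economical construction: a menu of only \emph{two} procedures, $(\classifier_A^+,\classifier_B^+,1)$ and $(\classifier_A^+,\widehat\classifier_B,0)$, where $\widehat\classifier_B$ is obtained by rotating a shifted test about $O^+$ until its boundary is the angle bisector through $O$. \cref{lem:loss non-parallel tests} shows the first procedure selects exactly $\qualregion\setminus \Updelta\text{OAB}$ and no unqualified type, and \cref{lem:gain non-parallel tests} shows the second selects all of $\Updelta\text{OAB}$ and no unqualified type; incentive compatibility is then immediate because each qualified type can only be (or is most cheaply) selected by the procedure assigned to its own cell. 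To complete your argument you would either need to prove analogues of these two lemmas for your per-type wedges --- a strictly harder task, since you must control a continuum of procedures rather than two --- or simply adopt the two-procedure menu, at which point your per-type construction is unnecessary.
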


The proof is constructive.
We first introduce some useful notations. 
Let $O$ be the intersection point of the boundary lines of $\classifier_A$ and $\classifier_B$. 
Recall that  $\classifier_i^+, i\in \{A,B\} $ is a test obtained by shifting $\classifier_i$ along its normal vector $\weights_i$ by a distance of $1/\eta$.
Denote  the intersection point of the boundary lines of $\classifier_A^+ $ and $\classifier_B^+ $ by $O^+$.
Let $\widehat\classifier_A$ be the half plane obtained by rotating $\classifier_A^+ $ around point $O^+$ towards $\classifier_B^+ $ until the boundary line is exactly the angle bisector of $\classifier_A^+ $ and $\classifier_B^+$, which means the boundary of $\widehat \classifier_A$ goes through $O$.

Consider the fixed-order procedure $(\classifier_A^+,\classifier_B^+,1)$ and the fixed-order procedure $(\classifier_A^+,\widehat\classifier_B,0)$.
Let $A$ be the intersection of boundaries of $\classifier_B$ and $\classifier_A^+$. Let $B$ be the intersection of 
boundaries of $\classifier_A$ and $\classifier_B^+$.
\cref{fig:cheap-talk} plots the set if qualified types that are selected by the two fixed-order procedures.
The fixed-order procedure $(\classifier_A^+,\classifier_B^+,1)$ that uses two shifted parallel tests select all qualified types except for those in 
the yellow triangle $\Updelta \text{OAB}$ (left panel).
However, any type in 
the yellow triangle $\Updelta \text{OAB}$ is selected by 
 the fixed-order procedure $(\classifier_A^+,\widehat\classifier_B,0)$ that uses one non-parallel test (right panel). Hence offering a menu of these two procedures can select all qualified types and no unqualified type.

\begin{figure}
    \centering
    \begin{subfigure}[b]{0.45\linewidth}
         \begin{tikzpicture}[xscale=5.6,yscale=5.6]

\fill [yellow!50,nearly transparent]  (1,1.5) --(1,1)-- (1.25,1.25*0.75+0.25)--(1.25,1.5)  -- cycle;

\fill [white](1,3/4*1.25+1/4+5/16) coordinate (a) arc (180:233.1:0.25) -- cycle;
\fill [white]  (1,3/4*1.25+1/4+5/16) -- (1.25,3/4*1.25+1/4+5/16) -- (1.25-0.25*0.6,3/4*1.25+0.25*0.6*3/4+1/4 )-- cycle;
\fill [white] (1.25,3/4*1.25+1/4+5/16) -- (1.25+0.25*0.6,3/4*1.25+0.25*0.6*3/4+1/4 )--(1.25,4.75/4)--(1.25-0.25*0.6,3/4*1.25+0.25*0.6*3/4+1/4 )--cycle;

\draw [thick] (1,1) -- (1,1.8);
\node [right] at (1, 1.9 ) {$\classifier_A$};
\node [right] at (1, 1.75 ) {$+$};

\draw [domain=1:2, thick] plot (\x, {3/4*\x+1/4});
\node [right] at (1.8, 0.75*1.8+0.25 ) {$\classifier_B$};
\node [left] at (1.8, 0.75*1.8+0.25 ) {$+$};

\node [above, orange] at (0.95, 1 ) {\footnotesize$O$};

\draw [ultra thick,  red] (1.25,0.75*1.25+0.25) -- (1.25,1.88);
\node [right, red] at (1.25, 1.8 ) {$\classifier_A^{+}$};

\draw [domain=1:1.8, ultra thick, red] plot (\x, {3/4*\x+1/4+5/16});
\node [left, red] at (1.66, 3/4*1.68+1/4+5/16 ) {$\classifier_B^{+}$};

\node [left, orange] at (1,1.5) {$A$};
\node [right, orange] at (1.25,1.25*0.75+0.25) {$B$};

\draw[red, densely dashed, ultra thick] (1,3/4*1.25+1/4+5/16) arc (180:233.1:0.25);

\draw [densely dashed, red, ultra thick] (1.25,1.25*0.75+0.25) -- (1.25-0.25*0.6,3/4*1.25+1/4+5/16-0.25*0.8);

\end{tikzpicture}
    \end{subfigure}
    \begin{subfigure}[b]{0.45\linewidth}
        \begin{tikzpicture}[xscale=5.6,yscale=5.6]

\node [above] at (0.8, {tan(30)*0.8-0.25/cos( 30)} ) {$\classifier_B$};
\node [below, purple] at (0.8, {tan(30)*0.8-0.25/cos( 30)} ) {$D$};

\draw [domain={0.25*sin(60)}:0.5, densely dashed, red, ultra thick] plot (\x, {tan(60)*\x-0.25/cos(60)});
\node[above, purple] at (0.5, {tan(60)*0.5-0.25/cos(60)}) {$E$};

\draw[densely dashed, red, ultra thick] ({0.25*sin(60)},{-0.25*cos(60)}) arc (-30:-60:0.25);

\draw [thick] (-0.25,{-0.25/tan(30)}) -- (-0.25,0.45);
\draw [domain=-0.25:0.8, thick] plot (\x, {tan(30)*\x-0.25/cos( 30)});

\node [above] at (-0.25, 0.45) {$\classifier_A$};
\node [right] at (-0.25,0.4 ) {$+$};

\draw [domain=-0.25:0.25, ultra thick, red] plot (\x, {tan(60)*\x});
\node [above, red] at (0.25, {tan(60)*0.25} ) {$\widehat\classifier_B$};
\node [left, red] at (0.2, {tan(60)*0.2}) {$+$};

\draw [ultra thick, blue] (0,{-0.25/cos( 30)}) -- (0,0.45);
\node [above, blue] at (0, 0.45 ) {$\classifier_A^+$};

\node[right, purple] at  ({0.25*sin(30)},{-0.25*cos(30)}) {$C$};

\node [left] at (0.8, {tan(30)*0.8-0.25/cos( 30)}) {$+$};

\fill [purple!60,nearly transparent]  (0.5,{tan(60)*0.5-0.25/cos(60)}) -- ({(0.25/cos(60)-0.25/cos(30))/(tan(60)-tan(30))},{tan(30)*(0.25/cos(60)-0.25/cos(30))/(tan(60)-tan(30))-0.25/cos( 30)}) -- (0.8, {tan(30)*0.8-0.25/cos( 30)}) -- cycle;
\node [above, thick, purple] at (0.6, {tan(30)*0.62}) {loss};

\end{tikzpicture}
    \end{subfigure}
    \caption{A menu of testing procedures that achieves the first best}
    \label{fig:cheap-talk}
\end{figure}
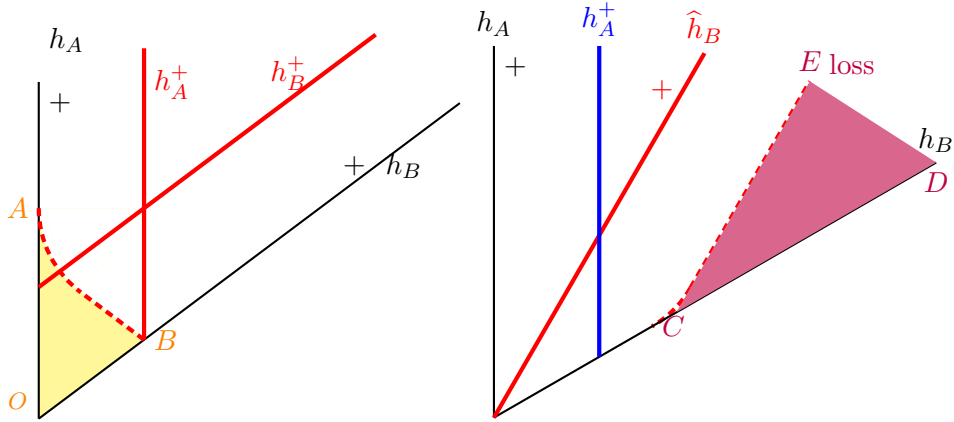

The formal proof is provided in \cref{appendix:cheap talk manipulation}. The key idea is that the union of the set of types selected by each fixed-order mechanism exactly coincides with the qualified region.

\bigskip

\subsection{Comparison to simultaneous mechanisms}\label{subsec: simultaneous manipulation}
Second, we ask whether conducting both tests together could improve upon conducting one test at a time.
In other words, we ask whether simultaneous mechanisms could work better than sequential mechanisms.
\citet{zigzag} have studied the comparison between fixed-order mechanisms and simultaneous mechanisms under the manipulation setting and the same cost function.
 We state a similar result and in \cref{thm:opt_manipulation} we generalize it to a broader class of cost functions.\footnote{We view our \cref{lem:fix-simultaneous-distance cost} as a correction of Theorem 4.4 in \citet{zigzag}. 
The main issue there is that they first show the optimal tests for both simultaneous mechanisms and fixed order mechanisms must use shifted parallel tests.
\cref{fig:cheap-talk} serves as a counterexample to illustrate a fixed-order mechanism that uses shifted non-parallel tests could dominate one that uses shifted parallel tests under some distribution.
}
\begin{proposition}\label{lem:fix-simultaneous-distance cost}
    The optimal fixed-order mechanism dominates the optimal simultaneous mechanism. Moreover, it uses stringent tests $\widetilde\classifier_A, \widetilde\classifier_B$, i.e., $\widetilde\classifier_A\cap \widetilde\classifier_B \subsetneq \classifier_A\cap \classifier_B$.
\end{proposition}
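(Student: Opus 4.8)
The plan is to prove \cref{lem:fix-simultaneous-distance cost} in two parts: first show that the optimal fixed-order mechanism is at least as good as the optimal simultaneous mechanism, and then show that it strictly improves (and uses stringent tests) on some distribution, or more precisely that stringent tests are needed for feasibility of any mechanism while the fixed-order procedure still dominates. The key structural fact I would rely on is the reflection/``zigzag'' characterization from \cref{subsec:example}: in any fixed-order mechanism $(\widetilde\classifier_A,\widetilde\classifier_B,1)$ the set of selected types is determined by comparing the one-step cost to $\widetilde O$ (the intersection of the test boundaries) with the cheaper two-step reflected-path cost, and the latter never helps an agent get selected any more than the one-step move because of the triangle inequality argument already given. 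In a simultaneous mechanism only the one-step strategy is available, so the set of types an agent can use to pass $\widetilde\classifier_A\cap\widetilde\classifier_B$ is \emph{exactly} the $1/\eta$-neighborhood (in the cost metric) of $\widetilde\classifier_A\cap\widetilde\classifier_B$ intersected appropriately; in the fixed-order mechanism with the \emph{same} tests the selected set is \emph{weakly larger}, since the agent has strictly more deviations available.

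First I would make the domination claim precise. Fix the optimal simultaneous mechanism; it uses some tests $\widetilde\classifier_A,\widetilde\classifier_B$ with $\widetilde\classifier_A\cap\widetilde\classifier_B$ nonempty, and feasibility means every type that can reach $\widetilde\classifier_A\cap\widetilde\classifier_B$ at cost $<1$ is qualified (lies in $\classifier_A\cap\classifier_B$). I claim the fixed-order mechanism $(\widetilde\classifier_A,\widetilde\classifier_B,1)$ — test $A$ first, then $B$ among survivors — is also feasible and selects a superset of the types selected by the simultaneous mechanism. Superset is immediate: any one-step strategy that passes both tests simultaneously also passes them when taken in sequence (present the same $\features=\firstfeatures=\secondfeatures$), so the qualified types selected under the simultaneous mechanism are still selected. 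Feasibility requires showing no unqualified type has a profitable sequential (two-step) strategy. Here I invoke the geometric argument from \cref{subsec:example}: the cheapest two-step path $\orifeatures\to\firstfeatures\to\secondfeatures$ with $\firstfeatures\in\widetilde\classifier_A$, $\secondfeatures\in\widetilde\classifier_B$ has cost equal to the straight-line distance from the reflection $\widetilde\features$ of $\orifeatures$ across the boundary of $\widetilde\classifier_A$ to $\secondfeatures$, and this is minimized when $\widetilde\features,\firstfeatures,\secondfeatures$ are colinear; crucially this cost is $\ge \onecost(\widetilde\features,\widetilde O)=\onecost(\orifeatures,\widetilde O)$ only for types on the ``far'' side, but for types where the two-step path is genuinely cheaper the terminal point $\secondfeatures$ still lies in $\widetilde\classifier_A\cap\widetilde\classifier_B$ (it's on the boundary of $\widetilde\classifier_B$ and, by the zigzag geometry, inside $\widetilde\classifier_A$), so if the total cost is $<1$ then $\orifeatures$ could also reach $\widetilde\classifier_A\cap\widetilde\classifier_B$ directly at cost $<1$ — wait, that is not quite true, which is exactly the subtle point. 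So instead: the two-step path terminates at $\secondfeatures\in$ boundary of $\widetilde\classifier_B$, and the sub-path $\firstfeatures\to\secondfeatures$ stays feasible for a one-step move to the \emph{corner} $\widetilde O$ whenever $\secondfeatures\ne\widetilde O$, giving $\onecost(\orifeatures,\widetilde O)\le\cost(\orifeatures,\firstfeatures,\secondfeatures)$ by the triangle inequality on $\widetilde\features,\secondfeatures,\widetilde O$; hence if a sequential strategy costs $<1$ then the one-step move to $\widetilde O\in\widetilde\classifier_A\cap\widetilde\classifier_B$ costs $<1$, so by feasibility of the simultaneous mechanism $\orifeatures$ is qualified. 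This closes feasibility, and since the optimal fixed-order mechanism is by definition at least as good as this particular one, it dominates the optimal simultaneous mechanism.

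Next I would establish that the optimal fixed-order mechanism uses \emph{strictly} stringent tests, i.e.\ $\widetilde\classifier_A\cap\widetilde\classifier_B\subsetneq\classifier_A\cap\classifier_B$. Suppose for contradiction the optimal fixed-order mechanism used non-stringent tests, so $\widetilde\classifier_A\cap\widetilde\classifier_B\supseteq\classifier_A\cap\classifier_B$; then take any qualified type $\orifeatures$ on the boundary of $\classifier_A\cap\classifier_B$ but interior to $\widetilde\classifier_A\cap\widetilde\classifier_B$, or more simply take an \emph{unqualified} type $\orifeatures$ just outside $\classifier_A\cap\classifier_B$ in a direction where it is within distance $<1$ of $\widetilde\classifier_A\cap\widetilde\classifier_B$ (possible whenever $\widetilde\classifier_A\cap\widetilde\classifier_B$ is not a proper subset); such a type passes via a one-step move, violating feasibility. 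Hence any feasible mechanism — simultaneous or sequential — must shrink the acceptance region strictly below $\classifier_A\cap\classifier_B$, giving the stringency claim; in particular the inclusion is strict because the $\le 0$ cost boundary must be pushed strictly inside to create the $1/\eta$-wide safety buffer against nearby unqualified types. I expect the main obstacle to be making the feasibility argument for the fixed-order mechanism fully rigorous — precisely, verifying via the reflection geometry that every profitable two-step deviation's terminal point really does lie in $\widetilde\classifier_A\cap\widetilde\classifier_B$ and that the colinearity-minimized path cost is bounded below by $\onecost(\orifeatures,\widetilde O)$ in the relevant region (the borderline cases $\secondfeatures=\widetilde O$, and types in the wedge $\text{OBCB}'$ of \cref{fig: zig zag}, need separate checking) — together with handling the degenerate configurations where $\theta$ is obtuse or the tests are nearly parallel. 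I would also need to double-check the strictness via an explicit small example (as in \cref{fig:cheap-talk}) since ``dominates'' plus ``uses stringent tests'' may require the chosen distribution to make the zigzag-reachable-but-not-one-step-reachable region have positive measure; I would cite the counterexample construction already sketched in the footnote after the proposition statement rather than re-derive it.
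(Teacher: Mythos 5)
There is a genuine gap, and it sits at the heart of your feasibility argument for the fixed-order mechanism. You claim that for any profitable two-step deviation $\orifeatures\to\firstfeatures\to\secondfeatures$ one has $\onecost(\orifeatures,\widetilde O)\le\cost(\orifeatures,\firstfeatures,\secondfeatures)$ ``by the triangle inequality on $\widetilde\features,\secondfeatures,\widetilde O$,'' so that a cheap two-step path would imply a cheap one-step move to the corner $\widetilde O$. This inequality is reversed. Writing $\widetilde\features$ for the reflection of $\orifeatures$ across the boundary of $\widetilde\classifier_A$, the minimal two-step cost equals $\onecost(\widetilde\features,\secondfeatures)$ with $\secondfeatures$ the projection of $\widetilde\features$ onto the boundary line of $\widetilde\classifier_B$, and since $\widetilde O$ lies on that same line, $\onecost(\widetilde\features,\secondfeatures)\le\onecost(\widetilde\features,\widetilde O)=\onecost(\orifeatures,\widetilde O)$ --- the two-step path is \emph{cheaper} than the one-step move to the corner, strictly so off the degenerate case $\secondfeatures=\widetilde O$ (this is exactly the statement in the preliminary analysis of \cref{subsec:example}, and it is the entire reason the fixed-order mechanism selects strictly more types than the simultaneous one). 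So the set of types with a profitable two-step strategy but no profitable one-step strategy is nonempty in general (the wedge $OBCB'$ of \cref{fig: zig zag}), and feasibility of the simultaneous mechanism tells you nothing directly about whether those types are qualified. Your argument, taken at face value, would prove this set is empty, which contradicts the domination being (sometimes) strict.

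What is actually needed, and what the paper supplies, is different in two respects. First, one cannot take the optimal simultaneous mechanism's tests as arbitrary: there are tests that are feasible for a simultaneous mechanism but not for the fixed-order mechanism using the same tests, so the argument must first pin down the optimal simultaneous tests as the parallel shifts $\classifier_A^+,\classifier_B^+$ (\cref{prop:optimal simultaneous manipulation general cost}). Second, feasibility of $(\classifier_A^+,\classifier_B^+,1)$ is then established not by reducing two-step deviations to one-step ones, but by a translation argument (\cref{lem:fix-simul}): if an unqualified $\features$ were accepted by the fixed-order mechanism via $(\firstfeatures,\secondfeatures)$, one shifts $\features$ by a vector parallel to the violated criterion's boundary to produce a still-unqualified type that would be accepted by the simultaneous mechanism at cost at most $\cost(\features,\firstfeatures,\secondfeatures)\le 1$, contradicting its feasibility. (Equivalently, in the Euclidean case one checks geometrically that the zigzag-reachable set $\setonetwo_1\cup\setO$ for the shifted tests is contained in $\classifier_A\cap\classifier_B$, as in \cref{fig: oversimplified intuition}.) Your superset observation for the one-step strategies and your stringency argument in the final paragraph are fine and match \cref{lem:stringent tests}, but without repairing the feasibility step the proposition is not proved.
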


The proof idea is to first show that the optimal simultaneous mechanism uses shifted tests that are parallel to the principal's true criteria.
And then we show that one fixed-order mechanism that uses the same tests weakly increases the chance of selecting an agent.

To understand this proposition, suppose the test $\classifier_{A} \cap \classifier_{B}$ is used in the simultaneous mechanism. 
This implies that any agent with true attributes $\orifeatures\in \classifier_{A} \cap \classifier_{B}$
will not manipulate and will be selected. In addition, agents with true
attributes $\orifeatures\notin \classifier_{A} \cap \classifier_{B}$ will be selected only if they pay a
 cost to adopt attributes that are in the qualified region. Since
the benefit of doing so (i.e., getting selected) is 1, only those with cost
no greater than 1 will be willing to do so. The set of candidates who are
willing to do such preparation is 
\[
\manipulation=\left\{ \orifeatures\notin h_{A}\cap h_{B}:\min_{\features\in h_{A}\cap
h_{B}}\onecost(\orifeatures,\features)\leq 1\right\} \text{.}
\]%
This set $\manipulation$ comprises all points whose distance to either the two edges of
the qualified region $h_{A}\cap h_{B}$ is no greater than 1.

The simultaneous mechanism that uses tests $%
h_{A}\cap h_{B}$ selects some unqualified agents, so does not satisfy
the principal's objective. In order not to select any unqualified agents,
the principal must choose a more stringent test. For example,   $h_{A}^{+}\cap h_{B}^{+}$, where $h_{i}^{+}$ is
obtained by a parallel shift of $h_{i}$ by a distance of $1/\eta$. It turns out $h_{A}^{+}\cap
h_{B}^{+}$ is the optimal simultaneous mechanism.
This is because shifted non-parallel tests are either not feasible or not optimal under any simultaneous procedure (See \cref{fig:non-parallel tests not optimal}).

\begin{figure}
\centering 
    \begin{subfigure}{0.45\textwidth}
\centering 
\begin{tikzpicture}[xscale=4,yscale=4]

\draw [thick] (1,0.65) -- (1,1.8);
\node [right] at (1, 1.9 ) {$\classifier_A$};
\node [right] at (1, 1.75 ) {$+$};

\draw [domain=0.6:2, thick] plot (\x, {3/4*\x+1/4});
\node [right] at (2, 1.75 ) {$\classifier_B$};
\node [left] at (2,1.75) {$+$};

\draw [ultra thick, red] (1.25,0.65) -- (1.25,1.88);
\node [right, red] at (1.25, 1.8 ) {$\classifier_A^{+}$};
\draw [domain=0.35:1.8, ultra thick, red] plot (\x, {2.5/4*(\x-5/4)+3/2});
\node [left, red] at (1.64, 3/4*1.66+1/4+5/16 ) {$\widehat\classifier_B$};

\draw[red, densely dashed] (1,3/4*1.25+1/4+5/16) arc (180:360-53.1:0.25);

\fill [blue!60,nearly transparent]  
(1.8,{2.5/4*(1.8-5/4)+3/2}) --
(1.25,3/4*1.25+1/4+5/16) -- (1.25+0.25*0.6,3/4*1.25+0.25*0.6*3/4+1/4 ) --(2,6.5/4)-- cycle;

\fill [blue!60,nearly transparent]  (1,3/4*1.25+1/4+5/16) -- (1.25,3/4*1.25+1/4+5/16) -- (1.25,1.88) -- (1,1.88)-- cycle;

\fill [blue!60,nearly transparent] (1,3/4*1.25+1/4+5/16) -- (1.25,3/4*1.25+1/4+5/16) -- (1.25+0.25*0.6,3/4*1.25+0.25*0.6*3/4+1/4 )-- cycle;

\fill [blue!60,nearly transparent](1,3/4*1.25+1/4+5/16) coordinate (a) arc (180:360-53.1:0.25) -- cycle;

\end{tikzpicture}
\caption{Not feasible non-parallel tests}
\end{subfigure}
\begin{subfigure}{0.45\textwidth} 
\centering 
\begin{tikzpicture}[xscale=4,yscale=4]

\draw [thick] (1,0.65) -- (1,1.8);
\node [right] at (1, 1.9 ) {$\classifier_A$};
\node [right] at (1, 1.75 ) {$+$};

\draw [domain=0.6:2, thick] plot (\x, {3/4*\x+1/4});
\node [right] at (2, 1.75 ) {$\classifier_B$};
\node [left] at (2,1.75) {$+$};

\draw [ultra thick, red] (1.25,0.65) -- (1.25,1.88);
\node [right, red] at (1.25, 1.8 ) {$\classifier_A^{+}$};
\draw [domain=0.35:1.76, ultra thick, red] plot (\x, {3.5/4*(\x-5/4)+3/2});
\node [left, red] at (1.63, {3.5/4*(1.65-5/4)+3/2} ) {$\widetilde \classifier_B$};


\draw[red, densely dashed] (1,3/4*1.25+1/4+5/16) arc (180:360-53.1:0.25);

\fill [blue!60,nearly transparent]  
((1.76, {3.5/4*(1.76-5/4)+3/2}) --
(1.25,3/4*1.25+1/4+5/16) -- (1.25+0.25*0.6,3/4*1.25+0.25*0.6*3/4+1/4 ) --(2,{3.5/4+1})-- cycle;

\fill [blue!60,nearly transparent]  (1,3/4*1.25+1/4+5/16) -- (1.25,3/4*1.25+1/4+5/16) -- (1.25,1.88) -- (1,1.88)-- cycle;

\fill [blue!60,nearly transparent] (1,3/4*1.25+1/4+5/16) -- (1.25,3/4*1.25+1/4+5/16) -- (1.25+0.25*0.6,3/4*1.25+0.25*0.6*3/4+1/4 )-- cycle;

\fill [blue!60,nearly transparent](1,3/4*1.25+1/4+5/16) coordinate (a) arc (180:360-53.1:0.25) -- cycle;

\end{tikzpicture}
 \caption{Not optimal non-parallel tests}  
 \end{subfigure}
 \caption{Shifted non-parallel tests are not optimal}
 \label{fig:non-parallel tests not optimal}
\end{figure}
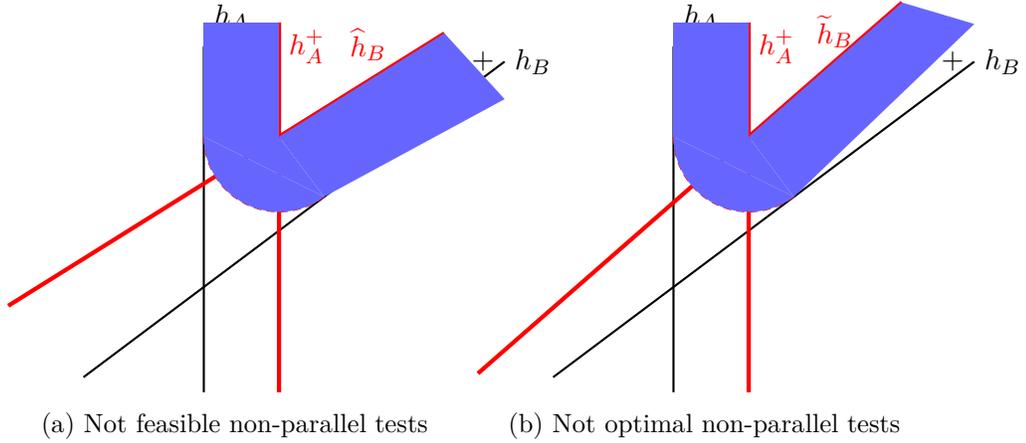

Suppose now the principal uses a fixed-order sequential mechanism that offers test $h_{A}$ at the first stage, and (2) among those
selected in (1), those who can pass test $h_{B}$ will be eventually
selected. We can characterize the set of attributes that get selected under such a fixed-order mechanism (\cref{fig: zig zag}) using the geometric intuition offered in \cref{subsec:example}.

In order not to select any unqualified agent, the principal needs to choose
more stringent tests instead of $h_{A}$ and $h_{B}$. One obvious candidate is the same stringent tests used in the optimal simultaneous mechanism: $%
h_{A}^{+}$ and $h_{B}^{+}$.
It remains to check that all agents selected by the fixed-order sequential mechanism are qualified under tests $%
h_{A}^{+}$ and $h_{B}^{+}$.
It turns out that this is indeed the case here (see \cref{fig: oversimplified intuition}).\footnote{In general, there are tests that are feasible under simultaneous mechanisms but are not feasible under fixed-order sequential mechanisms.}

\citet{zigzag} showed that the optimal simultaneous mechanism  $h_{A}^{+}\cap h_{B}^{+}$ is dominated by the fixed-order sequential mechanism $%
h_{A}^{+}$ and $h_{B}^{+}$; however, there is an error in the proof of their main result (Theorem 4.4), which we correct here. 
Observe that by using the fixed-order sequential mechanism $%
h_{A}^{+}$ and $h_{B}^{+}$, the principal selects more agents than
by using the simultaneous mechanism $h_{A}^{+}\cap h_{B}^{+}$. This is
because the fixed-order sequential mechanism enables agents to use two-step
strategies, which are cheaper (sometimes strictly) than the best one-step
strategy. 
We conclude that the optimal simultaneous mechanism is dominated by some sequential mechanisms. 
Moreover, in order to avoid unqualified agents, any feasible sequential mechanisms must use stringent tests.
 \section{Investment}\label{subsec: distance investment}
In this section, we consider the investment setting. Notice that the incentive of the agent is the same as before. The main difference between this setting and the manipulation setting is the incentive of the principal. All omitted proofs in this section are provided in \cref{appendix: distance investment}.

\subsection{Optimal sequential mechanism}
First, we  show that under a mild condition, there exists a random-order mechanism without disclosure that can achieve the first best (\cref{thm: true effort sequential}).

\begin{theorem}\label{thm: true effort sequential}
    If $\theta\geq 30^{\circ}$, for any distribution $\dist$, 
   then the   random order mechanism without disclosure $(\classifier_A,\classifier_B,\frac12,\nullset)$ achieves the first best.
\end{theorem}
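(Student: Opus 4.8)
The plan is to characterize exactly the set of types selected under the random-order mechanism without disclosure $(\classifier_A,\classifier_B,\tfrac12,\nullset)$ when the tests coincide with the true criteria, and to show that this set equals the qualified region $\qualregion=\classifier_A\cap\classifier_B$ precisely when $\theta\ge 30^\circ$. Since the tests are non-stringent, any qualified agent (true type already in $\classifier_A\cap\classifier_B$) pays zero cost and is selected, so the ``first best for qualified agents'' half is immediate; the real content is feasibility, i.e.\ that \emph{no} unqualified type has a profitable strategy. Because this is the investment setting, ``selected'' and ``qualified'' coincide for the objective, so it suffices to show that every type with a profitable deviation ends up with final attributes $\secondfeatures$ (and $\firstfeatures$) that lie in $\classifier_A\cap\classifier_B$. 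Equivalently, I must show that for every unqualified $\orifeatures$, no strategy achieves expected payoff $>0$ unless it moves the agent into the qualified region.

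The key step is the agent's best-response analysis under no disclosure with $q=\tfrac12$. Following the reflection/optics intuition from Section~\ref{subsec:example}, there are essentially two kinds of strategies: (i) a one-step strategy moving directly to a point of $\classifier_A\cap\classifier_B$ (cost $= $ distance to the qualified region), which guarantees selection with probability $1$; and (ii) a two-step ``zigzag'' strategy $\orifeatures\to\firstfeatures\to\secondfeatures$ that, because the agent does not learn the first test, must be evaluated against \emph{both} orderings: with probability $\tfrac12$ the realized order is $(\classifier_A,\classifier_B)$ and with probability $\tfrac12$ it is $(\classifier_B,\classifier_A)$. A zigzag tailored to one order (e.g.\ $\firstfeatures$ chosen to enter $\classifier_A$ first, then $\secondfeatures$ into $\classifier_B$) typically fails in the reversed order, so it yields probability $\tfrac12$ of selection at positive cost. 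I would compare, for each unqualified $\orifeatures$, the best expected payoff over all such partial strategies against the payoff of the one-step strategy into $\classifier_A\cap\classifier_B$, and show that when $\theta\ge 30^\circ$ every profitable strategy is in fact the one-step strategy (or a zero-cost stay, if already qualified). The $30^\circ$ threshold will come out of the geometry: the distance from an unqualified point $\orifeatures$ just outside one facet to the apex $O$, versus the reflected-path length needed to clear both half-planes sequentially with a $\tfrac12$ chance — writing both in terms of $\theta$ and the relevant perpendicular distances, the one-step strategy dominates (for the agent, making the partial strategy unprofitable or non-selecting unless it lands in $\qualregion$) exactly when $\sin$ of the relevant angle is bounded appropriately, which reduces to $\theta\ge 30^\circ$.

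Concretely I would proceed as follows. First, reduce to types $\orifeatures\notin\classifier_A\cap\classifier_B$ and note the payoff of any strategy is $(\text{selection probability})-(\text{cost})\in\{0,\tfrac12,1\}\cdot 1 - \text{cost}$. Second, argue that a strategy achieving selection probability $1$ must place both $\firstfeatures$ and $\secondfeatures$ in $\classifier_A\cap\classifier_B$ regardless of order, hence it is (weakly dominated by) the one-step move to the nearest point of $\qualregion$; if that move is profitable the agent becomes qualified, which is fine. Third — the crux — rule out profitable strategies with selection probability $\tfrac12$: such a strategy succeeds in only one order, say it needs $\firstfeatures\in\classifier_A$ and $\secondfeatures\in\classifier_B$; using the reflection argument its minimal cost is the length of the straight segment from the reflection $\widetilde\orifeatures$ of $\orifeatures$ across $\partial\classifier_A$ to the nearest point of $\partial\classifier_B$ (or $0$ if $\orifeatures$ already satisfies the relevant constraints), and the payoff is $\tfrac12-(\text{that cost})$. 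I must show this is $\le 0$ for every unqualified $\orifeatures$ when $\theta\ge 30^\circ$; this is a planar trigonometric estimate relating that reflected distance to $\sin\theta$ (and to $\tfrac12$), and I expect it to be the main obstacle — one has to handle separately the sub-regions according to which of $\classifier_A,\classifier_B$ the point $\orifeatures$ already satisfies, and check the worst case occurs at a boundary configuration where the bound is tight at exactly $\theta=30^\circ$. Finally, combining the three steps: no unqualified type is selected, every qualified type is selected at zero cost, so $(\classifier_A,\classifier_B,\tfrac12,\nullset)$ is feasible and attains $\pr[\orifeatures\in\qualregion]$, the first best. I would also remark that the case $q\neq\tfrac12$ is not needed here since $q=\tfrac12$ already suffices, and that the proof only uses $\theta\ge30^\circ$ through the single trigonometric inequality above.
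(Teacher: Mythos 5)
Your overall route is the same as the paper's: reduce the problem to the agent's best response under $(\classifier_A,\classifier_B,\frac12,\nullset)$, price the best two-step strategy by the reflection argument, split into cases according to which of $\classifier_A,\classifier_B$ the type already satisfies, and extract a trigonometric inequality that is tight exactly at $\theta=30^{\circ}$; this is precisely what \cref{claim same behavior} does. However, your ``crux'' step~3 proposes to prove the wrong inequality. You say you must show that the best probability-$\frac12$ two-step strategy has payoff $\frac12-(\text{cost})\le 0$ for every unqualified $\orifeatures$. That is false: for $\orifeatures$ near the apex $O$, or on the boundary of $\classifier_A$ at perpendicular distance $\varepsilon$ from the boundary of $\classifier_B$, the strategy ``wait, then project onto the other boundary'' costs $O(\varepsilon)$ and has payoff close to $\frac12>0$, so two-step strategies can be strictly profitable even when $\theta\ge 30^{\circ}$. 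What rules them out is not unprofitability but \emph{dominance}: whenever the two-step payoff $\frac12-\eta r\cos\alpha$ (resp.\ $\frac12-\eta r\sin(\theta+\alpha)$ in the doubly-unqualified case) is nonnegative, the one-step payoff $1-\eta r$ is weakly larger, and this comparison reduces to $\cos\alpha\ge\frac12$ (resp.\ $\sin(\theta+\alpha)\ge\frac12$), which is where $\theta\ge 30^{\circ}$ enters. You actually state the correct comparison in your second paragraph, but the concrete plan in your third paragraph would send you down a dead end; the inequality to prove is the payoff comparison \emph{conditional on the two-step strategy being profitable}.

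Two smaller points. First, the first best in the investment setting is not $\pr[\orifeatures\in\qualregion]$: it is the probability of the set of types that are initially qualified \emph{or} can invest into $\qualregion$ at cost at most one, since the latter become qualified and are selected, and no feasible mechanism can do better. Your closing sentence understates the benchmark, and the proof must verify that the mechanism selects exactly this larger set (which the one-step characterization delivers). Second, the paper treats $\theta\ge 90^{\circ}$ separately, where the two-step regions are empty; your reflection argument covers this implicitly, but the case split should be made explicit since the formula for the two-step cost changes form there.
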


\begin{lemma}\label{claim same behavior}
    If $\theta\geq 30^{\circ}$, then in $(\classifier_A,\classifier_B,\frac12,\nullset)$, any attributes $\features\notin \classifier_A\cap\classifier_B$ that get accepted by this mechanism use a one-step strategy.
\end{lemma}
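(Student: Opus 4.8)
The plan is to collapse the agent's large strategy set into four reference options and reduce everything to a single geometric inequality that is tight exactly at $\theta=30^{\circ}$. \textbf{Reducing the strategy space.} In $(\classifier_A,\classifier_B,\tfrac12,\nullset)$ the only randomness is one fair coin (which test comes first), and since the first test is not disclosed, $\secondfeatures$ cannot depend on it; so the acceptance probability of any strategy $(\firstfeatures,\secondfeatures)$ equals $\tfrac12\,\indicate{\firstfeatures\in\classifier_A}\indicate{\secondfeatures\in\classifier_B}+\tfrac12\,\indicate{\firstfeatures\in\classifier_B}\indicate{\secondfeatures\in\classifier_A}\in\{0,\tfrac12,1\}$, and I would split strategies along these three values. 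A probability‑$1$ strategy forces $\firstfeatures,\secondfeatures\in\classifier_A\cap\classifier_B$, hence costs at least $c_1:=\operatorname{dist}(\features,\classifier_A\cap\classifier_B)$. A probability‑$\tfrac12$ strategy lies in one of the two families $\{\firstfeatures\in\classifier_A,\ \secondfeatures\in\classifier_B\}$ or $\{\firstfeatures\in\classifier_B,\ \secondfeatures\in\classifier_A\}$; call their minimal costs $c_A,c_B$ (minima of continuous coercive functions over closed halfplanes, hence attained). A probability‑$0$ strategy has non‑positive utility. Thus the agent's value is $\max\{1-c_1,\ \tfrac12-c_A,\ \tfrac12-c_B,\ 0\}$, and $1-c_1$ is achieved by the one‑step strategy ``move $\features$ to the nearest point of $\classifier_A\cap\classifier_B$ before the first test and stay''.

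\textbf{The inequality and how it closes the argument.} I claim that for $\theta\ge30^{\circ}$ and every $\features\notin\classifier_A\cap\classifier_B$, if $c_A\le\tfrac12$ then $c_1\le c_A+\tfrac12$, and symmetrically for $c_B$. Granting this: when $c_A\le\tfrac12$ we get $\tfrac12-c_A\le1-c_1$, and when $c_A>\tfrac12$ we get $\tfrac12-c_A<0$, so in all cases $\tfrac12-c_A\le\max\{1-c_1,0\}$, and likewise for $c_B$. Hence the agent's value is just $\max\{1-c_1,0\}$. If $\features$ gets accepted, an accepting (probability‑$\tfrac12$ or $1$) strategy is optimal, so the value must equal $1-c_1\ge0$, which the one‑step strategy above attains; therefore the accepted agent can be taken to use that one‑step strategy (a type with $1-c_1=0$ is indifferent between it and doing nothing, and under the paper's tie‑breaking uses the one‑step strategy).

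\textbf{Proving the inequality.} Take a minimiser $(p,s)$ of $\|\features-p\|+\|p-s\|$ over $p\in\classifier_A$, $s\in\classifier_B$, so $c_A=\|\features-p\|+\|p-s\|\le\tfrac12$ and in particular $d:=\operatorname{dist}(p,\classifier_B)\le\|p-s\|\le\tfrac12$. Case 1: $p\in\classifier_B$, so $p\in\classifier_A\cap\classifier_B$ and $c_1\le\|\features-p\|\le c_A$. Case 2: $p\notin\classifier_B$ and the foot $f$ of the perpendicular from $p$ to $\partial\classifier_B$ lies in $\classifier_A$ (equivalently, on the wedge edge); then $f\in\classifier_A\cap\classifier_B$ and $c_1\le\|\features-p\|+\|p-f\|=\|\features-p\|+d\le c_A$. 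Case 3: $p\notin\classifier_B$ and $f\notin\classifier_A$. Put the origin at the apex $O$ of the wedge and use coordinates along and normal to $\partial\classifier_B$; one checks $\cos\theta>0$ is forced here (otherwise $p\in\classifier_A$ fails), and the constraints ``$p\in\classifier_A$'' together with ``$f$ on the far ray of $\partial\classifier_B$'' give $|\,\text{component of }p\text{ along the wedge edge}\,|\le d\cot\theta$, hence $\|p-O\|\le d/\sin\theta$. Since $d\le\tfrac12$ and $\sin\theta\ge\tfrac12$, we get $d/\sin\theta-d=d\cdot\tfrac{1-\sin\theta}{\sin\theta}\le\tfrac12$, so $\|p-O\|\le d+\tfrac12\le\|p-s\|+\tfrac12$; finally $c_1\le\|\features-O\|\le\|\features-p\|+\|p-O\|\le\|\features-p\|+\|p-s\|+\tfrac12=c_A+\tfrac12$.

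\textbf{Where the difficulty lies.} The reduction and Cases 1--2 are bookkeeping; the real content is Case 3, where the estimate $\|p-O\|\le d/\sin\theta$ must be upgraded to $\|p-O\|\le d+\tfrac12$, and this is exactly where the hypothesis enters, since the needed $\tfrac{1-\sin\theta}{\sin\theta}\le1$ is equivalent to $\sin\theta\ge\tfrac12$, i.e. $\theta\ge30^{\circ}$ (for $\theta\ge90^{\circ}$ Case 3 is vacuous and the lemma is immediate). I would also take care to confirm that the case split on the location of $p$ relative to $\classifier_B$ and $\partial\classifier_B$ is genuinely exhaustive and that, in Case 3, the inferred bound on the edge‑component of $p$ is the correct consequence of $p\in\classifier_A$.
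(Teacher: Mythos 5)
Your proof is correct, and it reaches the paper's conclusion by a recognizably different route. The paper first invokes its prior characterization of best responses (the reflection/zig-zag lemma) to identify the exact cost of the optimal two-step strategy, then works in polar coordinates around the apex $O$ with a case split on which of $\classifier_A,\classifier_B$ the original type $\features$ satisfies, reducing to the pointwise inequalities $\cos\alpha\ge\tfrac12$ and $\sin(\theta+\alpha)\ge\tfrac12$ with $\alpha$ ranging over $[0,90^{\circ}-\theta]$ (and citing Theorem 3.7 of the zig-zag paper for $\theta\ge 90^{\circ}$). You instead never compute the optimal two-step cost: you collapse the value to $\max\{1-c_1,\tfrac12-c_A,\tfrac12-c_B,0\}$ and prove the metric inequality $c_1\le c_A+\tfrac12$ whenever $c_A\le\tfrac12$, with the case split on where the intermediate point $p$ sits relative to $\classifier_B$ and its foot relative to the wedge edge; your bound $\|p-O\|\le d/\sin\theta$ in Case 3 is where $\sin\theta\ge\tfrac12$ enters, which is the same trigonometric fact driving the paper's argument (note $\cos\alpha\ge\cos(90^{\circ}-\theta)=\sin\theta$). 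What your version buys is self-containedness — it does not lean on the reflection characterization of the optimal zig-zag, only on existence of a minimizer and triangle inequalities — and it treats the near-boundary regions $\setperp_A,\setperp_B$ uniformly through $c_1=\operatorname{dist}(\features,\classifier_A\cap\classifier_B)$ rather than by appeal to the earlier best-response analysis; what it gives up is the exact description of the agent's behavior that the paper's computation also yields. Both proofs share the same harmless looseness at exact ties (weak inequalities), which you flag explicitly and the paper does not.
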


\begin{proof}[Proof of \cref{thm: true effort sequential}]
   By \cref{claim same behavior}, any agent that gets selected adopt some attributes in $\classifier_A\cap\classifier_B$ under the random-order mechanism $(\classifier_A,\classifier_B,\frac12,\nullset)$.
    We show that the described mechanism selects all potential qualified agents and zero unqualified agent, which achieves the first best.
   First, any qualified agent gets selected with zero cost under the described mechanism.
    Second, notice that under the investment setting, every unqualified agent who can improve to some qualified attributes with cost less than one become qualified.
    Such an agent is selected by the described mechanism.
    Third, any other attributes can only get selected by the described mechanism with a cost strictly larger than one, which is profitable.
    So these attributes remain unqualified and are not selected.
\end{proof}

\cref{thm: true effort sequential} hinges that in  the investment setting, the key for any sequential mechanism to achieve the first best is that every agent prefers using some one-step strategy in the sequential mechanism.
This observation will be made concrete in \cref{lem: optimal max qualified improving effort}.

How do other sequential mechanisms perform when the agent invests?
First, if $\theta<90^{\circ}$, there does not exist any sequential mechanism with fixed order that can achieve the first best.
Consider the fixed-order sequential mechanism that first offers $\classifier_A$ and then $\classifier_B$.
Under this mechanism, some attributes prefer using a two-step strategy and they do not become qualified after the investment. 
For example, in \cref{subfig:BR sequential fixed},  the agent with attributes $\orifeatures$ (on the boundary line of $\classifier_A$) uses the following strategy: he does not change his attributes before the first test, and invests to become $\secondfeatures$ (its projection on the boundary line of $\classifier_B$) before the second test.
Although this agent passes both tests and gets selected under the current mechanism, he is still unqualified after investment.
To prevent selecting any unqualified agent, the principal needs to use at least one stringent test under this mechanism.
For example, offering $\tilde{\classifier}_A$ first and then $\classifier_B$ as shown in \cref{fig:feasible fixed improvement}. 

    \begin{figure}[t]
\centering
\begin{subfigure}[b]{0.45\linewidth}
\begin{tikzpicture}[xscale=7.8,yscale=7.8]

\draw [domain=0.62:1.4, thick] plot (\x, {3/4*\x+1/4});
\node [left] at (1.3, 1.25 ) {$ \classifier_B$};
\draw [thick] (1,0.62) -- (1,1.3);
\node [right] at (1, 1.25) {$\classifier_A$};
\node [above,font=\tiny] at (0.97, 1 ) {\footnotesize$O$};
\node [left] at (1.2,1.15) {$+$};
\node [right] at (1, 1.15) {$+$};

\draw [domain=1+0.25*0.6:1.45, loosely dashed] plot (\x, {3/4*\x+1/4-5/16}); 
\draw [loosely dashed] (1,1) -- (1+0.25*0.6,1-0.25*0.8);

\draw [loosely dashed] (1-0.25,1) -- (1-0.25,1.3);
\draw [loosely dashed] (1-0.25,1) -- (1,1);

\draw[blue, ultra thick] (1-0.25,1) arc (180:233.1:0.25);
\draw[blue, ultra thick] (1,1) -- ++(180:0.25);
\draw[blue, ultra thick] (1,1) -- ++(233.1:0.25) ;

\draw [blue, ultra thick] (1,1) -- (1-0.25*0.6,1-0.25*0.8);

\draw [ blue, ultra thick] (1,11/16) -- (1-0.25*0.6,1-0.25*0.8);

\draw [blue, ultra thick] (1,11/16) -- (1+0.25*0.6,1-0.25*0.8);
\draw [blue, ultra thick] (1,1) -- (1+0.25*0.6,1-0.25*0.8);
\draw [blue, ultra thick, <-] (1+0.01*0.6,1-0.01*0.8) -- (1+0.25*0.6,1-0.25*0.8);

\node [right, font=\tiny] at (1,11/16) {$\orifeatures=\firstfeatures$};

\draw [densely dashed, ->]  (1,11/16) -- (1-0.15, 1-0.15*3/4);
\draw [ultra thick, green]  (1,1) -- (1-0.15, 1-0.15*3/4);
\node [left, font=\tiny] at (1-0.15, 1-0.15*3/4) {$\secondfeatures$};

\end{tikzpicture}
\caption{BR in fixed order mech: $\classifier_A\rightarrow\classifier_B$ }
\label{subfig:BR sequential fixed}
\end{subfigure}
  \begin{subfigure}[b]{0.45\linewidth}
\begin{tikzpicture}[xscale=7.6,yscale=7.6]

\draw [domain=0.66:1.4, thick] plot (\x, {3/4*\x+1/4});
\node [left] at (1.36, 1.28 ) {$\classifier_B$};
\draw [thick] (1,0.66) -- (1,1.3);
\node [right] at (1, 1.28) {$\classifier_A$};
\node [above,font=\tiny] at (0.97, 1 ) {\footnotesize$O$};

\draw [domain=1+0.25*0.6:1.45, densely dashed] plot (\x, {3/4*\x+1/4-5/16}); 
\draw [densely dashed] (1,1) -- (1+0.25*0.6,1-0.25*0.8);

\draw [densely dashed] (1-0.25,1) -- (1-0.25,1.3);
\draw [densely dashed] (1-0.25,1) -- (1,1);

\draw[densely dashed] (1-0.25,1) arc (180:233.1:0.25);
\draw[densely dashed] (1,1) -- ++(180:0.25);
\draw[densely dashed] (1,1) -- ++(233.1:0.25) ;

\draw [densely dashed] (1,1) -- (1-0.25*0.6,1-0.25*0.8);
\draw [ densely dashed] (1,11/16) -- (1-0.25*0.6,1-0.25*0.8);
\draw [densely dashed] (1,11/16) -- (1+0.25*0.6,1-0.25*0.8);
\draw [densely dashed] (1,1) -- (1+0.25*0.6,1-0.25*0.8);

\draw [ultra thick, red] (1+0.25*0.6,0.66) -- (1+0.25*0.6,1.3);

\draw [<->, red, dotted] (1,1.16) -- (1+0.25*0.6,1.16);
\node [above, red] at (1+0.07, 1.16 ) {$\frac{\cos{\theta}}{\eta}$};
\node [right,red] at (1+0.25*0.6, 1.28) {$\tilde\classifier_A$};

\draw[blue, ultra thick] (0.9,3/4*1.15+1/4) arc (180:233.1:0.25);
\draw[blue, ultra  thick] (1.15,3/4*1.15+1/4) -- ++(180:0.25);
\draw[blue, ultra  thick] (1.15,3/4*1.15+1/4) -- ++(233.1:0.25) ;

\draw [blue, ultra  thick] (1.15,3/4*1.15+1/4-5/16) -- (1.15-0.25*0.6,3/4*1.15+1/4-0.25*0.8);

\draw [blue, ultra thick] (1.15,1.15*0.75+0.25) -- (1.15+0.25*0.6,3/4*1.15+1/4-0.25*0.8);

\draw [blue, ultra thick] (1.15,3/4*1.15+1/4-5/16) -- (1.15+0.25*0.6,3/4*1.15+1/4-0.25*0.8);

\draw [densely dashed]  (1,1) -- (1.15,3/4*1.15+1/4);
\draw [ultra thick, green]  (1,1) -- (1.15,3/4*1.15+1/4);

\end{tikzpicture}
\caption{Feasible mechanism: tests $\tilde\classifier_A,\classifier_B$} 
\label{fig:feasible fixed improvement}
  \end{subfigure}
\caption{One feasible sequential mechanism with fixed order}

\end{figure}
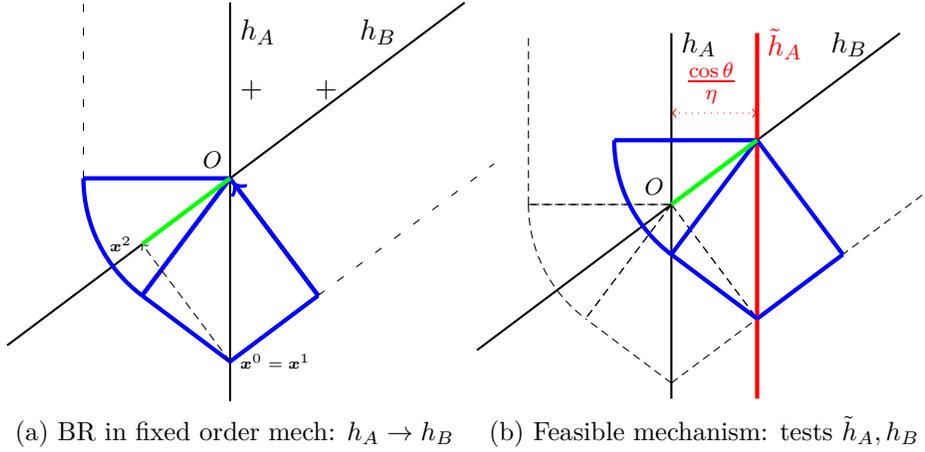

Moreover, a random-order mechanism that discloses the first test result to the agent is usually worse than some fixed order mechanism.
We formally prove this observation in \cref{prop:investment fixed order better than informed}.

What happens when $\theta\leq 30^{\circ}$?
When $\theta< 30^{\circ}$, depending on the distribution, both random-order mechanisms without disclosure and fixed-order mechanisms could be optimal.
To provide intuition, consider the mechanism $(\classifier_A,\classifier_B,\frac12,\nullset)$ and $\theta<30^{\circ}$ as in  \cref{fig:theta<30 random}.
$\theta<30^{\circ}$ can be interpreted as the two requirements set by the principal are significantly in conflict with each other .
Hence there exists some attributes that  satisfy neither $\classifier_A$ and $\classifier_B$ would find it very costly to satisfy both tests simultaneously, but the small angle allows them to use a two-step strategy in the random-order mechanism without disclosure.
For example, $\orifeatures$ in  \cref{fig:theta<30 random}.
However, in the investment setting, when these attributes use a two-step strategy, their eventual attributes are still unqualified under non-stringent tests $\classifier_A$ and $\classifier_B$.

To make this mechanism feasible, two stringent tests are required.
For example, the two tests $\classifier_A^*$ and $\classifier_B^*$ shifted by a distance of $\frac{1}{2\eta}$ as in \cref{fig:theta<30 random}.
This incurs a loss: some unqualified attributes that could have improved to the qualified region would no longer be accepted under the feasible random-order mechanism without disclosure.
In contrast, as we have argued above, a feasible fixed-order mechanism only requires using one stringent test (See \cref{fig:feasible fixed improvement}).
After using stringent tests, the set of agent that will be accepted under a fixed-order mechanism and that under a random-order mechanism without disclosure no longer have a nested structure.
Hence depending on the distribution, either could be better.

       \begin{figure}[t]
\centering
\begin{tikzpicture}[xscale=4.3,yscale=4.3,
    pics/legend entry/.style={code={%
        \draw[pic actions] 
        (-0.25,0.25) -- (0.25,0.25);}}]]

\draw [domain=0.85:1.26, thick] plot (\x, {tan(deg(0.4*pi))*(\x-1)+1});
\draw [thick] (1,0.54) -- (1,1.8);

\draw [domain={1+0.25*cos(deg(0.1*pi))}:1.5, loosely dashed] plot (\x, {tan(deg(0.4*pi))*(\x-1)+1-0.809}); 
\draw [loosely dashed] (1-0.25,1) -- (1-0.25,1.76);
\draw[red, ultra thick] (1-0.25,1) arc (180:240:0.25);
\draw[red, ultra thick] (1,1) -- ++(180:0.25);

\draw[red, ultra thick] ({1+0.25*cos(deg(0.1*pi))},{1-0.25*sin(deg(0.1*pi))}) arc (360-18:360-78:0.25);
\draw[red, ultra thick] (1,1) -- ++(360-18:0.25);

\draw [red, ultra thick] (1-0.125,{1-0.125*sqrt(3)}) -- (1-0.125,0.61625);
\draw [domain=0.875:0.957, red, ultra thick]  plot(\x,{tan(deg(0.3*pi))*(\x-0.875)+0.61625});

\draw [domain=0.956:1,red, ultra thick]  plot(\x,{-tan(deg(0.4*pi))*(\x-1)+1-0.125/sin(deg(0.1*pi))});
\draw [domain=1:1.053,red, ultra thick]  plot(\x,{tan(deg(0.4*pi))*(\x-1)+1-0.125/sin(deg(0.1*pi))});

\draw [blue, ultra thick] (1.125,0.9) -- (1.125,2.2);

\draw [domain=0.85:1.26, ultra thick, blue] plot (\x, {tan(deg(0.4*pi))*(\x-1)+1+0.4045});
\node [right, blue] at (1.16, 1.85) {$\classifier_B^*$};
\node [left, blue] at (1.13, 1.85 ) {$\classifier_A^*$};

\node [left] at (1-0.125,0.61625){\footnotesize$\orifeatures$};
\node [black] at (1-0.125,0.61625) {\textbullet};

\node [left] at (1,1.45){\footnotesize$\widetilde\orifeatures$};
\node [black] at (1,{1+0.4045}) {\textbullet};
\end{tikzpicture}
\caption{$\theta< 30^{\circ}$: random order mechanism without disclosure} \label{fig:theta<30 random}  
\end{figure}
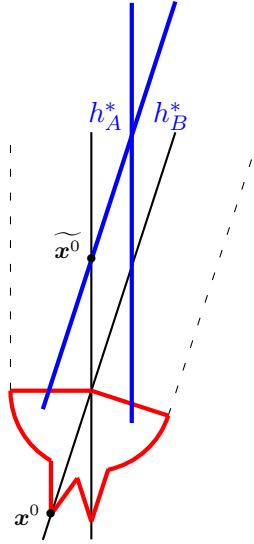

\subsection{Comparison to simultaneous mechanisms}
Next, we ask: do simultaneous mechanisms work better than sequential mechanisms in the investment setting?
The answer is yes.

\begin{proposition}\label{thm:optimal investment}
    For any distribution $\dist$ and any cost function $\cost$, the optimal simultaneous  mechanism uses two tests that coincide with the true requirement: $\classifier_A$ and $\classifier_B$. Moreover, it achieves the first best.
\end{proposition}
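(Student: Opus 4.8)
The plan is to exhibit the simultaneous mechanism with acceptance region $\widetilde\qualregion=\qualregion$ (tests $\classifier_A,\classifier_B$), check that it is feasible and identify the set of types it selects, and then show that \emph{no} feasible simultaneous mechanism can select a strictly larger set of types. Since in the investment setting every selected agent is truly qualified, the upper bound makes the displayed mechanism optimal, and — matching the value attained by the sequential mechanism of \cref{thm: true effort sequential} — this common optimum is exactly the first best. Throughout, recall that a simultaneous mechanism permits only a one-step strategy, so a type $\orifeatures$ chooses $\features$ to maximize $\indicate{\features\in\widetilde\qualregion}-\onecost(\orifeatures,\features)$, and a qualified agent incurs no cost to keep his attributes. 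Write $\manipulation=\{\orifeatures\notin\qualregion:\min_{\features\in\qualregion}\onecost(\orifeatures,\features)\le 1\}$, so that $\qualregion\cup\manipulation$ is the set of types that are qualified or can reach $\qualregion$ at cost at most $1$.

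\textbf{Step 1 (the candidate mechanism is feasible and selects $\qualregion\cup\manipulation$).} First I would argue feasibility: under $\widetilde\qualregion=\qualregion$, any agent who is selected has \emph{final} attributes in $\qualregion$, and in the investment setting final attributes are true attributes, so every selected agent is qualified and no unqualified agent is selected. For coverage: a type $\orifeatures\in\qualregion$ is selected at zero cost; a type $\orifeatures\in\manipulation$ moves to a cost-minimizing point $\features^\star\in\qualregion$, obtaining payoff $1-\onecost(\orifeatures,\features^\star)\ge 0$, hence weakly prefers to do so and is selected. Thus the mechanism selects all of $\qualregion\cup\manipulation$, up to the usual tie-breaking convention for types whose minimal cost to $\qualregion$ equals one.

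\textbf{Step 2 (no feasible simultaneous mechanism does better).} Next I would take an arbitrary feasible simultaneous mechanism with region $\widetilde\qualregion$ and any type $\orifeatures$ selected in it, with final attributes $\features\in\widetilde\qualregion$. Feasibility forces $\features\in\qualregion$ (otherwise this type is unqualified and selected). Revealed preference forces $1-\onecost(\orifeatures,\features)\ge 0$, since the agent always has the option of not changing his attributes, which yields payoff $\ge 0$. Hence $\min_{\features'\in\qualregion}\onecost(\orifeatures,\features')\le\onecost(\orifeatures,\features)\le 1$, i.e.\ $\orifeatures\in\qualregion\cup\manipulation$. So the selected set of any feasible simultaneous mechanism is contained in $\qualregion\cup\manipulation$, its value is at most $\pr[\orifeatures\in\qualregion\cup\manipulation]$, and Step 1 shows this bound is attained by $(\classifier_A,\classifier_B)$, which is therefore optimal.

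\textbf{Main obstacle and remarks.} This is one of the ``easy'' directions: unlike the manipulation setting, passing a non-stringent test here automatically certifies the true type, so there is no tension between feasibility and coverage and the argument is short. The only genuinely delicate points are measure-theoretic — the disposition of indifferent types whose minimal cost to $\qualregion$ equals $1$ (handled by a fixed tie-breaking convention, as elsewhere in the paper), and the fact that ``uses tests $\classifier_A,\classifier_B$'' is pinned down only up to $\dist$-null modifications of the acceptance region, i.e.\ any other optimal simultaneous mechanism must also select $\qualregion\cup\manipulation$ up to a $\dist$-null set. I would also note, as a by-product, that the revealed-preference bound of Step 2 extends verbatim to \emph{all} feasible mechanisms including sequential ones: in any such mechanism a selected agent's expected benefit is at most $1$ while his final type must lie in $\qualregion$, so $\pr[\orifeatures\in\qualregion\cup\manipulation]$ is a universal upper bound, which is precisely why ``first best'' is a well-defined common target hit both by this proposition and by \cref{thm: true effort sequential}.
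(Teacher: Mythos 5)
Your proposal is correct and follows essentially the same route as the paper's proof: show that the mechanism $(\classifier_A,\classifier_B)$ selects exactly the set of agents who are qualified or can reach $\qualregion$ at cost at most one, and observe that this set is an upper bound on what any feasible mechanism can select, since in the investment setting a selected agent's final (true) attributes must lie in $\qualregion$ and profitability caps the total cost at one. Your Step 2 makes the revealed-preference upper bound slightly more explicit than the paper does, and your tie-breaking remark is a harmless refinement, but the argument is the same.
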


The optimal simultaneous mechanism achieves the first best.
This is because the agent is forced to pass both tests together under a simultaneous mechanism.
When the tests coincide with the true criteria, every selected agent is qualified or becomes qualified.

The simultaneous procedure can also be understood as the most stringent procedure because the agent has to pass both tests together.
Using stringent tests does not help in the investment setting, but using a stringent procedure encourages most investment.

\section{Robustness}

\subsection{Robustness to tests and qualified region}\label{sec: perfect tests}

In this subsection, we assume that the principal only uses perfect tests, i.e., for any agent with attributes $\features$, a perfect test outputs $\features$.
This is motivated by the literature on persuasion with evidence \citep{glazer2004optimal,sher2014persuasion}.
In the persuasion games with evidence, (1) the agent's cost of changing attributes is infinitely high, and (2) the principal can use \emph{perfect} tests, i.e., principal can observe agent's attributes in each round of communication.
The main difference is that in our manipulation setting, the agent can fabricate their type at a cost.

All results in this subsection is derived under the manipulation setting and program \ref{max qualified}.
All proof is omitted from this section and provided in \cref{appendix: perfect tests}.

First we show that when the principal can use perfect tests and an upfront cheap-talk communication is feasible, she can achieve the first best.

\begin{proposition}\label{thm: sequential perfect tests}
Suppose $M=\Featurespace$.
     For any distribution $\dist$, there exists a feasible fixed-order sequential mechanism that achieves the first best.   
     Moreover, this mechanism is adaptive, i.e., the second test depends on the first test.
\end{proposition}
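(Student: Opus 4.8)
The plan is to reproduce, and then \emph{internalize}, the construction behind \cref{thm: optimal max qualified cheap talk}. There the first best is obtained by offering a menu of two fixed-order procedures, $(\classifier_A^{+},\classifier_B^{+},1)$ and $(\classifier_A^{+},\widehat\classifier_B,0)$: the first uses two parallel shifts and selects every qualified type outside the apex triangle $\Updelta OAB$, the second uses one rotated test and selects exactly $\Updelta OAB$, so that the union of their selected sets is $\qualregion$ and no unqualified type can profitably choose either. A perfect test reports $\firstfeatures$ exactly, so I would have the agent first send a cheap-talk message $m$ (reject immediately if $m\notin\qualregion$), and then run a single fixed-order sequential mechanism in which the second perfect test is chosen as a function of $m$ and of the observed $\firstfeatures$: that pair determines which of the two branches of the cheap-talk construction the agent is routed into, the second test being (the relevant portion of) $\classifier_B^{+}$ on one branch and $\widehat\classifier_B$ on the other, and the first-stage filter together with the routing being set so that on each branch the agent faces exactly the corresponding fixed-order procedure of \cref{thm: optimal max qualified cheap talk}. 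This is one mechanism rather than a menu of independent procedures, and it is adaptive because the second test depends on $\firstfeatures$.

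I would then verify the two halves of ``achieves the first best'' by reducing, branch by branch, to the analysis already carried out for \cref{thm: optimal max qualified cheap talk}. For feasibility, note that whichever branch the second test takes, the agent must satisfy, at the two time points, a pair of half-planes forming one of the two fixed-order procedures above; by the reflection argument of \cref{subsec:example} the cheapest path $\orifeatures\to\firstfeatures\to\secondfeatures$ through such a pair has length equal to a straight-line distance from a reflected copy of $\orifeatures$, and for each branch this length exceeds $1$ whenever $\orifeatures$ is unqualified. Since the principal fixes the branch herself, the agent is (weakly) more constrained than when he may freely choose his procedure, so feasibility transfers --- provided one also rules out the agent ``gaming'' the routing by choosing $\firstfeatures$ so as to flip the second test to a more permissive one, which is again a reflection and triangle-inequality estimate. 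For optimality, I would exhibit for each $\orifeatures\in\qualregion\setminus\Updelta OAB$ the message $m=\orifeatures$ together with the path that stays in $\classifier_A^{+}$ and reaches $\classifier_B^{+}$ along the reflected straight line (cost $\le 1$ by the computation in \cref{thm: optimal max qualified cheap talk}), and for each $\orifeatures\in\Updelta OAB$ a message that triggers the $\widehat\classifier_B$ branch together with the corresponding short path; the reflected-point estimate bounds both costs by $1$. The adaptivity claim then needs nothing further.

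The step I expect to be the main obstacle is pinning down the $(m,\firstfeatures)$-dependence of the second test so that the two requirements --- no unqualified type is ever accepted, and every qualified type is accepted at cost at most $1$ --- hold \emph{simultaneously}. This is exactly the tension the paper emphasizes: the stringent test needed to keep out unqualified types near $\partial\qualregion$ also threatens the qualified types just inside, and the only remedy is to route those types through the $\widehat\classifier_B$ branch, after which one must check that no unqualified type can imitate that route. Concretely, the hard part is to make the handoff between the $\classifier_B^{+}$ branch and the $\widehat\classifier_B$ branch occur precisely along the edges of $\Updelta OAB$ and to carry out the case analysis of two-step best responses in a neighborhood of the apex $O$, where the two criteria are most in conflict; once the routing is fixed, the remaining verifications are routine given the geometric lemmas of \cref{subsec:example} and the estimates of \cref{thm: optimal max qualified cheap talk}.
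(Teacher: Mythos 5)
Your route is genuinely different from the paper's. You reduce the perfect-test problem to \cref{thm: optimal max qualified cheap talk}: since a perfect test reveals $\firstfeatures$ and $\secondfeatures$ exactly, the principal can implement any half-plane acceptance rule, so the message-indexed menu $\{(\classifier_A^{+},\classifier_B^{+},1),\,(\classifier_A^{+},\widehat\classifier_B,0)\}$ carries over verbatim and the first best follows from the lemmas already proved there. That reduction is sound and is arguably the shortest path to the existence claim. The paper does something quite different: it exploits perfectness directly by demanding \emph{points} rather than half-planes. The first test requires the agent to display exactly his message $m$; the second test requires either the nearest point of the immediate-acceptance region $\classifier_A^{+}\cap\classifier_B^{+}$ (when that costs at most $1$) or a point of $\qualregion$ at cost exactly $1$ from $m$, so that acceptance exhausts the benefit. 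Feasibility is then a two-line triangle-inequality argument --- no $\widehat\classifier_B$, no $\Updelta\text{OAB}$, and no best-response case analysis near the apex. That construction is also what makes the ``adaptive'' rider substantive ($\secondfeatures(m)$ is literally a function of $\firstfeatures(m)=m$), what lets the cheap talk be replaced by having the agent choose his own first test (\cref{cor: sequential perfect tests}), and what generalizes to arbitrary convex qualified regions (\cref{thm: perfect tests arbitrary qualified region}). Your construction buys none of these.

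The one concrete problem in your write-up is self-inflicted: you let the second test depend on the observed $\firstfeatures$ as well as on $m$, and you correctly flag that ruling out gaming of this routing is the hard step --- but you never close it. In the cheap-talk construction the branch is fixed by the message \emph{before} the first test, so nothing observed at the first stage should affect which procedure the agent faces; route by $m$ alone and your ``main obstacle'' disappears, since each type then faces exactly one of the two fixed-order procedures whose selected sets are characterized in \cref{lem:gain non-parallel tests} and \cref{lem:loss non-parallel tests}. Also be careful with the order inside the second branch: the feasible procedure is $(\classifier_A^{+},\widehat\classifier_B,0)$, i.e.\ $\widehat\classifier_B$ is applied \emph{first}, whereas your description puts $\classifier_A^{+}$ first on both branches; with non-parallel tests the order is precisely what feasibility hinges on.
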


The next corollary points out that a perfect test can convey information as in a cheap-talk communication. 

\begin{corollary}\label{cor: sequential perfect tests}
Suppose $M=\nullset$.
     Suppose the principal lets the agent to choose the first test, based on which the principal chooses the second test. For any distribution $\dist$, there exists a feasible fixed-order sequential mechanism that achieves the first best.   
\end{corollary}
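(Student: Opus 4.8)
The plan is to deduce \cref{cor: sequential perfect tests} from \cref{thm: sequential perfect tests} by folding the upfront cheap-talk round into the agent's choice of which test to take first; throughout we remain in the manipulation setting under program~\ref{max qualified}. Fix a distribution $\dist$ and let $s=\langle\widetilde h_A(m),\widetilde h_B(m),q(m),\disclose(m)\rangle_{m\in\Featurespace}$ be the feasible, first-best, adaptive, fixed-order mechanism supplied by \cref{thm: sequential perfect tests}. Two features of $s$ matter. First, its order is deterministic, so after the agent reports $m$ the first test is a single perfect test, which we write $\widetilde h_A(m)$; it outputs the agent's stage-one attributes $\firstfeatures$. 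Second, it is adaptive: the second test is a function $\widetilde h_B(m,\firstfeatures)$ of the report and of that output. By the incentive compatibility of $s$, a type $\orifeatures$ on path reports $m(\orifeatures)$, presents some $\firstfeatures$, observes the first test's output, and then presents $\secondfeatures$.

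Next I would construct the communication-free mechanism $s'$ of the form required by \cref{cor: sequential perfect tests}. The principal offers the agent the menu of first tests $\{\widetilde h_A(m):m\in\Featurespace\}$ that appear in $s$, treated as a labeled menu so that distinct reports index distinct choices (geometrically coinciding half-planes may be separated by an infinitesimal shift, which is harmless since a perfect test reveals $\firstfeatures$ exactly and the on-path agent passes the first test at zero cost). When the agent selects the first test labeled $m$, the principal runs that perfect test, records its output $\firstfeatures$, and then administers the adaptive second test $\widetilde h_B(m,\firstfeatures)$ of $s$. The agent's decision problem under $s'$ is then literally the same as under $s$, because selecting a first test from a menu is a costless action taken before any test is run---exactly like sending a cheap-talk message---so each type's optimal first-test choice, stage-one presentation, and stage-two presentation, and hence the selection outcome, coincide under $s'$ and $s$. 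Thus $s'$ inherits feasibility (no unqualified type is ever selected) and first-best optimality (every qualified type is selected) from \cref{thm: sequential perfect tests}, and since $s'$ has a fixed order this proves the corollary.

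The step I expect to be the main obstacle is verifying that the agent's choice of first test can carry all the information the cheap-talk message carried, i.e.\ that the report in the mechanism of \cref{thm: sequential perfect tests} may be taken to enter the play only by selecting the (perfect) first test, with everything downstream determined by that choice and the revealed $\firstfeatures$. The delicate point is the asymmetry between free cheap talk and costly attribute presentation: one must rule out that a qualified type was exploiting the free message to reach a ``distant'' branch while presenting cheap stage-one attributes. The resolution is that a qualified type $\orifeatures\in\qualregion$ can always present $\firstfeatures=\orifeatures$ at zero cost and select the branch that $s$ assigns to type $\orifeatures$, so no qualified type is harmed, while an unqualified type's option set under $s'$ is weakly smaller than under $s$, so feasibility cannot fail. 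If the mechanism of \cref{thm: sequential perfect tests} as originally constructed is not already in the ``report enters only through the first test'' form, one re-derives it in that form, using the adaptivity of the second test to absorb any residual dependence on the report; this is the technical core, after which the transfer of feasibility and first-best optimality to $s'$ is immediate.
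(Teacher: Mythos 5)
Your proposal is correct and follows essentially the same route as the paper: the corollary is obtained from \cref{thm: sequential perfect tests} by letting the agent's choice of first test carry the cheap-talk message, observing that in that construction the second test depends on the report only through the first test, and that each qualified type self-selects (presenting $\firstfeatures=\orifeatures$ at zero cost) while unqualified types face a weakly smaller option set, so feasibility and the first best are inherited. Your extra care about ``labeled menus'' is unnecessary here—in the theorem's construction no two distinct reports with the same first test are assigned different second tests—but it is harmless.
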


 \cref{cor: sequential perfect tests} is a direct implication of \cref{thm: sequential perfect tests}. Under perfect tests, when the principal allows the agent to choose the first test, every agent chooses the first test that coincides with his true type.
 Based on the first test, the principal 
 can implement the same mechanism constructed in \cref{thm: sequential perfect tests}.
 
Next, we show that \cref{thm: sequential perfect tests} is not confined to the specific shape of the qualified region. As a matter of fact, as long as the qualified region is convex, the principal achieve the first best by using a sequential mechanism that offers tests in an adaptive manner.

\begin{proposition}\label{thm: perfect tests arbitrary qualified region}
When the (true) qualified region $\qualregion$ is convex,  a sequential mechanism that uses two tests  achieves the first best.
\end{proposition}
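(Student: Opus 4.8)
The plan is to exhibit an explicit two‑round, fixed‑order mechanism with \emph{perfect} tests in which the second test is chosen as a function of the outcome of the first, and to show it is feasible (never selects an unqualified agent) and selects every qualified agent. For $\features\in\qualregion$ write $\delta(\features):=\mathrm{dist}(\features,\bbR^2\setminus\qualregion)$ for the radius of the largest ball centered at $\features$ that is contained in $\qualregion$, and set $r(\features):=\max\{0,\,1-\delta(\features)\}$. The mechanism: round one is a perfect test that reveals $\firstfeatures$, and the agent is rejected unless $\firstfeatures\in\qualregion$; conditional on $\firstfeatures\in\qualregion$, round two is a perfect test that reveals $\secondfeatures$, and the agent passes iff $\|\secondfeatures-\firstfeatures\|_2\ge r(\firstfeatures)$ (so the round‑two acceptance region is the complement of an open ball of radius $r(\firstfeatures)$ around $\firstfeatures$, which is all of $\bbR^2$ when $\delta(\firstfeatures)\ge 1$). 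This is adaptive in exactly the sense of \cref{thm: sequential perfect tests}.

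Against this mechanism the cheapest way for a type $\orifeatures$ to be selected is to choose some $\firstfeatures\in\qualregion$ and then move a further $r(\firstfeatures)$ in an arbitrary direction, so the minimal cost of getting selected is
\[
C(\orifeatures)\;=\;\min_{\firstfeatures\in\qualregion}\Big(\|\orifeatures-\firstfeatures\|_2+r(\firstfeatures)\Big).
\]
I would then check two things. First, \emph{coverage}: for $\orifeatures$ in the interior of $\qualregion$, taking $\firstfeatures=\orifeatures$ gives $C(\orifeatures)\le r(\orifeatures)<1$ (strict because $\delta(\orifeatures)>0$), so such an agent strictly prefers to be selected; as the boundary of a full‑dimensional convex set has measure zero, this selects all qualified agents. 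Second, \emph{feasibility}: fix $\orifeatures\notin\qualregion$ and any $\firstfeatures\in\qualregion$. Since $\orifeatures$ lies outside the inscribed ball of $\qualregion$ at $\firstfeatures$, we have $\|\orifeatures-\firstfeatures\|_2\ge\delta(\firstfeatures)$; hence if $\delta(\firstfeatures)\ge 1$ then $\|\orifeatures-\firstfeatures\|_2\ge 1$ and $r(\firstfeatures)=0$, while if $\delta(\firstfeatures)<1$ then $\|\orifeatures-\firstfeatures\|_2+r(\firstfeatures)\ge\delta(\firstfeatures)+1-\delta(\firstfeatures)=1$. Either way the bracketed term is $\ge 1$, so $C(\orifeatures)\ge 1$ and $\orifeatures$ has no profitable strategy. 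One must also note the routine points that a round‑one deviation $\firstfeatures\ne\orifeatures$ never strictly helps a qualified agent (because $r$, like $\delta$, is $1$‑Lipschitz), and that the measure‑zero boundary (where $C=1$) is handled by the usual tie‑breaking convention or by restricting to atomless $\dist$.

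The only place convexity is used is the inequality $\|\orifeatures-\firstfeatures\|_2\ge\delta(\firstfeatures)$ for $\orifeatures\notin\qualregion$, $\firstfeatures\in\qualregion$ — that the open ball of radius $\delta(\firstfeatures)$ around $\firstfeatures$ stays inside $\qualregion$ — which is the inscribed‑ball/supporting‑hyperplane property of convex bodies; I expect the step requiring the most care is making the reduction of the agent's game to the one‑line formula for $C(\orifeatures)$ fully rigorous, and, if the intended reading of "test'' forces per‑round acceptance regions to be half‑planes even when perfect, replacing the ball‑complement region by the family of supporting half‑planes of $\qualregion$ indexed by $\firstfeatures$ and rechecking coverage and feasibility, which is exactly where the convex geometry of $\qualregion$ becomes essential. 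A clean alternative, available when upfront cheap talk is allowed, is to mimic the covering construction of \cref{thm: optimal max qualified cheap talk} and \cref{fig:cheap-talk}: have the agent report a type $m$, reject if $m\notin\qualregion$, and otherwise run the two‑test mechanism "$\firstfeatures=m$, then $\|\secondfeatures-m\|_2\ge 1-\delta(m)$''; each such mechanism is feasible by the very same inequality, and the selection sets cover the interior of $\qualregion$.
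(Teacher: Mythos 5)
Your construction is correct and reaches the result by a route that shares the paper's high-level architecture (an adaptive two-round perfect-test mechanism, checked via feasibility plus coverage) but differs in the key objects. The paper first erodes the region to $\qualified=\{\features\in\qualregion:\metric(\features,\partial\qualregion)\ge 1/\mc\}$ and makes the round-two acceptance set a level set of the distance to $\qualified$ (or $\qualified$ itself when $\firstfeatures$ is close), invoking convexity to place the intermediate point on the segment toward $\qualified$; its construction implicitly needs every point of $\qualregion$ to be within finite distance of a nonempty $\qualified$, and leaves agents far from $\qualified$ exactly indifferent (cost exactly $1$). You instead work directly with $\delta(\firstfeatures)=\mathrm{dist}(\firstfeatures,\qualregion^{\compl})$ and charge a ``toll'' $r(\firstfeatures)=\max\{0,1-\delta(\firstfeatures)\}$ of forced movement, which gives every interior qualified type strictly positive surplus and reduces feasibility to the single inequality $\|\orifeatures-\firstfeatures\|_2\ge\delta(\firstfeatures)$. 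One correction: that inequality is \emph{not} the supporting-hyperplane property of convex bodies — it is immediate from the definition of $\delta$ as distance to the complement, since the open ball $B(\firstfeatures,\delta(\firstfeatures))$ contains no point of $\qualregion^{\compl}$ for \emph{any} region. So, contrary to your own accounting, your argument never uses convexity, and in fact proves a strictly stronger statement (first best with two adaptive perfect tests for essentially arbitrary qualified regions, up to the usual boundary/tie-breaking caveats that the paper's proof also glosses over); this does not conflict with \cref{thm: single-test mechanism perfect test}, which concerns single-test mechanisms. The only residual modeling question — whether a ball-complement acceptance region counts as a ``test'' — is moot, since the paper's own proof already uses acceptance sets that are not half-planes.
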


Finally, we show that under a condition on the shape of the qualified region, a single-test mechanism, an analogy to a simultaneous mechanism, is strictly dominated by the above described sequential mechanism.

\begin{proposition}\label{thm: single-test mechanism perfect test}
     There exists a single-test mechanism that achieves the first best if and only if the qualified region $\qualregion$ is the convex hull of the union of balls with radius $1/\mc$.
\end{proposition}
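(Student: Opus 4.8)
The plan is to reduce a single-test mechanism (with perfect tests) to its acceptance region and then read off the characterization from elementary convex geometry. Set $r:=1/\mc$ and write $\bar B(y,r)$ for the closed Euclidean ball of radius $r$ around $y$. First I would observe that, since a perfect test reveals the reported attributes $\features$ exactly, a single-test mechanism is without loss described by a closed acceptance set $A\subseteq\bbR^2$ (select iff the agent presents some point of $A$), and that an agent of true type $\orifeatures$ optimally moves to the point of $A$ nearest to $\orifeatures$ and is selected precisely when $\mc\cdot\metric(\orifeatures,A)\le 1$ — using the same (weak) tie-breaking convention as in the definition of $\manipulation$ in Section~\ref{subsec:example}. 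Hence the set of selected types is exactly the $r$-neighborhood $A\oplus\bar B(0,r)=\bigcup_{a\in A}\bar B(a,r)$. Consequently a single-test mechanism achieves the first best if and only if there is a closed $A$ with $A\oplus\bar B(0,r)=\qualregion$, equivalently if and only if $\qualregion$ is a union of closed radius-$r$ balls (take $A$ to be the $r$-erosion $\{z:\bar B(z,r)\subseteq\qualregion\}$ for the nontrivial direction).

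Next I would establish the identity $\conv\!\big(\bigcup_\alpha\bar B(\ballcenter_\alpha,r)\big)=\conv\{\ballcenter_\alpha\}\oplus\bar B(0,r)$ for any family of centers $\{\ballcenter_\alpha\}$: a convex combination $\sum_i\lambda_i x_i$ with $x_i\in\bar B(\ballcenter_{\alpha_i},r)$ rewrites as $\sum_i\lambda_i\ballcenter_{\alpha_i}+\sum_i\lambda_i(x_i-\ballcenter_{\alpha_i})$ with the second sum of norm $\le r$, and the reverse inclusion is immediate. Combining this with the maintained hypothesis that $\qualregion$ is closed and convex gives both implications. For ($\Leftarrow$): if $\qualregion=\conv\big(\bigcup_\alpha\bar B(\ballcenter_\alpha,r)\big)$, let $A$ be the closed convex hull of the centers $\{\ballcenter_\alpha\}$; then $A\oplus\bar B(0,r)=\qualregion$, so by the first paragraph the single-test mechanism with acceptance region $A$ selects exactly the qualified types, hence is feasible and attains the first best. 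For ($\Rightarrow$): if some single-test mechanism attains the first best, its selected set $\bigcup_{a\in A}\bar B(a,r)$ equals $\qualregion$; taking convex hulls and using convexity of $\qualregion$ gives $\qualregion=\conv(\qualregion)=\conv\big(\bigcup_{a\in A}\bar B(a,r)\big)$, so $\qualregion$ is the convex hull of a union of radius-$r$ balls.

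I expect the only genuine delicacy — more bookkeeping than obstacle — to be the boundary: I must ensure that the weak-inequality tie-breaking, together with taking $A$ and $\qualregion$ closed, makes the selected set equal $\qualregion$ \emph{exactly}, so that the ``no unqualified agent'' constraint and the ``all qualified agents'' requirement hold simultaneously; here I would use that Minkowski sum with the compact ball $\bar B(0,r)$ commutes with closure. I would also flag explicitly that convexity of $\qualregion$ is indispensable for ($\Rightarrow$): a non-convex set that happens to be a union of radius-$r$ balls (e.g.\ an L-shaped such union) still admits a first-best single-test mechanism yet is not the convex hull of any collection of radius-$r$ balls, since such a convex hull is $\conv(\text{centers})\oplus\bar B(0,r)$, which is convex. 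Thus the statement is to be read under the convexity hypothesis on $\qualregion$ carried over from Proposition~\ref{thm: perfect tests arbitrary qualified region}.
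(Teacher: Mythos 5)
Your proof is correct, but it takes a genuinely different route from the paper's. You reduce everything to two Minkowski-sum facts: the set of types selected by a single-test mechanism with (closed) acceptance set $A$ is exactly $A\oplus\bar B(0,1/\mc)$, and $\conv\bigl(\bigcup_\alpha\bar B(\ballcenter_\alpha,1/\mc)\bigr)=\conv\{\ballcenter_\alpha\}\oplus\bar B(0,1/\mc)$; both directions of the equivalence then fall out in a line or two. The paper instead works concretely with the erosion $\qualified=\{\features\in\qualregion:\metric(\features,\partial\qualregion)\geq 1/\mc\}$ as the canonical acceptance region: sufficiency is proved by showing the ball centers, and hence their convex hull, lie in $\qualified$ and that every point of $\qualregion$ is within $1/\mc$ of that hull, and necessity is a contradiction argument (a qualified point far from $\qualified$ forces acceptance of a point whose $1/\mc$-ball leaves $\qualregion$). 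Your necessity step --- take convex hulls of both sides of ``selected set $=\qualregion$'' and invoke convexity of $\qualregion$ --- is noticeably shorter, and it has the virtue of making explicit that convexity is indispensable there; the paper uses this only implicitly in the step ``otherwise $\qualregion$ can be written as the union of balls \ldots which is a contradiction,'' which as literally stated conflates a union of balls with the convex hull of a union of balls. Your L-shaped counterexample pinpoints exactly why the hypothesis is needed. The one place where you are no more careful than the paper is the reduction to deterministic acceptance sets: your ``without loss described by a closed acceptance set'' compresses into a clause what the paper spends two (also rather terse) sentences on, namely that a mechanism accepting some report with probability in $(0,1)$ cannot deliver exact first best; if you write this up, that step and the weak-inequality tie-breaking at distance exactly $1/\mc$ are the two boundary points worth spelling out.
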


\cref{thm: single-test mechanism perfect test} implies that whenever there is a sharp corner (for instance, $\qualregion=\classifier_A\cap \classifier_B$, where $\classifier_i, i\in \{A,B\}$ is a half plane) in the true qualified region, a single-test mechanism cannot accept every qualified agent without accepting any unqualified agent. 
In other words, the best sequential mechanism (with two tests) strictly out performs a single-test mechanism under objective \ref{max qualified} and when effort is manipulation.
Only in those cases where the true qualified region has only round corner(s),  a single-test mechanism is equally good as the best sequential mechanism (with two tests).
Such examples include circles, rectangles with round corners, etc..
\subsection{Robustness to cost functions}\label{subsec: seq general cost}

In this subsection, we consider a larger class of cost functions. 

\begin{assumption}[translation invariance]\label{def: translation invariant}
    The cost function $c : \bbR^6 \to \bbR$ is \emph{translation invariant} if for any attributes $\orifeatures, \firstfeatures,\secondfeatures$ and any shift vector $\genericfeatures\in \bbR^2$, $c(\orifeatures, \firstfeatures,\secondfeatures) = c(\orifeatures+\genericfeatures, \firstfeatures+\genericfeatures,\secondfeatures+\genericfeatures)$.
\end{assumption}

\begin{assumption}[absolute homogeneity]\label{def: absolute homogenous c}
    The cost function $c : \bbR^6 \to \bbR$ is \emph{absolute homogeneous} if for any attributes $\orifeatures, \firstfeatures,\secondfeatures$, and any $\alpha\in \bbR$, $c(\alpha\orifeatures, \alpha\firstfeatures,\alpha\secondfeatures) = |\alpha|c(\orifeatures, \firstfeatures,\secondfeatures)$.
\end{assumption}
This assumption can also be easily extended to homogeneity of any degree $m$ for $m\in \bbR_+$, i.e., $c(\alpha\orifeatures, \alpha\firstfeatures,\alpha\secondfeatures) = |\alpha|^m c(\orifeatures, \firstfeatures,\secondfeatures)$.

\begin{assumption}[triangle inequality]\label{def: trainagle inequality}
    The cost function $c : \bbR^6 \to \bbR$ satisfies triangle inequality if for any attributes $\features,  \boldsymbol{y}, \genericfeatures$, 
    $$\cost(\features, \boldsymbol{y}, \genericfeatures)\geq \onecost(\features,  \genericfeatures).$$ 
\end{assumption}

\begin{assumption}[monotonicity]\label{def: monotone}
    The cost function $c : \bbR^6 \to \bbR$ is monotone if for any attributes $\features, \boldsymbol{y}, \genericfeatures$, 
    $$\cost(\features, \boldsymbol{y}, \genericfeatures)\geq \onecost(\features,  \boldsymbol{y}).$$ 
\end{assumption}

\begin{assumption}[regular]\label{def: existence of minimum cost}
    The cost function $\onecost: \bbR^4 \to \bbR$ is regular if for any attributes $\features$ and any half plane $\classifier$, there exists some $\boldsymbol{y}\in \classifier$ such that 
    $$\onecost(\features, \boldsymbol{y})=\inf_{\genericfeatures\in \classifier}\onecost(\features,\genericfeatures).$$    
\end{assumption}

If the cost function only depends on the differences between the $i$-th attributes and $(i-1)$-th attributes, for $i\in \{1,2\}$, i.e., 
$\cost(\orifeatures,\firstfeatures,\secondfeatures)=c(\firstfeatures-\orifeatures,\secondfeatures -\orifeatures, \secondfeatures-\firstfeatures)$, then the cost function is translation invariant.

Triangle inequality is reminiscent of the upward triangle inequality in \citet{perez2022test}.
The main difference is that in \citet{perez2022test}, the three attributes lie on the same line, while here, the three attributes lie on the same plane.
It means that the cost of changing from any initial attributes $\orifeatures$ to any $\secondfeatures$ would increase if the agent takes an extra middle step.
This is consistent with our applications where changing attributes require either physical effort or forgone profitability.

Monotonicity rules out transient cost or negative cost. 
Consider a scenario where the agent has to purchase extra equipment to pass the first test but afterwards he is free to sell the equipment.
This would result in $\cost(\orifeatures, \firstfeatures,\secondfeatures)<\onecost(\orifeatures, \firstfeatures)$, which violates monotonicity.
This is a reasonable condition in the manipulation setting. 
In the hiring example, $\firstfeatures$ and $\secondfeatures$ represent how the profile of a candidate appears to be in a job interview.
The cost of changing the candidate's profile represents the cost of actual effort.
Once the effort is exerted, it is sunk.
In the bank regulation example, $\firstfeatures$ and $\secondfeatures$ represent the balance sheet positions of a bank.
Adjusting the balance sheet usually incurs transaction costs.

We cover the extreme case where the cost of modifying attributes is infinity, i.e., the hard evidence environment. In this extreme case, no agent is able to change their attributes. Our results show that the fixed-order mechanism with the original two tests $\classifier_A$ and $\classifier_B$ is still optimal. Moreover, it accepts all qualified agents and no unqualified agent.

Here are two classes of additive cost functions that satisfy \cref{def: translation invariant}-\ref{def: existence of minimum cost}.
\begin{itemize}
    \item  $
c(\orifeatures,\firstfeatures,\secondfeatures)= \onecost(\orifeatures,\firstfeatures) + \onecost(\firstfeatures,\secondfeatures) 
$, where $\onecost(\cdot,\cdot)$ is a metric.
    \item $
\cost(\orifeatures,\firstfeatures,\secondfeatures) =\onecost(\orifeatures,\firstfeatures) + \onecost(\orifeatures,\secondfeatures),
 $ where  $\onecost(\cdot,\cdot)$ is a metric.
\end{itemize}

\subsubsection{Manipulation}\label{subsubsec: general cost manipulation}

\begin{proposition}\label{thm:opt_manipulation}
    Suppose the cost function $c$ satisfies \cref{assump: cost function one step}-\ref{def: existence of minimum cost}.
    Consider manipulation setting.
    For any distribution $\dist$, there exists a fixed-order sequential mechanism that outperforms the optimal simultaneous mechanism. Moreover, this fixed-order sequential mechanism 
    uses two stringent tests $\widetilde\classifier_A, \widetilde\classifier_B$, i.e., $\widetilde\classifier_A\cap \widetilde\classifier_B \subsetneq \classifier_A\cap \classifier_B$
\end{proposition}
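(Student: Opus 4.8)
The plan is to mimic the structure of the Euclidean-cost argument (Proposition~\ref{lem:fix-simultaneous-distance cost}) but to replace every step that secretly used the specific form of the $\ell^2$ distance with the corresponding abstract axiom. The three ingredients I expect to need are: (i) a characterization of the set of types accepted under a fixed-order mechanism in terms of a ``reflection''-type shortest two-step path; (ii) the observation that the set of types accepted under a simultaneous mechanism is always contained in the set accepted under the fixed-order mechanism with the same tests, so passing to sequential never loses an accepted qualified type; and (iii) the fact that there exist stringent tests making the fixed-order mechanism feasible. Since cost is now a general function, I would not try to reproduce the geometric reflection picture; instead I would work directly with the value functions $V^{\text{sim}}_{\widetilde\classifier}(\orifeatures)=\inf_{\features\in\widetilde\classifier_A\cap\widetilde\classifier_B}\onecost(\orifeatures,\features)$ and $V^{\text{fix}}_{\widetilde\classifier}(\orifeatures)=\inf_{\firstfeatures\in\widetilde\classifier_A,\,\secondfeatures\in\widetilde\classifier_B}\cost(\orifeatures,\firstfeatures,\secondfeatures)$, and use monotonicity/triangle inequality to relate them.

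First I would establish the key comparison $V^{\text{fix}}_{\widetilde\classifier}(\orifeatures)\le V^{\text{sim}}_{\widetilde\classifier}(\orifeatures)$ for every $\orifeatures$ and every pair of tests: given any $\features\in\widetilde\classifier_A\cap\widetilde\classifier_B$, the degenerate two-step path $(\orifeatures,\features,\features)$ is feasible for the fixed-order problem, and by Assumption~\ref{assump: cost function one step} its cost equals $\onecost(\orifeatures,\features)$; taking the infimum over $\features$ gives the inequality. Consequently the acceptance set (types with value $\le 1$) of the fixed-order mechanism $(\widetilde\classifier_A,\widetilde\classifier_B,1)$ contains that of the simultaneous mechanism with the same tests. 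Next I would take $\widetilde\classifier_A,\widetilde\classifier_B$ to be the optimal tests for the optimal simultaneous mechanism. Using the ``regular'' axiom (Assumption~\ref{def: existence of minimum cost}) the relevant infima are attained, and the feasibility of the simultaneous mechanism (no unqualified type has one-step cost $\le 1$ into $\widetilde\classifier_A\cap\widetilde\classifier_B$) plus the triangle inequality (Assumption~\ref{def: trainagle inequality}, $\cost(\orifeatures,\firstfeatures,\secondfeatures)\ge\onecost(\orifeatures,\secondfeatures)$) shows that no unqualified type has a profitable two-step strategy into $\widetilde\classifier_A\cap\widetilde\classifier_B$ either, because a successful two-step strategy ends at some $\secondfeatures\in\widetilde\classifier_A\cap\widetilde\classifier_B$ with total cost $\ge\onecost(\orifeatures,\secondfeatures)>1$. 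Hence $(\widetilde\classifier_A,\widetilde\classifier_B,1)$ is feasible, and by the acceptance-set inclusion it weakly dominates the optimal simultaneous mechanism. This already yields the first sentence of the proposition with the existence claim; I would then argue strict improvement/stringency is not needed for the "outperforms" statement but the stringency claim is.

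For the stringency claim $\widetilde\classifier_A\cap\widetilde\classifier_B\subsetneq\classifier_A\cap\classifier_B$, I would show that \emph{any} feasible sequential (in particular fixed-order) mechanism must use tests whose intersection is strictly inside the true qualified region. Suppose not, i.e. $\widetilde\classifier_A\cap\widetilde\classifier_B\supseteq\classifier_A\cap\classifier_B$ up to the boundary; then pick an unqualified type $\orifeatures$ just outside a corner/edge of $\classifier_A\cap\classifier_B$ with one-step cost into $\widetilde\classifier_A\cap\widetilde\classifier_B$ strictly below $1$ — such a type exists by absolute homogeneity (Assumption~\ref{def: absolute homogenous c}), which lets me scale the configuration so that the minimal cost becomes as small as I like, in particular $<1$. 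That type then has a profitable (one-step, hence a fortiori feasible) deviation, contradicting feasibility. I expect the main obstacle to be precisely this last step: producing an unqualified type with arbitrarily small modification cost requires that the cost function genuinely shrink under scaling toward the boundary, which is where I lean on translation invariance plus absolute homogeneity (translate the corner $O$ of $\qualregion$ to the origin, scale down, translate back); I would need to check carefully that these two axioms together force $\inf_{\features\in\classifier_A\cap\classifier_B}\onecost(\orifeatures,\features)\to 0$ as $\orifeatures\to O$, and handle the degenerate case $\theta=0$ or $\pi$ where $\qualregion$ is itself a half plane (there the simultaneous and fixed-order mechanisms coincide and the strict inclusion can fail — I would either exclude it by the standing assumption $\theta\in(0,\pi)$ or note it as the boundary case, matching the "$\subsetneq$" in the Euclidean Proposition~\ref{lem:fix-simultaneous-distance cost}).
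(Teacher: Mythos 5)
Your overall architecture matches the paper's (characterize the optimal simultaneous mechanism, show the fixed-order mechanism with the same tests is feasible and weakly better, then show any feasible mechanism uses stringent tests), and your dominance inequality $V^{\text{fix}}_{\widetilde\classifier}(\orifeatures)\le V^{\text{sim}}_{\widetilde\classifier}(\orifeatures)$ via the degenerate path $(\orifeatures,\features,\features)$ is exactly the paper's Step 2 of \cref{lem:fix-simul}. But your feasibility step contains a genuine error. You claim that ``a successful two-step strategy ends at some $\secondfeatures\in\widetilde\classifier_A\cap\widetilde\classifier_B$,'' so that the triangle inequality plus feasibility of the simultaneous mechanism rules it out. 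That is false, and it contradicts the value function you yourself wrote down: in the fixed-order mechanism $(\widetilde\classifier_A,\widetilde\classifier_B,1)$ the agent is accepted whenever $\firstfeatures\in\widetilde\classifier_A$ and $\secondfeatures\in\widetilde\classifier_B$; neither attribute needs to lie in the intersection. The dangerous deviations are precisely the zig-zag strategies where $\firstfeatures$ fails $\widetilde\classifier_B$ and $\secondfeatures$ fails $\widetilde\classifier_A$, and for those the simultaneous-feasibility bound $\onecost(\orifeatures,\genericfeatures)>1$ for $\genericfeatures\in\widetilde\classifier_A\cap\widetilde\classifier_B$ says nothing about $\onecost(\orifeatures,\secondfeatures)$. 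Ruling these out is the actual content of \cref{lem:fix-simul}: the paper translates the offending path by a vector $\genericfeatures$ parallel to the boundary of the relevant true criterion so that the endpoint lands in $\classifier_A^{+}\cap\classifier_B^{+}$ while the origin stays unqualified, and then invokes translation invariance together with monotonicity (for the first leg) or the triangle inequality (for the second leg) to contradict feasibility of the simultaneous mechanism.

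A second, related omission: this translation argument only works because the optimal simultaneous tests are parallel shifts of the true criteria $\classifier_A,\classifier_B$ meeting at a common point (so that sliding along a test boundary preserves unqualifiedness). You take ``the optimal tests for the optimal simultaneous mechanism'' as a black box, but the paper must first prove this structural characterization (\cref{prop:optimal simultaneous manipulation general cost}) before the fixed-order feasibility argument can go through. Your stringency argument in the last paragraph is essentially the paper's \cref{lem:stringent tests} (scaling via absolute homogeneity and sliding via translation invariance to manufacture an unqualified type with crossing cost below one), so that part is fine in outline. The gap to close is the feasibility of the fixed-order mechanism against zig-zag deviations.
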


To prove this theorem, we first pin down the optimal tests under that the optimal simultaneous mechanism in \cref{prop:optimal simultaneous manipulation general cost}.
Second, we show that the fixed-order mechanisms that use the same tests as in the optimal simultaneous mechanism are also feasible 
 and select more potential qualified agents than the optimal simultaneous mechanism (\cref{lem:fix-simul}).
These two together imply that the optimal mechanism must be sequential.
Third, we show that any feasible mechanism must use stringent tests (\cref{lem:stringent tests}).
Hence we conclude that the optimal mechanism has the described properties.
All omitted proofs in this subsection can be found in \cref{appendix:general costs}.

 \begin{proposition}\label{prop:optimal simultaneous manipulation general cost}
 Consider the manipulation setting.
    Suppose the one-step cost function $\onecost : \bbR^4 \to \bbR$  satisfies \cref{def: translation invariant} ( translation invariant),\cref{def: absolute homogenous c} (absolute homogeneous) and \cref{def: existence of minimum cost}.
    For any distribution $\dist$, the optimal simultaneous mechanism uses two stringent tests $\classifier_A^{+},\classifier_B^{+}$ where $\classifier_i^{+}\subset \classifier_i$ for $i\in \{A,B\}$.
    \end{proposition}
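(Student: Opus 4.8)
The plan is to reduce the problem of finding the optimal simultaneous mechanism to a comparison among candidate test pairs $(\widetilde h_A, \widetilde h_B)$, and to show that the optimal feasible pair must be a parallel shift of the true criteria. First I would set up the feasibility characterization: in a simultaneous mechanism with tests $\widetilde h_A, \widetilde h_B$, an agent of true type $\orifeatures$ is selected if and only if $\min_{\features \in \widetilde h_A \cap \widetilde h_B} \onecost(\orifeatures, \features) \le 1$ (regularity, \cref{def: existence of minimum cost}, guarantees the minimizer exists, and \cref{assump: cost function one step} reduces the two-step cost to a one-step cost). Hence feasibility — no unqualified agent is selected — is the requirement that the ``$1$-neighborhood'' of $\widetilde h_A \cap \widetilde h_B$ under $\onecost$ is contained in $\classifier_A \cap \classifier_B$. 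The principal's objective is to maximize the $\dist$-measure of the set of qualified types lying in that $1$-neighborhood, so among feasible pairs we want $\widetilde h_A \cap \widetilde h_B$ as ``large as possible'' in the appropriate sense.

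Next I would exploit \cref{def: translation invariant} and \cref{def: absolute homogenous c}. Absolute homogeneity of degree $1$ forces the $\onecost$-ball of radius $r$ around a point to be a scaled copy (by factor $r$) of the unit ball around that point, and translation invariance makes this ball shape independent of the center; call this common convex-ish set $\calB$ (the unit sublevel set $\{\genericfeatures : \onecost(\boldsymbol 0, \genericfeatures) \le 1\}$). So the $1$-neighborhood of a half-plane $\widetilde h_i$ is again a half-plane, namely $\widetilde h_i$ shifted inward by the $\onecost$-width of $\calB$ in the direction of $\widetilde h_i$'s inner normal. The key structural fact I would then establish is: for the intersection $\widetilde h_A \cap \widetilde h_B$ of two half-planes to have its $1$-neighborhood contained in $\classifier_A \cap \classifier_B$, and to be inclusion-maximal among all such, the boundary normals of $\widetilde h_i$ must agree with those of $\classifier_i$. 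Intuitively, tilting a test relative to $\classifier_i$ either breaks feasibility near the far part of the boundary (the $1$-neighborhood pokes outside $\classifier_i$ where the tilt opens up) or, if it stays feasible, it is strictly contained in the parallel-shift test near that region — a pointwise domination argument comparing the two half-planes region by region. This pins the optimal tests down to parallel shifts $\classifier_i^+$, where the shift distance is exactly the amount needed so that the $1$-neighborhood of $\classifier_A^+ \cap \classifier_B^+$ just touches $\classifier_A \cap \classifier_B$; since $\classifier_A^+ \cap \classifier_B^+ \subset \classifier_A \cap \classifier_B$ (a strictly smaller corner region), this gives the claimed stringency $\classifier_i^+ \subset \classifier_i$.

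I expect the main obstacle to be the tilting/domination step — showing that no non-parallel feasible pair can do better. The subtlety, already flagged in the paper's \cref{fig:non-parallel tests not optimal}, is that a non-parallel test is ``not feasible or not optimal,'' and these two cases must be handled uniformly. The clean way is: given any feasible $(\widetilde h_A, \widetilde h_B)$, let $\ballcenter$ be a point on the boundary of $\widetilde h_A \cap \widetilde h_B$'s corner (or, in the parallel case, the corner $O^+$); replace $\widetilde h_i$ by the half-plane with the same corner point but normal $\weights_i$ (the true normal). One then argues this parallel-ized pair is still feasible (its $1$-neighborhood corner is at most as far out as before, because the $\onecost$-ball geometry combined with the convexity of $\classifier_A \cap \classifier_B$ controls the neighborhood) and selects a weakly larger qualified set. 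A careful treatment needs the convexity of the unit ball $\calB$ (which follows from \cref{def: trainagle inequality}, triangle inequality, giving sublevel sets that behave well under the relevant operations) to make the ``neighborhood of an intersection'' comparisons go through; I would isolate this as a geometric lemma. The remaining pieces — measurability of the selected set, that the shift distance $1/\eta$ in the Euclidean specialization is recovered, and that $\classifier_A^+ \cap \classifier_B^+$ is a proper subset — are routine once the structural lemma is in hand.
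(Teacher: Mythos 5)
Your overall architecture matches the paper's: characterize acceptance in a simultaneous mechanism by whether $\min_{\features\in\tilde\classifier_A\cap\tilde\classifier_B}\onecost(\orifeatures,\features)\le 1$, then show that any feasible tilted pair is dominated by the pair obtained by replacing each test with the half plane through the same corner point $\tilde O$ having the true normal $\weights_i$, and conclude by set inclusion of the target regions. The domination half of your argument is fine — since $\tilde\classifier_A\cap\tilde\classifier_B\subset\classifier_A^{+}\cap\tilde\classifier_B$ by construction, anyone who can reach the old region can reach the new one. The gap is in the other half: you assert that the parallelized pair \emph{remains feasible} ``because the $\onecost$-ball geometry combined with the convexity of $\classifier_A\cap\classifier_B$ controls the neighborhood,'' but this is exactly the non-trivial step, and it is not automatic: the parallelized target region $\classifier_A^{+}\cap\tilde\classifier_B$ is strictly larger than $\tilde\classifier_A\cap\tilde\classifier_B$, so a priori strictly more unqualified types can reach it at cost at most one. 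The paper closes this with a specific translation trick that your sketch does not contain: if an unqualified $\features$ reaches the new region at cost $\le 1$, homogeneity along segments forces the minimizer $\features'$ to lie on one of the two boundary lines; if on the boundary of $\tilde\classifier_B$ the original mechanism was already infeasible, and if on the boundary of $\classifier_A^{+}$ one translates both $\features$ and $\features'$ by the vector $v=\tilde O-\features'$ (parallel to the boundary of $\classifier_A$, hence preserving unqualifiedness and, by translation invariance, the cost) to exhibit an unqualified type that reaches $\tilde O\in\tilde\classifier_A\cap\tilde\classifier_B$ at cost $\le 1$, contradicting feasibility of the original mechanism. Without this (or an equivalent) argument, feasibility of the parallelized mechanism is unestablished and the proof does not go through.

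A secondary problem: your plan leans on convexity of the unit sublevel set $\calB$, which you propose to derive from the triangle inequality (\cref{def: trainagle inequality}). That assumption is not among the hypotheses of this proposition — only translation invariance, absolute homogeneity, and regularity are assumed — so you cannot use it here. (The paper's proof avoids needing convexity of $\calB$ altogether: homogeneity plus translation invariance already gives that cost scales linearly along line segments, which is all that is needed to place the minimizer on a boundary line.) Relatedly, be careful with the claim that the relevant feasibility condition reduces to half-plane-by-half-plane neighborhoods: the cost-one neighborhood of the intersection $\tilde\classifier_A\cap\tilde\classifier_B$ is not the intersection of the neighborhoods of the two half planes near the corner, which is precisely why the corner-anchored translation argument is needed rather than a ``width of $\calB$ in each normal direction'' computation.
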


Next, we compare the optimal simultaneous mechanisms and the fixed-order mechanisms that use the same tests. 
\begin{lemma}\label{lem:fix-simul}
    Given the optimal simultaneous mechanism  $(\classifier_A^{+},\classifier_B^{+})$, the two  fixed-order mechanisms $(\classifier_A^{+},\classifier_B^{+},1)$ and $(\classifier_A^{+},\classifier_B^{+},0)$ are also \emph{feasible} and  are weakly better than the simultaneous mechanism. 
\end{lemma}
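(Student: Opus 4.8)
The plan is to split the claim into two parts: feasibility of the two fixed-order mechanisms $(\classifier_A^{+},\classifier_B^{+},1)$ and $(\classifier_A^{+},\classifier_B^{+},0)$, and then the comparison showing each selects a (weakly) larger set of potential qualified agents than the simultaneous mechanism $(\classifier_A^{+},\classifier_B^{+})$. For feasibility, recall from \cref{prop:optimal simultaneous manipulation general cost} that $\classifier_A^{+},\classifier_B^{+}$ are obtained by shifting $\classifier_A,\classifier_B$ inward (i.e., $\classifier_i^{+}\subset \classifier_i$) by exactly the amount needed so that no unqualified agent can profitably reach $\classifier_A^{+}\cap\classifier_B^{+}$ with a one-step strategy. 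In a fixed-order mechanism the agent can in addition use a two-step strategy, so I would argue that under \cref{def: monotone} (monotonicity) and \cref{def: trainagle inequality} (triangle inequality) any two-step strategy $\orifeatures\to\firstfeatures\to\secondfeatures$ that passes both tests costs at least $\onecost(\orifeatures,\secondfeatures)$ with $\secondfeatures\in\classifier_A^{+}\cap\classifier_B^{+}$ (since $\firstfeatures\in\classifier_A^{+}$ and $\secondfeatures\in\classifier_B^{+}$, hence $\secondfeatures$ lies in $\classifier_B^{+}$ and, passing the first test, $\firstfeatures\in\classifier_A^{+}$; the subtlety is that $\secondfeatures$ need only be in $\classifier_B^{+}$, not in both — I address this next). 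Actually the right statement is: the terminal attributes $\secondfeatures$ must lie in $\classifier_B^{+}$ (the second test) and the intermediate $\firstfeatures$ in $\classifier_A^{+}$; for this to be a profitable deviation of an unqualified type we need the total cost $\le 1$, but monotonicity gives total cost $\ge \onecost(\orifeatures,\firstfeatures)$ and triangle inequality gives total cost $\ge \onecost(\orifeatures,\secondfeatures)$. The agent is selected only if $\secondfeatures\in\classifier_B^{+}$ AND $\firstfeatures\in\classifier_A^{+}$; since after passing the first test the agent would not move out of $\classifier_A^{+}$ unless forced, one shows the cheapest passing two-step path has $\secondfeatures\in\classifier_A^{+}\cap\classifier_B^{+}$, so the cost is at least $\min_{\features\in\classifier_A^{+}\cap\classifier_B^{+}}\onecost(\orifeatures,\features)>1$ for unqualified $\orifeatures$ by the defining property of the optimal simultaneous tests. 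Symmetrically for $(\classifier_A^{+},\classifier_B^{+},0)$. Hence both are feasible.

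For the comparison, fix the fixed-order mechanism $(\classifier_A^{+},\classifier_B^{+},1)$. Every agent who is selected by the simultaneous mechanism $(\classifier_A^{+},\classifier_B^{+})$ uses a one-step strategy $\orifeatures\to\features$ with $\features\in\classifier_A^{+}\cap\classifier_B^{+}$ at cost $\onecost(\orifeatures,\features)\le 1$. The same agent can replicate this in the fixed-order mechanism by choosing $\firstfeatures=\secondfeatures=\features$, incurring cost $\cost(\orifeatures,\features,\features)=\onecost(\orifeatures,\features)\le 1$ by \cref{assump: cost function one step}, and is selected. Therefore the set of agents selected by the fixed-order mechanism contains the set selected by the simultaneous one. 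Since the principal's objective $(\mathcal{P}_I)$ is the probability of selecting a qualified agent, and both mechanisms are feasible (no unqualified agent selected), a pointwise-larger selection set of qualified agents implies a weakly higher objective value. The same argument applies verbatim to $(\classifier_A^{+},\classifier_B^{+},0)$.

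The main obstacle is the feasibility step, specifically ruling out that some unqualified agent exploits the extra flexibility of two-step strategies under the fixed order to sneak into $\classifier_A^{+}\cap\classifier_B^{+}$ more cheaply than any one-step strategy — which is precisely the phenomenon that makes fixed-order mechanisms \emph{more} permissive than simultaneous ones. The key is that the gain from a two-step path only materializes when the two tests $\classifier_A^{+}$ and $\classifier_B^{+}$ are ``misaligned'' at a sharp corner, and the triangle-inequality/monotonicity assumptions (\cref{def: trainagle inequality}, \cref{def: monotone}) guarantee that reaching the \emph{terminal} region $\classifier_A^{+}\cap\classifier_B^{+}$ — which is what selection requires, since the agent will not leave $\classifier_A^{+}$ after passing it — still costs at least $\min_{\features\in\classifier_A^{+}\cap\classifier_B^{+}}\onecost(\orifeatures,\features)$, the same quantity that controls feasibility of the simultaneous mechanism. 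I would make this precise by showing that for any passing two-step path, replacing $(\firstfeatures,\secondfeatures)$ by $(\secondfeatures',\secondfeatures')$ for an appropriate $\secondfeatures'\in\classifier_A^{+}\cap\classifier_B^{+}$ does not increase cost, reducing the two-step feasibility check to the already-established one-step check.
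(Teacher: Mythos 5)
Your second half (the comparison) is correct and is exactly the paper's argument: any agent accepted by the simultaneous mechanism replicates the one-step move into $\classifier_A^{+}\cap\classifier_B^{+}$ in either fixed-order mechanism, so the selected set only grows, and under \ref{max qualified} that is a weak improvement.

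The feasibility half has a genuine gap. Your key claim --- that the cheapest passing two-step path has terminal attributes $\secondfeatures\in\classifier_A^{+}\cap\classifier_B^{+}$ because ``the agent would not move out of $\classifier_A^{+}$ after passing it,'' so that the two-step check reduces to the one-step check --- is false, and it is false in precisely the way that drives the whole paper. After passing the first test the agent has every reason to leave $\classifier_A^{+}$: the second test only checks $\classifier_B^{+}$, so the optimal continuation is (generically) a projection onto the boundary of $\classifier_B^{+}$ that lands outside $\classifier_A^{+}$ --- the zig-zag strategy of \cref{fig: zig zag}. Your proposed repair, replacing $(\firstfeatures,\secondfeatures)$ by $(\secondfeatures',\secondfeatures')$ with $\secondfeatures'\in\classifier_A^{+}\cap\classifier_B^{+}$ at no extra cost, would imply that two-step strategies are never strictly cheaper than the best one-step strategy into the intersection, which contradicts the very fact (used in your own Step 2 and in \cref{lem:fix-simultaneous-distance cost}) that the fixed-order mechanism is sometimes strictly better than the simultaneous one. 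The paper's proof does not compare costs for the \emph{same} agent; it argues by contradiction using \cref{def: translation invariant}: if an unqualified $\features$ passes via $(\firstfeatures,\secondfeatures)$ with $\cost(\features,\firstfeatures,\secondfeatures)\leq 1$, one splits into the case where $\features\notin\classifier_A$ (bound the cost below by $\onecost(\features,\firstfeatures)$ via monotonicity) and the case where $\features\notin\classifier_B$ (bound it below by $\onecost(\features,\secondfeatures)$ via the triangle inequality), and then, when the relevant endpoint is not already in $\classifier_A^{+}\cap\classifier_B^{+}$, translates the whole configuration by a vector $\genericfeatures$ parallel to the boundary of the violated criterion. The translated initial attributes $\features+\genericfeatures$ remain unqualified, the translated endpoint lands in $\classifier_A^{+}\cap\classifier_B^{+}$, and translation invariance preserves the cost, so a (different) unqualified agent would profitably pass the simultaneous mechanism --- contradicting its feasibility. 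That translation construction is the missing idea; without it your feasibility step does not go through.
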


\begin{lemma}\label{lem:stringent tests}
    Any feasible mechanism uses stringent tests, i.e., the two tests the mechanism announces $\tilde\classifier_A$ and $\tilde\classifier_B$  satisfies $\tilde\classifier_A\cap\tilde\classifier_B\subsetneq \classifier_A\cap\classifier_B$.
\end{lemma}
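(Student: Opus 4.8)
The plan is to argue by contradiction. Let a feasible mechanism announce tests $\tilde\classifier_A,\tilde\classifier_B$ and write $\widetilde\qualregion:=\tilde\classifier_A\cap\tilde\classifier_B$ for the region that passes both tests. Negating the conclusion $\widetilde\qualregion\subsetneq\qualregion$ leaves exactly two possibilities: either (i) $\widetilde\qualregion\not\subseteq\qualregion$, or (ii) $\widetilde\qualregion=\qualregion$. In each case I exhibit an unqualified type that is selected with positive probability, contradicting the constraint in program \ref{max qualified}; the argument runs uniformly over every mechanism format (simultaneous, fixed-order, or random-order with or without disclosure), because an agent who \emph{reports} attributes in $\widetilde\qualregion$ passes both tests in whichever order they are administered and at whatever disclosure stage, so reaching $\widetilde\qualregion$ with a one-step strategy always guarantees selection. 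Case (i) is immediate: take $\features\in\widetilde\qualregion\setminus\qualregion$ and consider the agent whose true type is $\features$; by \cref{assump: cost function one step} he can keep his attributes fixed at cost $\onecost(\features,\features)$, which equals $0$ (apply absolute homogeneity with $\alpha=2$ after translating by $-\features$), so he passes both tests for free and gets utility $1$; thus his optimal payoff is positive and he is selected with positive probability, yet manipulation leaves his true type at $\features\notin\qualregion$, so he is unqualified — contradiction.

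For case (ii), suppose $\widetilde\qualregion=\qualregion$, so the corner $O$ (the intersection of the boundary lines of $\classifier_A$ and $\classifier_B$) lies in $\widetilde\qualregion$. Fix a unit vector $\boldsymbol v$ pointing from $O$ into the interior of the wedge $\qualregion$, and for small $\epsilon>0$ set $\orifeatures:=O-\epsilon\boldsymbol v$, which lies outside $\qualregion$. By translation invariance (\cref{def: translation invariant}) and absolute homogeneity (\cref{def: absolute homogenous c}),
\[
\onecost(\orifeatures,O)=\onecost(\mathbf 0,\epsilon\boldsymbol v)=\epsilon\,\onecost(\mathbf 0,\boldsymbol v),
\]
and since $\onecost(\mathbf 0,\boldsymbol v)$ is a fixed finite number, this is strictly below $1$ once $\epsilon$ is small enough. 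The agent with true type $\orifeatures$ can then play the one-step strategy of moving to $O\in\widetilde\qualregion$, passing both tests in any mechanism at cost below $1$; his optimal payoff is therefore positive, so he is selected with positive probability, yet his true type is still $\orifeatures\notin\qualregion$ — contradiction. Hence $\widetilde\qualregion\subsetneq\qualregion$, which is exactly $\tilde\classifier_A\cap\tilde\classifier_B\subsetneq\classifier_A\cap\classifier_B$.

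The only step with real content is case (ii): I must manufacture an unqualified type whose cost of entering $\widetilde\qualregion=\qualregion$ is strictly less than the benefit $1$ of being selected. The point to get right is the order of quantifiers — choose the target ($O$) and the inward displacement direction ($\boldsymbol v$) \emph{first}, then shrink $\epsilon$ — so that translation invariance and absolute homogeneity collapse the cost to exactly $\epsilon$ times a constant; this sidesteps any continuity or boundedness estimate on the general cost function, leaving only the elementary geometric fact that a boundary point of the convex wedge $\qualregion$ has unqualified neighbors in every inward direction, so I do not need to invoke regularity (\cref{def: existence of minimum cost}) here. Finally, it is worth flagging that the argument is genuinely specific to manipulation: in the investment setting of \cref{thm:optimal investment} an agent who moves into $\qualregion$ thereby becomes qualified, so selecting him is harmless and non-stringent tests remain feasible.
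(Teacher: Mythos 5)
Your proof is correct, and its skeleton matches the paper's: both argue the contrapositive by splitting into (a) the test region $\tilde\classifier_A\cap\tilde\classifier_B$ containing an unqualified point, where that point is selected at zero cost, and (b) the test region coinciding exactly with $\classifier_A\cap\classifier_B$, which is where all the content lies. (Your case (i) simply merges the paper's cases 1 and 2.) In the critical equality case, however, your route is genuinely different. The paper invokes regularity (\cref{def: existence of minimum cost}) to produce an unqualified point $\firstfeatures\notin\classifier_A$ whose cost-minimizing entry point $\genericfeatures^1$ into $\classifier_A$ costs at most one, and then, if $\genericfeatures^1\notin\classifier_B$, translates both points by a vector parallel to the boundary $\Line_A$ so that the target lands in $\classifier_A\cap\classifier_B$ while the source stays unqualified, using translation invariance to preserve the cost. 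You instead fix the corner $O$ as the target, perturb it inward-negatively to $\orifeatures=O-\epsilon\boldsymbol{v}$, and let translation invariance plus absolute homogeneity collapse $\onecost(\orifeatures,O)$ to $\epsilon\,\onecost(\mathbf{0},\boldsymbol{v})$, which is beaten by the reward of $1$ for small $\epsilon$. Your version is more elementary: it needs no existence of a minimizer, no case distinction on where that minimizer lies, and no boundary-parallel shift, only the two scaling axioms and the convex-geometric fact that a boundary point of the wedge has unqualified neighbors. What the paper's version buys in exchange is an explicit witness whose cheapest route into $\classifier_A$ is what gets it selected, a construction it reuses elsewhere; and it makes transparent why regularity appears in the hypothesis list of \cref{thm:opt_manipulation}, whereas under your argument that assumption is not needed for this particular lemma. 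Two small points to keep honest: your quantifier discipline (fix $O$ and $\boldsymbol{v}$, then shrink $\epsilon$) is exactly right and is what lets you avoid any continuity estimate; and your case (ii) implicitly uses that the half-planes are closed so that $O\in\tilde\classifier_A\cap\tilde\classifier_B$ --- consistent with the paper's convention, but worth stating.
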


\subsubsection{Investment}\label{subsubsec:investment}
We have shown that in the investment setting, the simultaneous mechanism using the true criteria as the tests achieves the first best (\cref{thm:optimal investment}). 
This is true for any cost function.
Despite its power, we continue to study the optimal sequential mechanism under investment for two reasons.
First, practically speaking, it is sometimes hard to implement simultaneous mechanisms because of physical constraints.
Second, it helps us to understand better how the two classes of mechanisms work under investment.
In these scenarios, it is useful to know what the optimal sequential mechanism is like.

Therefore, in this subsection, we generalize the result on optimal sequential mechanism in the investment setting (\cref{thm: true effort sequential}) to a broader class of cost functions. 
All proofs are omitted in this subsection and will be provided in \cref{appendix: seq general cost}.

In general, analyzing sequential mechanism under investment is more challenging than under manipulation setting.
As in the manipulation setting, we first need to characterize the agent's best response under a given mechanism.
Moreover, under investment, we also need to keep track of the eventual attributes the agent adopts, in order to decide whether the agent becomes qualified.
The second step can easily get very complicated even when there are nice topological properties coming from the cost function.

Nonetheless, in this subsection, we first provide a condition (\cref{condition:one-step}) under which the agent behaves in the same way under random-order mechanism and under simultaneous mechanism, implying that any simultaneous mechanism can be implemented by some random-order mechanism.

The following condition characterizes any agent's behavior, i.e., best response, in a mechanism.
\begin{conditionp}{O}\label{condition:one-step}
     Every agent prefers some one-step strategy to any two-step strategy.
\end{conditionp}

This condition holds trivially in any simultaneous mechanism.
However, it is not trivial to check when this condition holds in a sequential mechanism.
In general, whether this condition holds in a sequential mechanism is linked to  the cost function and properties of the mechanism.
We will provide examples later to illustrate such a connection.
Equivalently, this condition can be rephrased as the following.

\begin{conditionp}{O'}\label{condition:one-step'}
     No agent prefers using any two-step strategy.
\end{conditionp}

\begin{proposition}\label{lem: optimal max qualified improving effort}
    Consider  the investment setting and program \ref{max qualified}.
    Suppose the cost function satisfies \cref{assump: cost function one step} and \ref{def: translation invariant} (translation invariant).   
    For any distribution $\dist$, if there exists a feasible  random-order mechanism without disclosure $(\tilde \classifier_A, \tilde \classifier_B,q,\nullset)$ in which \cref{condition:one-step} holds, then the optimal sequential mechanism is an  random-order mechanism without disclosure using two tests $\classifier_A,\classifier_B$ and the same randomization probability, i.e., $( \classifier_A,  \classifier_B,q,\nullset)$. 
    
    Moreover, it achives the first best.
\end{proposition}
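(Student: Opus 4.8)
The whole statement collapses to a single claim: that \cref{condition:one-step} holds in the mechanism $(\classifier_A,\classifier_B,q,\nullset)$. Granting this, the argument in the proof of \cref{thm: true effort sequential} goes through essentially verbatim. Since every agent's best response is then a one-step strategy, an agent is accepted if and only if some one-step move into $\classifier_A\cap\classifier_B$ costs at most the selection benefit $1$: a qualified agent stays put at zero cost and is accepted; an unqualified agent with $\inf_{\features\in\classifier_A\cap\classifier_B}\onecost(\orifeatures,\features)\le 1$ moves into $\classifier_A\cap\classifier_B$ and — this being the investment setting — thereby becomes qualified; every remaining agent can be accepted only at cost strictly above $1$ using \emph{any} strategy (by \cref{condition:one-step}), so he opts out and stays unqualified. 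Hence the set of accepted agents is exactly the set of agents who are, or become, qualified; this is feasible, is the first best for program \ref{max qualified}, and therefore cannot be improved upon by any sequential mechanism, proving the proposition.

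So I would spend the proof establishing \cref{condition:one-step} for $(\classifier_A,\classifier_B,q,\nullset)$, transferring it from the hypothesized mechanism $(\tilde\classifier_A,\tilde\classifier_B,q,\nullset)$. First, by translation invariance (\cref{def: translation invariant}) I may translate the pair $(\tilde\classifier_A,\tilde\classifier_B)$ by a common vector; this only relabels types and preserves both feasibility and \cref{condition:one-step}, so I normalize so that $\partial\tilde\classifier_A\cap\partial\tilde\classifier_B=O:=\partial\classifier_A\cap\partial\classifier_B$. Second, since $(\tilde\classifier_A,\tilde\classifier_B,q,\nullset)$ is feasible in the investment setting and \cref{condition:one-step} holds in it, every accepted agent uses a one-step move landing in $\tilde\classifier_A\cap\tilde\classifier_B$, so feasibility forces $\tilde\classifier_A\cap\tilde\classifier_B\subseteq\classifier_A\cap\classifier_B$; after the normalization this means $\tilde\classifier_A\cap\tilde\classifier_B$ is an angular sub-wedge of $\classifier_A\cap\classifier_B$ with apex $O$. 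Third — the crux — I must show \cref{condition:one-step} is inherited when the acceptance region is enlarged from this sub-wedge to $\classifier_A\cap\classifier_B$, i.e.\ that enlarging a convex acceptance wedge (keeping its apex) weakly decreases the advantage of two-step over one-step strategies, for the given $q$ and the given translation-invariant cost.

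When $\tilde\classifier_i$ is parallel to $\classifier_i$ for $i\in\{A,B\}$, this is immediate: the corner-shift vector $\genericfeatures$ with $\weights_i\cdot\genericfeatures$ equal to the stringency slack of $\tilde\classifier_i$ satisfies $\tilde\classifier_i=\classifier_i+\genericfeatures$, so by translation invariance the game $(\tilde\classifier_A,\tilde\classifier_B,q,\nullset)$ at type $\orifeatures$ coincides with $(\classifier_A,\classifier_B,q,\nullset)$ at type $\orifeatures-\genericfeatures$, and \cref{condition:one-step} transfers at once. The genuine content is the non-parallel case. My approach would be a coupling: given a type $\orifeatures$ and a two-step strategy $(\firstfeatures,\secondfeatures)$ that is profitable in $(\classifier_A,\classifier_B,q,\nullset)$, choose a translation $\genericfeatures$ that pushes both half planes inward enough that the shifted profile $(\firstfeatures+\genericfeatures,\secondfeatures+\genericfeatures)$ passes, under $(\tilde\classifier_A,\tilde\classifier_B,q,\nullset)$, every order it passed under $(\classifier_A,\classifier_B,q,\nullset)$ — so it is a profitable two-step strategy for the type $\orifeatures+\genericfeatures$ there, with identical cost and weakly larger acceptance probability — while simultaneously the best one-step response of $\orifeatures$ in $(\classifier_A,\classifier_B,q,\nullset)$ is no worse than that of $\orifeatures+\genericfeatures$ in $(\tilde\classifier_A,\tilde\classifier_B,q,\nullset)$, because $\classifier_A\cap\classifier_B$ contains the pulled-back region $(\tilde\classifier_A\cap\tilde\classifier_B)-\genericfeatures$. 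Chaining these two inequalities with \cref{condition:one-step} in the hypothesized mechanism yields \cref{condition:one-step} in $(\classifier_A,\classifier_B,q,\nullset)$. The delicate point — which I expect to occupy the bulk of the argument — is that a single fixed $\genericfeatures$ need not make both comparisons hold, so one must exploit the angular-sub-wedge structure (rotating $\partial\tilde\classifier_A,\partial\tilde\classifier_B$ outward to $\partial\classifier_A,\partial\classifier_B$ through $O$ in steps and coupling at each step), handling separately the case where the two-step strategy already guarantees acceptance; this is also where the hypothesis that the randomization probability $q$ is held fixed is essential.
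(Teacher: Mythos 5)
Your reduction is the right one, and your first two steps match the paper's proof: you correctly reduce everything to establishing \cref{condition:one-step} for $(\classifier_A,\classifier_B,q,\nullset)$, and your normalization (translate $(\tilde\classifier_A,\tilde\classifier_B)$ by $O-\tilde O$ so the apexes coincide, using \cref{def: translation invariant} to preserve feasibility and \cref{condition:one-step}) is exactly the paper's construction of the auxiliary mechanism $(\classifier_A',\classifier_B',q,\nullset)$. The first-best argument at the end is also sound. The problem is the crux, which you leave unresolved. Your coupling-by-translation for the non-parallel case has an internal tension that you yourself flag but do not fix: a shift $\genericfeatures$ that pushes a two-step profile $(\firstfeatures,\secondfeatures)$ into the smaller tests necessarily pulls the acceptance wedge outward in the one-step comparison, since that comparison requires $(\tilde\classifier_A\cap\tilde\classifier_B)-\genericfeatures\subseteq\classifier_A\cap\classifier_B$, which fails for any $\genericfeatures$ pointing into the wedge. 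The proposed repair --- rotating $\partial\tilde\classifier_A,\partial\tilde\classifier_B$ outward through $O$ in steps and coupling at each step --- is not developed, and it is not clear what the coupling at an intermediate rotation would be; as written this is a genuine gap.

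The paper's resolution shows the difficulty is illusory: once the apexes coincide, \emph{no further translation is needed}. Because $\classifier_A'$ passes through $O$ and $\classifier_A'\cap\classifier_B'\subseteq\classifier_A\cap\classifier_B$, one has the set inclusions $\classifier_A\setminus\classifier_B\subseteq\classifier_A'$ and $\classifier_B\setminus\classifier_A\subseteq\classifier_B'$. Hence if some type $\orifeatures$ strictly prefers a two-step strategy $(\firstfeatures,\secondfeatures)$ in $(\classifier_A,\classifier_B,q,\nullset)$ (with $\firstfeatures$ passing only $\classifier_A$ and $\secondfeatures$ passing only $\classifier_B$), then the \emph{identical} profile $(\firstfeatures,\secondfeatures)$ is a valid two-step strategy in $(\classifier_A',\classifier_B',q,\nullset)$ with the same cost and the same acceptance probability, while the best one-step alternative there is weakly costlier since $\classifier_A'\cap\classifier_B'\subseteq\classifier_A\cap\classifier_B$. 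So the same type would strictly prefer a two-step strategy in the primed mechanism, contradicting \cref{condition:one-step} there. Replacing your coupling argument with these two inclusions closes the gap and makes the rest of your plan go through.
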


Although this condition holds for a relatively large class of cost functions, it is not easy to interpret using primitives unless we are equipped with a concrete cost function.
The existence of a feasible  random-order mechanism without disclosure in which \cref{condition:one-step} holds usually depends on: (1) the cost function,  the mechanism, including (2) the angle between the two tests (usually linked to the angle of the qualified region), and (3) the randomization probability and the disclosure decision.
However, to characterize the condition on cost function and the properties of the mechanism that guarantees \cref{condition:one-step} holds is very challenging.
This is because we usually need to completely characterize agent's best response, which requires finding and comparing the best one-step and two-step strategy.
This becomes very difficult when there are not enough topological properties being imposed on the cost function.
Hence, we only provide examples under the additive Euclidean cost function to illustrate the connection between the primitives and \cref{condition:one-step}.

 Under this cost function, there always exists a feasible uninformed random order mechanism in which \cref{condition:one-step} holds if and only if $\theta\geq 30^{\circ}$.
From the previous analysis, we can see that any two-step strategy requires the agent to first adopt attributes that satisfy one test but failing another and later adopt different attributes that satisfy the second test but failing the first.
Holding other primitives fixed, such two-step strategies become less appealing when the randomization probability over the order of the tests is closer to $0.5$ because the randomization creates strategic uncertainty, i.e., it is harder for the agent to predict which test is more likely to be the first one.
Moreover, the no disclosure decision also makes two-step strategies less appealing.
In any sequential mechanism with randomization probability $0.5$ and no disclosure, such strategic uncertainty is maximized.

Holding the randomization probability and disclosure decision fixed, when the two tests are far away from each other, two-step strategy becomes costly and could be less appealing than some one-step strategy by adopting some attributes that satisfy both tests at the same time.
Intuitively, the two tests being far away from each other is measured by both the angle between the two tests and the cost function, which is being held fixed here.
 Recall that $\theta$ is the angle of the qualified region.
$\theta\geq30^{\circ}$ can therefore be understood as the two requirements $\classifier_A$ and $\classifier_B$ are far away from each other. 
When $ \theta\geq 30^{\circ}$, we can easily compute that under the random order mechanism with randomization probability $\probprincipal=0.5$ and no disclosure, no agent prefers two-step strategies.
However, when $\theta< 30^{\circ}$, some agent prefers using some two-step strategy even under the uninformed random order mechanism with randomization probability $\probprincipal=0.5$.

In the remaining section, we impose stronger properties on the cost function to derive similar sharper characterization of the optimal sequential mechanism as in the Euclidean distance cost. We assume that the cost function is additive.

\begin{assumption}[additive]\label{def: additive}
A cost function $c(\cdot,\cdot,\cdot)$ is additive if 
    for any initial attributes $\orifeatures$ and any two-step strategy $\strategies=(\firstfeatures, \secondfeatures)$,
$$
c(\orifeatures,\firstfeatures, \secondfeatures) = \onecost(\orifeatures,\firstfeatures) + \onecost(\firstfeatures, \secondfeatures).
$$
\end{assumption}




\begin{theorem}\label{thm:optimal sequential metric investement}
     Consider  the investment setting and program \ref{max qualified}.
     Suppose the cost function satisfies \cref{assump: cost function one step}-\ref{def: additive}.  
     For any given $\classifier_A$, there exists an angle $\theta^*(\classifier_A) \in (0, 180^{\circ})$ such that when the angle between the two half planes $\classifier_A$ and $\classifier_B$ is greater than $\theta^*(\classifier_A)$, for 
     any distribution $\dist$,    
    the optimal sequential mechanism is an random-order mechanism without disclosure with two tests $\classifier_A$ and $\classifier_B$. 
    
    Moreover, it achieves the first best.
\end{theorem}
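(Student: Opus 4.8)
The plan is to reduce the theorem to checking, for $\theta$ sufficiently close to $180^{\circ}$, that the mechanism $(\classifier_A,\classifier_B,\tfrac12,\nullset)$ is feasible and satisfies \cref{condition:one-step}; \cref{lem: optimal max qualified improving effort} then gives exactly the asserted conclusion (the hypothesis of that proposition is distribution‑free, so this yields the result for every $\dist$), and one sets $\theta^*(\classifier_A)$ to be the infimum of the angles for which the check goes through. To analyze $(\classifier_A,\classifier_B,\tfrac12,\nullset)$, observe that any strategy there is selected with probability $0$, $\tfrac12$, or $1$. A probability‑one strategy must put both $\firstfeatures$ and $\secondfeatures$ in $\qualregion$, so by \cref{def: trainagle inequality} and \cref{assump: cost function one step} it costs at least $m(\orifeatures):=\inf_{z\in\qualregion}\onecost(\orifeatures,z)$ (attained, by \cref{def: existence of minimum cost}); a probability‑$\tfrac12$ strategy, on its winning event $\widetilde h_1=\classifier_A$, has $\firstfeatures\in\classifier_A$ and $\secondfeatures\in\classifier_B$, so by \cref{def: monotone,def: trainagle inequality} it costs at least $Z_{AB}(\orifeatures):=\inf\{\cost(\orifeatures,\firstfeatures,\secondfeatures):\firstfeatures\in\classifier_A,\ \secondfeatures\in\classifier_B\}\ge\max\{\onecost(\orifeatures,\classifier_A),\onecost(\orifeatures,\classifier_B)\}$ (writing $\onecost(\orifeatures,S):=\inf_{z\in S}\onecost(\orifeatures,z)$), and symmetrically with $Z_{BA}$ for $\widetilde h_1=\classifier_B$. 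Putting $Z:=\min\{Z_{AB},Z_{BA}\}$ and noting $Z\le m$ (take $\firstfeatures=\secondfeatures$ equal to a nearest point of $\qualregion$, using \cref{assump: cost function one step}), the best one‑step payoff of an unqualified type is $\max\{0,1-m(\orifeatures)\}$ while every two‑step payoff is at most $\max\{0,1-m(\orifeatures),\tfrac12-Z(\orifeatures)\}$; hence \cref{condition:one-step} holds in $(\classifier_A,\classifier_B,\tfrac12,\nullset)$ if and only if $\tfrac12-Z(\orifeatures)\le\max\{0,1-m(\orifeatures)\}$ for every unqualified $\orifeatures$.

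Next I would use homogeneity. By \cref{def: translation invariant} place the vertex $O$ of $\qualregion$ at the origin; then $\classifier_A$, $\classifier_B$ and $\qualregion$ are cones with apex $O$, so by \cref{def: absolute homogenous c} the functions $m$ and $Z$ are positively homogeneous of degree one in $\orifeatures$. Writing $\orifeatures=r\hat\orifeatures$ and distinguishing the ranges $r\,m(\hat\orifeatures)\le1$ and $r\,m(\hat\orifeatures)>1$, the inequality of the previous paragraph holds for all $\orifeatures$ if and only if $Z(\orifeatures)\ge\tfrac12\,m(\orifeatures)$ for all $\orifeatures$, which by the lower bound on $Z$ above is implied by the purely geometric statement
\begin{equation}
\onecost(\orifeatures,\qualregion)\ \le\ 2\max\{\onecost(\orifeatures,\classifier_A),\ \onecost(\orifeatures,\classifier_B)\}\qquad\text{for all }\orifeatures. \tag{$\star$}
\end{equation}
For $\orifeatures$ whose nearest point of $\qualregion$ lies on one of its two edges, $\onecost(\orifeatures,\qualregion)$ equals $\onecost(\orifeatures,\classifier_A)$ or $\onecost(\orifeatures,\classifier_B)$ and $(\star)$ is immediate; the only delicate case is the ``corner region'' where that nearest point is $O$, so that $\onecost(\orifeatures,\qualregion)=\onecost(\orifeatures,O)$. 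By homogeneity it then suffices to show $\max\{\onecost(\orifeatures,\classifier_A),\onecost(\orifeatures,\classifier_B)\}>\tfrac12$ on the intersection of the sphere $\{\onecost(\orifeatures,O)=1\}$ with the corner region. As $\theta\to180^{\circ}$ the half planes $\classifier_A,\classifier_B$ converge to a common half plane $\classifier$ and the corner regions decrease to the normal cone $N$ of $\classifier$ at $O$, on which $\onecost(\orifeatures,\classifier)=\onecost(\orifeatures,O)=1$ on that sphere; since $\onecost$ is a (continuous) norm and the sphere is compact, a standard limiting argument produces $\theta^*(\classifier_A)<180^{\circ}$ such that $(\star)$, and hence $Z>\tfrac12\,m$, holds \emph{strictly} for every $\theta>\theta^*(\classifier_A)$.

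Strictness of $Z>\tfrac12\,m$ then rules out not only that an agent strictly prefers a two‑step strategy but also that any two‑step strategy which leaves him unqualified is a weak best response: a selected two‑step either places $\secondfeatures\in\qualregion$ (so the agent becomes qualified) or is selected with probability $\tfrac12$ and costs at least $Z(\orifeatures)>\tfrac12\,m(\orifeatures)$, giving payoff strictly below $\max\{0,1-m(\orifeatures)\}$. Hence, for $\theta>\theta^*(\classifier_A)$, every type that passes $(\classifier_A,\classifier_B,\tfrac12,\nullset)$ has moved into $\qualregion$ by a one‑step strategy and so, in the investment setting, is qualified; the mechanism is feasible and \cref{condition:one-step} holds in it, and \cref{lem: optimal max qualified improving effort} yields that an optimal sequential mechanism is $(\classifier_A,\classifier_B,\tfrac12,\nullset)$ and that it attains the first best. (For the Euclidean additive metric the crude bound $Z\ge\max\{\onecost(\orifeatures,\classifier_A),\onecost(\orifeatures,\classifier_B)\}$ already delivers $\theta^*(\classifier_A)\le 60^{\circ}$, whereas \cref{thm: true effort sequential} identifies the sharp threshold $30^{\circ}$; only existence of some $\theta^*<180^{\circ}$ is needed here.)

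The main obstacle is the uniform estimate in the second paragraph: upgrading the \emph{pointwise} degeneration at $\theta=180^{\circ}$ to a bound valid on an entire left‑neighborhood of $180^{\circ}$, and controlling $\onecost(\orifeatures,\classifier_i)$ in the corner region for a general cost obeying only \cref{def: translation invariant}--\ref{def: additive} rather than the Euclidean metric. The homogeneity reduction to the compact unit sphere is what makes this tractable, and the edge/corner dichotomy isolates the single place where a zigzag can genuinely undercut the direct route to the vertex.
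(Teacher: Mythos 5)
Your proposal is correct in outline but takes a genuinely different route from the paper's. The paper argues constructively: it first proves a monotonicity step (if \cref{condition:one-step} holds at angle $\theta$, it holds at every larger angle), then exhibits an explicit starting half plane $\tilde\classifier_B$ whose normal is orthogonal to the cost-minimizing direction into $\classifier_A$ --- which forces the endpoint of any zig-zag that first satisfies $\tilde\classifier_B$ to land in $\classifier_A\cap\tilde\classifier_B$ --- and finally rotates $\tilde\classifier_B$ until two-step strategies in both orders are dominated. You instead reduce \cref{condition:one-step} for $(\classifier_A,\classifier_B,\tfrac12,\nullset)$, via degree-one homogeneity of $m$ and $Z$, to the scale-free inequality $Z\ge\tfrac12 m$, then to the geometric inequality $(\star)$, and extract the threshold by a compactness/degeneration argument at $\theta=180^{\circ}$; your payoff accounting, the equivalence with $Z\ge\tfrac12 m$, the edge/corner dichotomy (which tacitly uses that $\onecost(\orifeatures,\cdot)$ is convex --- this does follow from additivity, the three-point triangle inequality and homogeneity, but should be said), and the strictness argument that handles tie-breaking and feasibility are all sound. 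Two caveats. First, the uniform limiting estimate you flag does go through by the subsequence argument you indicate, but it needs the one-step cost to be continuous and positive definite so that $\{\orifeatures:\onecost(\orifeatures,O)=1\}$ is compact; the paper's assumptions do not guarantee this (the paper explicitly covers infinite costs), so your route requires an extra nondegeneracy hypothesis or a separate treatment of degenerate directions, which the paper's explicit construction avoids. Second, define $\theta^*(\classifier_A)$ as the infimum of $\theta_0$ such that the check passes for \emph{all} $\theta\in(\theta_0,180^{\circ})$ --- without the paper's monotonicity step you have not shown that the set of good angles is an up-set, though your limiting argument delivers a uniform left-neighborhood directly, which suffices. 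What your approach buys is a clean quantitative criterion: $(\star)$ immediately yields an explicit bound ($\theta^*\le 60^{\circ}$ for the Euclidean cost) that the paper's rotation argument does not produce.
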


The next result shows that in the investment setting, random-order mechanisms with disclosure is never optimal. The proof uses the same mixed mechanism argument as before.

\begin{proposition}\label{prop:investment fixed order better than informed}
    Consider  the investment setting and program \ref{max qualified}.
     Suppose the cost function satisfies \cref{assump: cost function one step}-\ref{def: additive}.
    For any distribution $\dist$, the optimal fixed-order mechanism is weakly better than the optimal random-order mechanism with disclosure.
\end{proposition}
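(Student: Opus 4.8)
The plan is to reuse the mixed-mechanism argument behind \cref{lem:feasible-informed-rand-distance cost}, adapted to the investment objective. Fix an optimal random-order mechanism with disclosure $s^{\star}=(\widetilde{\classifier}_A,\widetilde{\classifier}_B,q,\widetilde h_1)$; if $q\in\{0,1\}$ it is already a fixed-order mechanism and there is nothing to prove, so take $q\in(0,1)$. Write $V(s)$ for the value of a feasible mechanism $s$ and $u_s(\orifeatures)$ for the agent's optimal payoff at type $\orifeatures$ under $s$. From $s^{\star}$ I build two fixed-order mechanisms, $s_1$ offering $\widetilde{\classifier}_A$ first and $s_0$ offering $\widetilde{\classifier}_B$ first, inheriting the two tests of $s^{\star}$ where that preserves feasibility and otherwise replacing one of them by a stringent parallel shift through the corner $\widetilde O=\partial\widetilde{\classifier}_A\cap\partial\widetilde{\classifier}_B$ (the device of \cref{lem:feasible-uninformed-rand-distance cost}). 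The comparison object is the mixed mechanism $s_{\mathrm{mix}}$ that runs $s_1$ with probability $q$ and $s_0$ with probability $1-q$ and reveals which branch is run \emph{before} the agent picks $\firstfeatures$; $s_{\mathrm{mix}}$ is only an auxiliary device, while $s_1$ and $s_0$ are admissible fixed-order mechanisms.

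The argument has three steps. \emph{(i) Feasibility.} Since in $s^{\star}$ any type that one-steps into $\widetilde{\classifier}_A\cap\widetilde{\classifier}_B$ is selected with probability one, feasibility of $s^{\star}$ forces $\widetilde{\classifier}_A\cap\widetilde{\classifier}_B\subseteq\qualregion$. Using translation invariance (\cref{def: translation invariant}), additivity (\cref{def: additive}), and the triangle-inequality/monotonicity/regularity assumptions (Assumptions~\ref{def: trainagle inequality}--\ref{def: existence of minimum cost}), I would show that in each of $s_1,s_0$ the endpoint of the agent's cheapest order-respecting two-step path, whenever that path costs less than $1$, still lies in $\qualregion$; where the tests inherited from $s^{\star}$ are not stringent enough for this, the parallel shift through $\widetilde O$ restores it without shrinking $\widetilde{\classifier}_A\cap\widetilde{\classifier}_B$. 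Hence $s_1$, $s_0$, and $s_{\mathrm{mix}}$ are feasible. \emph{(ii) $s_{\mathrm{mix}}$ weakly dominates $s^{\star}$.} Because $s_{\mathrm{mix}}$ reveals the test order before $\firstfeatures$ is chosen, the agent can replicate any $s^{\star}$-strategy branch by branch, which already gives $u_{s_{\mathrm{mix}}}(\orifeatures)=q\,u_{s_1}(\orifeatures)+(1-q)\,u_{s_0}(\orifeatures)\ge u_{s^{\star}}(\orifeatures)$ for every type; running the set-inclusion argument of \cref{lem:feasible-informed-rand-distance cost}, now tracking selection probabilities, the probability with which a given type is selected --- and, by (i), thereby counted toward the objective --- is weakly larger under $s_{\mathrm{mix}}$ than under $s^{\star}$, so $V(s_{\mathrm{mix}})\ge V(s^{\star})$. \emph{(iii) Averaging.} By construction $V(s_{\mathrm{mix}})=q\,V(s_1)+(1-q)\,V(s_0)\le\max\{V(s_1),V(s_0)\}$. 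Chaining (i)--(iii), the better of the two feasible fixed-order mechanisms $s_1,s_0$ has value at least $V(s^{\star})$; taking $s^{\star}$ to be the optimal random-order mechanism with disclosure proves the proposition.

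I expect step (i) to be the main obstacle. In the manipulation setting, feasibility of the two fixed-order mechanisms was essentially inherited from feasibility of $s^{\star}$ because only the fake attributes at the testing instants matter; under investment what matters is the endpoint of the agent's path, and fixing the test order makes the mechanism strictly more permissive about which endpoints a given type will pay to reach (a fixed-order mechanism delivers selection with probability one rather than probability $q$), so the stringency of $\widetilde{\classifier}_A,\widetilde{\classifier}_B$ that keeps $s^{\star}$ feasible need not keep $s_1$ or $s_0$ feasible. Pinning down exactly how much extra parallel shrinkage is needed, and checking that it does not undo the domination in (ii) --- which it should not, since the shift is a parallel translation through the common corner and only enlarges the set of types a fixed-order mechanism can still serve --- is the delicate part; steps (ii)--(iii) are then the standard set-inclusion/averaging routine already established for the manipulation analogue.
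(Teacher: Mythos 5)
You have the right outer skeleton --- construct the two fixed-order mechanisms, mix them with weights $q$ and $1-q$, show the mixture weakly dominates $s^{\star}$ by branch-wise replication, and then average --- and that matches the paper's final step, which explicitly reuses the mixed-mechanism routine from \cref{lem:feasible-informed-rand-distance cost}. But the heart of the proposition is exactly the step you flag as ``the main obstacle'' and then leave unresolved: proving that the two fixed-order mechanisms are feasible in the investment setting. The paper resolves this with the \emph{same} tests $\tilde\classifier_A,\tilde\classifier_B$, no shift required. The argument is: for any type using a two-step strategy $(\features,\firstfeatures,\secondfeatures)$ in the fixed-order mechanism, additivity plus the triangle inequality force $\firstfeatures$ onto the boundary of $\tilde\classifier_A$ and $\secondfeatures$ to be the cost-minimizer into $\tilde\classifier_B$ with $\onecost(\firstfeatures,\secondfeatures)\le 1$; by translation invariance and homogeneity one can slide this configuration along the boundary of $\tilde\classifier_A$ to an extremal pair $(\tilde\features,\tilde\features')$ with $\onecost(\tilde\features,\tilde\features')=1$ exactly; the type $\tilde\features$ would, \emph{in the random-order mechanism with disclosure}, optimally wait and then move to $\tilde\features'$ (getting accepted with probability $q$ at zero net utility), so feasibility of $s^{\star}$ forces $\tilde\features'\in\classifier_A\cap\classifier_B$; finally every actual endpoint $\secondfeatures$ lies on the segment between $\tilde\features'$ and $\tilde O$, both qualified, and convexity of the qualified region finishes it. That is the idea your proposal is missing: you need to exploit feasibility of $s^{\star}$ \emph{at a different, carefully chosen type} to pin down where two-step endpoints of the fixed-order mechanism land.

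Your proposed patch --- replacing a test by a stringent parallel shift through $\tilde O$ when the inherited tests fail --- is both unnecessary and problematic. It is unnecessary because, as above, the inherited tests are already feasible. It is problematic because a more stringent test shrinks, rather than ``enlarges,'' the set of types the fixed-order mechanism serves, so your claim that the shift ``only enlarges the set of types a fixed-order mechanism can still serve'' is backwards; and once $s_1,s_0$ use different tests from $s^{\star}$, the branch-wise replication in your step (ii) no longer goes through verbatim (a strategy that passes $\tilde\classifier_B$ need not pass a strictly more stringent replacement), so the domination $V(s_{\mathrm{mix}})\ge V(s^{\star})$ would need a separate argument that you do not supply. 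Steps (ii)--(iii) are fine as stated once feasibility with the original tests is in hand.
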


\subsection{Robustness to objective}\label{sec:Pii}
In this subsection, we consider the following alternative objective.
\begin{equation}\tag{$\mathcal{P}_{II}$}\label{min unqualified}
    \begin{aligned}
        \min & \quad \pr[\text{selecting unqualified agent}] \\
        s.t. & \quad \text{ selecting all qualified agent}.\\
    \end{aligned}
\end{equation}

\paragraph{Manipulation}
Let $\density$ denote the density function of the distribution over initial type $\dist$.
Let $\nabla \density$ denote the gradient vector of $\density$.

\begin{proposition}\label{prop:Pii manipulation}
    Consider the manipulation setting and program \ref{min unqualified}.
    Suppose the cost function is induced by the Euclidean distance and is additive.
    Suppose $\nabla \density \cdot \weights_A \leq 0$ and $\nabla \density \cdot \weights_B \geq 0$.
    Then there exists a fixed-order sequential mechanism that outperforms the optimal simultaneous mechanism.
\end{proposition}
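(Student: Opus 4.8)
\emph{Plan.} The idea is to run the fixed-order-versus-simultaneous comparison of \cref{lem:fix-simultaneous-distance cost} and \cref{thm:opt_manipulation} ``in reverse'' for \ref{min unqualified}, using the monotone-gradient hypotheses to sign the resulting trade-off. I would first note that under \ref{min unqualified} a mechanism is feasible iff it selects \emph{every} qualified type (a qualified agent stays put at zero cost, and strictly prefers paying anything below the benefit $1$), and its value is the $\dist$-mass of the unqualified types it selects. For a simultaneous mechanism with test region $\widetilde\qualregion=\widetilde\classifier_A\cap\widetilde\classifier_B$ that selected set is exactly the unit cost-neighbourhood $\{\orifeatures:\min_{\features\in\widetilde\qualregion}\onecost(\orifeatures,\features)\le 1\}$, since two-step strategies are unavailable; feasibility is thus equivalent to this neighbourhood containing $\qualregion$. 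Write $\classifier_i=\{x:\weights_i\cdot x\ge b_i\}$.

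\emph{Step 1: the optimal simultaneous mechanism.} Shrinking $\widetilde\qualregion$ only shrinks its neighbourhood, so the optimum is as stringent as feasibility allows; arguing as in \cref{prop:optimal simultaneous manipulation general cost}, I would reduce to inward shifts $\classifier_A^{\delta_A}\cap\classifier_B^{\delta_B}$ of the true criteria (possibly with a tilt, cf.\ \cref{fig:cheap-talk}), with feasibility a single frontier constraint on $(\delta_A,\delta_B)$ binding at the corner $O$ of $\qualregion$. The hypotheses now enter: $\nabla\density\cdot\weights_A\le 0$ and $\nabla\density\cdot\weights_B\ge 0$ say the density weakly falls as one crosses $\partial\classifier_A$ inward and weakly rises as one crosses $\partial\classifier_B$ inward, so trimming $\classifier_A$ removes comparatively high-density unqualified mass and trimming $\classifier_B$ comparatively low-density mass. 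Hence on the feasibility frontier it is never optimal to put the whole budget into $\classifier_A$: the optimal shift $\delta_A^{*}$ is strictly below the full shift $1/\mc$ (indeed $\delta_A^{*}\le \sin\theta/\mc$ already when $\theta<90^\circ$). Consequently the unit neighbourhood of the optimal simultaneous test region still pokes outside $\classifier_A$, so $S_{\mathrm{sim}}$, the set of unqualified types it selects, carries strictly positive $\dist$-mass there --- a high-density region --- and I would record its exact shape.

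\emph{Step 2: a dominating fixed-order mechanism, and the comparison.} I would take the fixed-order mechanism $(\classifier_A^{+},\classifier_B,0)$, i.e.\ the true test $\classifier_B$ first and the maximally shifted $\classifier_A^{+}$ second. The order is what gives it the edge over any simultaneous mechanism: with $\classifier_B$ first, every qualified type (being in $\classifier_B$) clears the first test for free with $\firstfeatures=\orifeatures$, and then (being in $\classifier_A$) has the full unit budget to move $\secondfeatures$ into $\classifier_A^{+}$ --- so the mechanism is feasible, whereas no simultaneous mechanism can shift $\classifier_A$ all the way in, and even the reverse order $(\classifier_A^{+},\classifier_B,1)$ need not be feasible when $\theta<90^\circ$. (Boundary indifference on $\partial\classifier_A$, a null set, is absorbed by the usual convention or an arbitrarily small slackening.) Moreover this mechanism selects \emph{no} type outside $\classifier_A$, since clearing the second test requires reaching $\classifier_A^{+}$, which costs more than $1$ from any point with $\weights_A\cdot\orifeatures<b_A$. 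Thus its set $S_{\mathrm{fix}}$ of selected unqualified types lies inside $\classifier_A$, and by the reflection description of the cheapest two-step strategy (\cref{subsec:example}; reflect $\orifeatures$ across $\partial\classifier_B$, then minimise distance to $\classifier_A^{+}$) it is a bounded region hugging the far side of $\partial\classifier_B$, cut off by the ``refraction'' locus through the corner. Comparing: $S_{\mathrm{sim}}$ has extra high-density mass outside $\classifier_A$ (vacuous for $S_{\mathrm{fix}}$), which I would show outweighs whatever extra mass $S_{\mathrm{fix}}$ carries, giving $\pr_{\dist}[S_{\mathrm{fix}}]<\pr_{\dist}[S_{\mathrm{sim}}]$. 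The mirror mechanism $(\classifier_B^{+},\classifier_A,1)$ would instead place its errors outside $\classifier_B$ and is the wrong choice here --- which is exactly where the signs of the two gradient conditions are used to pick the right one.

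\emph{Main obstacle.} The hard step is this last comparison when $S_{\mathrm{fix}}\not\subseteq S_{\mathrm{sim}}$: the curved refraction locus bounding $S_{\mathrm{fix}}$ cuts across the straight boundary of the simultaneous selection region, so the two sets are genuinely non-nested and the inequality cannot be obtained by set inclusion. The argument I would use is a measure-decreasing transport --- translating the ``excess'' sliver of $S_{\mathrm{fix}}$, which sits outside $\classifier_B$ (low density, by $\nabla\density\cdot\weights_B\ge 0$), onto the ``deficit'' region $S_{\mathrm{sim}}\setminus S_{\mathrm{fix}}$ outside $\classifier_A$ (high density, by $\nabla\density\cdot\weights_A\le 0$) along the directions $\weights_B,\weights_A$ --- and this genuinely needs both hypotheses together. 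I also anticipate a routine case split on $\theta$ ($\gtrless 90^\circ$) in Step 1 when pinning down $(\delta_A^{*},\delta_B^{*})$, and degenerate densities (e.g.\ no mass outside $\classifier_A$) for which the dominance is only weak must be flagged.
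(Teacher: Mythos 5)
You pick the same fixed-order mechanism as the paper ($(\classifier_A^+,\classifier_B,0)$: true $\classifier_B$ first, fully shifted $\classifier_A^+$ second), and your feasibility argument for it is right, but the proof has a genuine gap at precisely the step you flag as the ``main obstacle'': the comparison of $S_{\mathrm{fix}}$ with $S_{\mathrm{sim}}$ is left as an unexecuted transport argument, and your diagnosis of why it is hard is incorrect. The paper's \cref{lem:fix>sim Pii} (Step 3) shows the two sets are in fact \emph{nested}, $S_{\mathrm{fix}}\subseteq S_{\mathrm{sim}}$, for every distribution: (i) no type outside $\classifier_A$ is ever selected by this fixed-order mechanism, since by the triangle inequality reaching $\classifier_A^{+}$ in two steps costs at least the one-step cost, which exceeds $1$; and (ii) for an unqualified type $\orifeatures\in\classifier_A\setminus\classifier_B$ selected via a two-step strategy $\orifeatures\to\firstfeatures\to\secondfeatures$, the intermediate point $\firstfeatures$ must lie in $\classifier_B$ and within cost $1$ of $\classifier_A^{+}$, hence is qualified; feasibility of the simultaneous mechanism (it must select $\firstfeatures$) together with $\classifier_A^{+}\subseteq\classifier_A^{S}$ and one more triangle inequality then places $\orifeatures$ within cost $1$ of $\classifier_A^{S}\cap\classifier_B^{S}$, i.e.\ $\orifeatures\in S_{\mathrm{sim}}$. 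So no measure transport is needed, the ``non-nested sliver'' you worry about does not exist, and the gradient hypotheses play no role whatsoever in this comparison step.

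Where the hypotheses do all their work is in your Step 1, which is also where your characterization drifts from what is actually true. The paper's \cref{lem:opt_sim Pii} shows the optimal simultaneous mechanism satisfies $\classifier_A^{S}\subset\classifier_A$ \emph{and} $\classifier_B\subset\classifier_B^{S}$ --- test $B$ is weakly \emph{lenient}, not an inward shift $\classifier_B^{\delta_B}$ as in your parametrization. Letting $\classifier_B^{S}$ extend beyond $\classifier_B$ is exactly what allows $\classifier_A^{S}$ to be pushed deeper inside $\classifier_A$ while keeping the corner $O$ reachable, and this asymmetric shape is what both selects the correct fixed-order candidate (which you did infer from the gradient signs) and makes the inclusion in (ii) go through. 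The paper establishes the shape by a translation argument: given any feasible simultaneous mechanism violating it, translating the entire test region along $\weights_A$ or along $-\weights_B$ preserves feasibility and, by translation invariance of the cost and the signs of $\nabla\density\cdot\weights_A$ and $\nabla\density\cdot\weights_B$, weakly lowers the density at every point of the selected set. Your Step 1 gestures at a reduction ``as in \cref{prop:optimal simultaneous manipulation general cost}'' but does not carry it out, and the finer claims about $\delta_A^{*}$ are neither verified nor needed. To repair the proof, replace the transport step with the set-inclusion argument above and redo Step 1 as a translation-improvement argument.
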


The alternative objective complicates the analysis under manipulation.
In general, which mechanism is optimal depends on the distribution.
Here we provide a relatively simple and easy-to-interpret condition $\nabla \density \cdot \weights_A \leq 0$ and $\nabla \density \cdot \weights_B \geq 0$, under which we can show a fixed-order mechanism dominates all simultaneous mechanisms. 
This condition quantifies the tradeoff between the two criteria.
It can be understood as the more qualified type according to one criterion (along 
the direction of $\weights_B$) has decreasing density while the more qualified type according to another criterion (along 
the direction of $\weights_A$) has increasing density. 
Under this condition, we can easily pin down the optimal tests under the optimal simultaneous mechanism.
Then, we can construct a fixed-order sequential mechanism that dominates it.
We defer the proof to \cref{appendix: alternative objective}.

\paragraph{Investment} The same simultaneous mechanism continues to achieve the first best under the alternative objective in the investment setting. 
\begin{proposition}\label{prop:Pii investment}
    Consider  the investment setting and program \ref{min unqualified}.
    For any distribution $\dist$ and any cost function, the optimal  simultaneous mechanism is one that uses the true requirements and  achieves the first best.
\end{proposition}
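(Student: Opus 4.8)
The plan is to verify directly that the simultaneous mechanism with tests $\widetilde\classifier_A=\classifier_A$ and $\widetilde\classifier_B=\classifier_B$, so that the passing region $\widetilde\qualregion$ equals the true qualified region $\qualregion=\classifier_A\cap\classifier_B$, is feasible for program~\ref{min unqualified} and attains objective value $0$, which is the smallest value that objective can take. Since no mechanism can do better, this simultaneous mechanism is optimal and achieves the first best. This mirrors the reasoning behind \cref{thm:optimal investment}; the only change is that the roles of the objective and the constraint in programs \ref{max qualified} and \ref{min unqualified} are swapped.

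First I would check the constraint, namely that every qualified agent is selected. Take any agent with true attributes $\orifeatures\in\qualregion$. Facing this simultaneous mechanism he chooses $\features$ to maximize $\indicate{\features\in\qualregion}-\onecost(\orifeatures,\features)$; leaving his attributes unchanged at $\features=\orifeatures$ already yields the maximal possible payoff $1$, since maintaining one's attributes is costless (\cref{assump: cost function one step}). Hence his optimal $\features$ lies in $\qualregion$, so he passes both tests and is selected. Therefore the constraint ``selecting all qualified agent'' holds, for any distribution $\dist$ and any cost function $\cost$.

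Second I would bound the objective. In the investment setting the attributes the agent adopts \emph{become} his true attributes. Hence, if some agent is selected under this mechanism, his adopted attributes $\features$ satisfy $\features\in\widetilde\qualregion=\qualregion$, which says precisely that his true attributes lie in $\qualregion$, i.e.\ he is qualified. Consequently no unqualified agent is ever selected and $\pr[\text{selecting unqualified agent}]=0$. Since this probability is nonnegative for every mechanism, no feasible mechanism attains a strictly smaller value; the displayed simultaneous mechanism therefore minimizes the objective and simultaneously achieves the first best --- it selects all qualified and no unqualified agents.

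The argument is essentially conceptual, so there is no real technical obstacle: the only point to observe is that once the passing region coincides with the true qualified region, selecting an unqualified agent is impossible in the investment setting, so both the $\mathcal{P}_{II}$-constraint and the $\mathcal{P}_{II}$-objective are satisfied for free --- no optimization over tests, over randomization, or over the choice between simultaneous procedures is needed. (If one additionally wanted the true-criteria tests to be the \emph{unique} optimal simultaneous mechanism, a short extra argument ruling out other test pairs would be required, but the statement only asserts that this is \emph{an} optimal mechanism.)
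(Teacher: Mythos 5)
Your proposal is correct and follows essentially the same route as the paper, which simply notes that the proof of \cref{prop:Pii investment} mirrors that of \cref{thm:optimal investment}: with tests equal to the true criteria, every qualified agent passes at zero cost (so the $\mathcal{P}_{II}$ constraint holds), and in the investment setting any selected agent's adopted attributes are his true attributes and lie in $\classifier_A\cap\classifier_B$, so the objective attains its minimum value $0$. No gaps.
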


The proof of \cref{prop:Pii investment} is very similar to the proof of \cref{thm:optimal investment}.

\section{Other applications}\label{sec:applications}
\paragraph{Collective decision making with diverging opinions}

Consider a search committee deciding whether to hire a candidate.
To be more concrete, consider an academic department hiring a scientific researcher, where candidates usually need to present their research plan for the next five to ten years and supporting evidence that such a plan is feasible and promising, on top of past work to demonstrate competence. 
Each member in the committee cares both about the objective aspect of the candidate, i.e., competence, and the subjective aspect of the candidate, i.e., how risky or ambitious his research plan is.
Everyone likes to hire a competent candidate, but some members prefer more ambitious candidates while others prefer less ambitious candidates.
Consider an unanimous or equal veto decision rule. 
We can view the whole committee as the principal.
And the committee's qualified region is defined by the two members with the most extreme subjective preferences, say Mr. R who prefers ambitious candidates and Mr. L who prefers less ambitious candidates (See \cref{fig: search committee}).

\begin{figure}[h] 
\centering 
\begin{tikzpicture}[xscale=7.8,yscale=7.8]
\draw [thick,<->] (-0.1,0) -- (0.8,0);
\draw [thick, ->] (6/25,0) -- (6/25,0.8);
\node [below] at (0.8,0) {more ambitious};
\node [below] at (-0.2,0) {less ambitious};
\node [above] at (6/25,0.8) {competence};

\draw [domain=0:8/15, thick, red] plot (\x, {1.5*\x});
\draw [domain=0:0.6, thick, blue] plot (\x, {-\x +0.6});

\path[pattern color=yellow,pattern=north west lines] (0,0.8) -- (0,0.6) -- (6/25,6/25*1.5)-- (8/15,0.8); 
\node [thick,font=\tiny] at (0.1,0.65) {\footnotesize$\text{approval zone}$};

\fill [red!60,nearly transparent]  (8/15,0.8) -- (0,0) -- (8/15,0) -- cycle;
\node [thick,font=\tiny, red] at (0.53,0.3) {\footnotesize$\text{Mr. L's rejection zone}$};

\fill [blue!60,nearly transparent]  (0,0.6) -- (0,0) -- (0.6,0)-- cycle;
\node [thick,font=\tiny, blue] at (0,0.25) {\footnotesize$\text{Mr. R's rejection zone}$};

\end{tikzpicture}
\caption{A search committee with diverging preferences}
\label{fig: search committee}
  \end{figure}
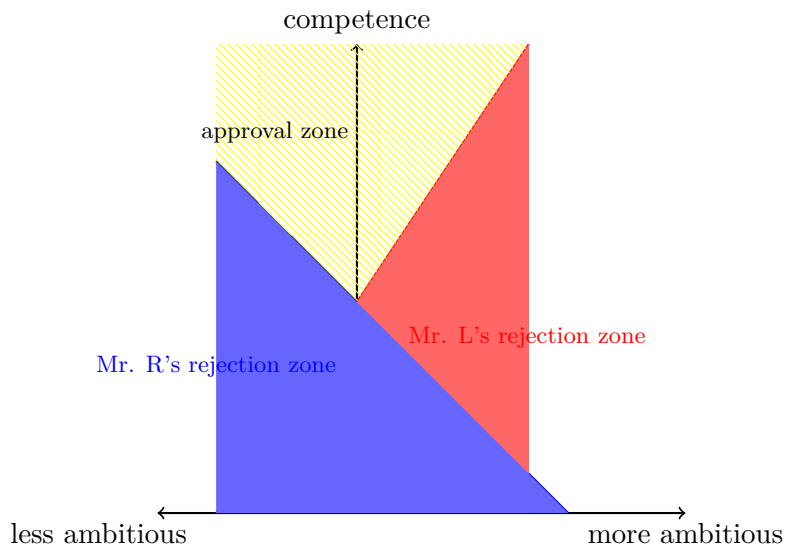
  
A test is  a joint assessment on the candidate's multiple attributes by each member, which could be a conversation, a discussion, a speech, a presentation, or an interview.
Here only the two members with the most extreme preferences matter, i.e., Mr. L and Mr. R.
A candidate passes Mr. L's (or Mr. R's) test if he passes the overall assessment or judgement made by Mr. L (or Mr. R). 

 A simultaneous mechanism is an interview with both Mr. L and Mr. R in the same room.
 A sequential mechanism with fixed order consists of  two separate interviews Mr. L and Mr. R in a  fixed order.

The candidate could inflate or deflate his attitude towards taking risk when talking to different interviewers.
Hence this fits into the manipulation setting.
The cost of changing apparent performance is arguably path dependent for multiple reasons.
A candidate usually needs to provide resumes and personal statement that are assessed by all interviewers. This could constrain how far the candidate can misrepresent themselves.

\paragraph{Banking Regulation}
After the recent global financial crisis in 2008, one major responsibility of central bank is to stabilize the banking system. 
The central bank sets multiple regulatory objectives based on commercial banks' financial statements. 
However, some of these metrics are in conflict with each other.
Consider the leverage ratio and liquidity ratio.
The leverage ratio is the ratio of Tier 1 capital to total asset.
The liquidity ratio is the ratio of liquid asset to net cash outflows over a 30-day stress period.
These two ratios are interconnected through liquid asset.
There are two reasons for the interconnectedness of the two ratios.
First, the total asset is the sum of liquid asset and illiquid asset.
Second, liquid asset is the item that is easiest to change in a short period of time among the items appearing on the bank's financial statements.

The central bank publicly sets the minimum requirement on leverage ratio and liquidity ratio. To visualize the central banks' requirements, we project the requirements on the two ratios on the space of Tier 1 capital and liquid asset. 
Suppose the minimum requirement on leverage ratio is $3\%$. 
Then the minimum requirement on leverage ratio is equivalent to requiring $\text{Tier 1 capital }\geq 3\% \times \text{ liquid asset } + 3\% \times\text{ other asset}$, which is represented by the red line in \cref{fig: banking regulation}.
The liquidity ratio requirement equivalent to requiring $\text{ liquid asset }\geq \text{ net cash outflow }$, which is represented by the blue line in \cref{fig: banking regulation}.
Notice that both of the requirements are bank specific.
Hence \cref{fig: banking regulation} represents the central bank's regulatory objective to one bank.
In Europe, the Basel committee on Banking Supervision (BCBS) decides that commercial banks need to report their leverage ratio at the quarter end and report their liquidity ratio at the end of each month.
In contrast, in the US, big commercial banks are required to report both their leverage ratio and liquidity ratio daily.

We argue that 
the European regulatory framework is essentially a sequential one with raised standards,\footnote{Encountering European banks' `window dressing' behaviors, European central bank decided to raise regulatory requirements on selected banks. In particular, `for six banks, a P2R leverage ratio add-on was applied on top of the 3\% leverage ratio requirement.' ( \citet{ECBraiserequirement})} while the 
US regulatory framework is essentially a simultaneous procedure. 
We further argue that both practices could be compatible to our model if the European central bank prioritizes reducing gaming behaviors from commercial banks and the US central bank prioritizes boosting investment from commercial banks.\footnote{An alternative possibility is that European central bank believes that  commercial banks are primarily  gaming, while the US central bank believes that  commercial banks are primarily investing.}

\begin{figure}[h] 
\centering 
\begin{tikzpicture}[xscale=7.8,yscale=7.8]
\draw [thick,->] (-0.1,0) -- (0.8,0);
\draw [thick, ->] (0,-0.1) -- (0,0.8);
\node [below] at (0.86,0) {liquid asset};
\node [above] at (0,0.8) {Tier 1 capital};

\draw [domain=0:0.8, thick, red] plot (\x, {0.2*\x+0.3});
\draw [ thick, blue] (0.5,0) --(0.5,0.8);

\node [below] at (0.46,0) {net cash outflow};
\node [right] at (0.5,0.8) {liquidity ratio};
\node [right] at (0.8, 0.46) {leverage ratio };
\node [left,font=\tiny] at (0,-0.05) {\footnotesize$O$};
\path[pattern color=purple,pattern=north west lines] (0.8,0.46)--(0.5,0.4)--(0.5,0.8); 

\end{tikzpicture}
\caption{Regulation on a commercial bank}
\label{fig: banking regulation}
  \end{figure}
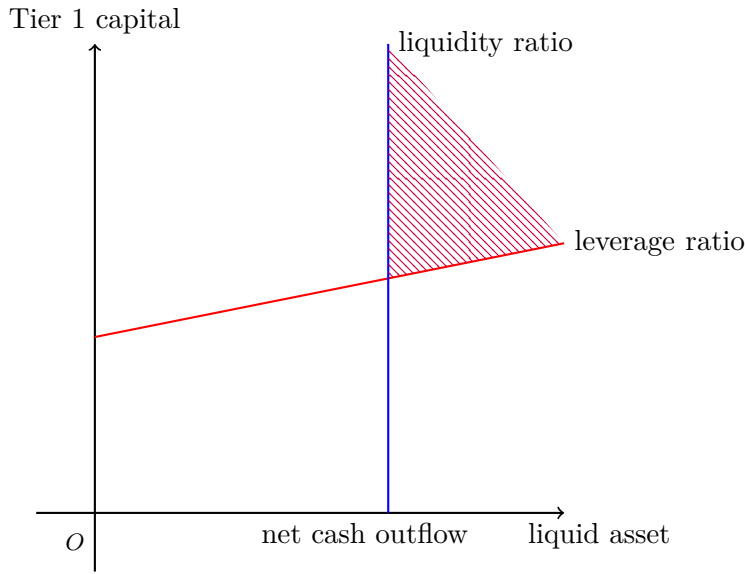

In practice, more European banks are reported to manipulate their financial statements by temporarily changing their liquid asset holding in a short period of time. We consider such temporary change manipulation.
In contrast, US banks are not detected to be involved into this kind of short term manipulation behaviors. 
Instead, when US banks change their financial statements, they tend to maintain such a change for a long time. We consider such long term maintenance of change an investment.

Empirical evidence shows that under the current regulatory framework, European banks are involved into window dressing behaviors by contracting repurchase agreement transactions right before the quarter end, and expand repurchase agreement  transactions right after (\citet{window_dressing} and \citet{ECB_window}).
A repurchase agreement usually involves two parties: a borrower of liquidity and a lender.
From the perspective of a borrow, the agreement is referred to as a repo while from the perspective of a lender, the same agreement is called a reverse repo.
A repo transaction could impact the balance sheet of the borrower while a reverse repo transaction has no impact on the lender's balance sheet.
When a bank enters a repo transaction, they receive cash, which enters the total asset and hence decreases the leverage ratio.
Hence contracting repo transactions can  temporarily boost a bank's leverage ratio, while expanding repo transactions enables banks to borrow liquidity, which could enhance their liquidity ratio.

In the US, banks are required to report their daily and quarterly average leverage (liquidity) ratio.
This type of practice prevents banks from temporarily improving their leverage (liquidity) ratio.

\paragraph{Regulation on merger or acquisition approval}
For a merger or acquisition to happen, the proposed firm needs to first submit a public filing to the Securities and Exchange Commission (SEC) and disclose information to investors so that they can evaluate the profitability of the merger or acquisition transaction.
SEC's regulatory goal is to inform shareholders about the potential profitability consequence of the merger or acquisition.
Hence the proposed firm usually need to justify that the merger or acquisition will increase profit, either because it introduces production efficiency by reducing cost significantly or because it enables the firm to set higher prices through larger market power.\footnote{The SEC mandates that public companies disclose detailed financial information related to M\&A transactions to ensure transparency and protect investors. These disclosures often include financial metrics such as projections of future performance, required returns, and cost of capital \citep{SEC_Regulation_SK, SEC_Form_S4}. In M\&A transactions, financial decisions are often based on the comparison between the expected return and the cost of capital (the hurdle rate). This financial analysis determines whether the deal adds value to the acquiring company’s shareholders \citep{jensen1983market}.}
Notice that although increasing market power is a sever anti-competitiveness concern, it is usually not an issue with the SEC since the SEC only cares about the investors.
The regulation from the SEC is represented by the blue line in \cref{fig: FTC-SEC}.
Basically, SEC requires that a company needs to reduce costs and/or increase revenue in some way above a hurdle rate.

After the firm's proposal gets approved by the SEC, the firm needs to further justify to the Federal Trade Commission (FTC) that the merger or acquisition does not trigger antitrust intervention. 
The FTC's regulatory goal is in conflict with that of the SEC. 
FTC usually approves a merger or an acquisition if the increase in market power is offset by the cost reduction, or improved production efficiency, so that overall prices do not actually increase. 
The FTC's regulatory goal is represented by the red line in \cref{fig: FTC-SEC}.
Depending on how much costs are  passed through to the consumers, the slope of the red line could be steeper or flatter.
The region where a merger or an acquisition that gets approved (qualified region) is colored in yellow in \cref{fig: FTC-SEC}.

A test chosen by the SEC is usually a public filing where the firm needs to justify the benefits of the merger or acquisition to the shareholders,
while a test chosen by the FTC is usually a threshold on how much estimated increase in market power  compared to the estimated  cost reduction can be tolerated, for example, price increase due to increased power cannot exceed $5\%$ of cost reduction.
 
Usually, firms could inflate the increased benefit from the merger or acquisition to the SEC to pass the SEC's scrutiny.
In contrast, when presenting to the FTC, firms could 
could exaggerate the uncertainty on potential price increase.
If firms are sued by the FTC, they could also hire experts to argue that the market is larger and hence  to deflate the potential increase of price or the harm to the consumers, or they could hire experts to manipulate cost information to argue that cost reduction is much higher than estimated. \footnote{Sometimes, firms can also hide their data or throw shade on the legitimacy of their data to influence the FTC's estimation.}
However, when using such strategies, they cannot contradict to their public filing submitted to the SEC. So the cost function of the firms exhibit some path dependence.

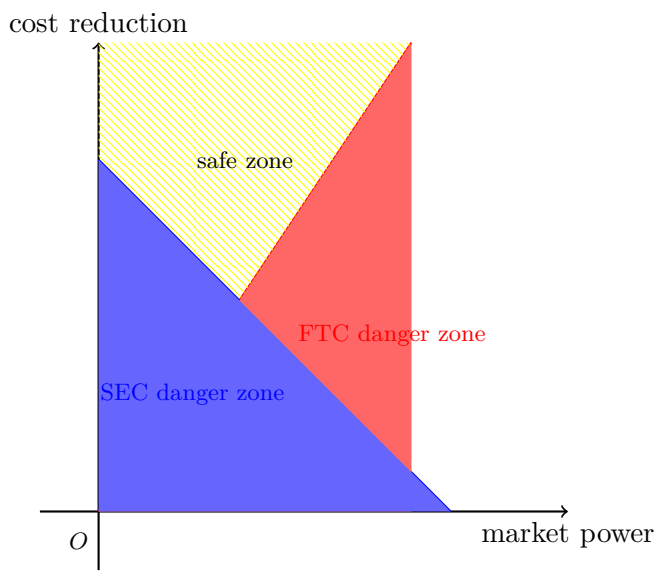
\begin{figure}[h] 
\centering 
\begin{tikzpicture}[xscale=7.8,yscale=7.8]
\draw [thick,->] (-0.1,0) -- (0.8,0);
\draw [thick, ->] (0,-0.1) -- (0,0.8);
\node [below] at (0.8,0) {market power};
\node [above] at (0,0.8) {cost reduction};

\draw [domain=0:8/15, thick, red] plot (\x, {1.5*\x});
\draw [domain=0:0.6, thick, blue] plot (\x, {-\x +0.6});

\node [left,font=\tiny] at (0,-0.05) {\footnotesize$O$};
\path[pattern color=yellow,pattern=north west lines] (0,0.8) -- (0,0.6) -- (6/25,6/25*1.5)-- (8/15,0.8); 
\node [thick,font=\tiny] at (0.25,0.6) {\footnotesize$\text{safe zone}$};

\fill [red!60,nearly transparent]  (8/15,0.8) -- (0,0) -- (8/15,0) -- cycle;
\node [thick,font=\tiny, red] at (0.5,0.3) {\footnotesize$\text{FTC danger zone}$};

\fill [blue!60,nearly transparent]  (0,0.6) -- (0,0) -- (0.6,0)-- cycle;
\node [thick,font=\tiny, blue] at (0.16,0.2) {\footnotesize$\text{SEC danger zone}$};

\end{tikzpicture}
\caption{Regulation on firms during merger and acquisition}
\label{fig: FTC-SEC}
  \end{figure}

 \section{Discussions}\label{sec:discussion}

\paragraph{Dimensionality and degenerate sequential mechanisms.}
Our model cannot be reduced to a problem where agent has a one-dimension attribute and the principal offers two selection cutoffs.
Consider a setup where the agent has a one-dimensional attribute $\feature\in \bbR$.
The principal would like to select any agent with attribute $\feature\in [b_1,b_2]$, where $b_1<b_2$.
Suppose now the principal announces that any agent with attribute belonging to $[\tilde b_1,\tilde b_2]$ will be selected, with $\tilde b_1<\tilde b_2$. 
The corresponding tests would be $\tilde\classifier_A: \feature\geq \tilde b_1$ and $\tilde\classifier_B: \feature\leq \tilde b_2$.
Then no matter what the order of the tests is, the agent's best response is fixed: for any $\feature < \tilde b_1$, $\strategies(\feature)=(\tilde b_1,\tilde b_1)$; for any $\feature > \tilde b_2$, $\strategies(\feature)=(\tilde b_2,\tilde b_2)$; for any $\tilde b_1\leq \feature \leq \tilde b_2$, $\strategies(\feature)=(\feature,\feature)$.

Next we introduce the concept of degenerate testing mechanisms. 
\begin{definition}[degenerate mechanism]
	A sequential mechanism $(\tilde\classifier_A,\tilde\classifier_B,\probprincipal,\disclose)$ is said to be \emph{degenerate} if A's best response is the same regardless of the order of the tests.
\end{definition}

We summarize the above observation in the following proposition.
\begin{proposition}
	If the agent's attribute is one-dimension, then any testing mechanism $(\tilde\classifier_A,\tilde\classifier_B,\probprincipal,\disclose)$ is degenerate.
\end{proposition}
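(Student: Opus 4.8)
The plan is to reduce the statement to an elementary fact about paths on the real line, using the concrete form of the mechanism described just above. In dimension one a test is a closed half-line, so up to relabeling $A\leftrightarrow B$ we may write $\tilde\classifier_A=\{\feature\ge \tilde b_1\}$ and $\tilde\classifier_B=\{\feature\le \tilde b_2\}$ with $\tilde b_1<\tilde b_2$ (if instead the two tests point in the same direction, $\tilde\classifier_A\cap\tilde\classifier_B$ equals whichever half-line is more demanding and the argument below only simplifies). Then the accepted region is the nonempty interval $\widetilde\qualregion=\tilde\classifier_A\cap\tilde\classifier_B=[\tilde b_1,\tilde b_2]$. For an agent of type $\feature$ let $\feature^\star$ be the projection of $\feature$ onto $[\tilde b_1,\tilde b_2]$ (so $\feature^\star=\feature$ on the interval, $\feature^\star=\tilde b_1$ below it, $\feature^\star=\tilde b_2$ above it). The first step is to record the \emph{upper bound}: the one-step strategy $\strategies(\feature)=(\feature^\star,\feature^\star)$ passes both tests whatever order realizes and whether or not the first test is disclosed, since $\feature^\star\in\tilde\classifier_A\cap\tilde\classifier_B$; its cost is $\eta\lvert\feature-\feature^\star\rvert=\eta\min_{y\in[\tilde b_1,\tilde b_2]}\lvert\feature-y\rvert$, which depends on neither the order nor $\disclose$.

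The core step is the matching \emph{lower bound}: every strategy accepted with positive probability costs at least $\eta\lvert\feature-\feature^\star\rvert$. Here I would use that the image of the path $\feature\to\firstfeature\to\secondfeature$ is a connected subset $I\subseteq\bbR$, hence a closed bounded interval, with $\feature\in I$ and total length $\lvert\feature-\firstfeature\rvert+\lvert\firstfeature-\secondfeature\rvert\ge\operatorname{diam}(I)$. If the realized order is $\tilde\classifier_A$ then $\tilde\classifier_B$, acceptance forces $\firstfeature\ge\tilde b_1$ and $\secondfeature\le\tilde b_2$, so $I$ contains a point $\ge\tilde b_1$ and a point $\le\tilde b_2$; since $\tilde b_1<\tilde b_2$ and $I$ is an interval, $I$ must meet $[\tilde b_1,\tilde b_2]=\widetilde\qualregion$. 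Thus some $p\in I\cap\widetilde\qualregion$, and the path length is at least $\lvert\feature-p\rvert\ge\min_{y\in\widetilde\qualregion}\lvert\feature-y\rvert=\lvert\feature-\feature^\star\rvert$. The reversed order is symmetric (then $\firstfeature\le\tilde b_2$ and $\secondfeature\ge\tilde b_1$), and averaging the two bounds with weights $\probprincipal,1-\probprincipal$ preserves the inequality. Combining with the upper bound, $(\feature^\star,\feature^\star)$ is a best response in every order/disclosure configuration and the agent's induced payoff is $\max\{0,\,1-\eta\lvert\feature-\feature^\star\rvert\}$; both the best-response attributes $\feature^\star$ and the payoff are order-independent. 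This recovers the explicit best response displayed before the proposition, and shows the mechanism is degenerate.

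The one subtlety — and the only place requiring care — is that for some types the best response is not literally unique: e.g. when $\feature<\tilde b_1$ and $\tilde\classifier_B$ happens to be tested first, not moving at all before the (already passed) first test is equally good as moving to $\tilde b_1$. I would dispatch this by noting that such ties differ only in a payoff-irrelevant move made before an already-satisfied test and leave the attributes actually presented at each test, hence the selection outcome, unchanged; so the essentially unique best response $\feature\mapsto(\feature^\star,\feature^\star)$, read as the agent's choice of tested attributes, is order-free. (If one dropped the hypothesis $\tilde b_1<\tilde b_2$ there would be genuinely order-dependent ``chasing'' behavior when $\tilde\classifier_A\cap\tilde\classifier_B=\varnothing$, but the maintained interval form of the mechanism in this discussion rules that out.) The main obstacle is thus conceptual — pinning down the right reading of ``best response'' — rather than technical; the connectedness argument itself is short.
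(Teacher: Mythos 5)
Your proof is correct and matches the paper's treatment: the paper offers no formal proof, merely asserting the order-independent best response $\strategies(\feature)=(\feature^\star,\feature^\star)$ for the maintained case $\tilde b_1<\tilde b_2$, and your projection upper bound plus the connectedness/total-variation lower bound is exactly the verification of that assertion. Your handling of the payoff-irrelevant ties and of the excluded empty-intersection case is careful and consistent with the paper's implicit assumptions.
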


\citet{perez2022test} studies a test design problem where the agent's attribute is one dimensional.
The above observation provides a motivation to study the case where agent has multi dimensional attributes and how the order of the tests play a role in such settings. 
As a starting point, we study the simplest case where agent has two dimensional attributes and principal has two tests.

\paragraph{Modeling tests.}
We interpret a test as an evaluation of whether one (or some) aspect(s) of the agent meets the the principal's requirement.
We assume that the agent does not which aspect(s) he is being evaluated in each test.
We argue that this is a reasonable assumption.
For example, in a math test, the students taking the test do not know whether they are being evaluated on their mathematical thinking skill or their mastery of mathematical tools.
This slightly differs from the way we use the word test in English.
When we say there is a math test, the math test is a subject that includes at least the above mentioned two \emph{aspects of ability} being evaluated.
Hence if we interpret a test as an evaluation of aspect(s) of ability, it is unlikely that the students know how each aspect of ability is evaluated (or graded) unless they are told.

\paragraph{Simultaneous mechanisms are not special case of sequential mechanisms.}
Although in simultaneous mechanisms the order of the tests does not matter, they cannot be viewed as special cases of sequential mechanisms, i.e.,  $\simultaneous\not\subset \sequential$.
First of all, in any game specified by a simultaneous mechanism, the agent only has one information node.
Therefore, a simultaneous mechanism cannot be achieved by any fixed order mechanism or random order mechanism with disclosure.
Second, even though in the game specified by a random order mechanism without disclosure, the agent also has one information node, his strategy space is much larger.
This is because in any simultaneous mechanism, the agent is essentially restricted to use \emph{one-step} strategies, while in any random order mechanism without disclosure, the agent can use either \emph{one-step} strategies or \emph{two-step} strategies. 
An alternative way to see this is that in any random order mechanism without disclosure, the agent is evaluated by a \emph{linear} test each time.
However, in any simultaneous mechanism, the agent is evaluated by the intersection of two \emph{linear} tests each time. 
More formally, let $\twolineartest=\{\tilde\classifier_1\cap\tilde\classifier_2:\tilde\classifier_i\in \lineartest, i=1,2\}$ be the space of tests that can be used in any simultaneous mechanism.
Then $\lineartest\subset \twolineartest$.
Therefore, by changing the timing to offer the tests, simultaneous mechanisms are essentially enlarging the space of tests that can be used.

\paragraph{Dynamic nature of sequential mechanisms.}
To further illustrate why the game induced by any sequential mechanism is not static, let's consider the cost function $\cost(\orifeatures,\firstfeatures,\secondfeatures)=\metric(\orifeatures,\firstfeatures)+\metric(\firstfeatures,\secondfeatures)$, where $\metric(\cdot,\cdot)$ is the Euclidean distance.
Consider two fixed order mechanisms that use the same two tests $\tilde\classifier_A$ and $\tilde\classifier_B$ and they only differ in terms of the order of the two tests.
In \cref{subfig:two step A-B}, the mechanism offers $\tilde\classifier_A$ as the first test, while in \cref{subfig:two step B-A}, the mechanism offers $\tilde\classifier_B$ as the first test.
Consider an agent with attributes $\orifeatures$ as in \cref{subfig:two step A-B}.
Notice that $\orifeatures$ pass test $\tilde\classifier_A$ but fail test $\tilde\classifier_B$.
Under the fixed order mechanism $(\tilde\classifier_A,\tilde\classifier_B,1)$, the optimal strategy of such attributes is to choose $\firstfeatures=\orifeatures$ and a different $\secondfeatures$ that are the projection of $\orifeatures$ on the boundary line of $\tilde\classifier_B$ (See \cref{subfig:two step A-B}).
Such a strategy is the least costly for the agent and guarantees that the agent is selected under this fixed order mechanism.
However, under another fixed order mechanism $(\tilde\classifier_B,\tilde\classifier_A,1)$, the optimal strategy of such attributes is to choose $\firstfeatures=\secondfeatures$ that are the intersection of  the boundary line of $\tilde\classifier_A$ and $\tilde\classifier_B$ (See \cref{subfig:two step A-B}).
This is because now $\tilde\classifier_B$ is the first test the agent needs to pass and $\orifeatures$  are so far away from  $\tilde\classifier_B$ that it is less costly to pass the two tests at the same time.
This example shows that a sequential mechanism is naturally dynamic and as a result, the agent's strategy is also dynamic.

\begin{figure}[h]  \label{fig: sequential is dynamic}
	\centering 
	\begin{subfigure}[b]{0.45\linewidth}
		\begin{tikzpicture}[xscale=7.8,yscale=7.8]
		
		\draw [domain=0.68:1.4, thick] plot (\x, {3/4*\x+1/4});
		\node [left] at (1.38, 1.3 ) {$\tilde \classifier_B$};
		\draw [thick, red] (1,0.68) -- (1,1.3);
		\node [right] at (1, 1.3) {$\tilde\classifier_A$};
		
		\node [left] at (1.33,1.25) {$+$};
		\node [right] at (1, 1.25) {$+$};
		
		\node[black] at (1, 1-0.2 ) {\textbullet};
		\node [ right] at (1, 1-0.2 ) {$\orifeatures\textcolor{blue}{=\firstfeatures}$};
		
		\node [blue] at  (1-12/125,1-9/125) {\textbullet};
		\node [left, blue] at (1-12/125,1-9/125 ) {$\secondfeatures$};
		
		\draw [densely dashdotted, blue] (1, 1-0.2 ) --(1-12/125,1-9/125);
		\end{tikzpicture}
		\caption{Two-step strategy: $\tilde\classifier_A\rightarrow\tilde\classifier_B$}  
		\label{subfig:two step A-B}
	\end{subfigure}
	\begin{subfigure}[b]{0.45\linewidth}
		\begin{tikzpicture}[xscale=7.8,yscale=7.8]
		
		\draw [domain=0.68:1.4, thick, red] plot (\x, {3/4*\x+1/4});
		\node [left] at (1.38, 1.3 ) {$\tilde \classifier_B$};
		\draw [thick] (1,0.68) -- (1,1.3);
		\node [right] at (1, 1.3) {$\tilde\classifier_A$};
		
		\node [left] at (1.33,1.25) {$+$};
		\node [right] at (1, 1.25) {$+$};
		
		\node[blue] at (1, 1-0.2 ) {\textbullet};
		\node [ right] at (1, 1-0.2 ) {$\orifeatures$};
		
		
		\node [blue] at  (1,1) {\textbullet};
		\node [blue,right] at  (1,1) {$\firstfeatures=\secondfeatures$};

		\draw [densely dashdotted, blue] (1,1-0.2) -- (1,1);
		
		\end{tikzpicture}
		\caption{Two-step strategy: $\tilde\classifier_B\rightarrow\tilde\classifier_A$}
		\label{subfig:two step B-A}
	\end{subfigure}
	\caption{Dynamic nature of sequential mechanisms}
	\rule{0in}{1.2em}$^\dag$\scriptsize Suppose $\cost(\orifeatures,\firstfeatures,\secondfeatures)=\metric(\orifeatures,\firstfeatures)+\metric(\firstfeatures,\secondfeatures)$, where $\metric(\cdot,\cdot)$ is the Euclidean distance. The left panel shows the optimal strategy under the fixed order mechanism that offers test $\tilde\classifier_A$ as the first test, while the right panel shows the optimal strategy under the fixed order mechanism that offers test $\tilde\classifier_B$ as the first test.
\end{figure}
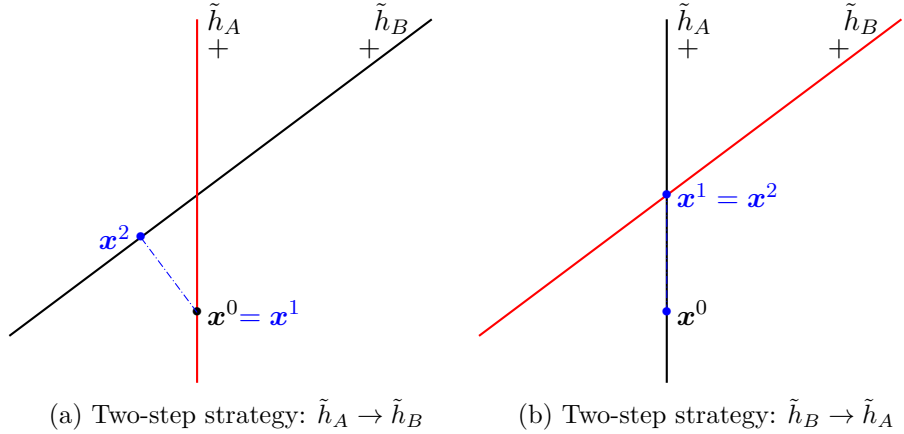  

\paragraph{More tests do not help.}
In the manipulation setting, adding more tests only makes a procedure more stringent and it is counter-productive. 
Suppose the principal offers test $\classifier_1$ and $\classifier_2$ repeatedly. For simplicity, suppose the first test is $\classifier_1$, the second is $\classifier_2$ and the third is $\classifier_1$. For those types that find it profitable to pass both tests together under a fixed-order procedure with two tests, adding an extra test does not change their incentives.
For those types that find it profitable to pass one test at a time, adding a third test only make it more costly to pass one test at a time.
Hence, the benefit of the fixed-order procedure is diminished under more tests.

In the investment setting, under the optimal simultaneous mechanism, adding one more test cannot improve the outcome since using two tests already achieve the first best.

\paragraph{Endogenous agent's technology.}
We assume that the agent's technology, whether he manipulates or invests, is exogenous. 
This usually can be explained by the different monitoring strength of the economic environment.
In the banking application, US banks are tested daily and therefore there is no room for manipulation.
In contrast, European banks are tested monthly or quarterly.
The low monitoring intensity allows banks to temporarily change their balance sheet.

Suppose instead now when the agent changes his type from $\orifeatures$ to $\firstfeatures$, he can choose between  manipulation, which costs $0.2\cdot\onecost(\orifeatures,\firstfeatures)$ and investment, which costs $0.5\cdot\onecost(\orifeatures,\firstfeatures)$.
Then the only equilibrium is such that every type chooses manipulation and the principal chooses the optimal fixed-order sequential mechanism under manipulation, given that the principal's objective is \ref{max qualified}. 
This is because (1) whichever mechanism the principal chooses, the agent prefers manipulation over investment, and (2) no type can credibly commits to investment. 
As a result, although every type would have been better-off by choosing investment, it is not an equilibrium.

\paragraph{Informational robustness.}
When the agent manipulates , the optimal tests are chosen based on the information about the agent's cost.
Whenever there is a small amount of uncertainty in the agent's cost function,  the principal needs to choose even more stringent tests so as not to select any unqualified agent in the worst case where the agent's cost happens to be least costly one.
In contrast, when the agent invests, the optimal tests coincide with the qualified region. 
Hence the optimal mechanism in the investment setting is more robust to the agent's information.

\section{Relation to the literature}
\paragraph{Literature on test design facing a strategic agent}
Our paper is closely related to the literature on test design where the agent can take hidden action to affect the outcome of the test (\citet{perez2022test, deb2018optimal}).
\citet{perez2022test} study  a setting where the principal employs a test, which is an exogenous Blackwell experiment, to provide information for the true, one-dimensional state of the world, but the agent can costly falsify the inputs into the test.
The agent only cares about getting approved by the principal while the principal only wants to approve when the state of the world is high, analogous to our qualified region.
They show that the optimal test design involves `productive falsification', i.e., it sets the passing standard higher than the true qualified threshold so that even initially qualified agent has to exert positive effort to get approved.
We recover the similar finding in our manipulation setting, which is similar to their setting except that in our setting the agent's attributes are two-dimensional and the principal uses two tests to select the agent, while they study one-dimensional attribute and one test.
We also study a setting where the agent invests, which is not studied by them.

\citet{deb2018optimal} study a setting where the principal selects a sequence of tasks to learn about the agent's true one-dimensional type, while the agent can take hidden action to affect the outcome of the task, which is jointly determined by the agent's private type and hidden action.
The principal decides whether to pass the agent at the end based on the history of performance.
The principal cares about making the right decision, i.e., passing the good type and failing the bad type, while the agent only cares about passing the test.
They identify a sufficient condition under which the principal selects the most informative task. When the condition fails to hold, they show cases where the principal prefers using less informative tasks.
Although we both study sequential testing, our focus is very different from them. 
They study the optimal choice of the informativeness of the tasks, which they call adaptive testing, while we assume that the informativeness of the tests is exogenous and deterministic.

\citet{frankel2022improving,ballscoring} study the optimal design of linear scoring rules in a setting where the principal wants to infer the agent's fundamental type from a score that reflects both the agent's fundamental type and gaming type.
The fundamental type in their settings is similar to the initial attributes in our manipulation setting, but our model is very different from them.

\paragraph{Literature on persuasion with hard evidence}
\citet{glazer2004optimal,glazer2006study,sher2014persuasion}  study persuasion problems with hard evidence where the agent's private information also has two dimensions. 
 The agent (he) always prefers being accepted, whereas the principal (she) prefers accepting only when the state of the world satisfies some conditions.
 In their settings, the agent needs to provide hard evidence to persuade the principal.
 The principal chooses a persuasion rule to minimize the probability of an error.
 Their problems can be viewed as an extreme case of ours, where (1) the agent's cost of changing attributes is infinitely high, and (2) the principal can use \emph{perfect} tests, i.e., principal can observe agent's attributes in each round of communication, while we (1) allow the agent to change his attributes with some cost, and (2) consider a  more restrictive principal's objective due to tractability. 
\citet{glazer2004optimal} shows that it is beneficial for the receiver to randomize the requested evidence after allowing the sender to send a cheap talk message.
 \citet{glazer2006study} shows that neither the commitment to the decision nor randomization have value.
\citet{sher2014persuasion} shows that (1) commitment has no value, and (2) under some conditions, randomization has no value, implying that static persuasion rules coincide with dynamic persuasion rules.
In contrast, we show that (1) randomization has no value regardless of whether or not cheap talk message is allowed, and (2) the optimal simultaneous (or single-test) mechanism, which induce a static game, can be either better or worse than the optimal sequential mechanism, which induce a dynamic game, depending on  whether the agent manipulates or invests.
\citet{Carroll_Egorov} consider a richer class of payoff functions for the agent and they focus on the possibility of full learning by randomization.

\paragraph{Literature on contracting with gaming behaviors}
We are related to a large literature on how to design contracts to induce productive effort and deter gaming. 
\citet{holmstrom1987aggregation} is one seminal paper that studies the design of contract when agents can exert effort among multiple activities.
\citet{ederer2018gaming} study how randomization over compensation scheme could deter gaming, i.e., the agent diverts effort away from valuable but difficult to measure activities to less costly and easy to measure activities. 
\citet{li2021learning} study the optimal relational contract when the agent can privately learn which activity is more important and suggest that intermittent replacement of existing measures could deter such a learning-by-shirking effect.
We differ from these papers by studying how the principal can use test design, instead of using monetary payment to either provide incentives in the investment setting, or conduct screening in the manipulation setting.
Moreover, we also differ from these paper by considering a distinct type of gaming behaviors: the kind of gaming behaviors considered by the above mentioned papers are closer to the phenomenon of multi-tasking as initiated by \citet{holmstrom1991multitask}, while we consider the gaming behaviors that exploit the rules of classification to improve apparent performance as nicely summarized in \citet{ederer2018gaming}.
In contrast to \citet{ederer2018gaming}, randomization over the order of tests is never beneficial to deter gaming, or manipulation in our model.

\paragraph{Literature on multiple attribute search}
 We are related to the literature on multiple attribute search (\citet{olszewski2016search, sanjurjo2017search}) in the sense that the principal can be viewed as a searcher and the agent's attributes are analogous to the attributes.
 In this literature, the consensus is that when there is no cost for the searcher, sequential search is weakly better than simultaneous search.
One important difference is that the agent in our setting can take strategic actions, while the object is passive in the search literature.
Moreover, we find that even when there is no cost for the principal to offer test, simultaneous procedures could be better when the agent invests.

 \paragraph{Computer science literature on strategic classification in machine learning}
 We also contribute to a computer science literature that studies how to design algorithms facing strategic agents, stemming from \citet{hardt2016strategic}.
 
Our work is most closely related to  \citet{zigzag}, who study fixed-order sequential mechanisms when the agent can manipulate his type according to an additive Euclidean cost function, and explore how the agent can exploit the sequential ordering of tests to achieve a favorable outcome at a limited cost.

Building on \citet{zigzag}, we conduct a comprehensive mechanism design analysis that encompasses random sequential mechanisms---which are central to two key results in our paper (\cref{thm: optimal max qualified} and \cref{thm: true effort sequential})---and complements \citet{zigzag}'s finding that a fixed-order sequential mechanism outperforms simultaneous mechanisms.\footnote{We also identify an error in Theorem 4.4 of \citet{zigzag} and offer a corrected version of their result in \cref{lem:fix-simul} and in \cref{thm:opt_manipulation} of our paper.}
 
 Other papers (\citet{ahmadi2022classificationstrategicagentsgame,miller2020strategicclassificationcausalmodeling,haghtalab2020maximizingwelfareincentiveawareevaluation,kleinberg2019classifiersinduceagentsinvest}) have also studied settings where the agent can both manipulate and invest at the same time. 
 The only exception is \citet{ahmadi2022classificationstrategicagentsgame}, who also compare the setting where the agent can manipulate to that where the agent can invest. 
 However, these papers have very different focus. They answer questions primarily related to computational hardness and algorithmic approximation, while we study the optimal mechanisms.


\section{Conclusion}
We study the optimal testing procedures in two extreme environments: the agent improves either  apparent performance (manipulate) or actual performance (invest) but never both. We show that the optimal testing procedure is sequential with fixed order when the agent manipulates and it is simultaneous when the agent invests.
We apply our model to explain the different banking regulatory practices in Europe and in the US, the joint regulation of merger and acquisition across departments, and the collective decision making involving subjective opinions.
Our paper points out several potential paths of future research.
First, the comparison between the incentive schemes or institutional designs facing a manipulating agent and an investing agent is fruitful. 
Second, we consider a setting where a principal has multiple requirements which can potentially contradict each other, but it is also interesting to relax the commitment assumption to study conflicting principals.
Third, one can also consider an agent with richer technology space, i.e., an agent that can both manipulate and invest.

\bibliography{reference}
\newpage

\appendix
\section{Omitted proof in \cref{sec: distance cost} (manipulation)}\label{appendix: distance cost}

\subsection{Preparation: characterization of Agent's Response}\label{sec: characterization BR}

In this subsection, we characterize the agent's best response in any random-order mechanism without disclosure.

Consider a random-order mechanism without disclosure $(\tilde \classifier_A, \tilde \classifier_B, q, \varnothing)$. 
We use $\manipulation_{\probprincipal}(\tilde\classifier_A,\tilde\classifier_B)$ to denote the set of attributes that the agent does not initially pass the sequence of tests but could profitably manipulate his attribute so as to pass the selection procedure. 
To prove the main theorem, we show the following coverage property of the manipulation sets for random-order mechanisms without disclosure with different probability $q$. 
\begin{proposition}\label{prop: coverage of Mq}
     For any fixed classifiers $\tilde\classifier_A$ and $\tilde\classifier_B$, the manipulation sets for random-order mechanisms without disclosure $(\tilde\classifier_A,\tilde\classifier_B,q,\nullset)$ satisfy:
    \begin{enumerate}
        \item $\manipulation_{\probprincipal}(\tilde \classifier_A, \tilde \classifier_B) \subset \manipulation_{0} (\tilde \classifier_A, \tilde \classifier_B)$ for any $0\leq \probprincipal \leq 1/2$;
        \item $\manipulation_{\probprincipal} (\tilde \classifier_A, \tilde \classifier_B) \subset \manipulation_{1} (\tilde \classifier_A, \tilde \classifier_B)$ for any $1/2 \leq \probprincipal \leq  1$;
    \end{enumerate}  
\end{proposition}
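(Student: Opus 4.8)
The plan is to characterize the agent's value in the mechanism $(\tilde\classifier_A,\tilde\classifier_B,\probprincipal,\nullset)$ as a maximum over three ``reach costs'', and then deduce both inclusions from a single triangle-inequality estimate relating those costs. First, for an initial type $\orifeatures$ I would set $\kappa_0(\orifeatures)=\min_{\features\in\tilde\classifier_A\cap\tilde\classifier_B}\norm{\orifeatures-\features}_2$, $\kappa_{AB}(\orifeatures)=\min\{\norm{\orifeatures-\firstfeatures}_2+\norm{\firstfeatures-\secondfeatures}_2:\firstfeatures\in\tilde\classifier_A,\ \secondfeatures\in\tilde\classifier_B\}$, and $\kappa_{BA}(\orifeatures)$ symmetrically (all minima attained since the objectives are continuous and coercive on closed sets; clearly $\kappa_{AB},\kappa_{BA}\le\kappa_0$). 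Since there is no disclosure a strategy is a pair $(\firstfeatures,\secondfeatures)$ chosen with no interim information, and its success probability is $\probprincipal\,\indicate{\firstfeatures\in\tilde\classifier_A,\ \secondfeatures\in\tilde\classifier_B}+(1-\probprincipal)\,\indicate{\firstfeatures\in\tilde\classifier_B,\ \secondfeatures\in\tilde\classifier_A}$. Splitting according to which of the two events holds, the payoff is at most $1-\kappa_0$ on ``both'' (both events force $\firstfeatures,\secondfeatures\in\tilde\classifier_A\cap\tilde\classifier_B$, which costs at least $\kappa_0$ by the triangle inequality and \cref{assump: cost function one step}), at most $\probprincipal-\kappa_{AB}$ on ``only the first'', at most $(1-\probprincipal)-\kappa_{BA}$ on ``only the second'', and at most $0$ on ``neither'' — and each bound is attained. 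Hence the agent's value is $\max\{1-\kappa_0,\ \probprincipal-\kappa_{AB},\ (1-\probprincipal)-\kappa_{BA},\ 0\}$, so for $\orifeatures\notin\tilde\classifier_A\cap\tilde\classifier_B$ we get $\orifeatures\in\manipulation_{\probprincipal}(\tilde\classifier_A,\tilde\classifier_B)$ iff one of the first three terms is strictly positive; in particular, using $\kappa_{BA}\le\kappa_0$ and $\kappa_{AB}\le\kappa_0$, the fixed-order sets on that complement are $\manipulation_0=\{\kappa_{BA}<1\}$ and $\manipulation_1=\{\kappa_{AB}<1\}$.

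Second, the one nontrivial ingredient is the estimate $\kappa_{BA}\le 2\kappa_{AB}$ and, symmetrically, $\kappa_{AB}\le 2\kappa_{BA}$. To see the first, take $(\firstfeatures,\secondfeatures)$ attaining $\kappa_{AB}(\orifeatures)$, so $\firstfeatures\in\tilde\classifier_A$ and $\secondfeatures\in\tilde\classifier_B$, and consider the path $\orifeatures\to\secondfeatures\to\projection_{\tilde\classifier_A}(\secondfeatures)$; it is admissible for $\kappa_{BA}$ since its first point lies in $\tilde\classifier_B$ and its second in $\tilde\classifier_A$, and its length is $\norm{\orifeatures-\secondfeatures}_2+\mathrm{dist}(\secondfeatures,\tilde\classifier_A)\le\kappa_{AB}+\norm{\secondfeatures-\firstfeatures}_2\le 2\kappa_{AB}$, because $\norm{\orifeatures-\secondfeatures}_2\le\norm{\orifeatures-\firstfeatures}_2+\norm{\firstfeatures-\secondfeatures}_2=\kappa_{AB}$ and $\mathrm{dist}(\secondfeatures,\tilde\classifier_A)\le\norm{\secondfeatures-\firstfeatures}_2\le\kappa_{AB}$ as $\firstfeatures\in\tilde\classifier_A$. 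The reverse inequality is the same computation with $A$ and $B$ interchanged.

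Finally, I would conclude by an elementary case split. For item~1, let $\probprincipal\le 1/2$ and $\orifeatures\in\manipulation_{\probprincipal}$: if $1-\kappa_0>0$ then $\kappa_{BA}\le\kappa_0<1$; if $(1-\probprincipal)-\kappa_{BA}>0$ then $\kappa_{BA}<1-\probprincipal\le 1$; if $\probprincipal-\kappa_{AB}>0$ then by the estimate $\kappa_{BA}\le 2\kappa_{AB}<2\probprincipal\le 1$. In every case $\kappa_{BA}(\orifeatures)<1$, i.e. $\orifeatures\in\manipulation_0$. Item~2 is the mirror image ($\probprincipal\ge 1/2$, swap $A\leftrightarrow B$ and $\probprincipal\leftrightarrow 1-\probprincipal$), yielding $\kappa_{AB}(\orifeatures)<1$ and hence $\orifeatures\in\manipulation_1$. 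I expect the real work to sit in the first paragraph — establishing the value-function formula, in particular that the ``both events'' realization costs exactly $\kappa_0$ (triangle inequality plus \cref{assump: cost function one step}) and that the other two realizations reduce exactly to the reflected zigzag problems of \cref{subsec:example}; after that, the coverage claim is just the two-line estimate of the second paragraph and the case split above. Note the argument uses the additive Euclidean structure only through the triangle inequality, so the same template should transfer to the general-cost sections.
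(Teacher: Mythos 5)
Your argument is correct, and it reaches the conclusion by a genuinely different route than the paper. The paper works through an explicit planar characterization: it decomposes $\manipulation_{\probprincipal}$ into the $q$-invariant pieces $\setperp_A,\setperp_B,\setO$ plus the two-step regions $\setonetwo_{\probprincipal}$ and $\settwoone_{\probprincipal}$, identifies the latter as explicit quadrilaterals $O\pointonetwo_{\probprincipal}\tilde\pointonetwo_{\probprincipal}\pointonetwo_{\probprincipal}'$ via the reflection construction, and proves the inclusions by checking that the vertices of one quadrilateral lie in the convex set $\settwoone_0\cup\setO$ (resp.\ $\setonetwo_1\cup\setO$); this requires restricting to $\theta<90^{\circ}$, working in coordinates, and yields the angle-dependent threshold $\probprincipal\le\frac{1}{2\cos\theta}$, of which only the uniform bound $\frac12$ is used. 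Your proof replaces all of that geometry with the single metric inequality $\kappa_{BA}\le 2\kappa_{AB}$ (via the path $\orifeatures\to\secondfeatures\to\projection_{\tilde\classifier_A}\secondfeatures$ and two applications of the triangle inequality), whose factor $2$ is exactly what the hypothesis $\probprincipal\le\frac12$ calibrates; it needs no case split on $\theta$, no coordinates, and no convexity, and — as you note — it survives under any translation-invariant cost satisfying the triangle inequality and the regularity assumption (existence of a nearest point in a half plane), so it would subsume the Euclidean case and part of the general-cost analysis simultaneously. One is consistent with the other: for a point on the boundary of $\tilde\classifier_A$ at distance $\probprincipal$ from $\tilde\classifier_B$ one computes $\kappa_{AB}=\probprincipal$ and $\kappa_{BA}=2\probprincipal\cos\theta$, so your factor $2$ is the worst case as $\theta\to 0$ and the paper's sharper threshold is recovered. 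What the paper's approach buys in exchange is the explicit vertex description of $\setonetwo_{\probprincipal}$ and $\settwoone_{\probprincipal}$, which is reused later (e.g.\ in the feasibility arguments of \cref{lem:feasible-uninformed-rand-distance cost}), whereas your value-function formula $\max\{1-\kappa_0,\ \probprincipal-\kappa_{AB},\ (1-\probprincipal)-\kappa_{BA},\ 0\}$ deliberately discards that shape information. Two cosmetic points: the paper's set definitions use weak inequalities (cost $\le 1$, utility $\ge 0$) where you use strict ones — your case split closes under either convention since $2\probprincipal\le 1$ — and your claim that each of the four bounds is ``attained'' is slightly loose (the $\kappa_{AB}$-minimizer may land in the ``both'' event), but the value formula is still exact because each term is separately a lower bound on the value.
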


We first formally characterize the zig-zag strategy of agents in the sequential mechanism. For any attributes $\features$ and any classifier $\tilde \classifier_A$, we use $\symmetric_{\tilde\classifier_A} \features$ to denote its symmetric point with respect to the boundary of $\tilde\classifier_A$. We use $\Pi_{\tilde\classifier_A}\features$ to denote its projection onto $\tilde\classifier_A$, which is the closest point to $\features$ in $\tilde \classifier_A$. We call a two-step strategy $(\features,\features_1,\features_2)$ a zig-zag strategy if three attributes $\features$, $\features_1$, and $\features_2$ are not on a line. 

\begin{lemma}[zig-zag strategy]\label{lem:zig-zag}
    Consider any attributes $\features \in \tilde \classifier_A^\compl \cup \tilde \classifier_B^\compl$.  
    If  $\Pi_{\classifier_B}(\symmetric_{\classifier_A}\features)$ does not satisfy $\tilde\classifier_A$, then the best strategy to first satisfy $\tilde\classifier_A$ is a zig-zag strategy $(\features,\features_1, \features_2)$, where $\features_2 = \Pi_{\classifier_B}(\symmetric_{\classifier_A}\features)$and $\features_1$ is the intersection point of $(\symmetric_{\classifier_A}\features)\features_2$ and the boundary of $\tilde\classifier_A$.
\end{lemma}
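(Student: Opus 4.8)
The plan is to reuse the reflection‑plus‑triangle‑inequality argument of \cref{subsec:example}, sharpened with a projection bound, the hypothesis entering exactly to certify that the resulting lower bound is attained by a strategy whose intermediate point lies in $\tilde\classifier_A$. Write $\ell_A$ and $\ell_B$ for the boundary lines of the tests $\tilde\classifier_A$ and $\tilde\classifier_B$, and set $\widetilde{\features}:=\symmetric_{\tilde\classifier_A}\features$ and $\boldsymbol p:=\projection_{\tilde\classifier_B}\widetilde{\features}$. First I would dispose of the trivial case $\features\in\tilde\classifier_A$: then $\features$ already passes the first test, $\features\notin\tilde\classifier_B$, and the triangle inequality makes $(\features,\features,\projection_{\tilde\classifier_B}\features)$ optimal, so the substantive case is $\features\notin\tilde\classifier_A$ (this is also the only case in which $\widetilde{\features}$ and $\boldsymbol p$ lie on opposite sides of $\ell_A$, so that the intermediate point described in the statement is well defined). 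Assuming $\features\notin\tilde\classifier_A$, reflection puts $\widetilde{\features}$ in the interior of $\tilde\classifier_A$, and since $\boldsymbol p\notin\tilde\classifier_A$ we get $\boldsymbol p\neq\widetilde{\features}$, hence $\widetilde{\features}\notin\tilde\classifier_B$ and $\boldsymbol p\in\ell_B$.

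Next I would prove the lower bound. Take any feasible two‑step strategy $(\features,\features_1,\features_2)$ with $\features_1\in\tilde\classifier_A$ and $\features_2\in\tilde\classifier_B$. Since $\features\notin\tilde\classifier_A$ while $\features_1\in\tilde\classifier_A$, the segment $[\features,\features_1]$ meets $\ell_A$ at a point $\features_1'$, and replacing $\features_1$ by $\features_1'$ does not raise the cost (collinearity gives $\norm{\features-\features_1'}\le\norm{\features-\features_1}$, the triangle inequality gives $\norm{\features_1'-\features_2}\le\norm{\features_1'-\features_1}+\norm{\features_1-\features_2}$); so it suffices to minimise over $\features_1\in\ell_A$, $\features_2\in\tilde\classifier_B$. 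For $\features_1\in\ell_A$ the reflection identity gives $\norm{\features-\features_1}=\norm{\widetilde{\features}-\features_1}$, whence the cost equals $\norm{\widetilde{\features}-\features_1}+\norm{\features_1-\features_2}\ge\norm{\widetilde{\features}-\features_2}\ge\norm{\widetilde{\features}-\boldsymbol p}$, the last step because $\features_2\in\tilde\classifier_B$ and $\boldsymbol p$ is the closest point of $\tilde\classifier_B$ to $\widetilde{\features}$. Hence every feasible strategy costs at least $\norm{\widetilde{\features}-\boldsymbol p}$.

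For attainment I would use the hypothesis. Because $\widetilde{\features}$ is in the interior of $\tilde\classifier_A$ and $\boldsymbol p\notin\tilde\classifier_A$, the segment $[\widetilde{\features},\boldsymbol p]$ crosses $\ell_A$ in exactly one point $\features_1$, which lies strictly between $\widetilde{\features}$ and $\boldsymbol p$ and satisfies $\features_1\in\ell_A\subset\tilde\classifier_A$. Putting $\features_2:=\boldsymbol p\in\tilde\classifier_B$, the cost is $\norm{\features-\features_1}+\norm{\features_1-\boldsymbol p}=\norm{\widetilde{\features}-\features_1}+\norm{\features_1-\boldsymbol p}=\norm{\widetilde{\features}-\boldsymbol p}$, matching the bound; so this strategy is optimal and is of exactly the stated form: $\features_2=\projection_{\tilde\classifier_B}(\symmetric_{\tilde\classifier_A}\features)$ and $\features_1$ is the intersection of the segment $(\symmetric_{\tilde\classifier_A}\features)\,\features_2$ with $\ell_A$. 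Finally I would check non‑collinearity of $\features,\features_1,\features_2$: collinearity would force $\widetilde{\features}-\features_1\perp\ell_A$ (the path bends at $\features_1$ by reflection across $\ell_A$, hence is straight only when it hits $\ell_A$ perpendicularly), and then $\boldsymbol p$ lying on the line through $\widetilde{\features}$ and $\features_1$ forces the perpendicular from $\widetilde{\features}$ to $\ell_B$ to coincide with that to $\ell_A$, i.e.\ $\ell_A\parallel\ell_B$; so outside the parallel‑tests configuration the optimum is a genuine zig‑zag.

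The step I expect to be the real obstacle is attainment: the content of the lemma is precisely that ``$\projection_{\tilde\classifier_B}(\symmetric_{\tilde\classifier_A}\features)\notin\tilde\classifier_A$'' is the exact condition under which the minimiser $\widetilde{\features}\to\features_1\to\boldsymbol p$ re‑enters $\tilde\classifier_A$ at a single point of the path and is therefore feasible. If instead $\boldsymbol p\in\tilde\classifier_A$, then $\boldsymbol p\in\tilde\classifier_A\cap\tilde\classifier_B$, the straight path from $\widetilde{\features}$ to $\boldsymbol p$ need not cross $\ell_A$, the bound $\norm{\widetilde{\features}-\boldsymbol p}$ need not be tight, and the true optimum is the ``one‑step'' move into $\tilde\classifier_A\cap\tilde\classifier_B$ analysed in \cref{subsec:example} — so keeping the hypothesis doing exactly this work, and cleanly separating off the trivial case $\features\in\tilde\classifier_A$, is where the care lies; the remainder is the reflection argument already in the text.
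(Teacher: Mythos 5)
Your proof is correct and follows essentially the same route as the paper's: reflect $\features$ across the boundary of $\tilde\classifier_A$, bound the cost of any admissible two-step strategy below by $\onecost(\symmetric_{\tilde\classifier_A}\features,\Pi_{\tilde\classifier_B}(\symmetric_{\tilde\classifier_A}\features))$ via the reflection identity, the triangle inequality, and the projection property, and then show the stated zig-zag attains this bound by collinearity (the paper reaches the same lower bound slightly differently, using the one-sided inequality $\onecost(\symmetric_{\tilde\classifier_A}\features,\features')\le\onecost(\features,\features')$ for $\features'\in\tilde\classifier_A$ rather than first sliding $\features_1$ onto the boundary line). Your explicit separation of the case $\features\in\tilde\classifier_A$ — where the prescribed intersection point $\features_1$ need not exist — is a welcome extra precaution that the paper's proof glosses over by asserting $\symmetric_{\tilde\classifier_A}\features\in\tilde\classifier_A$ for all $\features\in\tilde\classifier_A^\compl\cup\tilde\classifier_B^\compl$.
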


\begin{proof}[Proof of \cref{lem:zig-zag}]
    We first show that this zig-zag strategy is well-defined. Since $\features \in \tilde \classifier_A^C \cup \tilde \classifier_B^C$, we have $\symmetric_{\classifier_A}\features \in \tilde\classifier_A$. 
    If $\features_2 = \Pi_{\classifier_B}(\symmetric_{\classifier_A}\features)$ does not satisfy $\tilde\classifier_A$, then the line segment $(\symmetric_{\classifier_A}\features)\features_2$ intersects with $\tilde\classifier_A$. 
    Thus, this zig-zag strategy $(\features,\features_1,\features_2)$ is well-defined.
    
    For the zig-zag strategy $(\features,\features_1,\features_2)$, the cost of this strategy is $\cost(\features,\features_1,\features_2) = \onecost(\features,\features_1) + \onecost(\features_1,\features_2)$. Since $\features_1$ is on the boundary of $\tilde\classifier_A$, by symmetry, we have $\onecost(\symmetric_{\tilde\classifier_A}\features,\features_1) = \onecost(\features,\features_1)$. Since $\features_1$ is on the line $(\symmetric_{\tilde\classifier_A}\features)\features_2$, we have
    $$    \cost(\features,\features_1,\features_2) = \onecost(\symmetric_{\tilde\classifier_A}\features,\features_1) + \onecost(\features_1,\features_2) = \onecost(\symmetric_{\tilde\classifier_A}\features, \features_2).
    $$
    
    Consider any zig-zag strategy $(\features, \features',\features'')$ that first satisfies $\tilde\classifier_A$ and then $\tilde\classifier_B$.
    Since the attributes $\features$ do not satisfy $\tilde\classifier_1$, its symmetric point $\symmetric_{\tilde\classifier_A} \features$ satisfy $\tilde\classifier_A$.
    Since $\features'$ satisfies $\tilde \classifier_A$, by symmetry, we have $\onecost(\symmetric_{\tilde\classifier_A}\features, \features') \leq \onecost(\features, \features')$, which implies $\onecost(\symmetric_{\tilde\classifier_A}\features, \features') \leq \onecost(\features, \features')$. 
    By the triangle inequality of the Euclidean distance, the cost of this strategy is at least
    $$
    \cost(\features,\features',\features'') \geq \onecost(\symmetric_{\tilde\classifier_A}\features, \features') + \onecost(\features',\features'') \geq \onecost(\symmetric_{\tilde\classifier_A}\features, \features'').
    $$
    Since $\features_2$ is the projection of $\symmetric_{\tilde\classifier_A} \features$ onto the boundary of $\tilde\classifier_B$, we have $$\cost(\features,\features',\features'') \geq \onecost(\symmetric_{\tilde\classifier_A}\features, \features'') \geq \onecost(\symmetric_{\tilde\classifier_A}\features, \features_2) = \cost(\features,\features_1,\features_2),$$
    which completes the proof.
\end{proof}

\begin{remark}
    Consider any sequential mechanism with two tests $\tilde \classifier_A$ and $\tilde \classifier_B$ such that the angle between two tests is in $[90^{\circ}, 180^{\circ})$.
    For any attributes $\features \in \tilde \classifier_A^\compl \cup \tilde \classifier_B^\compl$, we have  $\Pi_{\classifier_B}(\symmetric_{\classifier_A}\features)$ satisfies $\tilde\classifier_A$. Thus, for such an agent, the best strategy to first satisfy $\tilde\classifier_A$ is not a zig-zag strategy. In this case, it is easy to show that the best strategy is a one-step strategy. An alternative algebraic proof is provided by Theorem 3.7 in \citet{zigzag}.
\end{remark}

In the following analysis, we only consider the sequential mechanism with two tests $\tilde \classifier_A$ and $\tilde \classifier_B$ such that the angle between two tests is in $(0,90^{\circ})$.
We now characterize the manipulation set $\manipulation_{\probprincipal}(\tilde\classifier_A,\tilde\classifier_B)$ for random-order mechanism without disclosure $(\tilde\classifier_A, \tilde\classifier_B,q,\varnothing)$.
Let $L_A$ and $L_B$ be the boundary lines of classifiers $\tilde \classifier_A$ and $\tilde \classifier_B$ respectively. 
Let $O$ be the intersection point of $L_A$ and $L_B$.
Let $\Line_A^+=\Line_A\cap \classifier_B$  and $\Line_B^+=\Line_B\cap \classifier_A$ be the part of $L_A$ and $L_B$ in the qualified region, respectively.
Let $\setperp_A(\tilde \classifier_A, \tilde \classifier_B)=\{\orifeatures\notin \tilde\classifier_A\cap \tilde\classifier_B: \min_{\genericfeatures\in\Line_A^+}\onecost(\orifeatures,\genericfeatures)\leq 1 \}$ be the set of candidates whose true attributes are not qualified but have cost less than one to adopt attributes on $\Line_A^+$.
Similarly, let $\setperp_B(\tilde \classifier_A, \tilde \classifier_B)=\{\orifeatures\notin \tilde \classifier_A\cap \tilde \classifier_B: \min_{\genericfeatures\in\Line_B^+}\onecost(\orifeatures,\genericfeatures)\leq 1 \}$ be the set of candidates whose true attributes are not qualified but have cost less than one to adopt attributes on $\Line_B^+$.

Let $\Omega(\tilde\classifier_A,\tilde\classifier_B) = \bbR^2 \setminus ((\tilde\classifier_A\cap\tilde\classifier_B)\cup \setperp_A(\tilde \classifier_A, \tilde \classifier_B)\cup \setperp_B(\tilde \classifier_A, \tilde \classifier_B))$.
Without loss of generality, we assume the unit normal vector of $\tilde\classifier_A$ is $\weights_A = (1,0)$.
Let $\util_{AB}$ be the agent's best utility among strategies that first pass only $\tilde \classifier_A$ but not $\tilde \classifier_B$ and then pass $\tilde \classifier_B$. 
We define the set $\setonetwo_{\probprincipal}(\tilde\classifier_A,\tilde\classifier_B)=\{\features\in\Omega(\tilde\classifier_A,\tilde\classifier_B): \util_{AB}\geq 0\}$.
Similarly, let $\util_{BA}$ be the agent's best utility among all strategies that first pass only $\tilde\classifier_B$ and then pass $\tilde\classifier_A$. 
Define the set $\settwoone_{\probprincipal}(\tilde\classifier_A,\tilde\classifier_B)=\{\features\in\Omega(\tilde\classifier_A,\tilde\classifier_B): \util_{BA}\geq 0\}$.
Let the agent's utility of directly moving to point $O$ be $\util_{0}$.
We define the set $\setO_{\probprincipal}(\tilde\classifier_A,\tilde\classifier_B)=\{\features\in\Omega(\tilde\classifier_A,\tilde\classifier_B): \util_{0}\geq 0 \}$.
When there is no ambiguity about the two classifiers $\tilde \classifier_A$ and $\tilde \classifier_B$ used in the mechanism, we use $\setO$, $\setonetwo$, and $\settwoone$ to denote these sets. 
Then, we have the manipulation set $\manipulation_{\probprincipal}(\tilde \classifier_A,\tilde\classifier_B)=\setperp_A \cup \setperp_B \cup \setonetwo_{\probprincipal}\cup \settwoone_{\probprincipal}\cup \setO_{\probprincipal}$.


First, it is easy to see that the set $\setO_{\probprincipal}$ is invariant with the probability $q$.

\begin{lemma}\label{lmm: Bq invariant with q}
    $ \setO_{\probprincipal} = \setO =\arc{OAB}$ for any $\probprincipal\in [0,1]$.
\end{lemma}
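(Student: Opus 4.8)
The plan is to show that the set $\setO_{\probprincipal}$ of attributes that can profitably reach point $O$ in one step does not depend on the randomization probability $q$, and to identify it explicitly with the circular-arc region $\arc{OAB}$. The key observation is that moving to $O$ is a \emph{one-step} strategy: once the agent adopts attributes equal to $O$, they lie on the boundaries of both $\tilde\classifier_A$ and $\tilde\classifier_B$, hence pass whichever test comes first and whichever comes second, regardless of the order and regardless of whether the first test is disclosed. Therefore the agent's payoff from this strategy is exactly $1 - \onecost(\features, O)$, with no expectation over the order entering the expression; that is, $\util_0$ is a function of $\features$ alone and not of $\probprincipal$. It follows immediately that the set $\setO_{\probprincipal} = \{\features \in \Omega(\tilde\classifier_A,\tilde\classifier_B) : \util_0 \ge 0\}$ is the same for every $\probprincipal \in [0,1]$, which gives the first equality $\setO_{\probprincipal} = \setO$.

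Next I would pin down $\setO$ geometrically. By definition $\setO = \{\features \in \Omega(\tilde\classifier_A,\tilde\classifier_B) : \onecost(\features,O) \le 1\}$, i.e.\ the intersection of the closed disk of radius $1$ (in the Euclidean metric, since $\eta=1$) centered at $O$ with the region $\Omega(\tilde\classifier_A,\tilde\classifier_B) = \bbR^2 \setminus ((\tilde\classifier_A\cap\tilde\classifier_B)\cup \setperp_A\cup \setperp_B)$. I would argue that inside this disk, the complement of $\tilde\classifier_A\cap\tilde\classifier_B$ minus the two ``slab'' sets $\setperp_A$ and $\setperp_B$ is precisely the curvilinear triangle bounded by the two boundary segments emanating from $O$ (the edges of the reflex cone opposite to $\tilde\classifier_A\cap\tilde\classifier_B$) and the circular arc of radius $1$ about $O$ — this is the region labeled $\arc{OAB}$ in \cref{fig: zig zag} and \cref{fig: random uninformed vs fixed}, with $A$ and $B$ the points where the radius-$1$ circle meets the two boundary lines $L_A$ and $L_B$. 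The points $A$ and $B$ lie at distance $1$ from $O$ along $L_A$ and $L_B$ respectively; any attribute in the cone opposite to the qualified region that is within distance $1$ of $O$ but is \emph{not} already within distance $1$ of the segments $\Line_A^+$ or $\Line_B^+$ lies in this arc region, and conversely. I would verify the set identity by checking the two boundary pieces: on the boundary lines, the condition $\onecost(\features,O)\le 1$ cuts off exactly at $A$ and $B$; and the subtracted sets $\setperp_A,\setperp_B$ do not eat into this region because those are the sets of points within distance $1$ of the \emph{edges} $\Line_A^+,\Line_B^+$ of the qualified wedge, which near $O$ sit on the opposite side.

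The main obstacle I anticipate is the bookkeeping in the second part: carefully justifying that the set difference $\Omega(\tilde\classifier_A,\tilde\classifier_B)$ intersected with the unit disk around $O$ equals exactly $\arc{OAB}$ and not something slightly larger or smaller near the ``corners'' $A$ and $B$, where the boundary of the disk meets the lines $L_A$, $L_B$ and where the regions $\setperp_A$, $\setperp_B$ have their own curved boundaries (arcs around the endpoints of $\Line_A^+$, $\Line_B^+$). This requires a short case analysis on which constraint is binding in each sub-region, exploiting that the angle $\theta$ between the tests is in $(0,90^\circ)$ so that the geometry is as in the figures. The first part — invariance in $\probprincipal$ — is essentially immediate once one records that a move to $O$ passes both tests in either order, so no further subtlety arises there.
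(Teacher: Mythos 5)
Your proposal is correct and follows essentially the same route as the paper: the paper's proof simply notes that the utility of moving to $O$ is order-independent, so $\setO_{\probprincipal} = B(O,1/\eta)\cap \Omega(\tilde\classifier_A,\tilde\classifier_B) = \arc{OAB}$ for every $\probprincipal$. Your version just spells out the geometric identification in more detail than the paper bothers to.
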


\begin{proof}
    The ball $B(O,1/\eta)$ contains all attributes $\features$ such that the cost $\onecost(\features,O) \leq 1$. We have $\setO_{\probprincipal} = B(O, 1/\eta) \cap \Omega(\tilde\classifier_A,\tilde\classifier_B) = \arc{OAB}$.
\end{proof}

Next, we characterize the sets $\setonetwo_{\probprincipal}$ and $\settwoone_{\probprincipal}$. Let the point $\pointonetwo_{\probprincipal}$ be the point whose distance to point $O$ is $\probprincipal/\eta$ and whose projection on $\tilde\classifier_B$ is point $O$. Then $\pointonetwo_{\probprincipal} = -\frac{\probprincipal}{\mc}\cdot \weights_B$, where $\weights_B$ is the unit normal vector of $\tilde\classifier_B$.
Note that for the agent with attributes $\pointonetwo_{\probprincipal}$, the cost for first passing $\tilde\classifier_A$ but not $\tilde \classifier_B$ and then passing $\tilde\classifier_B$ is $\probprincipal$.
Let its symmetric point with respect to the boundary of $\tilde\classifier_A$ be $\pointonetwo_{\probprincipal}' = \symmetric_{\tilde\classifier_A} \pointonetwo_{\probprincipal}$.
Let the point $\tilde \pointonetwo_{\probprincipal}  = (0,-\frac{\probprincipal}{\mc\sin{\theta}})$ be the point that falls on the boundary of $\tilde\classifier_A$ with distance $\probprincipal/\eta$ to $\tilde\classifier_B$.

\begin{lemma}\label{lem: Cq}
    $\setonetwo_{\probprincipal} = O \pointonetwo_{\probprincipal} \tilde \pointonetwo_{\probprincipal} \pointonetwo_{\probprincipal}'$.
\end{lemma}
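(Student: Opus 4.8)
By definition $\setonetwo_\probprincipal$ collects the attributes $\features\in\Omega$ for which the cheapest strategy that first passes $\tilde\classifier_A$ but not $\tilde\classifier_B$ and then passes $\tilde\classifier_B$ is weakly profitable. Such a strategy leads to selection exactly when the realized order puts $\tilde\classifier_A$ first, an event of probability $\probprincipal$, so writing $c_{AB}(\features)$ for the minimal cost of such a strategy, the defining condition is $\util_{AB}(\features)=\probprincipal-c_{AB}(\features)\ge 0$, i.e.\ $c_{AB}(\features)\le\probprincipal$ (recall $\mc=1$ in this section). The plan is then: (i) turn $c_{AB}$ into an explicit geometric quantity; (ii) rewrite $c_{AB}(\features)\le\probprincipal$ together with $\features\in\Omega$ as a finite intersection of half-planes; (iii) check that this intersection is exactly the quadrilateral $O\pointonetwo_\probprincipal\tilde\pointonetwo_\probprincipal\pointonetwo_\probprincipal'$.

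For (i) I would apply \cref{lem:zig-zag}. Let $L_A$ be the boundary line of $\tilde\classifier_A$ and reflect $\features$ across it to $\symmetric_{L_A}\features$. For any path $\features\to\firstfeatures\to\secondfeatures$ with $\firstfeatures\in L_A$, the reflection identity $\onecost(\features,\firstfeatures)=\onecost(\symmetric_{L_A}\features,\firstfeatures)$ together with the triangle inequality gives $\onecost(\features,\firstfeatures)+\onecost(\firstfeatures,\secondfeatures)=\onecost(\symmetric_{L_A}\features,\firstfeatures)+\onecost(\firstfeatures,\secondfeatures)\ge\onecost(\symmetric_{L_A}\features,\secondfeatures)$, with equality iff the three points are colinear. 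Minimizing over $\secondfeatures\in\tilde\classifier_B$ yields $c_{AB}(\features)=\operatorname{dist}(\symmetric_{L_A}\features,\tilde\classifier_B)=\operatorname{dist}(\features,\symmetric_{L_A}\tilde\classifier_B)$, valid precisely when the minimizing $\firstfeatures$ genuinely lies in $\tilde\classifier_A\setminus\tilde\classifier_B$ --- this is \cref{lem:zig-zag}'s hypothesis that $\Pi_{\tilde\classifier_B}(\symmetric_{L_A}\features)$ fails $\tilde\classifier_A$. When that hypothesis fails, the cheapest ``$\tilde\classifier_A$-first'' path is a one-step move ending at $O$, which is counted in $\setO_\probprincipal$ rather than $\setonetwo_\probprincipal$; this is what produces the two edges of the quadrilateral emanating from $O$. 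For $\features$ already inside $\tilde\classifier_A$ the same formula holds with $\firstfeatures=\features$.

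For (ii)--(iii) I would use the normalization $\weights_A=(1,0)$ with $O$ at the origin (so $L_A=\{x=0\}$), writing $\theta\in(0^\circ,90^\circ)$ for the angle between $L_A$ and $L_B$, set $\weights_B':=\symmetric_{L_A}\weights_B$, and first verify the stated identities $\pointonetwo_\probprincipal=-\tfrac{\probprincipal}{\mc}\weights_B$, $\pointonetwo_\probprincipal'=-\tfrac{\probprincipal}{\mc}\weights_B'$, $\tilde\pointonetwo_\probprincipal=(0,-\tfrac{\probprincipal}{\mc\sin\theta})$. The level set $\{c_{AB}=\probprincipal\}$ is the reflection across $L_A$ of the line obtained by pushing $\partial\tilde\classifier_B$ outward by $\probprincipal$; a direct check shows this reflected line passes through $\tilde\pointonetwo_\probprincipal$ and $\pointonetwo_\probprincipal'$ and its mirror image passes through $\tilde\pointonetwo_\probprincipal$ and $\pointonetwo_\probprincipal$, giving the two ``far'' edges $\pointonetwo_\probprincipal'\tilde\pointonetwo_\probprincipal$ and $\pointonetwo_\probprincipal\tilde\pointonetwo_\probprincipal$. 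The ``zig-zag active'' boundary is $\{\Pi_{\tilde\classifier_B}(\symmetric_{L_A}\features)=O\}$, i.e.\ $\symmetric_{L_A}\features$ on the line through $O$ with direction $\weights_B$; reflecting gives the line $O\pointonetwo_\probprincipal'$, while its mirror image is the line $O\pointonetwo_\probprincipal$ (on the wrong side of which an $\features\in\tilde\classifier_A$ would have its foot of perpendicular on the ray $L_B^+$ and so would lie in $\setperp_B$, outside $\Omega$). Taking the four half-planes with the correct signs and using $\probprincipal\le\mc$ to confirm that the resulting region lies on the far side of $O$ from the wedge $\tilde\classifier_A\cap\tilde\classifier_B$ and is disjoint from $\setperp_A\cup\setperp_B$ leaves exactly the quadrilateral.

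I expect the main obstacle to be step (iii): keeping the case split clean. One must (a) show that at every $\features$ inside the claimed quadrilateral the optimal ``$\tilde\classifier_A$-first'' strategy is a genuine two-step strategy with $\firstfeatures\notin\tilde\classifier_B$ --- not a one-step move to $O$ (which would reclassify $\features$ into $\setO_\probprincipal$) nor a move onto $L_A^+$ or $L_B^+$ (which would put $\features$ in $\setperp_A\cup\setperp_B$, hence outside $\Omega$) --- and (b) conversely rule out $\util_{AB}(\features)\ge 0$ for points just across each of the four edges. Both reduce to the signs of $(\features-O)\cdot\weights_B$, $(\features-O)\cdot\weights_B'$ and $(\features-O)\cdot\weights_B^{\perp}$ at the candidate point, so once the coordinate identities for $\pointonetwo_\probprincipal,\pointonetwo_\probprincipal',\tilde\pointonetwo_\probprincipal$ are in hand the verification is mechanical; the care is needed because the four edges come from three conceptually distinct sources (the cost level set, the zig-zag-versus-move-to-$O$ transition, and the boundary of $\Omega$).
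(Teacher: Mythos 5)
Your proposal is correct and rests on the same key ingredient as the paper's own proof, namely the reflection argument of \cref{lem:zig-zag}, which converts the cheapest ``$\tilde\classifier_A$-first'' two-step cost into the distance from $\symmetric_{\tilde\classifier_A}\features$ to $\tilde\classifier_B$, and the observation that such a strategy wins only with probability $\probprincipal$, so the relevant threshold is $c_{AB}(\features)\le\probprincipal$. The difference is one of organization and completeness rather than of method: the paper splits the quadrilateral into the triangle $O\pointonetwo_{\probprincipal}\tilde\pointonetwo_{\probprincipal}$ (points already in $\tilde\classifier_A$, which keep $\firstfeatures=\features$ and then move at most $\probprincipal/\mc$ to reach $\tilde\classifier_B$, using that $\pointonetwo_{\probprincipal}\tilde\pointonetwo_{\probprincipal}$ is parallel to the boundary of $\tilde\classifier_B$) and its mirror image $O\pointonetwo_{\probprincipal}'\tilde\pointonetwo_{\probprincipal}$ (handled by \cref{lem:zig-zag}), and it only verifies the inclusion of the quadrilateral into $\setonetwo_{\probprincipal}$, leaving the reverse inclusion and the disjointness from $\setO$, $\setperp_A$, $\setperp_B$ implicit. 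Your half-plane/level-set formulation proves both inclusions and makes the interaction with $\Omega$ explicit — in particular the role of the edges $O\pointonetwo_{\probprincipal}$ and $O\pointonetwo_{\probprincipal}'$ as the transition to $\setperp_B$ and to the move-to-$O$ regime — at the cost of the longer case analysis you anticipate in step (iii). So it buys rigor on the equality claim, while the paper's two-triangle decomposition buys brevity; there is no gap in your plan.
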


\begin{proof}
    We first consider the agent with attributes $\features$ in the triangle region $O\pointonetwo_{\probprincipal} \tilde \pointonetwo_{\probprincipal}$. These attributes already satisfy the classifier $\tilde\classifier_A$. Since both $\pointonetwo_{\probprincipal}$ and $\tilde \pointonetwo_{\probprincipal}$ have distance $\probprincipal/\eta$ to $\tilde\classifier_B$, we have the line $\pointonetwo_{\probprincipal}\tilde \pointonetwo_{\probprincipal}$ is parallel to the boundary of $\tilde\classifier_B$. Thus, all attributes in $O\pointonetwo_{\probprincipal} \tilde \pointonetwo_{\probprincipal}$ have distance at most $\probprincipal/\eta$ to $\tilde\classifier_B$. For the agent with these attributes, the utility for providing $\features$ in the first test and then moving to pass $\tilde \classifier_B$ is non-negative. 

    Then, we consider the triangle region $O\pointonetwo_{\probprincipal}' \tilde \pointonetwo_{\probprincipal}$. Note that for each attributes $\features$ in $O\pointonetwo_{\probprincipal}' \tilde \pointonetwo_{\probprincipal}$, its symmetric point with respect to the boundary of $\tilde\classifier_A$ is in $O\pointonetwo_{\probprincipal} \tilde \pointonetwo_{\probprincipal}$. Thus, by Lemma~\ref{lem:zig-zag}, the agent with attributes in $O\pointonetwo_{\probprincipal}' \tilde \pointonetwo_{\probprincipal}$ also has the utility $u_{AB} \geq 0$.
\end{proof}

Similarly, let $\pointtwoone_{1-\probprincipal} = -\frac{1-\probprincipal}{\mc}\cdot \weights_A$ be the point whose distance to point $O$ is $1-\probprincipal$ and whose projection on $\tilde\classifier_A$ is point $O$. 
Let its symmetric point over the boundary of $\tilde\classifier_B$ be $\pointtwoone_{1-\probprincipal}' = \symmetric_{\tilde\classifier_B} \pointtwoone_{1-\probprincipal}$.
Let point $\tilde \pointtwoone_{1-\probprincipal}  = (-\frac{ 1-\probprincipal}{\mc},-\frac{1-\probprincipal}{\mc\tan{\theta}})$ be the point that falls on the boundary of $\tilde\classifier_B$ with distance $\frac{1-\probprincipal}{\eta}$ to $\tilde\classifier_A$. With a similar analysis as in Lemma~\ref{lem: Cq}, we characterize $\settwoone_{\probprincipal}$ as follows.

\begin{lemma}\label{lem: Dq}
    $\settwoone_{\probprincipal} = O \pointtwoone_{1-\probprincipal} \tilde \pointtwoone_{1-\probprincipal} \pointtwoone_{1-\probprincipal}'$.
\end{lemma}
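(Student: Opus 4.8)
The plan is to prove \cref{lem: Dq} as the exact mirror of \cref{lem: Cq}, interchanging the two tests and replacing $\probprincipal$ by $1-\probprincipal$: in the random-order mechanism without disclosure $(\tilde\classifier_A,\tilde\classifier_B,\probprincipal,\nullset)$ the test $\tilde\classifier_B$ is drawn first with probability $1-\probprincipal$. Recall that $\util_{BA}$ is the agent's best utility among strategies $(\features,\firstfeatures,\secondfeatures)$ in which $\firstfeatures$ passes only $\tilde\classifier_B$ (not $\tilde\classifier_A$) and $\secondfeatures$ passes $\tilde\classifier_A$; such a strategy gets the agent selected exactly when $\tilde\classifier_B$ is the first test, so $\util_{BA} = (1-\probprincipal) - c^{\star}(\features)$, where $c^{\star}(\features)$ is the least cost of a strategy of this type. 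Hence, restricting to $\features\in\Omega(\tilde\classifier_A,\tilde\classifier_B)$, we have $\features\in\settwoone_{\probprincipal}$ iff $c^{\star}(\features)\le\tfrac{1-\probprincipal}{\mc}$, and it remains to identify this sublevel set with the quadrilateral $O\pointtwoone_{1-\probprincipal}\tilde\pointtwoone_{1-\probprincipal}\pointtwoone_{1-\probprincipal}'$.

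\textbf{Computing $c^{\star}$.} I would first note that any such strategy moves $\secondfeatures$ onto $L_A$, and $\firstfeatures$ is either $\features$ itself (when $\features$ already satisfies $\tilde\classifier_B$) or a point of $L_B$ (when $\features$ fails $\tilde\classifier_B$). In the first case $c^{\star}(\features)=\onecost(\features,\Pi_{\tilde\classifier_A}\features)$, the distance from $\features$ to $L_A$. In the second case, applying \cref{lem:zig-zag} with the roles of $A$ and $B$ interchanged — valid in the relevant regime where $\Pi_{\tilde\classifier_A}(\symmetric_{\tilde\classifier_B}\features)$ fails $\tilde\classifier_B$, exactly as in the remark following \cref{lem:zig-zag} once we restrict to angles in $(0,90^{\circ})$ — the optimal strategy is the zig-zag one with $\secondfeatures=\Pi_{\tilde\classifier_A}(\symmetric_{\tilde\classifier_B}\features)$ and $\firstfeatures$ the crossing of the segment $(\symmetric_{\tilde\classifier_B}\features)\secondfeatures$ with $L_B$, of cost $c^{\star}(\features)=\onecost(\symmetric_{\tilde\classifier_B}\features,\Pi_{\tilde\classifier_A}(\symmetric_{\tilde\classifier_B}\features))$, i.e.\ the distance from the reflection of $\features$ across $L_B$ to $L_A$. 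The remaining degenerate possibility — a one-step move straight to a point of $L_A\cap\tilde\classifier_B$ — only matters for $\features$ already in $\setperp_A$ or $\setO_{\probprincipal}$, so it does not enlarge $\settwoone_{\probprincipal}$; I would record this explicitly so that the characterization is exact, not merely a lower bound.

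\textbf{Identifying the quadrilateral.} It then remains to show that the two triangles $O\pointtwoone_{1-\probprincipal}\tilde\pointtwoone_{1-\probprincipal}$ and $O\pointtwoone_{1-\probprincipal}'\tilde\pointtwoone_{1-\probprincipal}$, glued along $O\tilde\pointtwoone_{1-\probprincipal}$, together cut out exactly $\{\features\in\Omega:\ \features\text{ fails }\tilde\classifier_A\text{ and }c^{\star}(\features)\le\tfrac{1-\probprincipal}{\mc}\}$. For the first triangle: $\pointtwoone_{1-\probprincipal}$ and $\tilde\pointtwoone_{1-\probprincipal}$ both lie at distance $\tfrac{1-\probprincipal}{\mc}$ from $L_A$, so $\pointtwoone_{1-\probprincipal}\tilde\pointtwoone_{1-\probprincipal}$ is parallel to $L_A$ and every point of the triangle lies within $\tfrac{1-\probprincipal}{\mc}$ of $L_A$; since such $\features$ satisfies $\tilde\classifier_B$, keeping $\firstfeatures=\features$ and projecting $\secondfeatures$ onto $L_A$ gives $c^{\star}(\features)\le\tfrac{1-\probprincipal}{\mc}$, hence $\util_{BA}\ge 0$. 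For the second triangle: the reflection $\symmetric_{\tilde\classifier_B}\features$ lands in the first triangle, hence within $\tfrac{1-\probprincipal}{\mc}$ of $L_A$, so the zig-zag formula again gives $c^{\star}(\features)\le\tfrac{1-\probprincipal}{\mc}$. Conversely, for $\features\in\Omega$ outside the quadrilateral I would check — using that reflection across $L_B$ is an isometry and that these two triangles exhaust the points within distance $\tfrac{1-\probprincipal}{\mc}$ of $L_A$ (directly, or after reflecting across $L_B$) — that $c^{\star}(\features)>\tfrac{1-\probprincipal}{\mc}$, so $\util_{BA}<0$.

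\textbf{Main obstacle.} There is no new idea here relative to \cref{lem: Cq}; the only real difficulty is bookkeeping. I must (i) track that a ``$\tilde\classifier_B$-first'' strategy is selected with probability $1-\probprincipal$ rather than $\probprincipal$; (ii) keep the orientations of $\tilde\classifier_A,\tilde\classifier_B$ and their unit normals $\weights_A,\weights_B$ consistent with the coordinate conventions defining $\pointtwoone_{1-\probprincipal}$, $\tilde\pointtwoone_{1-\probprincipal}$, and $\pointtwoone_{1-\probprincipal}'=\symmetric_{\tilde\classifier_B}\pointtwoone_{1-\probprincipal}$; and (iii) confirm that the degenerate one-step strategy is genuinely absorbed into $\setperp_A$ and $\setO_{\probprincipal}$, so that $\settwoone_{\probprincipal}$ is precisely the stated quadrilateral and not a proper superset.
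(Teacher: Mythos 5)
Your proposal is correct and follows essentially the same route as the paper, which simply states that $\settwoone_{\probprincipal}$ is characterized "with a similar analysis as in Lemma \ref{lem: Cq}" — i.e., mirror the argument for $\setonetwo_{\probprincipal}$ with the roles of $\tilde\classifier_A$ and $\tilde\classifier_B$ interchanged and $\probprincipal$ replaced by $1-\probprincipal$. If anything you are slightly more careful than the paper, since you explicitly flag the converse inclusion (points outside the quadrilateral have $\util_{BA}<0$) and the absorption of the degenerate one-step strategy into $\setperp_A\cup\setO$, both of which the paper leaves implicit.
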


    

\begin{lemma}\label{lmm:Cq < Dq+Bq when q is small}
   $\setonetwo_{\probprincipal}\subset \settwoone_{0}\cup \setO$ if $\probprincipal\leq \frac{1}{2\cos{\theta}}$.
\end{lemma}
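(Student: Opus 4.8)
The plan is to put $O$ at the origin with inward unit normals $\weights_A=(1,0)$ and $\weights_B=(-\cos\theta,\sin\theta)$, and (as permitted) normalize $\eta=1$. Then \cref{lem: Cq}, \cref{lem: Dq} and \cref{lmm: Bq invariant with q} give explicit coordinates: $\setonetwo_{\probprincipal}$ is the quadrilateral with vertices $O$, $\pointonetwo_{\probprincipal}=(\probprincipal\cos\theta,-\probprincipal\sin\theta)$, $\tilde\pointonetwo_{\probprincipal}=(0,-\probprincipal/\sin\theta)$, $\pointonetwo_{\probprincipal}'=(-\probprincipal\cos\theta,-\probprincipal\sin\theta)$ (one checks it is convex); $\settwoone_0$ is the convex quadrilateral with vertices $O$, $(-1,0)$, $(-1,-\cot\theta)$, $(\cos2\theta,-\sin2\theta)$; and $\setO=B(O,1)\cap\Omega$. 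We must show $\setonetwo_{\probprincipal}\subseteq\settwoone_0\cup\setO$.

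First a reduction. Since $\setonetwo_{\probprincipal}\subseteq\Omega$ by definition, every point of $\setonetwo_{\probprincipal}$ inside $\overline{B(O,1)}$ already lies in $\setO$, so it suffices to prove $\setonetwo_{\probprincipal}\setminus\overline{B(O,1)}\subseteq\settwoone_0$. The vertices $\pointonetwo_{\probprincipal},\pointonetwo_{\probprincipal}'$ are at distance $\probprincipal\le1$ from $O$, so only $\tilde\pointonetwo_{\probprincipal}$ can leave the ball. Let $R,R'$ be the points where the two ``long'' edges $\pointonetwo_{\probprincipal}\tilde\pointonetwo_{\probprincipal}$ and $\pointonetwo_{\probprincipal}'\tilde\pointonetwo_{\probprincipal}$ meet $\partial B(O,1)$; the chord $RR'$ splits $\setonetwo_{\probprincipal}$ into the pentagon $O\pointonetwo_{\probprincipal}RR'\pointonetwo_{\probprincipal}'$ and the triangle $RR'\tilde\pointonetwo_{\probprincipal}$. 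All five vertices of the pentagon are within distance $1$ of $O$, so by convexity of the ball the pentagon lies in $\overline{B(O,1)}$; hence $\setonetwo_{\probprincipal}\setminus\overline{B(O,1)}$ is contained in the triangle $RR'\tilde\pointonetwo_{\probprincipal}$.

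It then remains to show $R$, $R'$, $\tilde\pointonetwo_{\probprincipal}\in\settwoone_0$; convexity of $\settwoone_0$ then yields $RR'\tilde\pointonetwo_{\probprincipal}\subseteq\settwoone_0$ and we are done. Now $\settwoone_0$ is the set of feasible attributes from which the agent can pass $\tilde\classifier_B$ and then $\tilde\classifier_A$ at total cost at most $1$; by \cref{lem:zig-zag} applied with the two tests interchanged, the cheapest such strategy from a point outside $\tilde\classifier_B$ is the zig-zag with apex on $L_B$, whose cost is the distance from $\symmetric_{L_B}$ of that point to the line $L_A$. For $\tilde\pointonetwo_{\probprincipal}$ one computes $\symmetric_{L_B}\tilde\pointonetwo_{\probprincipal}=(-2\probprincipal\cos\theta,\,-\probprincipal\cos2\theta/\sin\theta)$, so this cost equals $2\probprincipal\cos\theta$, which is $\le1$ precisely when $\probprincipal\le\tfrac{1}{2\cos\theta}$ --- this is exactly where the hypothesis is used. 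Running the same computation for $R,R'$: their feet of perpendicular on $L_B$ lie on the segment of $L_B$ between $O$ and the foot of $\tilde\pointonetwo_{\probprincipal}$, whence the $x$-coordinates of $\symmetric_{L_B}R$, $\symmetric_{L_B}R'$ have absolute value at most $2\probprincipal\cos\theta\le1$, giving $R,R'\in\settwoone_0$.

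The point that needs care is the zig-zag step: one must check that the hypothesis of \cref{lem:zig-zag} actually holds at $\tilde\pointonetwo_{\probprincipal}$, $R$, $R'$ (i.e. that the relevant projection onto $L_A$ fails $\tilde\classifier_B$, so the zig-zag rather than a one-step move is optimal --- which is where the angle restriction of the ambient analysis enters), and to separately dispatch the parameter range in which $\setonetwo_{\probprincipal}\setminus\overline{B(O,1)}=\varnothing$ and the statement is vacuous. The remainder is the elementary convexity and containment bookkeeping sketched above.
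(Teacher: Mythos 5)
Your proof is correct in substance and hinges on exactly the same computation as the paper's: everything reduces to certifying that the far vertex $\tilde\pointonetwo_{\probprincipal}=(0,-\tfrac{\probprincipal}{\eta\sin\theta})$ of the kite $\setonetwo_{\probprincipal}$ lies in $\settwoone_{0}$, and your zig-zag cost $2\probprincipal\cos\theta\le 1$ is equivalent to the paper's observation that $\tilde\pointonetwo_{\probprincipal}$ lies on the segment $OG$, where $G=(0,-\tfrac{1}{\eta\sin 2\theta})$ is the point at which the edge $\tilde\pointtwoone_{1}\pointtwoone_{1}'$ of $\settwoone_{0}$ meets $L_A$. Where you genuinely diverge is the convexity bookkeeping. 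The paper notes $O,\pointonetwo_{\probprincipal},\pointonetwo_{\probprincipal}'\in\setO$, asserts that the \emph{union} $\settwoone_{0}\cup\setO$ is convex, and concludes that the whole quadrilateral lies in the union; that is shorter, but it leans on an unproved (and not completely obvious, since unions of convex sets are generally not convex) convexity claim. You instead cut $\setonetwo_{\probprincipal}$ along the circle $\partial B(O,1/\eta)$, absorb the inner piece into $\setO$ via \cref{lmm: Bq invariant with q}, and place the outer triangle $RR'\tilde\pointonetwo_{\probprincipal}$ inside $\settwoone_{0}$ using only the convexity of $\settwoone_{0}$ itself — more work (two extra points to certify), but every convexity claim you invoke is immediate. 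The one loose end is the one you flag: your blanket appeal to the reflected-projection cost formula presumes the hypothesis of \cref{lem:zig-zag}, yet $R'$, which lies on the edge $\pointonetwo_{\probprincipal}'\tilde\pointonetwo_{\probprincipal}$, can already satisfy $\tilde\classifier_B$ when $\theta<45^{\circ}$ (note $\weights_B\cdot\pointonetwo_{\probprincipal}'=\tfrac{\probprincipal}{\eta}\cos 2\theta>0$), in which case the optimal BA strategy is the one-step ``wait, then move to $L_A$'' whose cost is the distance to $\tilde\classifier_A$, at most $\probprincipal\cos\theta\le 1$ — so the conclusion survives, but the case must be stated. You could sidestep this entirely by checking $R$, $R'$, and $\tilde\pointonetwo_{\probprincipal}$ directly against the explicit quadrilateral $O\pointtwoone_{1}\tilde\pointtwoone_{1}\pointtwoone_{1}'$ furnished by \cref{lem: Dq}, which is the cleaner route and closer to what the paper does. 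Finally, do write out the trivial range $\probprincipal\le\sin\theta$ (where $\tilde\pointonetwo_{\probprincipal}$ stays inside the ball and the outer triangle is empty) rather than leaving it as a remark.
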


\begin{proof}

Let $G = (0, -\frac{1}{\mc\sin{2\theta}})$ be the point where $\tilde \pointtwoone_{1} \pointtwoone_{1}'$ intersects with the boundary of $\tilde\classifier_A$.
When  $\probprincipal\leq \sin{\theta}$, we have $\frac{\probprincipal}{\mc\sin{\theta}}\leq \frac{1}{\mc\sin{2\theta}}$.
Since the point $\tilde \pointonetwo_{\probprincipal} = (0, - \frac{\probprincipal}{\mc\sin{\theta}})$, we have $O\tilde \pointonetwo_{\probprincipal}\subset OG$.

It is easy to see that $O\pointonetwo_{\probprincipal}\subset \setO$ for any $\probprincipal\leq 1$. Since $\pointonetwo_{\probprincipal}'$ is the symmetric point of $\pointonetwo_{\probprincipal}$ with respect to $\tilde\classifier_1$, we have $O\pointonetwo_{\probprincipal}'\subset \setO$ for any $\probprincipal\leq 1$. 

Note that $\settwoone_{0}\cup \setO$ is a convex set. Since four points $O$, $\pointonetwo_{\probprincipal}$, $\pointonetwo_{\probprincipal}'$, and $\tilde \pointonetwo_{\probprincipal}$ are contained in $\settwoone_{0}\cup \setO$, By Lemma~\ref{lem: Cq}, we have that $\setonetwo_{\probprincipal}\subset \settwoone_{\probprincipal}\cup \setO$.
\end{proof}

Similarly, we can show the following coverage property for $\settwoone_{\probprincipal}$.

\begin{lemma}\label{lmm:Dq < Cq+Bq when q is large}
   $\settwoone_{\probprincipal}\subset \setonetwo_{1}\cup \setO$ if and only if $\probprincipal\geq 1 - \frac{1}{2\cos{\theta}}$.
\end{lemma}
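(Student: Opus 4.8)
The plan is to obtain the ``if'' direction from \cref{lmm:Cq < Dq+Bq when q is small} by the relabelling symmetry of the mechanism, and to prove the ``only if'' direction by a direct computation from the explicit descriptions in \cref{lem: Cq} and \cref{lem: Dq}. For the ``if'' direction: since $(\tilde\classifier_A,\tilde\classifier_B,q,\nullset)$ and $(\tilde\classifier_B,\tilde\classifier_A,1-q,\nullset)$ denote the same physical mechanism, reading the relevant definitions after swapping the two tests gives $\settwoone_q(\tilde\classifier_A,\tilde\classifier_B)=\setonetwo_{1-q}(\tilde\classifier_B,\tilde\classifier_A)$ and $\setonetwo_1(\tilde\classifier_A,\tilde\classifier_B)=\settwoone_0(\tilde\classifier_B,\tilde\classifier_A)$, while $\setO$, $O$, and the angle $\theta$ (the angle between two lines, hence symmetric in $A,B$) are unchanged. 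As \cref{lmm:Cq < Dq+Bq when q is small} holds for every pair of classifiers, applying it to $(\tilde\classifier_B,\tilde\classifier_A)$ with probability $1-q$ yields $\setonetwo_{1-q}(\tilde\classifier_B,\tilde\classifier_A)\subset\settwoone_0(\tilde\classifier_B,\tilde\classifier_A)\cup\setO$ whenever $1-q\le\frac{1}{2\cos\theta}$, i.e.\ whenever $q\ge 1-\frac{1}{2\cos\theta}$; translated back this is exactly $\settwoone_q\subset\setonetwo_1\cup\setO$.

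For the ``only if'' direction, by the same symmetry it suffices to establish the converse of \cref{lmm:Cq < Dq+Bq when q is small}: if $q>\frac{1}{2\cos\theta}$ then $\setonetwo_q\not\subset\settwoone_0\cup\setO$ (note this forces $\cos\theta>\tfrac12$, i.e.\ $\theta<60^{\circ}$). Assuming $q>\frac{1}{2\cos\theta}$, I would take the vertex $\tilde\pointonetwo_q=(0,-\frac{q}{\mc\sin\theta})$ of $\setonetwo_q$ provided by \cref{lem: Cq}, which sits on $L_A$ at distance $\frac{q}{\mc\sin\theta}$ from $O$, on the side that fails $\tilde\classifier_B$. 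By \cref{lem: Dq} with $q=0$ together with the computation in the proof of \cref{lmm:Cq < Dq+Bq when q is small}, a point of $L_A$ strictly below $O$ lies in $\settwoone_0$ iff its distance to $O$ is at most $\frac{1}{\mc\sin 2\theta}$ (the distance from $O$ to the point $G=(0,-\frac{1}{\mc\sin 2\theta})$ at which the edge $\tilde\pointtwoone_{1}\pointtwoone_{1}'$ crosses $L_A$); and since $\theta<90^{\circ}$, the sector $\setO=\arc{OAB}$ meets $L_A$ only at $O$ (points of $L_A$ just below $O$ lie in $\setperp_A$, points farther away lie outside $B(O,1/\mc)$). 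Hence $\tilde\pointonetwo_q\in\settwoone_0\cup\setO$ iff $\frac{q}{\mc\sin\theta}\le\frac{1}{\mc\sin 2\theta}$, which by $\sin 2\theta=2\sin\theta\cos\theta$ is $q\le\frac{1}{2\cos\theta}$; so for $q>\frac{1}{2\cos\theta}$ we have $\tilde\pointonetwo_q\in\setonetwo_q\setminus(\settwoone_0\cup\setO)$, proving the converse. Applying it to the pair $(\tilde\classifier_B,\tilde\classifier_A)$ with $1-q$ in the role of the probability: if $q<1-\frac{1}{2\cos\theta}$ then $\settwoone_q\not\subset\setonetwo_1\cup\setO$ (the escaping point being $\tilde\pointtwoone_{1-q}$), which is the contrapositive of ``only if''.

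The work is concentrated in the ``only if'' part, because \cref{lmm:Cq < Dq+Bq when q is small} is stated only as a one-way implication, so its converse must be built from the geometry. The two delicate points are: (i) identifying the correct witness — the point of $\setonetwo_q$ that is ``deepest'' relative to $\settwoone_0\cup\setO$ — and tracking how the identity $\sin 2\theta=2\sin\theta\cos\theta$ turns its defining distance into the threshold $\frac{1}{2\cos\theta}$, together with the description of $\setO$ along $L_A$; and (ii) the bookkeeping with $\Omega=\bbR^2\setminus(\tilde\classifier_A\cap\tilde\classifier_B\cup\setperp_A\cup\setperp_B)$ needed to be sure that $\tilde\pointonetwo_q$ genuinely belongs to the manipulation set (the inequality $q>\frac{1}{2\cos\theta}\ge\sin\theta$, which relies on $\sin 2\theta\le 1$, is what keeps it out of $\setperp_A$). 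One also has to dispose of the degenerate regime $\theta\ge 60^{\circ}$ — including $\theta\ge 90^{\circ}$, where by the Remark following \cref{lem:zig-zag} every best response is a one-step strategy — in which $1-\frac{1}{2\cos\theta}\le 0$, so the condition $q\ge 1-\frac{1}{2\cos\theta}$ holds automatically and ``only if'' is vacuous.
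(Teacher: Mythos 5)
Your proposal is correct, and it is actually more complete than the paper's own proof, which establishes only the ``if'' direction. The paper proves sufficiency directly: it writes $\tilde \pointtwoone_{1-q} = 2\cos\theta\,(1-q)\,E$, where $E$ is the point at which the edge $\tilde\pointonetwo_{1}\pointonetwo_{1}'$ of $\setonetwo_1$ meets the boundary of $\tilde\classifier_B$, notes that $q\ge 1-\frac{1}{2\cos\theta}$ places $\tilde\pointtwoone_{1-q}$ on the segment $OE$, and concludes by convexity of $\setonetwo_1\cup\setO$ since the remaining vertices $\pointtwoone_{1-q},\pointtwoone_{1-q}'$ of the quadrilateral in \cref{lem: Dq} lie in $\setO$. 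Your symmetry reduction to \cref{lmm:Cq < Dq+Bq when q is small} obtains the same direction without redoing that computation; the relabelling identities $\settwoone_q(\tilde\classifier_A,\tilde\classifier_B)=\setonetwo_{1-q}(\tilde\classifier_B,\tilde\classifier_A)$ and $\setonetwo_1(\tilde\classifier_A,\tilde\classifier_B)=\settwoone_0(\tilde\classifier_B,\tilde\classifier_A)$ are legitimate because the companion lemma is a coordinate-free inclusion and $O$, $\setO$, $\theta$ are symmetric in the two tests. For the ``only if'' direction the paper offers nothing (only sufficiency is used downstream in \cref{prop: coverage of Mq}), so your witness argument is genuinely new: for $q>\frac{1}{2\cos\theta}$ the vertex $\tilde\pointonetwo_q$ sits on $L_A$ at distance $\frac{q}{\eta\sin\theta}>\frac{1}{\eta\sin 2\theta}\ge\frac{1}{\eta}$ from $O$, so it escapes both $\settwoone_0$ (whose trace on $L_A$ is the segment $OG$) and the ball $B(O,1/\eta)\supset\setO$, while $q>\sin\theta$ keeps it in $\Omega$ and the wait-then-project strategy gives it $\util_{AB}=0$, so it genuinely lies in $\setonetwo_q$; transferring back by the same symmetry yields the contrapositive with witness $\tilde\pointtwoone_{1-q}$. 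Two minor points: the $\Omega$-membership check should also rule out $\setperp_B$, not just $\setperp_A$ (the nearest point of $L_B^+$ to $\tilde\pointonetwo_q$ is $O$, so both give the same threshold $q>\sin\theta$); and the parenthetical on $\theta\ge 90^{\circ}$ is off --- there $1-\frac{1}{2\cos\theta}\ge 1$ rather than $\le 0$, so the ``only if'' clause would be false rather than vacuous --- but this regime is excluded anyway since the paper restricts the whole analysis to angles in $(0,90^{\circ})$.
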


\begin{proof}
    Let $E = (-\frac{1}{2\eta\cos\theta}, -\frac{1}{2\eta\sin{\theta}})$ be the point where $\tilde \pointonetwo_{1} \pointonetwo_{1}'$ intersects with the boundary of $\tilde\classifier_B$. Note that the point $\tilde \pointtwoone_{1-q} = (-\frac{1-q}{\eta},-\frac{1-q}{\eta\tan\theta}) = 2\cos \theta (1-q) E$. When $q \geq 1-\frac{1}{2\cos \theta}$, we have $2\cos \theta (1-q) \leq 1$ , which implies $O\tilde \pointtwoone_{1-q} \subset OE$.
    Since $\pointtwoone_{1-q}$ and $\pointtwoone_{1-q}'$ is in $\setO$ and $\setonetwo_{1}\cup \setO$ is convex, we have $\settwoone_{\probprincipal}\subset \setonetwo_{1}\cup \setO$.
\end{proof} 


Now, we prove the coverage proposition of the manipulation sets for sequential mechanisms.

\begin{proof}[Proof of \cref{prop: coverage of Mq}]
    We first consider the case where $q \leq 1/2$. 
    By \cref{lmm:Cq < Dq+Bq when q is small}, we know that $\manipulation_{\probprincipal}=\setperp_A \cup \setperp_B \cup \setonetwo_q \cup \settwoone_{\probprincipal}\cup \setO \subset \setperp_A \cup \setperp_B \cup\settwoone_0 \cup \setO \subset \manipulation_0$.

    We then consider the case where $q \geq 1/2$. 
    By \cref{lmm:Dq < Cq+Bq when q is large}, we know that $\manipulation_{\probprincipal}=\setperp_A \cup \setperp_B \cup\setonetwo_{\probprincipal}\cup \settwoone_{\probprincipal} \cup \setO \subset \setperp_A \cup \setperp_B \cup\setonetwo_1 \cup \setO \subset \manipulation_1$.
\end{proof}

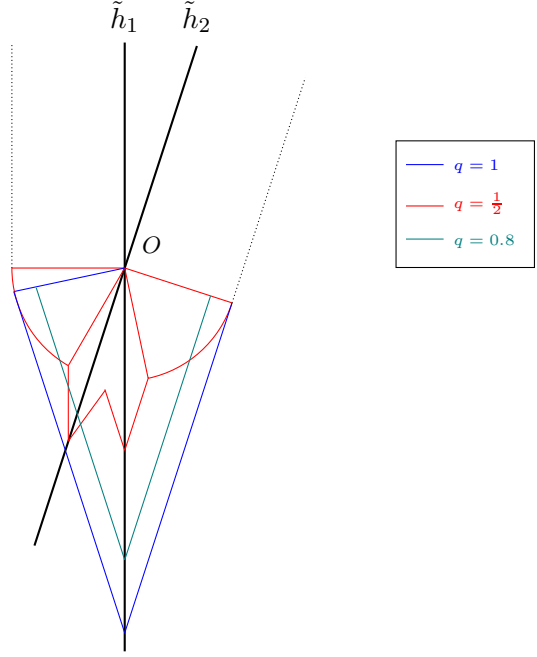
\begin{figure}[t]
\centering
\begin{tikzpicture}[xscale=6,yscale=6,
    pics/legend entry/.style={code={%
        \draw[pic actions] 
        (-0.25,0.25) -- (0.25,0.25);}}]]

\draw [domain=0.8:1.16, thick] plot (\x, {tan(deg(0.4*pi))*(\x-1)+1});
\node [above] at (1.16, 1.5 ) {$\tilde\classifier_2$};
\draw [thick] (1,0.15) -- (1,1.5);
\node [above] at (1, 1.5 ) {$\tilde\classifier_1$};

\draw [domain={1+0.25*cos(deg(0.1*pi))}:1.4, densely dotted] plot (\x, {tan(deg(0.4*pi))*(\x-1)+1-0.809}); 
\draw [densely dotted] (1-0.25,1) -- (1-0.25,1.5);
\draw[red] (1-0.25,1) arc (180:240:0.25);
\draw[red] (1,1) -- ++(180:0.25);
\draw[red] (1,1) -- ++(240:0.25) ;

\draw[red] ({1+0.25*cos(deg(0.1*pi))},{1-0.25*sin(deg(0.1*pi))}) arc (360-18:360-78:0.25);
\draw[red] (1,1) -- ++(360-18:0.25);
\draw[red] (1,1) -- ++(360-78:0.25) ;

\draw [red] (1-0.125,{1-0.125*sqrt(3)}) -- (1-0.125,0.61625);
\draw [domain=0.875:0.957, red]  plot(\x,{tan(deg(0.3*pi))*(\x-0.875)+0.61625});

\draw [domain=0.956:1,red]  plot(\x,{-tan(deg(0.4*pi))*(\x-1)+1-0.125/sin(deg(0.1*pi))});
\draw [domain=1:1.053,red]  plot(\x,{tan(deg(0.4*pi))*(\x-1)+1-0.125/sin(deg(0.1*pi))});

\draw[blue] (1,1) -- ++(192:0.25) ;
\draw [domain=1:{1+0.25*cos(deg(0.1*pi))}, blue] plot (\x, {tan(deg(0.4*pi))*(\x-1)+1-0.809});
\draw [domain=1-0.246:1,blue]  plot(\x,{-tan(deg(0.4*pi))*(\x-1)+1-0.809});

\draw [domain=1:{1+0.25*0.8*cos(deg(0.1*pi))}, teal] plot (\x, {tan(deg(0.4*pi))*(\x-1)+1-0.8*0.25/sin(deg(0.1*pi))});
\draw [domain=1-0.196:1,teal]  plot(\x,{-tan(deg(0.4*pi))*(\x-1)+1-0.8*0.25/sin(deg(0.1*pi))});

\node [above] at (1.06, 1.01 ) {\footnotesize$O$};

\matrix [draw, above right] at (1.6,1) {
 \pic[blue]{legend entry}; &  \node[blue,font=\tiny] {$\probprincipal=1$}; \\
 \pic[red]{legend entry}; &  \node[red,font=\tiny] {$\probprincipal=\frac12$}; \\
 \pic[teal]{legend entry}; &  \node[teal,font=\tiny] {$\probprincipal=0.8$}; \\
};

\end{tikzpicture}
\caption{ random-order mechanisms without disclosure vs fixed-order mechanism}
\end{figure}

\begin{figure}[t]
\centering
\begin{subfigure}[b]{0.4\linewidth}
\begin{tikzpicture}[xscale=3.5,yscale=3.5,
    pics/legend entry/.style={code={%
        \draw[pic actions] 
        (-0.25,0.25) -- (0.25,0.25);}}]]


\draw [domain=0.8:1.16, thick] plot (\x, {tan(deg(0.4*pi))*(\x-1)+1});
\node [above] at (1.16, 1.5 ) {$\classifier_2$};
\draw [thick] (1,0.15) -- (1,1.5);
\node [above] at (1, 1.5 ) {$\classifier_1$};

\draw [domain={1+0.25*cos(deg(0.1*pi))}:1.4, densely dotted] plot (\x, {tan(deg(0.4*pi))*(\x-1)+1-0.809}); 
\node [above] at (1.4, 1.5 ) {$\classifier_2^-$};
\draw [densely dotted] (1-0.25,1) -- (1-0.25,1.5);
\node [above] at (1-0.25, 1.5) {$\classifier_1^-$};
\draw[blue] (1-0.25,1) arc (180:198:0.25);
\draw[blue] (1,1) -- ++(180:0.25);

\draw[teal] ({1-0.25*cos(deg(0.1*pi))},{1-0.25*sin(deg(0.1*pi))}) arc (198:198+36:0.25);
\draw[teal] (1,1) -- ++(198+36:0.25) ;

\draw[teal] ({1+0.25*cos(deg(0.1*pi))},{1-0.25*sin(deg(0.1*pi))}) arc (360-18:360-54:0.25);
\draw[teal] (1,1) -- ++(360-18:0.25);
\draw[teal] (1,1) -- ++(360-54:0.25) ;

\draw[blue] (1,1) -- ++(192:0.25) ;
\draw [domain=1:{1+0.25*cos(deg(0.1*pi))}, blue] plot (\x, {tan(deg(0.4*pi))*(\x-1)+1-0.809});
\draw [domain=1-0.246:1,blue]  plot(\x,{-tan(deg(0.4*pi))*(\x-1)+1-0.809});

\draw [domain=1:{1+0.25*0.8*sin(47)}, teal] plot (\x, {tan(deg(0.4*pi))*(\x-1)+1-0.8*0.25/sin(deg(0.1*pi))});
\draw [domain={1-0.25*0.8*cos(43)}:1,teal]  plot(\x,{-tan(deg(0.4*pi))*(\x-1)+1-0.8*0.25/sin(deg(0.1*pi))});

\node [above,font=\tiny] at (1.06, 1.01 ) {$O$};

\node [right,font=\tiny] at ({1+0.25*cos(deg(0.1*pi))},{1-0.25*sin(deg(0.1*pi))} ) {$\pointonetwo_{\probprincipal=1}$};
\node [left,font=\tiny] at ({1-0.25*cos(deg(0.1*pi))},{1-0.25*sin(deg(0.1*pi))}) {$\pointonetwo_{\probprincipal=1}'$};
\node [below,font=\tiny] at (1, {1-0.25/sin(deg(0.1*pi))} ) {$\tilde \pointonetwo_{\probprincipal=1}$};

\matrix [draw, above right] at (1.5,0.2) {
 \pic[blue]{legend entry}; &  \node[blue,font=\tiny] {$\probprincipal=1$}; \\
 \pic[teal]{legend entry}; &  \node[teal,font=\tiny] {$\probprincipal=0.8$}; \\
};

\end{tikzpicture}
\caption{Random order $\probprincipal=0.8$} \label{fig:q>1-h(theta)}  
\end{subfigure}
\begin{subfigure}[b]{0.4\linewidth}
\begin{tikzpicture}[xscale=3.5,yscale=3.5,
    pics/legend entry/.style={code={%
        \draw[pic actions] 
        (-0.25,0.25) -- (0.25,0.25);}}]]

\draw [domain=0.74:1.16, thick] plot (\x, {tan(deg(0.4*pi))*(\x-1)+1});
\node [above] at (1.16, 1.5 ) {$\classifier_2$};
\draw [thick] (1,0.15) -- (1,1.5);
\node [above] at (1, 1.5 ) {$\classifier_1$};

\draw [domain={1+0.25*cos(deg(0.1*pi))}:1.4, densely dotted] plot (\x, {tan(deg(0.4*pi))*(\x-1)+1-0.809}); 
\node [above] at (1.4, 1.5 ) {$\classifier_2^-$};
\draw [densely dotted] (1-0.25,1) -- (1-0.25,1.5);
\node [above] at (1-0.25, 1.5) {$\classifier_1^-$};
\draw[teal] (1-0.25,1) arc (180:180+37:0.25);
\draw[teal] (1,1) -- ++(180:0.25);
\draw[teal] (1,1) -- ++(180+37:0.25) ;

\draw[red] ({1+0.25*cos(deg(0.1*pi))},{1-0.25*sin(deg(0.1*pi))}) arc (360-18:360-36:0.25);
\draw[red] (1,1) -- ++(360-18:0.25);
\draw[teal] ({1+0.25*cos(deg(0.2*pi))},{1-0.25*sin(deg(0.2*pi))}) arc (360-36:360-73:0.25);
\draw[teal] (1,1) -- ++(360-73:0.25);

\draw [teal] (1-0.25*0.8,{1-0.25*0.8*cos(44)}) -- (1-0.25*0.8,{1-0.25*0.8*tan(deg(0.4*pi))});
\draw [domain=1-0.25*0.8:{1+0.25*0.29}, teal]  plot(\x,{tan(deg(0.3*pi))*(\x-1+0.25*0.8)+1-0.25*0.8*tan(deg(0.4*pi))});

\draw[red] (1,1) -- ++(360-36:0.25) ;
\draw[red] (1-0.25,1) -- (1-0.25, {1- 0.25*tan(deg(0.4*pi))}) ;
\draw [domain=1-0.25:{1+0.25*cos(36)}, red] plot (\x, {tan(deg(0.3*pi))*(\x-1+0.25)+1-0.25*tan(deg(0.4*pi))});

\node [above,font=\tiny] at (1.06, 1.01 ) {$O$};

\node [left,font=\tiny] at ({1-0.25},1) {$\pointtwoone_{1-\probprincipal=1}$};
\node [right,font=\tiny] at ({1+0.25*cos(deg(0.2*pi))},{1-0.25*sin(deg(0.2*pi))}) {$\pointtwoone_{1-\probprincipal=1}'$};
\node [below,font=\tiny] at (1-0.25, {1-0.25*tan(deg(0.4*pi))} ) {$\tilde \pointtwoone_{1-\probprincipal=1}$};

\matrix [draw, above right] at (1.5,0) {
 \pic[red]{legend entry}; &  \node[red,font=\tiny] {$\probprincipal=0$}; \\
 \pic[teal]{legend entry}; &  \node[teal,font=\tiny] {$\probprincipal=0.2$}; \\
};

\end{tikzpicture}
\caption{Random order $\probprincipal=0.2$} \label{fig:q<h(theta)}  
\end{subfigure}

\end{figure}

\subsection{Proof of main results in \cref{subsec:seq manipulation}}


\begin{proof}[Proof of \cref{lem:gain non-parallel tests}]
   From \cref{sec: characterization BR} we know that the set of attributes selected is $\manipulation_1=\setperp_A \cup \setperp_B \cup\setonetwo_1 \cup \setO $.
   Apply \cref{lmm: Bq invariant with q} and \cref{lem: Cq}. 
\end{proof}

\begin{proof}[Proof of \cref{lem:loss non-parallel tests}]
    The proof uses similar argument as the previous one and hence is omitted.
\end{proof}

\begin{proof}[Proof of \cref{lem:feasible-informed-rand-distance cost}]
First, we  show that the fixed order mechanism $(\tilde\classifier_A,\tilde\classifier_B,1)$ is feasible.
That is, we want to show that for $\features$ that are accepted by $(\tilde\classifier_A,\tilde\classifier_B,1)$, they are also qualified.
We partition the set of attributes that are accepted by $(\tilde\classifier_A,\tilde\classifier_B,1)$ into two subsets $F_1$ and $F_2$. 
We show that for any attributes in the first subset $F_1$ are also accepted by the informed random order mechanism (Step 1).
Since the informed random order mechanism is feasible, we know that any attributes in the first subset $F_1$ are qualified.
As for the second subset $F_2$, we show that it is contained in a convex set, whose extreme points are in $F_1$ and hence are qualified(Step 2).
Since the qualified region is also convex, we can infer that the second subset $F_2$ is contained in the qualified region.
Hence any attributes in the second subset are also qualified.
The feasibility of $(\tilde\classifier_A,\tilde\classifier_B,0)$ can be shown analogously.
Lastly, we show that one of the two fixed order mechanisms is no worse than the informed random-order mechanism (Step 3).

\paragraph{Step 1} We first show that the set of attributes that (1) are accepted by informed random order mechanism  $(\tilde\classifier_A,\tilde\classifier_B,q,\test_1)$ and (2) satisfy either $\tilde\classifier_A$ or $\tilde\classifier_B$, contains the set of attributes that (1) are accepted by the fixed order mechanism  $(\tilde\classifier_A,\tilde\classifier_B,1)$ and (2) satisfy either $\tilde\classifier_A$ or $\tilde\classifier_B$.
We call the former set set $I$, the latter set set $F_1$.

Consider any $\features$ that are in set $F_1$, i.e.,  any $\features$ that (1) are accepted by the fixed order mechanism  $(\tilde\classifier_A,\tilde\classifier_B,1)$ and (2) satisfy either $\tilde\classifier_A$ or $\tilde\classifier_B$.
We distinguish three cases.

\textbf{Case 1:} Consider $\features$ that satisfy both $\tilde\classifier_A$ and $\tilde\classifier_B$.
Under the strategy $\strategies=(\features,\features)$, 
such attributes are accepted by both the informed random order mechanism and the fixed order mechanism.

\textbf{Case 2:} Consider $\features$ that satisfy $\tilde\classifier_A$ but not $\tilde\classifier_B$.
We can infer that either (1) the best response is a one-step strategy, i.e.,  $\firstfeatures=\secondfeatures$ that satisfy both $\tilde\classifier_A$ and $\tilde\classifier_B$, or (2) the best response  is a two-step strategy, i.e., $\firstfeatures$  satisfy $\tilde\classifier_A$ but not $\tilde\classifier_B$, and $\secondfeatures$ satisfy $\tilde\classifier_B$ but not $\tilde\classifier_A$.
If the former case is true, then such attributes are also accepted by the informed random order mechanism and hence are also in set $I$.
If the latter case is true, then we must have $\firstfeatures=\features$ because of triangle inequality.
Moreover, we can infer that such a strategy is profitable, i.e., $1-c(\features,\features,\secondfeatures)\geq 0$.
This implies that when such attributes use the same strategy in the informed random order mechanism, the expected utility is $q[1-c(\features,\features,\secondfeatures)]\geq 0$ and they get accepted by the informed random order mechanism with probability at least $q$.
Therefore, such attributes are also in set  $I$.

\textbf{Case 3:} Consider $\features$ that satisfy $\tilde\classifier_B$ but not $\tilde\classifier_A$.
Similarly, we can infer that either (1) it is a one-step strategy, i.e.,  $\firstfeatures=\secondfeatures$ that satisfy both $\tilde\classifier_A$ and $\tilde\classifier_B$, or (2) it is a two-step strategy, i.e., $\firstfeatures$  satisfy $\tilde\classifier_A$ but not $\tilde\classifier_B$, and $\secondfeatures$ satisfy $\tilde\classifier_B$ but not $\tilde\classifier_A$.
If the former case is true, then such attributes are also accepted by the informed random order mechanism and hence are also in set  $I$.
If the latter case is true, then we can infer that such a strategy is profitable, i.e., $1-c(\features,\firstfeatures,\secondfeatures)\geq 0$.
This implies that such attributes can use a strategy $(\features,\firstfeatures,\firstfeatures)$ in the informed random order mechanism, the expected utility is $(1-q)[1-\onecost(\features,\firstfeatures)]\geq 0$, because $\onecost(\features,\firstfeatures)\leq c(\features,\firstfeatures,\secondfeatures)$ by monotonicity.
Hence such attributes are  accepted by the informed random order mechanism with probability at least $1-q$.
Therefore, such attributes are also in set  $I$.

\paragraph{Step 2} By the first characterization, we know that $\setonetwo_1\cup \setO$ contains the set of attributes that (1) are accepted by the fixed order mechanism  $(\tilde\classifier_A,\tilde\classifier_B,1)$, and (2) satisfy neither $\tilde\classifier_A$ nor $\tilde\classifier_B$.
Call the latter set set $F_2$.

Notice that $\setonetwo_1\cup \setO$ is convex and the qualified region is convex.
To show that $F_2$ is contained in the qualified region, 
it suffices to show that the extreme points of $\setonetwo_1\cup \setO$ are qualified.
This is true because the extreme points of $\setonetwo_1\cup \setO$ are in set $F_1$.
To see this, denote the intersecting point of the boundary lines of $ \tilde\classifier_A$ and $\tilde\classifier_B$ by $\tilde O$.
The extreme points of $\setonetwo_1\cup \setO$ are $\tilde O$, $\tilde\pointonetwo_1$ and another point in $ \tilde\classifier_B$.

\paragraph{Step 3}
 we introduce a mixed mechanism that mixes two fixed-order mechanisms as follows:
     announce the fixed-order mechanism $(\tilde\classifier_A,\tilde\classifier_B,1)$ with probability $q$; and announce the other fixed-order mechanism $(\tilde\classifier_A,\tilde\classifier_B,0)$ with probability $1-q$. 
     Such a mixed mechanism only accepts qualified agent.
    
    We first show that this mixed fixed-order mechanism is better than the random-order mechanism.  
    More specifically, this mixed fixed-order mechanism accepts all attributes that are accepted by the random-order mechanism with weakly higher probability.
    Suppose the agent is accepted with probability one in the random-order mechanism. Then, this agent must use a one-step strategy and provide attributes that satisfy both $\tilde \classifier_A$ and $\tilde \classifier_B$ in the first test. In this case, the agent can always get accepted in both fixed-order mechanisms by adopting the same strategy. Thus, this agent is accepted with probability one in the mixed mechanism. 
    
    Suppose the agent gets accepted with probability $q$ in the random-order mechanism. This means the agent chooses a two-step strategy that first provides attributes $\firstfeatures$ that satisfy only test $\tilde \classifier_A$ and then provides $\secondfeatures$ that satisfy another test $\tilde \classifier_B$. 
    The expected utility of this agent is $q- d(\features,\firstfeatures)- q\times d(\firstfeatures,\secondfeatures) \geq 0$.  
    Since the cost is non-negative, we have $1- d(\features,\firstfeatures)-  d(\firstfeatures,\secondfeatures) \geq 0$.
    In the fixed-order mechanism $(\tilde\classifier_A,\tilde\classifier_B,1)$, this agent can adopt the same strategy to get utility $1- d(\features,\firstfeatures)-  d(\firstfeatures,\secondfeatures) \geq 0$.
    Thus, this agent gets accepted with probability at least $q$ in the mixed mechanism. 
    Similarly, we can show that for any agent who is accepted by the random-order mechanism with probability $1-q$ is also accepted with probability at least $1-q$ in the mixed mechanism.
    Therefore, the mixed fixed-order mechanism is no worse than the random-order mechanism. 
    
    Finally, it is easy to see that one (the better one) of the two fixed-order mechanisms is no worse than the mixed mechanism.





\end{proof}

\begin{proof}[Proof of \cref{lem:feasible-uninformed-rand-distance cost}]

We first show that two fixed-order mechanisms $(\tilde\classifier_A,\classifier_B',1)$ and $(\classifier_A',\tilde\classifier_B,0)$ are feasible.
Consider the fixed-order mechanism $(\tilde\classifier_A,\classifier_B',1)$.
By the characterization in Section~\ref{sec: characterization BR}, the manipulation set for this mechanism consists of sets $\setO$, $\setonetwo_1(\tilde\classifier_A,\classifier_B')$, $\setperp_A(\tilde\classifier_A,\classifier_B')$, and $\setperp_B(\tilde\classifier_A,\classifier_B')$.
Since the uninformed random-order mechanism $(\tilde\classifier_A,\tilde\classifier_B,q,\test_1)$ is feasible, we have the intersection point $O$ has a distance at least $1/\eta$ to the boundary of both classifiers $\tilde\classifier_A$ and $\tilde\classifier_B$, otherwise there exists unqualified agent can directly move to point $O$ with cost $1$.
Note that the classifier $\classifier_B'$ has the same normal vector $\weights_B$ as the classifier $\classifier_B$. 
Thus, any attributes in $\tilde \classifier_A \cap \classifier_B'$ have a distance at least $1/\eta$ to the boundary of $\classifier_A$ and $\classifier_B$.
Therefore, we have $\setperp_B(\tilde \classifier_A, \classifier_B', 1)$ and $\setO$ are contained in $\classifier_A \cap \classifier_B$.
By Lemma~\ref{lem: Cq}, we have $\setonetwo_1(\tilde \classifier_A, \classifier_B') = OE_1\tilde E_1 E_1'$.
Note that the boundary of $\classifier_B'$ is parallel to the boundary of $\classifier_B$.
Thus, we have attributes $E_1$ and $E_1'$ are
contained in $\classifier_A \cap \classifier_B$.
The attributes $\tilde E_1$ is the intersection of $\tilde\classifier_A$ and $\classifier_B$, which implies the attributes $\tilde E_1$ is also in $\classifier_A \cap \classifier_B$.
Since the qualified region is convex, we have the set $\setonetwo_1 \subset \classifier_A \cap \classifier_B$.
Thus, the fixed-order mechanism $(\tilde\classifier_A,\classifier_B',1)$ is feasible. 
With a similar analysis, we have $(\classifier_A', \tilde\classifier_B, 0)$ is also feasible. 

We next show that one of these two fixed-order mechanisms is no worse than the uninformed random-order mechanism. 
Suppose the boundaries of two tests $\tilde \classifier_A$ and $\tilde \classifier_B$ are parallel to the boundaries of $\classifier_A$ and $\classifier_B$ respectively. 
By Proposition~\ref{prop: coverage of Mq}, in this case, the manipulation set of the uninformed random-order mechanism $(\tilde\classifier_A,\tilde\classifier_B,q,\varnothing)$ is contained in the manipulation set of one fixed-order mechanism.

We now consider the case where the boundaries of $\tilde\classifier_A$ and $\tilde\classifier_B$ are not parallel to the boundaries of $\classifier_A$ and $\classifier_B$.
We show that a mixed mechanism that announces the fixed-order mechanism  $(\tilde\classifier_A,\classifier_B',1)$ with probability $q$ and the fixed-order mechanism $(\classifier_A', \tilde\classifier_B, 0)$ with probability $1-q$ is no worse than the uninformed random order mechanism $(\tilde\classifier_A,\tilde\classifier_B,q,\varnothing)$.
By Lemma~\ref{lem: Cq}, we have $\setonetwo_q(\tilde \classifier_A, \tilde \classifier_B) = OE_q \tilde E_q E_q'$. Since $\tilde E_q$ is on the boundary of $\tilde\classifier_A$, we must have $\tilde E_q$ is in $O \tilde E_1$, otherwise $\tilde E_q$ is not qualified. 
We also know that $E_q$ and $E_q'$ have a distance $q$ to the intersection $O$, which means $E_q$ and $E_q'$ are accepted by the fixed-order mechanism. 
Note that the set of attributes $\calA_1$ accepted by the fixed-order mechanism $(\tilde\classifier_A, \classifier_B', 1)$ is convex. 
Since attributes $O$, $E_q$, $\tilde E_q$, and $E_q'$ are contained in $\calA_1$, we have $\setonetwo_q(\tilde \classifier_A, \tilde \classifier_B) = OE_q\tilde E_qE_q'$ is contained in $\calA_1$.
Similarly, we have the set $\settwoone_q$ is contained in the set of attributes $\calA_2$ accepted by the fixed-order mechanism $(\classifier_A',\tilde \classifier_B,0)$.
If an agent is accepted by the random order mechanism $(\tilde\classifier_A, \tilde \classifier_B,q,\varnothing)$ with probability $1$, then the attributes of this agent must be in $\setO \cup (\tilde\classifier_A \cap \tilde \classifier_B) \cup \setperp_A(\tilde\classifier_A, \tilde \classifier_B) \cup \setperp_B(\tilde\classifier_A, \tilde \classifier_B)$. 
Note that $\setO$, $\tilde\classifier_A \cap \tilde \classifier_B$, $\setperp_A(\tilde\classifier_A, \tilde \classifier_B)$, and $\setperp_B(\tilde\classifier_A, \tilde \classifier_B)$ are all contained by both $\calA_1$ and $\calA_2$. 
Thus, these agents are also accepted with probability $1$ in the mixed fixed-order mechanism. 
If an agent is accepted by the random order mechanism $(\tilde\classifier_A, \tilde \classifier_B,q,t_1)$ with probability $q$, then the attributes of this agent must be in $\setonetwo_q(\tilde\classifier_A,\tilde\classifier_B)$, which is contained in $\calA_1$.
Thus, this agent is accepted with probability at least $q$ in the mixed mechanism.
Similarly, if the agent is accepted by the random order mechanism $(\tilde\classifier_A, \tilde \classifier_B,q,\varnothing)$ with probability $1-q$, then the attributes of this agent must be in $\settwoone_q(\tilde\classifier_A,\tilde\classifier_B)\subset \calA_2$.
Thus, this agent is accepted by the mixed mechanism with probability at least $1-q$.
Since one of the two fixed-order mechanisms is no worse than the mixed mechanism, we get the conclusion.
\end{proof}

We now prove our main theorem. 

\begin{proof}[Proof of Theorem~\ref{thm: optimal max qualified}]
    By \cref{lem:feasible-informed-rand-distance cost} and \cref{lem:feasible-uninformed-rand-distance cost},  we have the best sequential mechanism is a fixed order mechanism. Moreover, to avoid selecting any unqualified agent, we must have $\tilde \classifier_A \cap \tilde \classifier_B \subset \classifier_A \cap \classifier_B$.
\end{proof}

\subsection{Omitted proof in \cref{subsec: simultaneous manipulation}}
\begin{proof}[Proof of \cref{lem:fix-simultaneous-distance cost}]
      This result is implied by \cref{thm:opt_manipulation}.
\end{proof}

\subsection{Omitted proof in \cref{subsec:cheap talk manipulation}}\label{appendix:cheap talk manipulation}

First we state the following two lemmas (partially), which characterize the set of attributes that can get selected by these two fixed-order mechanisms.

\begin{lemma}\label{lem:gain non-parallel tests}
 Under the fixed-order procedure $(\classifier_A^+,\widehat\classifier_B,0)$, we have the following:
 \begin{itemize}
     \item Each type in the triangle $\Updelta\text{OAB}$ has a profitable strategy to get selected.
     \item No unqualified type has a strictly profitable strategy to get selected.
 \end{itemize}
\end{lemma}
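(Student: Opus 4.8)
The plan is to compute the set of types selected by the fixed-order procedure $(\classifier_A^+,\widehat\classifier_B,0)$ and then read off both bullets. By the convention $(\widetilde h_A,\widetilde h_B,q)=(\widetilde h_B,\widetilde h_A,1-q)$ this procedure offers $\widehat\classifier_B$ first and $\classifier_A^+$ second, so the reflection/zig-zag analysis of \cref{sec: characterization BR}, specialized to a deterministic order, gives that its selected set equals $(\widehat\classifier_B\cap\classifier_A^+)\cup\manipulation_1$ with $\manipulation_1=\setperp_A\cup\setperp_B\cup\setonetwo_1\cup\setO$: the types within cost $1$ of the qualified portion of the boundary of $\widehat\classifier_B$; the types within cost $1$ of the qualified portion of the boundary of $\classifier_A^+$; the zig-zag ``pocket'' $\setonetwo_1$ near the test-corner $O^+$ (\cref{lem: Cq}); and the sector $\setO$ of types within distance $1/\eta$ of $O^+$ (\cref{lmm: Bq invariant with q}). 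The one geometric fact I would isolate at the start is that, by construction, the boundary of $\widehat\classifier_B$ is the line through $O$ and $O^+$, which is simultaneously the internal bisector of $\classifier_A\cap\classifier_B$ at $O$ and of $\classifier_A^+\cap\classifier_B^+$ at $O^+$ (equivalently, the inward normal of $\widehat\classifier_B$ is parallel to $\weights_B-\weights_A$); consequently the reflection $\symmetric_{\widehat\classifier_B}$ interchanges the boundary lines of $\classifier_A$ and $\classifier_B$ and carries the half-plane $\classifier_A^+$ exactly onto $\classifier_B^+$.

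For the first bullet---that every type in $\Updelta\text{OAB}$ has a profitable strategy---I would exhibit such a strategy by splitting the triangle along $\widehat\classifier_B$, using $\Updelta\text{OAB}\subset\classifier_A\cap\classifier_B$. If $\orifeatures\in\Updelta\text{OAB}\cap\widehat\classifier_B$, the agent submits $\orifeatures$ in the first round and then pays only the cost of reaching $\classifier_A^+$, which is at most $1$ because $\orifeatures\in\classifier_A$ and $\classifier_A^+$ is the distance-$1/\eta$ shift of $\classifier_A$. If $\orifeatures\in\Updelta\text{OAB}\setminus\widehat\classifier_B$, then by the reflection identity and \cref{lem:zig-zag} the optimal ``pass $\widehat\classifier_B$, then $\classifier_A^+$'' strategy costs exactly the cost of moving $\symmetric_{\widehat\classifier_B}\orifeatures$ into $\classifier_A^+$, which equals the cost of moving $\orifeatures$ into $\classifier_B^+$ and is hence at most $1$ since $\orifeatures\in\classifier_B$; the one delicate point is the boundary sublocus near $A$ where the hypothesis of \cref{lem:zig-zag} degenerates, and there the best strategy collapses to a one-step move to $O^+$, for which a direct estimate shows the cost stays at most $1$ inside the triangle. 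In either case $\orifeatures$ is selected, and $O,A,B$ themselves are selected at cost exactly $1$.

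For the second bullet---that no unqualified type is strictly profitably selected---I would show $(\widehat\classifier_B\cap\classifier_A^+)\cup\manipulation_1\subseteq\classifier_A\cap\classifier_B$ stratum by stratum, taking $O$ as the origin so that $\classifier_A=\{\features:\features\cdot\weights_A\ge0\}$, $\classifier_B=\{\features:\features\cdot\weights_B\ge0\}$ and $\widehat\classifier_B=\{\features:\features\cdot(\weights_B-\weights_A)\ge0\}$. Two one-line inequalities then give $\widehat\classifier_B\cap\classifier_A\subseteq\classifier_B$ and $\classifier_B\setminus\widehat\classifier_B\subseteq\classifier_A$, which dispose of the ``large'' strata (a type in $\widehat\classifier_B$ is selected only when it lies in $\classifier_A$; a type outside $\widehat\classifier_B$ only when it lies in $\classifier_B$). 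The sector $\setO$ is contained in $\classifier_A\cap\classifier_B$ because $O^+$ is at distance $1/\eta$ from each true boundary line, so the closed ball of radius $1/\eta$ about $O^+$ is tangent to but does not cross either. Finally $\setonetwo_1$, $\setperp_A$, $\setperp_B$ and $\widehat\classifier_B\cap\classifier_A^+$ all lie in the distance-$1/\eta$ neighborhood of the wedge $\widehat\classifier_B\cap\classifier_A^+$, whose corner is $O^+$, both of whose edges run \emph{into} $\classifier_B^+$, and which itself is contained in $\classifier_A^+\subset\classifier_A$; that neighborhood is therefore inside $\classifier_A\cap\classifier_B$, and it is exactly the decision to halt the rotation at the bisector---so that the boundary of $\widehat\classifier_B$ passes through $O$ and no farther---that keeps this wedge from poking out of $\classifier_B$.

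The main obstacle is this feasibility direction: making the decomposition of the selected set precise and uniform in the wedge angle $\theta$, and then verifying that none of the four strata of $\manipulation_1$---above all the zig-zag pocket $\setonetwo_1(\widehat\classifier_B,\classifier_A^+)$---escapes $\classifier_A\cap\classifier_B$, including the boundary case where \cref{lem:zig-zag}'s hypothesis fails and the optimal two-step strategy degenerates to a one-step one. The reflection identity $\symmetric_{\widehat\classifier_B}\classifier_A^+=\classifier_B^+$ is what keeps each of these checks down to comparing a handful of explicit half-planes and triangles.
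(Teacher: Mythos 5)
Your proposal is correct and follows essentially the same route as the paper's (very terse) proof, which simply invokes the appendix characterization of the selected set as $(\widehat\classifier_B\cap\classifier_A^+)\cup\setperp_A\cup\setperp_B\cup\setonetwo_1\cup\setO$ together with \cref{lmm: Bq invariant with q} and \cref{lem: Cq}; you supply the geometric verification the paper leaves implicit, and your organizing observation that $\symmetric_{\widehat\classifier_B}$ carries $\classifier_A^+$ onto $\classifier_B^+$ is exactly what makes each stratum reduce to a one-line half-plane comparison (e.g.\ all four vertices of $\setonetwo_1=O^+\,(O^+-\weights_A/\eta)\,O\,(O^+-\weights_B/\eta)$ lie in the convex set $\classifier_A\cap\classifier_B$). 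One caveat on your ``delicate point near $A$'': the proposed fallback of a one-step move to $O^+$ does \emph{not} stay within cost $1$ there, since $\metric(A,O^+)=1/(\eta\sin\theta)>1/\eta$ for $\theta<90^{\circ}$; fortunately no fallback is needed, because for $\theta<90^{\circ}$ the hypothesis of \cref{lem:zig-zag} in fact holds throughout $\Updelta\text{OAB}\setminus\widehat\classifier_B$ (the projection of $\symmetric_{\widehat\classifier_B}\orifeatures$ onto the boundary of $\classifier_A^+$ has negative $(\weights_B-\weights_A)$-coordinate, so it lies outside $\widehat\classifier_B$), and for $\theta\geq 90^{\circ}$ one-step strategies are optimal everywhere by the remark following \cref{lem:zig-zag}.
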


\begin{lemma}\label{lem:loss non-parallel tests}
 Under the fixed-order procedure $(\classifier_A^+,\classifier_B^+,1)$, we have the following:
 \begin{itemize}
     \item Each type in $\classifier_A\cap\classifier_B\setminus \Updelta\text{OAB}$ has a profitable strategy to get selected.
     \item Each type in the triangle $\Updelta\text{OAB}$ does not have a profitable strategy to get selected.
     \item No unqualified type has a strictly profitable strategy to get selected.
 \end{itemize}
\end{lemma}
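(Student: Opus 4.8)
The plan is to derive all three bullets from a single explicit picture: the set of types selected by the fixed-order mechanism $(\classifier_A^{+},\classifier_B^{+},1)$. Normalize $\mc=1$, put $O$ at the origin, and write $\classifier_i=\{x:\weights_i\cdot x\ge 0\}$ and $\classifier_i^{+}=\{x:\weights_i\cdot x\ge 1\}$ with $\weights_A,\weights_B$ unit normals and $\weights_A\cdot\weights_B=-\cos\theta$. Facing this mechanism a type $\orifeatures$ must first move to some $\firstfeatures\in\classifier_A^{+}$ and then to some $\secondfeatures\in\classifier_B^{+}$, so the cost of getting selected is the smaller of (i) the cheapest one-step move of $\orifeatures$ into $\classifier_A^{+}\cap\classifier_B^{+}$, and (ii) the cheapest genuine two-step move that reaches $\classifier_A^{+}$ first and $\classifier_B^{+}$ second. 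By the reflection argument recalled in \cref{subsec:example} and made precise in \cref{lem:zig-zag}, quantity (ii) equals $\onecost(\symmetric_{\classifier_A^{+}}\orifeatures,\projection_{\classifier_B^{+}}\symmetric_{\classifier_A^{+}}\orifeatures)$ whenever the straightened path actually crosses $\partial\classifier_A^{+}$, and reduces to (i) otherwise. The selected set is thus the set of $\orifeatures$ for which the smaller of (i) and (ii) is at most $1$, and the goal is to show that inside $\classifier_A\cap\classifier_B$ this set equals $\classifier_A\cap\classifier_B\setminus\Updelta OAB$, while it meets the complement of $\classifier_A\cap\classifier_B$ only in its measure-zero boundary.

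The feasibility bullet is the easy one, and I would prove it directly without invoking the above. If $\orifeatures\notin\classifier_A$, then $\operatorname{dist}(\orifeatures,\classifier_A^{+})=1+\operatorname{dist}(\orifeatures,\partial\classifier_A)>1$; since every selecting strategy has $\firstfeatures\in\classifier_A^{+}$, its cost is at least $\onecost(\orifeatures,\firstfeatures)\ge\operatorname{dist}(\orifeatures,\classifier_A^{+})>1$. If instead $\orifeatures\in\classifier_A\setminus\classifier_B$, then $\operatorname{dist}(\orifeatures,\classifier_B^{+})=1+\operatorname{dist}(\orifeatures,\partial\classifier_B)>1$; since the terminal attributes satisfy $\secondfeatures\in\classifier_B^{+}$ and the cost is additive along the path, the triangle inequality gives total cost $\ge\onecost(\orifeatures,\secondfeatures)\ge\operatorname{dist}(\orifeatures,\classifier_B^{+})>1$. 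Every unqualified type lies in one of these two cases, so no unqualified type attains utility $0$, and a fortiori none has a strictly profitable strategy.

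For the two ``qualified'' bullets I would compute the selected set inside the wedge $\classifier_A\cap\classifier_B$ by cases on the position of $\orifeatures$ relative to the chord $AB$ and to the lines $\partial\classifier_A^{+}$ and $\partial\classifier_B^{+}$. For the loss bullet, a type $\orifeatures\in\Updelta OAB$ lies on the $O$-side of $AB$, hence its one-step target $\classifier_A^{+}\cap\classifier_B^{+}$ is reached no cheaper than its corner $O^{+}$, where $\operatorname{dist}(O,O^{+})=1/\sin(\theta/2)$, and one checks along the same lines that the two-step cost (ii) is likewise $>1$ throughout the triangle; hence no type in $\Updelta OAB$ has a profitable strategy. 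For the gain bullet, take $\orifeatures\in\classifier_A\cap\classifier_B$ on the far side of $AB$ and exhibit a profitable strategy: if $\orifeatures\in\classifier_A^{+}$ take $\firstfeatures=\orifeatures$ and $\secondfeatures=\projection_{\classifier_B^{+}}\orifeatures$, at cost $(1-\weights_B\cdot\orifeatures)_{+}\le 1$; if $\orifeatures\notin\classifier_A^{+}$, reflect across $\partial\classifier_A^{+}$ and run the zig-zag of \cref{lem:zig-zag}, or, in the degenerate sub-case where the reflected projection already satisfies $\classifier_A^{+}$, move directly to $\classifier_B^{+}$ along a segment that happens to cross $\classifier_A^{+}$, verifying in each sub-case that the total cost does not exceed $1$.

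I expect the case analysis in the last paragraph to be the main obstacle. The reflection/zig-zag construction behaves qualitatively differently according as $\theta<90^{\circ}$, $\theta=90^{\circ}$, or $\theta>90^{\circ}$ (in the last regime many types are selected by a single one-step move and the loss triangle contracts toward $O$), so the ``$\le 1$'' bounds must be rechecked in each regime and at the degenerate configurations where $\partial\classifier_A^{+}$ fails to separate $\orifeatures$ from its zig-zag target; establishing that the resulting loss region is exactly $\Updelta OAB$, rather than merely some neighborhood of the corner, is the delicate point. The companion \cref{lem:gain non-parallel tests} is proved by the same method applied to $(\classifier_A^{+},\widehat\classifier_B,0)$, with $\widehat\classifier_B$ chosen precisely so that the union of the two selected sets is all of $\classifier_A\cap\classifier_B$, which is what \cref{thm: optimal max qualified cheap talk} requires.
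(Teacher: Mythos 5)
You prove the third bullet completely and correctly, and your overall route — reduce the cheapest passing strategy to a one-step move into $\classifier_A^{+}\cap\classifier_B^{+}$ versus a reflected zig-zag, then identify the selected set geometrically — is exactly the paper's route: its ``proof'' is a one-line appeal to the decomposition $\manipulation_1=\setperp_A\cup\setperp_B\cup\setonetwo_1\cup\setO$ of Appendix A.1 specialized to the tests $\classifier_A^{+},\classifier_B^{+}$. The gap is that the first two bullets, which are the entire content of the lemma, are left as a plan: you yourself write that establishing that the loss region is exactly $\Updelta OAB$ is ``the delicate point,'' and nothing in the proposal establishes it. A case analysis that is announced but not executed is not a proof, especially when you have already identified where it might break.

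And it does break. Normalize $\mc=1$, put $O$ at the origin, and take $\orifeatures$ at the midpoint of the chord $AB$, so $\weights_A\cdot\orifeatures=\weights_B\cdot\orifeatures=\tfrac12$. Since $\orifeatures\notin\classifier_A^{+}$ and $\symmetric_{\classifier_A^{+}}\orifeatures\in\classifier_A^{+}$, every strategy with $\firstfeatures\in\classifier_A^{+}$ and $\secondfeatures\in\classifier_B^{+}$ (one-step strategies included) has cost
$\onecost(\orifeatures,\firstfeatures)+\onecost(\firstfeatures,\secondfeatures)\ge \onecost(\symmetric_{\classifier_A^{+}}\orifeatures,\secondfeatures)\ge \operatorname{dist}\bigl(\symmetric_{\classifier_A^{+}}\orifeatures,\classifier_B^{+}\bigr)=\tfrac12+\cos\theta$,
which exceeds $1$ whenever $\theta<60^{\circ}$ — including the configuration of \cref{fig:cheap-talk}, where $\theta\approx 53^{\circ}$. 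By continuity, an open set of qualified types strictly beyond the chord $AB$, hence outside $\Updelta OAB$, then has no profitable strategy, contradicting the first bullet. So for $\theta<60^{\circ}$ the loss region is strictly larger than $\Updelta OAB$ (the yellow region in the paper's own figure is visibly not that triangle), and your case analysis cannot terminate in the claimed identity. To salvage the argument you must either restrict to angles where the identity does hold (the threshold for both the one-step and the zig-zag bound along $AB$ is $\theta=60^{\circ}$) or replace ``$\Updelta OAB$'' by the exact complement of $(\classifier_A^{+}\cap\classifier_B^{+})\cup\setperp_A\cup\setperp_B\cup\setO\cup\setonetwo_1$ as given by the appendix characterization, and propagate that change into \cref{lem:gain non-parallel tests} and \cref{thm: optimal max qualified cheap talk}, whose menu construction relies on the two selected sets covering all of $\classifier_A\cap\classifier_B$.
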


\begin{proof}[Proof of \cref{thm: optimal max qualified cheap talk}]
The fixed-order mechanism is described as the following: 
If the agent reports a type in the triangle $\Updelta \text{OAB}\subset \classifier_A\cap\classifier_B$, then the principal announces the fixed-order mechanism $(\classifier_A^+,\widehat\classifier_B,0)$.
 If the agent reports a type in $ \classifier_A\cap\classifier_B\setminus \Updelta \text{OAB}$, then the principal announces the fixed-order mechanism $(\classifier_A^+,\classifier_B^+,1)$. 
 If the agent reports a type that is not in $ \classifier_A\cap\classifier_B$, then the principal randomly announces either $(\classifier_A^+,\widehat\classifier_B,0)$ or $(\classifier_A^+,\classifier_B^+,1)$.

First, we show that this fixed-order mechanism is feasible. 
By \cref{lem:gain non-parallel tests} and \cref{lem:loss non-parallel tests}, no unqualified type has a profitable strategy to get selected under either $(\classifier_A^+,\widehat\classifier_B,0)$ or $(\classifier_A^+,\classifier_B^+,1)$.
Hence no unqualified type has a profitable strategy to get selected regardless of the reporting strategy.

Next we show that under this fixed-order mechanism, every qualified agent is selected.
By \cref{lem:gain non-parallel tests} and \cref{lem:loss non-parallel tests}, each type in the triangle $\Updelta \text{OAB}$ has a profitable strategy to get selected by the fixed-order mechanism $(\classifier_A^+,\widehat\classifier_B,0)$ but not in the fixed-order mechanism $(\classifier_A^+,\classifier_B^+,1)$.
Hence each type in the triangle $\Updelta \text{OAB}$ has incentive to truthfully report his type and gets selected. Again, by \cref{lem:loss non-parallel tests}, each type in $\classifier_A\cap\classifier_B\setminus \Updelta\text{OAB}$ has a profitable strategy to get selected by the fixed-order mechanism $(\classifier_A^+,\classifier_B^+,1)$ if he reports truthfully.\footnote{Here some types in $\classifier_A\cap\classifier_B\setminus \Updelta\text{OAB}$ might have incentive to misreport if they can get selected by the fixed-order mechanism $(\widehat\classifier_A,\classifier_B^+,1)$ with a lower cost. }
\end{proof}
\section{Omitted proof in  \cref{subsec: distance investment} 
 (investment)}\label{appendix: distance investment}

\begin{proof}[Proof of \cref{claim same behavior}]
    Consider any attributes $\features\notin\classifier_A\cap\classifier_B$. 
    We have shown that 
    if attributes $\features \in \setperp_A(\classifier_A,\classifier_B)$ or if attributes $\features \in \setperp_B(\classifier_A,\classifier_B)$, then the best response of any agent with such attributes is to take a one-step strategy.
    Hence it suffices to show for any attributes $\features \in \setonetwo_{\frac12}\cup\settwoone_{\frac12}$, there exists a one-step strategy  that is better than the best two-step strategy.

    \paragraph{Suppose  $\theta< 90^{\circ}$.}  The only  one-step strategy that could enable the agent to be selected is $\strategies(\features)=O$ (See \cref{fig:acute}).
    Notice that the cost of the one-step strategy is $\onecost(\features,O)=\eta \|\features - O\|_2$.
    Denote the distance between $\features$ and $O$ by $r= \|\features - O\|_2$.
    We distinguish three cases. 
    
    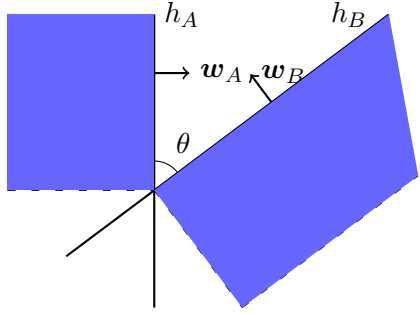
\begin{figure}
    \centering
  \begin{tikzpicture}[xscale=7.8,yscale=7.8]

\draw [domain=0.85:1.4, thick] plot (\x, {3/4*\x+1/4}) ;
\node [left] at (1.38, 1.3 ) {$\classifier_B$};
\draw [thick] (1,0.8) -- (1,1.3) ;
\node [right] at (1, 1.3) {$\classifier_A$};
\draw[black,thick,->] (1,1.2) -- (1.06,1.2)node[right]{$\weights_A$}; 
\draw[black,thick,->] (1.2,1.15) -- (1.2-0.06*0.6,1.15+0.06*0.8)node[right]{$\weights_B$}; 

\draw (1,1)  ++(90-53.1:0.05) arc (90-53.1:90:0.05); \node[right] at(1.02,1.08){$\theta$};

\draw [domain=1+0.25*0.6:1.45, loosely dashed] plot (\x, {3/4*\x+1/4-5/16}); 
\draw [loosely dashed] (1,1) -- (1+0.25*0.6,1-0.25*0.8);

\draw [loosely dashed] (1-0.25,1) -- (1-0.25,1.3);
\draw [loosely dashed] (1-0.25,1) -- (1,1);

\node[font=\tiny] at(0.88,1.1){\footnotesize$\setperp_A(\classifier_A,\classifier_B)$};
\node[font=\tiny] at(1.2,0.95){\footnotesize$\setperp_B(\classifier_A,\classifier_B)$};

\fill [blue!60,nearly transparent]  (0.75,1.3) -- (0.75,1) -- (1,1) -- (1,1.3)-- cycle;

\fill [blue!60,nearly transparent]  (1.4,5.2/4) --  (1,1) -- (1+0.25*0.6,1-0.25*0.8) --(1.45,5.35/4-5/16)-- cycle;
\end{tikzpicture}
\caption{$\theta<90^{\circ}$}
\label{fig:acute}
\end{figure}

    \paragraph{Case 1} Consider any attributes $\features$ that satisfy $\classifier_A$  but do not satisfy $\classifier_B$ (See \cref{fig: case 1}).
    Denote the angle between the vector $O\features$ and $-\weights_A$ by $\alpha$, where $\weight_A$ is the normal vector of $\classifier_A$ and $\alpha\in [0,90^{\circ}-\theta]$.
    In this case, $\features\in\settwoone_{\frac12}$, i.e., such attributes only have a profitable two-step strategy to first pass $\classifier_B$ then $\classifier_A$. 
    The best two-step strategy for $\features$ is $\strategies(\features)=(\features,\secondfeatures)$, where $\secondfeatures$ are the projection of $\features$ on the boundary line of $\classifier_A$.
    The cost of such a strategy is $\eta r \cos{\alpha}$.
    Hence the expected utility from the best two-step strategy is $\frac12-\eta r \cos{\alpha}$.
    Such attributes have a profitable two-step strategy, i.e., $\features\in \setonetwo_{\frac12}\cup\settwoone_{\frac12}$, if and only if $\frac12-\eta r \cos{\alpha}\geq 0$.
    
    It remains to show that 
    $$1-\eta r\geq \frac12-\eta r \cos{\alpha}, \text{ for all } r \text{ s.t. } \frac12-\eta r \cos{\alpha}\geq 0.$$
    It suffices to show 
    $$1\geq \frac{1- \cos{\alpha}}{\cos{\alpha}}\Leftrightarrow \cos{\alpha}\geq \frac12,$$ 
    which is true if $\theta\geq 30^{\circ}$.

\begin{figure}[t]
\centering
\begin{tikzpicture}[xscale=6,yscale=6]

\draw [domain=0.8:1.36, thick] plot (\x, {3/4*\x+1/4});
\node [left] at (1.32, 1.25 ) {$\classifier_B$};
\draw [thick] (1,0.78) -- (1,1.25);
\node [right] at (1, 1.25 ) {$\classifier_A$};
\node [left] at (1.27,1.2) {$+$};
\node [right] at (1, 1.2) {$+$};

\draw [->, thick]  (1,1)--(1-0.25,1);
\node [left] at (1-0.25, 1 ) {\footnotesize$-\weights_A$};

\node at (0.8, 0.9) {\textbullet};
\node [left] at (0.8, 0.9) {\footnotesize$\features$};
\draw[dashed](0.8, 0.9)--(1,1);
\draw (1,1)  ++({180}:0.05) arc (180:210:0.05); \node[left] at (0.96,0.98){\footnotesize$\alpha$};

\node [above] at (1.016, 1 ) {\footnotesize$O$};
\end{tikzpicture}
\caption{Case 1}
\label{fig: case 1}
\end{figure}

    \paragraph{Case 2} Consider any attributes $\features$ that satisfy $\classifier_B$  but do not satisfy $\classifier_A$ (See \cref{fig: case 2}).
    In this case, $\features\in\setonetwo_{\frac12}$, i.e., such attributes only have a profitable two-step strategy to first pass $\classifier_A$ then $\classifier_B$. 
Denote the angle between the vector $O\features$ and $-\weights_B$ by $\alpha$, where $\weight_B$ is the normal vector of $\classifier_B$ and $\alpha\in [0,90^{\circ}-\theta]$.
   The rest of the proof is analogous to case 1.

    \begin{figure}[t]
\centering
\begin{tikzpicture}[xscale=6,yscale=6]

\draw [domain=0.8:1.36, thick] plot (\x, {3/4*\x+1/4});
\node [left] at (1.32, 1.25 ) {$\classifier_B$};
\draw [thick] (1,0.78) -- (1,1.25);
\node [right] at (1, 1.25 ) {$\classifier_A$};
\node [left] at (1.27,1.2) {$+$};
\node [right] at (1, 1.2) {$+$};

 \draw [->,thick] (1,1) -- (1+0.25*0.6,1-0.25*0.8);
 \node [right] at (1+0.25*0.6,1-0.25*0.8) {\footnotesize$-\weights_B$};

\node at (1+0.05,1-0.2) {\textbullet};
\node [below] at (1+0.05,1-0.2){\footnotesize$\features$};
\draw[dashed](1+0.05,1-0.2)--(1,1);

\draw (1,1)  ++({270+atan(1/4)}:0.05) arc ({270+atan(1/4)}:{270+atan(3/4)}:0.05); 
\node[below] at (1.03,0.96){\footnotesize$\alpha$};

\node [above] at (1.016, 1 ) {\footnotesize$O$};
\end{tikzpicture}
\caption{Case 2}
\label{fig: case 2}
\end{figure}

    \paragraph{Case 3} Consider any attributes $\features$ that satisfy neither $\classifier_A$  nor $\classifier_B$.
    We first show that the best two-step strategy to first pass $\classifier_A$ and then $\classifier_B$ is less than the best one-step strategy.
    Similar argument can be used to show that the best two-step strategy to first pass $\classifier_B$ and then $\classifier_A$ is less than the best one-step strategy.
    
    Denote the symmetric attributes of $\features$ over the boundary line of $\classifier_A$ by $\tilde\features$ (See \cref{fig: case 3}).
    Then by symmetry, $r=\|\tilde\features - O\|_2$.
    Denote the projection point of $\tilde\features$ on the boundary line of $\classifier_B$ by $\secondfeatures$.
    Denote the (anti clockwise) angle between the vector $O\tilde\features$ and the boundary line of $\classifier_A$ by $\alpha$ (See \cref{fig: case 3}), where $\alpha\in [0,\theta]$.
    Then by symmetry, the best two-step strategy for $\features$ to first pass $\classifier_A$ and then $\classifier_B$ has the same cost as $\tilde\features$ moving to $\secondfeatures$, which is $\eta \|\tilde\features - O\|_2 \sin{(\theta+\alpha)}=\eta r \sin{(\theta+\alpha)}$.
 Hence the expected utility from the best two-step strategy to first pass $\classifier_A$ and then $\classifier_B$ is $\frac12-\eta r \sin{(\theta+\alpha)}$.
    Such attributes have a profitable two-step strategy to first pass $\classifier_A$ and then $\classifier_B$, i.e., $\features\in \setonetwo_{\frac12}\cup\settwoone_{\frac12}$, if and only if $\frac12-\eta r \sin{(\theta+\alpha)}\geq 0$.
    
    It remains to show that 
    $$1-\eta r\geq \frac12-\eta r \sin{(\theta+\alpha)}, \text{ for all } r \text{ s.t. } \frac12-\eta r \sin{(\theta+\alpha)}\geq 0.$$
    It suffices to show 
    $$1\geq \frac{1- \sin{(\theta+\alpha)}}{\sin{(\theta+\alpha)}}\Leftrightarrow \sin{(\theta+\alpha)}\geq \frac12,$$ 
    which is true if $\theta\geq 30^{\circ}$.
    Analogously, we can show that  if $\theta\geq 30^{\circ}$, then the best two-step strategy to first pass $\classifier_B$ and then $\classifier_A$ is less than the best one-step strategy.
    
    \begin{figure}[t]
\centering
\begin{tikzpicture}[xscale=8,yscale=8]

\draw [domain=0.76:1.36, thick] plot (\x, {3/4*\x+1/4});
\node [left] at (1.32, 1.25 ) {$\classifier_B$};
\draw [thick] (1,0.76) -- (1,1.25);
\node [right] at (1, 1.25 ) {$\classifier_A$};
\node [left] at (1.27,1.2) {$+$};
\node [right] at (1, 1.2) {$+$};

\draw [dotted] (1-0.1,1-0.2) -- (1+0.1,1-0.2);
\node at (1-0.1, 1-0.2 ) {\textbullet};
\node at (1+0.1, 1-0.2 ) {\textbullet};
\node [below] at (1-0.1, 1-0.2 ) {\footnotesize$\features$};
\draw [dotted]  (1+0.1,1-0.2) -- (1+0.1-5.45/24*0.6,1-0.2+5.45/24*0.8);
\node [below] at (1+0.1,1-0.2) {\footnotesize$\tilde\features$};
\node [left] at  (1+0.1-5.5/24*0.6,1-0.2+5.5/24*0.8) {\footnotesize$\secondfeatures$};
\node  at  (1+0.1-5.45/24*0.6,1-0.2+5.45/24*0.8) {\textbullet};

\draw [] (1,1) -- (1+0.1,1-0.2);
\draw (1,1)  ++({270}:0.05) arc (270:{300}:0.05); 
\node[below] at (1.04,0.96){\footnotesize$\alpha$};

\node [above] at (1.016, 1 ) {\footnotesize$O$};
\end{tikzpicture}
\caption{Case 3}
\label{fig: case 3}
\end{figure}

 \paragraph{Suppose  $\theta\geq  90^{\circ}$.}  By Theorem 3.7 in \citet{zigzag}, $\setonetwo_{\frac12}\cup\settwoone_{\frac12}$ is empty.
 Hence every agent prefers one-step strategy.

 \begin{figure}[h]  
\centering 
    \begin{tikzpicture}[xscale=7.8,yscale=7.8]

\draw [domain=0.86:1.26, thick] plot (\x, {-3/4*\x+7/4});
\node [right] at (1.28, 1-0.18 ) {$ \classifier_B$};
\draw [thick] (1,0.8) -- (1,1.3);
\node [right] at (1, 1.3) {$\classifier_A$};
\draw[black,thick,->] (1,1.2) -- (1.06,1.2)node[right]{$\weights_A$}; 
\draw[black,thick,->] (1.2,0.85) -- (1.2+0.06*0.6,0.85+0.06*0.8)node[right]{$\weights_B$}; 

\draw (1,1)  ++({-36.9}:0.05) arc (-36.9:90:0.05); \node[right] at (1.04,1.04){$\theta$};


\draw [domain=0.86:1.1, loosely dashed] plot (\x, {-3/4*\x+23/16});
\draw [loosely dashed] (1,1) -- (1-0.25*0.6,1-0.25*0.8);

\draw [loosely dashed] (1-0.25,1) -- (1-0.25,1.3);
\draw [loosely dashed] (1-0.25,1) -- (1,1);

\node[font=\tiny] at(0.88,1.2){\footnotesize$\setperp_A(\classifier_A,\classifier_B)$};
\node[font=\tiny] at(1.15,0.75){\footnotesize$\setperp_B(\classifier_A,\classifier_B)$};

\fill [blue!60,nearly transparent]  (0.75,1.3) -- (0.75,1) -- (1,1) -- (1,1.3)-- cycle;

\fill [blue!60,nearly transparent]  (1.26, -3/4*1.26+7/4) --  (1,1) -- (1-0.25*0.6,1-0.25*0.8) --(1.1,-3.3/4+23/16)-- cycle;

\end{tikzpicture}
\caption{$\theta\geq 90^{\circ}$}  
\label{fig:obtuse}
\end{figure}
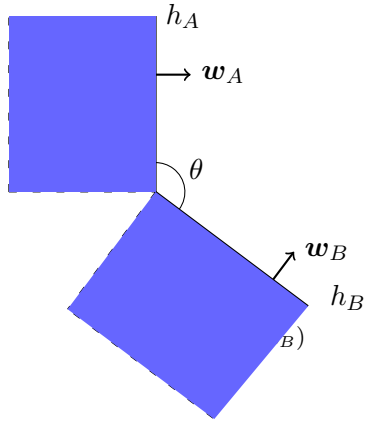

\end{proof}

\begin{proof}[Proof of \cref{thm:optimal investment}]
    We show that the described simultaneous mechanism achieves the upper bound of the value of program \ref{max qualified}. 
    First notice that any qualified agent gets selected with zero cost under the described mechanism.
    Second, we show that the described mechanism accepts every unqualified agent who can improve to another qualified attributes with cost less than one.
    Consider an unqualified agent with attributes $\features\not\in\classifier_A \cap \classifier_B$. 
    Suppose there exists attributes $\features'$ such that (1)  $\features'$ satisfy both tests $\classifier_A$ and $\classifier_B$; and (2)  such an agent can move to $\features'$ with cost $c(\features, \features', \features') \leq 1$.
    Then this unqualified agent has an incentive  to improve his attributes to $\features'$ and get accepted.  
    However, if instead such attributes $\features'$ do not exist and for any attributes $\features''$ in $\classifier_A \cap \classifier_B$, the cost for this agent with attributes $\features$ to improve to $\features''$ is strictly larger than one, then this unqualified agent has no incentive to move to the qualified region $\classifier_A \cap \classifier_B$ in any mechanism.
    Thus, the simultaneous mechanism that uses tests $\classifier_A$ and $\classifier_B$ accepts every agent who is either qualified or can improve their attributes to the qualified region with a cost less than one. 
    This is the upper bound of the value of program \ref{max qualified}.
    Hence the described simultaneous mechanism is optimal.
\end{proof}

\section{Omitted proof in \cref{sec: perfect tests}}\label{appendix: perfect tests}

\begin{proof}[Proof of \cref{thm: sequential perfect tests}]
     Consider the following fixed-order sequential mechanism with two tests.
     For any message $m\in \Featurespace$ sent by the agent, let $\firstfeatures(m)$ be the  first test and $\secondfeatures(m)$ be the second test.
      
     If $m \in \classifier_A^+\cap \classifier_B^+$, then $\firstfeatures(m)=\secondfeatures(m)=m$.
     If $m\not\in \classifier_A \cap \classifier_B$, then $\firstfeatures(m)=\secondfeatures(m)=\features$ for some $\features \in \classifier_A^+\cap \classifier_B^+$.
     Call  $\classifier_A^+\cap \classifier_B^+$ the immediate accepted region.
    If $m\in (\classifier_A \cap \classifier_B) \setminus (\classifier_A^+\cap \classifier_B^+)$, then $\firstfeatures(m)=m$. 
    

    
    Let  $\onecost(\firstfeatures, \classifier_A^+\cap \classifier_B^+) = \min_{\genericfeatures \in \classifier_A^+\cap \classifier_B^+} \eta \cdot \metric(\firstfeatures,\genericfeatures) $ denote the lowest cost from the first test $\firstfeatures$ to the immediate accepted region.
    Then, for any $m\in (\classifier_A \cap \classifier_B) \setminus (\classifier_A^+\cap \classifier_B^+)$, 
     (1) if $\onecost(m, \classifier_A^+\cap \classifier_B^+) \leq 1$, the second test
     $\secondfeatures(m)= \argmin_{\genericfeatures \in \classifier_A^+\cap \classifier_B^+} \eta \cdot \metric(\firstfeatures,\genericfeatures) $; (2) if $\onecost(m, \classifier_A^+\cap \classifier_B^+) > 1$,  the second test is some $\secondfeatures(m)\in \classifier_A \cap \classifier_B$ such that the  cost for the second test $\onecost(\firstfeatures(m),\secondfeatures(m))$ is one.

    \textbf{Step 1: feasibility.}  Consider any agent with attribute $\features \not \in \classifier_A \cap \classifier_B$.
    Then the minimum cost for such $\features$ to any type in $\classifier_A^+\cap \classifier_B^+$ is at least $1$.
    This implies that it is not profitable for such $\features$ to take any first test in $\classifier_A^+\cap \classifier_B^+$.
    Suppose this agent reports some type $m \in (\classifier_A \cap \classifier_B) \setminus (\classifier_A^+\cap \classifier_B^+)$.  If $\onecost(m, \classifier_A^+\cap \classifier_B^+) > 1$, then this agent has to pay a cost one in the second test to get accepted.  Since the cost for the first test is non-negative, the total cost exceeds one.
    If $\onecost(m, \classifier_A^+\cap \classifier_B^+) \leq 1$, then to get accepted, this agent must produce $\secondfeatures \in \classifier_A^+\cap \classifier_B^+$. By the triangle inequality, the total cost for this agent to get accepted is at least $\onecost(\features,\firstfeatures) + \onecost(\firstfeatures,\secondfeatures) \geq \onecost(\features, \secondfeatures) > 1$. Therefore, any unqualified agent has no profitable strategy to get accepted. 

    \textbf{Step 2: first best.} For any qualified agent, to get selected, he  truthfully reports his type.
    Any agent with  $\firstfeatures \in \classifier_A^+\cap \classifier_B^+$ gets accepted with zero cost.
    We only need to show that for  any agent with attributes $\features \in (\classifier_A \cap \classifier_B) \setminus (\classifier_A^+\cap \classifier_B^+)$, the cost to get accepted is no greater than one.
    This is true by the construction of the mechanism.
\end{proof}

\begin{proof}[Proof of \cref{thm: perfect tests arbitrary qualified region}]
     Let $\qualregion$ be the convex qualified region. We use $\partial \qualregion$ to denote the boundary of this region $\qualregion$. 
    Let $\qualified = \{\features \in \qualregion : \metric(\features, \partial \qualregion) \geq 1/\mc\}$ be the qualified attributes (points inside $\qualregion$) whose distance to the boundary $\partial \qualregion$ is at least $1/\mc$.
    For any $r > 0$, we use $\tilde \qualified(r) = \{\features: \metric(\features, \qualified) = r\}$ to be the set of all attributes with minimum distance $r$ to the region $\qualified$.

 We consider the following sequential mechanism with two tests. 
    
    \emph{First test.}
    Let $\firstfeatures$ be the attribute observed from the first test. 
    An agent with attribute $\firstfeatures$ passes the first test if and only if this attribute is contained in the qualified region $\firstfeatures 
    \in \qualregion$.
    The mechanism immediately rejects every agent with $\firstfeatures \not \in \qualregion$.
    Among the agents who passed the first test, the mechanism immediately accepts those agents with attribute $\firstfeatures \in \qualified$. Every agent with attribute $\firstfeatures \in \qualregion \setminus \qualified$ passes the first test and has to participate in the second test. 

    \emph{Second test.} 
    Let  $r_1 = \inf_{\tilde \qualregion} \metric(\firstfeatures,\genericfeatures)$ be the distance between the first revealed attributes $\firstfeatures$ and the immediately accepted region $\qualified$.
    Then, any agent with revealed attributes $\firstfeatures \in \qualregion \setminus \qualified$ in the first test is accepted if and only if:
    either (1) the second revealed attributes $\secondfeatures \in \tilde \qualified(r_1 - 1/\eta)$ for $r_1 > 1/\eta$, i.e., the second test exhausts the benefit of getting accepted when the agent's first revealed test is far away from the immediately accepted region $\qualified$, or (2) the second revealed attributes
    $\secondfeatures \in \qualified$ for $r_1 \leq 1/\eta$, i.e., the second test requires the agent to move to the immediately accepted region $\qualified$ when the agent's first revealed test is close to the immediately accepted region $\qualified$.

    \textbf{Step 1: We first show that this mechanism does not accept any unqualified agent.}  Consider any agent with attribute $\features \not \in \qualregion$.
    To pass the first test, this agent has to move to the attributes $\firstfeatures \in \qualregion$. 
    Since $\features \not \in \qualregion$, the minimum distance from $\features$ to the immediately qualified region $\qualified$ is at least $1/\eta$, i.e., $\inf_{\genericfeatures \in \qualified} \onecost(\features,\genericfeatures)  > 1$. 
    Thus, this agent has no incentive to produce $\firstfeatures \in \qualified$, which means this agent can not get accepted through only the first test.
    Suppose the agent moves to a attribute $\firstfeatures \in \qualregion \setminus \qualified$. Then, we know the first observed attributes $\firstfeatures \in \tilde \qualified(r_1)$ for $r_1= \inf_{\genericfeatures \in \qualified} \metric(\firstfeatures,\genericfeatures)  > 0$. If $r_1 > 1/\eta$, then this agent has to move to $\secondfeatures \in \tilde \qualified (r_1 - 1/\eta)$ to get accepted. Thus, the cost $\onecost(\firstfeatures,\secondfeatures) = \eta \cdot \metric(\firstfeatures,\secondfeatures) \geq 1$. This requires a total cost $\onecost(\features,\firstfeatures) + \onecost(\firstfeatures,\secondfeatures) > 1$, which is larger than the reward. If $r_1 \leq 1/\eta$, then to get accepted, this agent must produce $\secondfeatures \in \qualified$. By the triangle inequality, the total cost for this agent to get accepted is at least $\onecost(\features,\firstfeatures) + \onecost(\firstfeatures,\secondfeatures) \geq \onecost(\features, \secondfeatures) > 1$. Therefore, this agent has no incentive to manipulate the attributes to get accepted. 

    \textbf{Step 2: We now show that this mechanism accepts all qualified agents.} Every agent with attributes in $\qualified$ gets accepted by providing their original attributes $\features$ in the first test with no cost. Consider any agent with attributes $\features \in \qualregion \setminus \qualified$. Suppose $\features \in \tilde \qualified(r_1)$. Let $\features'$ be the closest point in $\qualified$ to this point $\features$. If $r_1 \leq 1/\eta$, then this agent can get accepted by providing $\firstfeatures = \features$ and $\secondfeatures = \features'$ in two tests sequentially. The total cost is $\onecost(\features, \features') \leq 1$. If $r_1 > 1/\eta$, then there exists a point $\features'' \in \tilde \qualregion(r_1 - 1/\eta)$ which lies on the line connecting $\features$ and $\features'$ due to the convexity of the region $\qualregion$. Thus, this agent will be accepted by producing $\firstfeatures = \features$ and $\secondfeatures = \features''$ with cost $1$. Therefore, under their best response, all qualified agents will be accepted. 

\end{proof}

\begin{proof}[Proof of \cref{thm: single-test mechanism perfect test}]
    
    
    \textbf{Part 1: sufficiency.}
    Suppose the qualified region $\qualregion$ is the convex hull of the union of balls with radius $1/\eta$. Consider a single-test mechanism that accepts an agent if and only if the first revealed attributes is in the immediately accepted region $\firstfeatures \in \qualified$.
    We show such a single-round test mechanism accepts all qualified agents and no unqualified agents.  
    
    \textbf{Step 1: this mechanism accepts every qualified agent.} 
    To do this, we only need to show that the distance between every attribute $\features$ in $\qualregion$ and the accepted region $\qualified$ is at most $1/\eta$, $\inf_{\genericfeatures \in \qualified} \metric(\features,\genericfeatures) \leq 1/\eta$. 
    Since every attribute in $\qualregion$ has a distance at most $1/\eta$ to $\qualified$, any qualified agent prefers to move to $\qualified$ and get accepted with cost at most $1$.  
    
    Let $\{\ballcenter_i\}$ be the centers of radius $1/\eta$ balls in the convex region $\qualregion$. 
    By assumption, we know that the qualified region $\qualregion = \conv(\bigcup_{i} B(\ballcenter_i, 1/\eta))$.
    We know that the distance between the center $\ballcenter_i$ and the boundary of $\qualregion$ is at least $d(\ballcenter_i,\partial \qualregion) \geq 1/\eta$. Thus, all ball centers $\ballcenter_i$ are contained in the accepted region $\qualified$. 
    
    Let $\conv(\{\ballcenter_i\})$ be the convex hull of these centers $\{\ballcenter_i\}$.

    We want to show that the convex hull $\conv(\{\ballcenter_i\})$ is contained in the accepted region $\qualified$.
    Consider any attribute $\features \in \conv(\{\ballcenter_i\})$.
    This attribute $\features$ can be written as a linear combination of two centers $\features = \lambda \ballcenter_i + (1-\lambda)\ballcenter_j$. Then, any attribute $\features' = \features + \boldsymbol{y}$ in the ball centered at $\features$ with radius $1/\eta$ can be written as the linear combination $\features' = \lambda(\ballcenter_i + \boldsymbol{y}) + (1-\lambda) (\ballcenter_j + \boldsymbol{y})$, where $\ballcenter_i + \boldsymbol{y}$ is in the ball with radius $1/\eta$ centered at $\ballcenter_i$. 
    We conclude that this attribute $\features'$ is contained in $\qualregion$, because the region $\qualregion$ is the convex hull of these balls centered at $\ballcenter_i$. Thus, the ball centered at $\features$ with radius $1/\eta$ is contained in the region $\qualregion$. 
    
    Then, this attribute $\features$ is also contained in the (immediately) accepted region $\qualified$. Hence, the convex hull $\conv(\{\ballcenter_i\})$ is contained in the accepted region $\qualified$. 
    
    Consider the union of balls centered at any attribute $\features \in \conv(\{\ballcenter_i\})$ with radius $1/\eta$, $\Sigma = \bigcup_{\features \in \conv(\{\ballcenter_i\})} B(\features, 1/\eta)$. This set $\Sigma$ is convex. The region $\qualregion$ is contained in the convex hull of this set $\Sigma$. Thus, we know that the region $\qualregion$ is contained in $\Sigma$, which means any attribute $\features \in \qualregion$ has a distance at most $1/\eta$ to the set $\conv(\{\ballcenter_i\})$. Since $\conv(\{\ballcenter_i\})$ is contained in $\qualified$, any attribute in $\qualregion$ has a distance at most $1/\eta$ to $\qualified$.
    
    \textbf{Step 2: we now show that this mechanism rejects every unqualified agent.} Note that $\qualified = \{\features \in \qualregion : \metric(\features, \partial \qualregion) \geq 1/\eta\}$. Every unqualified agent with attributes $\features \not \in \qualregion$ has a  minimum distance to $\qualified$ that is greater than $1/\eta$. This implies that the cost any unqualified agent to get accepted exceed the benefit, i.e., any unqualified agent has no incentive to move. 

    \paragraph{Part 2: necessity.} Suppose the qualified region $\qualregion$ is not the convex hull of a union of balls with radius $1/\eta$. We show that there does not exist a single-round test mechanism that accepts all qualified agents but no unqualified agents. 

    We prove this by contradiction. Suppose there is a single-round test mechanism that accepts all qualified agents but no unqualified agents. 
    First, this mechanism must make deterministic decisions based on observed attributes, i.e., it either accepts an agent with probability one or does not accept an agent with probability one. Suppose not, and this mechanism accepts an agent with a probability $p \in (0,1)$, then it will reject qualified agents or accept unqualified agents. 

    Since the qualified region $\qualregion$ is not the convex hull of the union of balls with radius $1/\eta$, we know there exist at least one attributes $\features \in \qualregion\setminus \qualified$ with distance to $\qualified$ greater than $1/\eta$.
    This is true because otherwise $\qualregion$ can be written as the union of balls with radius $1/\eta$ centered at attributes in $\qualified$, which is a contradiction. 
    
    Consider one such attribute $\features \in \qualregion \setminus \qualified$ with distance to $\qualified$ greater than $1/\eta$. 
    By assumption, the mechanism accepts qualified agent with this attribute $\features$.
    For this to happen, this mechanism must accept some attributes $\genericfeatures$ whose  distance to $\features$ is at most $1/\eta$, so that it is profitable for any agent with attribute $\features$ to move to $\genericfeatures$. 
    However, the distance from $x$ to $\qualified$ is greater than $1/\eta$ implies that $\genericfeatures$ is not in $\qualified$.
    This further implies that the ball with radius $1/\eta$ centered at $\genericfeatures$ also contains unqualified attributes. Otherwise, $\genericfeatures$ should be contained in $\qualified$.
    Thus, this mechanism will accept unqualified agents, which is a contradiction.

\end{proof}

\section{Omitted proof in \cref{subsubsec: general cost manipulation}}\label{appendix:general costs}
\begin{proof}[Proof of \cref{prop:optimal simultaneous manipulation general cost}]
Consider any simultaneous mechanism that uses two tests $\tilde \classifier_A$ and $\tilde \classifier_B$.
Denote the intersecting point of the boundary lines of $\tilde \classifier_A$ and $\tilde \classifier_B$ by $\tilde O$.

Since program \ref{max qualified} requires feasible simultaneous mechanism not to accept unqualified agent, we must have the set of attributes that satisfy both $\tilde \classifier_A$ and $\tilde \classifier_B$ is contained in the set of attributes that satisfy both $ \classifier_A$ and $\classifier_B$, i.e., $\tilde \classifier_A \cap \tilde \classifier_B \subset \classifier_A\cap \classifier_B$.

Denote the angle between $ \tilde\classifier_A$ and $\classifier_B$ by $\theta( \tilde\classifier_A,\classifier_B)$.
Denote the angle between $ \classifier_A$ and $\tilde\classifier_B$ by $\theta( \classifier_A,\tilde\classifier_B)$.

\paragraph{Step 1:} Any feasible simultaneous mechanism $(\tilde \classifier_A,\tilde \classifier_B)$ must satisfy $\theta( \tilde\classifier_A,\classifier_B)\leq\theta$ and $\theta( \classifier_A,\tilde\classifier_B)\leq \theta$.

Suppose $\theta( \tilde\classifier_A,\classifier_B)>\theta$. 
This implies that the boundary lines of $\tilde \classifier_A$ and $ \classifier_A$ intersect at some point that is in $ \classifier_A\cap \classifier_B$.
This further implies that $\tilde \classifier_A\cap \tilde \classifier_B$ includes some attributes that are outside $ \classifier_A\cap \classifier_B$.
The simultaneous mechanism $(\tilde \classifier_A,\tilde \classifier_B)$ cannot be feasible. A contradiction.
Analogously, we can show that $\theta( \classifier_A,\tilde\classifier_B)>\theta$ is not possible.

\paragraph{Step 2:} Any simultaneous mechanism $(\tilde \classifier_A,\tilde \classifier_B)$ such that $\theta( \tilde\classifier_A,\classifier_B)<\theta$ or $\theta( \tilde\classifier_A,\classifier_B)<\theta$ cannot be optimal.

Suppose $\theta( \tilde\classifier_A,\classifier_B)<\theta$.
Let $\classifier_A^+$ be the half plane obtained by shifting $\classifier_A$ along $\weights_A$ such that $\tilde O$ falls on the boundary line of  $\classifier_A^+$.

\begin{claim}\label{claim 1}
    The simultaneous mechanism $(\classifier_A^+,\tilde \classifier_B)$ is feasible.
\end{claim}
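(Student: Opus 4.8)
I would prove Claim~\ref{claim 1} by first reformulating feasibility as a set containment for a Minkowski expansion, and then splitting the verification into an ``$A$-side'' handled by the stringency built into $\classifier_A^+$ and a ``$B$-side'' handled by the angle hypothesis $\theta(\tilde\classifier_A,\classifier_B)<\theta$.

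\emph{Reformulation.} The simultaneous mechanism $(\classifier_A^+,\tilde\classifier_B)$ selects an agent with initial attributes $\orifeatures$ exactly when $\inf_{\features\in\classifier_A^+\cap\tilde\classifier_B}\onecost(\orifeatures,\features)\le 1$ (the benefit of selection being $1$; the infimum is attained by \cref{def: existence of minimum cost}). So feasibility is equivalent to $E\subseteq\classifier_A\cap\classifier_B$, where $E:=\{z:\inf_{\features\in\classifier_A^+\cap\tilde\classifier_B}\onecost(z,\features)\le1\}$. By \cref{def: translation invariant} and \cref{def: absolute homogenous c}, $\onecost(z,\features)=g(\features-z)$ with $g(v):=\onecost(0,v)$ absolutely homogeneous, hence $g(-v)=g(v)$, so $E=(\classifier_A^+\cap\tilde\classifier_B)\oplus B$ with $B:=\{v:g(v)\le1\}$ compact, convex and origin-symmetric. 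Two facts: (i) $E$ has the same recession cone as the wedge $\classifier_A^+\cap\tilde\classifier_B$, namely $\{v:\weights_A\cdot v\ge0,\ \tilde\weights_B\cdot v\ge0\}$, where $\weights_A$ is the inward normal of $\classifier_A^+$ (equal to that of $\classifier_A$) and $\tilde\weights_B$ that of $\tilde\classifier_B$; and (ii) for a half plane $H=\{x:\weights\cdot x\ge c\}$, $H\oplus B=\{x:\weights\cdot x\ge c-\sigma_B(\weights)\}$ with $\sigma_B(\weights)=\sup_{v\in B}\weights\cdot v$.

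\emph{The $A$-side.} Since $\tilde O$ lies on the boundaries of $\tilde\classifier_A$ and $\tilde\classifier_B$, it lies in the closed wedge $\tilde\classifier_A\cap\tilde\classifier_B$, so $\tilde O+B\subseteq(\tilde\classifier_A\cap\tilde\classifier_B)\oplus B\subseteq\classifier_A\cap\classifier_B$ by feasibility of $(\tilde\classifier_A,\tilde\classifier_B)$. Writing $\classifier_A=\{x:\weights_A\cdot x\ge c_A\}$ and using $\inf_{v\in B}\weights_A\cdot v=-\sigma_B(\weights_A)$, the inclusion $\tilde O+B\subseteq\classifier_A$ gives $\weights_A\cdot\tilde O-\sigma_B(\weights_A)\ge c_A$. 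As $\classifier_A^+=\{x:\weights_A\cdot x\ge\weights_A\cdot\tilde O\}$, fact (ii) yields $\classifier_A^+\oplus B\subseteq\classifier_A$, hence $E=(\classifier_A^+\cap\tilde\classifier_B)\oplus B\subseteq\classifier_A^+\oplus B\subseteq\classifier_A$.

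\emph{The $B$-side (the crux).} By the same argument $\tilde O+B\subseteq\classifier_B=\{x:\weights_B\cdot x\ge c_B\}$, so $\weights_B\cdot\tilde O-\sigma_B(\weights_B)\ge c_B$. I claim $\weights_B\in\mathrm{cone}(\weights_A,\tilde\weights_B)$, i.e.\ the recession cone of $E$ lies in $\{v:\weights_B\cdot v\ge0\}$. Granting this, $\inf_{w\in\classifier_A^+\cap\tilde\classifier_B}\weights_B\cdot w$ is attained at the apex $\tilde O$, so $\inf_{z\in E}\weights_B\cdot z=\weights_B\cdot\tilde O-\sigma_B(\weights_B)\ge c_B$, giving $E\subseteq\classifier_B$; with the $A$-side this is $E\subseteq\classifier_A\cap\classifier_B$, proving feasibility. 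To get $\weights_B\in\mathrm{cone}(\weights_A,\tilde\weights_B)$: from $\tilde\classifier_A\cap\tilde\classifier_B\subseteq\classifier_A\cap\classifier_B$ the recession cone of the former lies in that of the latter, which forces $\weights_A,\weights_B\in\mathrm{cone}(\tilde\weights_A,\tilde\weights_B)$; the hypothesis $\theta(\tilde\classifier_A,\classifier_B)<\theta$ becomes a strict inequality relating $\angle(\tilde\weights_A,\weights_B)$ to $\angle(\weights_A,\weights_B)$, and Step~1's bound $\theta(\classifier_A,\tilde\classifier_B)\le\theta$ becomes $\angle(\weights_A,\tilde\weights_B)\ge\angle(\weights_A,\weights_B)$. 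Placing the four normals on the unit circle, these relations rule out every cyclic order except $\tilde\weights_A,\weights_A,\weights_B,\tilde\weights_B$, for which $\weights_B\in\mathrm{cone}(\weights_A,\tilde\weights_B)$ is immediate.

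The main obstacle is precisely this recession-cone step: showing that swapping the over-slanted test $\tilde\classifier_A$ for the parallel test $\classifier_A^+$ cannot make the acceptance wedge ``open past'' $\classifier_B$. The work there is careful angle bookkeeping — fixing orientation conventions for the inward normals, translating each ``$\theta(\cdot,\cdot)$'' into angle relations among the $\weights$'s, and enumerating admissible cyclic orderings — plus dispatching degeneracies: $\tilde O\in\partial\classifier_A$ (then $\classifier_A^+=\classifier_A$ and the $A$-side is trivial), $\partial\classifier_A^+\parallel\partial\tilde\classifier_B$ (then $\classifier_A^+\cap\tilde\classifier_B$ is a slab, handled directly), and the case of a trivial or empty wedge $\tilde\classifier_A\cap\tilde\classifier_B$ (excluded since $(\tilde\classifier_A,\tilde\classifier_B)$ is feasible and non-degenerate). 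Everything else is routine manipulation of support functions.
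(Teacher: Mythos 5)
Your overall architecture is genuinely different from the paper's: the paper argues by contradiction, taking an unqualified accepted type, noting that its cheapest target in $\classifier_A^+\cap\tilde\classifier_B$ must lie on one of the two boundary lines, and then translating that witness along the boundary (via translation invariance) into an unqualified type accepted by the original mechanism $(\tilde\classifier_A,\tilde\classifier_B)$; you instead write the accepted set as a Minkowski sum $(\classifier_A^+\cap\tilde\classifier_B)\oplus B$ (with $B$ the unit cost ball) and verify the two half-plane containments by support functions. Your reformulation, your $A$-side argument, and your reduction of the $B$-side to the single statement $\weights_B\in\mathrm{cone}(\weights_A,\tilde\weights_B)$ are all correct, and cleaner than the paper's corresponding steps.

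The gap is exactly at the step you yourself flag as the crux. The relations you list --- $\weights_A,\weights_B\in\mathrm{cone}(\tilde\weights_A,\tilde\weights_B)$ together with $\theta(\tilde\classifier_A,\classifier_B)<\theta$ and $\theta(\classifier_A,\tilde\classifier_B)\le\theta$, read as interior angles of the corresponding wedges --- do not rule out the crossed cyclic order $\tilde\weights_A,\weights_B,\weights_A,\tilde\weights_B$. Concretely, place the inward normals at $\tilde\weights_A=-30^{\circ}$, $\weights_B=0^{\circ}$, $\weights_A=10^{\circ}$, $\tilde\weights_B=30^{\circ}$: then $\mathrm{cone}(\weights_A,\weights_B)\subset\mathrm{cone}(\tilde\weights_A,\tilde\weights_B)$, the wedge $\tilde\classifier_A\cap\tilde\classifier_B$ can be placed deep enough inside $\classifier_A\cap\classifier_B$ that $(\tilde\classifier_A,\tilde\classifier_B)$ is feasible, and $\theta(\tilde\classifier_A,\classifier_B)=150^{\circ}$, $\theta(\classifier_A,\tilde\classifier_B)=160^{\circ}$ are both below $\theta=170^{\circ}$; yet $\weights_B\notin\mathrm{cone}(\weights_A,\tilde\weights_B)$, the recession cone of $\classifier_A^+\cap\tilde\classifier_B$ contains directions $v$ with $\weights_B\cdot v<0$, and $(\classifier_A^+,\tilde\classifier_B)$ therefore accepts unqualified points at zero cost. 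So either this crossed configuration must be excluded by a sharper (currently unstated) reading of ``$\theta(\tilde\classifier_A,\classifier_B)<\theta$'', or a further argument is required; the enumeration you describe does not close it. For what it is worth, the paper's own proof has the same blind spot: its case split assumes the cost-minimizing target of an unqualified accepted type lies on a boundary line of $\classifier_A^+\cap\tilde\classifier_B$, which already presupposes $\classifier_A^+\cap\tilde\classifier_B\subseteq\classifier_A\cap\classifier_B$ --- precisely what fails in the crossed configuration. If you can justify the cyclic-order claim (for instance by pinning down the intended orientation convention behind $\theta(\cdot,\cdot)$), your proof goes through and is arguably tighter than the paper's.
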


\begin{proof}[Proof of \cref{claim 1}]
    Suppose not. Then there exists some unqualified attributes $\features$ that are accepted by this simultaneous mechanism $(\classifier_A^+,\tilde \classifier_B)$.
    Then there must exist some attributes $\features'$ in the set of attributes that satisfy both tests, i.e., $\features'\in\classifier_A^+\cap\tilde \classifier_B$, such that $\onecost(\features,\features')\leq 1$.
    Moreover, such attributes $\features'$ are either on the boundary line of $\tilde\classifier_B$ or on the boundary line of $\classifier_A^+$.
    Suppose the attributes $\features'$ are not on the boundary line of  $\classifier_A^+$ ($\tilde\classifier_B$). 
    Then, the line segment between $\features$ and $\features'$ intersects the boundary line of $\classifier_A^+$ ($\tilde\classifier_B$) at some attributes $\hat \features$. 
    Since the cost function is absolute homogeneous, we have  
    $
        \onecost(\features,\features') 
        > \onecost(\features,\hat\features) 
    $.
    This implies that the agent will be better off choosing $\hat\features$, which gives a contradiction.
    
    Next, we analyze each case separately.

    \textbf{Case 1}:  $\features'$ are on the boundary line of $\tilde\classifier_B$. 
    This immediately implies that the unqualified attributes $\features$ are accepted in the simultaneous mechanism $(\tilde \classifier_A,\tilde \classifier_B)$.
    A contradiction.

\textbf{Case 2}:  $\features'$ are on the boundary line of $\classifier_A^+$.
Define the vector$ v$ as $v=\features'-\tilde O$.
Then $v$ is a vector parallel to the boundary line of $\classifier_A$ and $\classifier_A^+$ and it points towards $\tilde O$.
This implies that $\features+v$ are also unqualified.
However, since the cost is translation invariant, we can infer that $\onecost(\features+v,\tilde O)=\onecost(\features,\features')\leq 1$.
In other words, the unqualified attributes $\features+v$ are accepted in the simultaneous mechanism $(\tilde \classifier_A,\tilde \classifier_B)$.
This again contradicts that the simultaneous mechanism $(\tilde \classifier_A,\tilde \classifier_B)$ is feasible.
    
\end{proof}


By construction, $\tilde \classifier_A\cap\tilde \classifier_B\subset \classifier_A^+\cap\tilde \classifier_B$.
Hence the any attributes that can move to some attributes in $\tilde\classifier_A\cap\tilde \classifier_B$ with a cost no greater than one  can also move to some attributes in $\classifier_A^+\cap\tilde \classifier_B$ with a cost no greater than one.
Moreover, 
since the cost function is absolute homogeneous, the set of attributes that can move to $\classifier_A^+\cap\tilde \classifier_B$ with a cost no greater than one is strictly larger when $\tilde \classifier_A\cap\tilde \classifier_B$ is a strict subset of $\classifier_A^+\cap\tilde \classifier_B$.
This implies that the simultaneous mechanism $(\classifier_A^+,\tilde \classifier_B)$ is better than $(\tilde \classifier_A,\tilde \classifier_B)$.

Suppose $\theta( \classifier_A,\tilde\classifier_B)<\theta$.
Let $\classifier_B^+$ be the half plane obtained by shifting $\classifier_B$ along $\weights_B$ such that $\tilde O$ falls on the boundary line of  $\classifier_B^+$.
We can use analogous argument to show that the simultaneous mechanism $(\classifier_A^+, \classifier_B^+)$ is feasible and better than $( \classifier_A^+,\tilde \classifier_B)$.

\end{proof}


\begin{proof}[Proof of \cref{lem:fix-simul}]
We consider the fixed order mechanism $(\classifier_A^{+},\classifier_B^{+},1)$ that first uses test $\classifier_A^{+}$ and then uses test $\classifier_B^{+}$. 
Another fixed order mechanism can be analyzed similarly.

\paragraph{Step 1:}
   We first show that these two fixed-order mechanisms are feasible for~\ref{max qualified}, which means they will not accept unqualified agents. 
    
    Suppose not. Then there exists an unqualified agent with attributes $\features$ that are accepted by this fixed-order mechanism. 
    Let $\firstfeatures, \secondfeatures$ be the attributes provided by this agent in the first test and the second test respectively.
    Then, the manipulation cost of this agent is $c(\features, \firstfeatures, \secondfeatures) \leq 1$.
    
    \textbf{Case 1:} Suppose the attributes $\features$ does not satisfy $\classifier_A$.  
    
    If the attributes $\firstfeatures$ also satisfy $ \classifier_B^+$, then this unqualified agent can be accepted by the simultaneous mechanism with tests $\classifier_A^{+},\classifier_B^{+}$. 
    A contradiction.
    
    If this attributes $\firstfeatures$ do not satisfy $\classifier_B^+$, then we can find a vector $\genericfeatures$ parallel to the boundary lines of $\classifier_A$ and $\classifier_A^+$ such that $\firstfeatures + \genericfeatures$ satisfy both $ \classifier_A^+$ and $\classifier_B^+$. 
    This implies that $\firstfeatures + \genericfeatures$ are accepted by the simultaneous mechanism $(\classifier_A^{+},\classifier_B^{+})$.
    Since the simultaneous mechanism $(\classifier_A^{+},\classifier_B^{+})$ is feasible, $\firstfeatures + \genericfeatures$ are qualified.
    Since $\features$ do not satisfy $\classifier_A$ and the vector $\genericfeatures$ is parallel to the boundary line of $\classifier_A$, we can infer that the shifted attributes $\features + \genericfeatures$ also do not satisfy $\classifier_A$, i.e.,  $\features + \genericfeatures$ are not qualified. 
    Since the cost function is translation invariant and is monotone, we have 
    $$
    \onecost(\features+\genericfeatures, \firstfeatures+\genericfeatures) = \onecost(\features, \firstfeatures) \leq c(\features, \firstfeatures, \secondfeatures) \leq 1.
    $$
    Thus, the unqualified agent with attributes $\features+\genericfeatures$ will be accepted by the simultaneous mechanism, which gives a contradiction. 

    \textbf{Case 2:} Suppose the attributes $\features$ does not satisfy $\classifier_B$. 

    If the attributes $\secondfeatures$ also satisfy $\classifier_A^+$, by triangle inequality, $\onecost(\features,\secondfeatures) \leq \cost(\features,\firstfeatures,\secondfeatures) \leq 1$.
    This implies that the attributes $\features$ can get accepted by the simultaneous mechanism with tests $\classifier_A^{+},\classifier_B^{+}$. 
    A contradiction.
    
    If this attributes $\secondfeatures$ do not satisfy $\classifier_A^+$, then we can find a vector $\genericfeatures$ parallel to the boundary lines of $\classifier_B$ and $\classifier_B^+$ such that $\secondfeatures + \genericfeatures$ satisfy both $\classifier_A^+$ and $ \classifier_B^+$. 
This implies that $\secondfeatures + \genericfeatures$ are accepted by the simultaneous mechanism $(\classifier_A^{+},\classifier_B^{+})$.
    Since the simultaneous mechanism $(\classifier_A^{+},\classifier_B^{+})$ is feasible, $\secondfeatures + \genericfeatures$ are qualified.
    Since $\features$ do not satisfy $\classifier_B$ and the vector $\genericfeatures$ is parallel to the boundary line of $\classifier_B$, we can infer that the shifted attributes $\features + \genericfeatures$ also do not satisfy $\classifier_B$, i.e.,  $\features + \genericfeatures$ are not qualified. 
    Since the cost function is translation invariant and satisfies triangle inequality, we have 
    $$
    \onecost(\features+\genericfeatures, \secondfeatures+\genericfeatures) = \onecost(\features, \secondfeatures) \leq c(\features, \firstfeatures, \secondfeatures) \leq 1.
    $$
    Thus, the unqualified agent with attributes $\features+\genericfeatures$ will be accepted by the simultaneous mechanism, which gives a contradiction.

\paragraph{Step 2:}
    These two fixed-order mechanisms are not worse than the optimal simultaneous mechanism. 
    This is because for any agent accepted by the simultaneous mechanism, the agent can provide the attributes in $\classifier_A^{+}\cap\classifier_B^{+}$ to get accepted in the fixed-order mechanisms.
    That is, any qualified agent accepted by the simultaneous mechanism is also accepted by these two fixed-order mechanisms.
\end{proof}


\begin{proof}[Proof of \cref{lem:stringent tests}]
We prove the contra-positive of the statement.
    Consider any mechanism that uses tests $\tilde\classifier_A$ and $\tilde\classifier_B$ such that $\tilde\classifier_A\cup\tilde\classifier_B\subsetneq \classifier_A\cup\classifier_B$ is not satisfied, which has three possibilities:
    \begin{enumerate}
        \item the set $\tilde \classifier_A\cap \tilde\classifier_B$ covers the qualified region $\classifier_A\cap\classifier_B$, i.e., $\classifier_A\cap\classifier_B \subset\tilde \classifier_A\cap \tilde\classifier_B  $ (\textbf{case 1}); 
        \item the set $\tilde \classifier_A\cap \tilde\classifier_B$ and the qualified region $\classifier_A\cap\classifier_B$ have no inclusion relation, i.e.,$\classifier_A\cap\classifier_B \not\subset\tilde \classifier_A\cap \tilde\classifier_B  $ and  $\tilde \classifier_A\cap \tilde\classifier_B  \not\subset \classifier_A\cap\classifier_B $(\textbf{case 2}).
        \item the set $\tilde \classifier_A\cap \tilde\classifier_B$ and the qualified region $\classifier_A\cap\classifier_B$ coincide, i.e., $\tilde \classifier_A\cap \tilde\classifier_B = \classifier_A\cap\classifier_B$  (\textbf{case 3});   
    \end{enumerate}

    We want to show that in any of the three cases, some unqualified attributes are selected and hence none of these mechanisms is feasible.

    Case 1 implies that the intersection of the set
    $\tilde \classifier_A\cap \tilde\classifier_B$ and the set $ \classifier_A^\compl $ is not empty.
     Notice that any attributes in $\tilde \classifier_A\cap \tilde\classifier_B$ are selected under any mechanism that uses these two tests.
    Hence there exists some unqualified attributes in the intersection, i.e.,$\features \in \tilde \classifier_A\cap \tilde\classifier_B  \cap\classifier_A^\compl$, such that $\features$ are selected automatically under the mechanism that uses tests $\tilde \classifier_A, \tilde\classifier_B$.
    This shows that any mechanism that uses tests in case 1 is not feasible.

    Case 2 implies that the intersection of the set
    $\tilde \classifier_A\cap \tilde\classifier_B$ and the set $ \classifier_i^\compl $ is not empty for at least one $i\in \{A,B\}$.
    Hence there exists some unqualified attributes in the intersection, i.e.,$\features \in \tilde \classifier_A\cap \tilde\classifier_B  \cap \classifier_i^\compl$, such that $\features$ are selected automatically under the mechanism that uses tests $\tilde \classifier_A, \tilde\classifier_B$.
    This shows that any mechanism that uses tests in case 1 is not feasible.

    Lastly, we want to show case 3 is not feasible. 
    
    By \cref{def: existence of minimum cost}, 
    there exists some attributes that do not satisfy $\classifier_A$, i.e., $\firstfeatures\in \classifier_A^\compl$ such that the minimum cost from changing such attributes to another that passes $\classifier_A$ is less than one, i.e.,  $\min_{\genericfeatures\in \classifier_A}\cost(\firstfeatures,\genericfeatures)\leq 1$.
     This means that there exists some attributes $\genericfeatures^1$ in $\classifier_A$  such that the minimum cost from changing $\firstfeatures$ to  pass $\classifier_A$ is achieved, i.e., $\cost(\firstfeatures,\genericfeatures^1)\leq 1$. 
     
     Suppose $\genericfeatures^1$ is also in $\classifier_B$, then we have found unqualified attributes $\firstfeatures$ that could move to  $\tilde\classifier_A\cap \tilde\classifier_B$ with cost no greater than one.
     That is, such attributes have at least one profitable strategy to get selected.
     Such a mechanism is not feasible. 

Suppose $\genericfeatures^1$ is not in $\classifier_B$.
Then we can find some non-zero vector $\genericfeatures^\perp$ such that (1) $\genericfeatures^\perp\cdot \weights_A=0$, i.e., $\genericfeatures^\perp$ is parallel to line $\Line_A: \weights_A\cdot \features =0$, and (2) the shifted attributes $\genericfeatures^1+\genericfeatures^\perp$ also satisfy $\classifier_B$.
    Since the cost function is translation invariant, we must have $\cost(\firstfeatures+\genericfeatures^\perp,\genericfeatures^1+\genericfeatures^\perp)\leq 1$.
    Hence we have found unqualified attributes $\firstfeatures+\genericfeatures^\perp$ that could move to  $\tilde\classifier_A\cap \tilde\classifier_B$ with cost no greater than one.
     That is, such attributes have at least one profitable strategy to get selected.
     Such a mechanism is not feasible.
     
     We conclude that any mechanism that uses tests in case 3 is not feasible. 
    
\end{proof}

Now we prove the main theorem.
\begin{proof}[Proof of \cref{thm:opt_manipulation}]

    \cref{prop:optimal simultaneous manipulation general cost} and  \cref{lem:fix-simul} together imply that there exists a feasible fixed-order mechanism that is no worse than the optimal simultaneous mechanism. Hence the optimal mechanism must be a sequential one.
    \cref{lem:stringent tests} shows that the optimal mechanism must use stringent tests.
\end{proof}

\section{Omitted proof in \cref{subsubsec:investment}}\label{appendix: seq general cost}

\begin{proof}[Proof of \cref{lem: optimal max qualified improving effort}]
    Since this mechanism $(\tilde \classifier_A, \tilde \classifier_B,q,\nullset)$ is feasible, i.e., no unqualified agent is selected, we can infer that the region that satisfies both tests $\tilde \classifier_A$ and $\tilde \classifier_B$ is contained in the qualified region $\classifier_A \cap \classifier_B$, i.e., $\tilde \classifier_A\cup\tilde \classifier_B\subset\classifier_A \cap \classifier_B$. If not, then there is an unqualified agent with attributes $\features \in \tilde\classifier_A \cap \tilde\classifier_B \setminus (\classifier_A \cap \classifier_B)$ and this agent can get accepted by this mechanism without changing his attributes. A contradiction.
    
    Let $O$ be the intersection point of the boundary lines of $\classifier_A$ and $\classifier_B$.
    Let $\tilde O$ be the intersection point of  the boundary lines of $\tilde \classifier_A$ and $ \tilde \classifier_B$. 
    We use $\classifier_A'$ and $\classifier_B'$ to denote the tests shifted from $\tilde\classifier_A$ and $\tilde\classifier_B$ by $O - \tilde O$, respectively. 
    Thus, the boundary lines of the two tests $\classifier_A'$ and $\classifier_B'$ also intersect at point $O$. 
    
    We show that the uninformed random order mechanism with two tests $\classifier_A'$ and $\classifier_B'$ and probability $q$ is also feasible, i.e., no qualified agent is selected.
    It is equivalent to show that any agent being accepted by this new mechanism has qualified attributes, i.e., attributes in the qualified region $\classifier_A \cap \classifier_B$.
    Since the cost is translation invariant, it must be that no agent prefers any two-step strategy in this new mechanism $(\classifier_A',\classifier_B',\probprincipal,\nullset)$. Otherwise, we can find an agent who prefers a two-step strategy in the initial uninformed random order mechanism with tests $\tilde\classifier_A$ and $\tilde \classifier_B$ and probability $q$.
    Note that the region that satisfies both tests $\classifier_A'$ and $\classifier_B'$ is also contained in $\classifier_A \cap \classifier_B$, i.e., $\classifier_A' \cap \classifier_B'\subset \classifier_A \cap \classifier_B$. 
    Combining with the fact that agent only uses one-step strategy in the new mechanism, we can conclude that whoever gets accepted by the new mechanism must eventually have attributes in the qualified region.
    Therefore, this new uninformed random order mechanism only accepts qualified agents after investment. 
    
    We now show that the uninformed random order mechanism with two tests $\classifier_A$ and $\classifier_B$ and probability $q$ is also feasible. To prove this, we show that there is no agent prefers any two-step strategy in this mechanism $(\classifier_A,\classifier_B,q,\nullset)$. Suppose there is an agent with attributes $\features$ who prefers a two-step strategy $(\features, \firstfeatures, \secondfeatures)$. Let $\features'$ be the attributes in $\classifier_A \cap \classifier_B$ that minimizes the cost $c(\features,\features',\features')$. Then, we have the expected utility of this two-step strategy is at least the utility of improving to $\features'$. 
    Next, we show that this agent can get accepted with the same probability by taking this two-step strategy in the uninformed random order mechanism with tests $\classifier_A',\classifier_B'$ and randomization probability $q$, i.e., $(\classifier_A',\classifier_B',\probprincipal,\nullset)$. 
    Without loss of generality, we consider that $\firstfeatures$ satisfies $\classifier_A$ but not $\classifier_B$ and $\secondfeatures$ satisfies $\classifier_B$ but not $\classifier_A$. Since $\classifier_A'$ goes through $O$ and $\classifier_A' \cap \classifier_B'$ is contained in $\classifier_A \cap \classifier_B$, we have $\classifier_A \setminus \classifier_B \subseteq \classifier_A' \setminus \classifier_B$. Thus, we can infer that $\firstfeatures$ also satisfies $\classifier_A'$. Similarly, we have that $\secondfeatures$ satisfies $\classifier_B'$. Thus, this agent can also get accepted with this two-step strategy in the uninformed random order mechanism with tests $\classifier_A'$ and $\classifier_B'$. 
    Let $\features''$ be the attributes in $\classifier_A' \cap \classifier_B'$ that minimizes the cost $c(\features,\features'',\features'')$.
    Since $\classifier_A' \cap \classifier_B'$ is contained in $\classifier_A \cap \classifier_B$, we have $c(\features,\feature'',\features'') \geq c(\features, \features',\features')$. Therefore, this agent would prefer the two-step strategy in the uninformed random order mechanism with $\classifier_A'$ and $\classifier_B'$, which provides a contradiction.

    Since there is no agent prefers a two-step strategy in the uninformed random order mechanism with two tests $\classifier_A,\classifier_B$ and probability $q$, every agent has the same best response under $(\classifier_A,  \classifier_B,q,\nullset)$  as in the optimal simultaneous mechanism $(\classifier_A,  \classifier_B)$.
    We conclude that 
    this uninformed random order mechanism accepts the same set of agent as in the optimal simultaneous mechanism. 
    Hence it implements the optimal simultaneous mechanism. 
\end{proof}


\begin{proof}[Proof of \cref{thm:optimal sequential metric investement}]
    

    Consider any fixed half plane $\classifier_A$. Suppose there exists a half plane $\classifier_B$ such that in the  mechanism $(\classifier_A,\classifier_B,q,\nullset)$, \cref{condition:one-step} holds. 
    Let $\theta \in (0, 180^{\circ})$ be the angle from $\classifier_A$ to $\classifier_B$.

    \paragraph{Step 1}
    We first show that for any half plane $\tilde \classifier_B$ that has an angle $\tilde \theta > \theta$ from $\classifier_A$,
    \cref{condition:one-step} holds in the  mechanism $(\classifier_A,\tilde\classifier_B,q,\nullset)$.
    
    Since the angle $\tilde \theta > \theta$, we have (1) the set of attributes that satisfies $\tilde \classifier_B$ but not $\classifier_A$, i.e., $\tilde \classifier_B \setminus \classifier_A$  is contained in the region $\classifier_B \setminus \classifier_A$; (2) the set of attributes that satisfies $\classifier_A$ but not $\tilde \classifier_B$, i.e., $ \classifier_A \setminus \tilde\classifier_B$  is contained in the region $\classifier_A \setminus \classifier_B$; and (3) the qualified region $\classifier_A \cap \classifier_B$ is contained in $\classifier_A \cap \tilde \classifier_B$. 
    
    \textbf{Case 1.} Consider any agent with attributes $\features \in \tilde \classifier_B \setminus \classifier_A$.
    First, the least costly one-step strategy is the same in $(\classifier_A,\tilde\classifier_B,q,\nullset)$ and in $(\classifier_A,\classifier_B,q,\nullset)$.
    Second, the least costly two-step strategy to first wait and then change to other attributes to pass $\classifier_A$ is also the same in $(\classifier_A,\tilde\classifier_B,q,\nullset)$ and in $(\classifier_A,\classifier_B,q,\nullset)$.
    These together imply that if such an agent 
    prefers a two-step strategy in the uninformed random order mechanism $(\classifier_A,\tilde\classifier_B,q,\nullset)$, then this agent also prefers the two-step strategy in the uninformed random order mechanism $(\classifier_A,\classifier_B,q,\nullset)$. 
    Hence, this agent with attributes $\features$ will not prefer a two-step strategy in the uninformed random order mechanism $(\classifier_A,\tilde\classifier_B,q,\nullset)$.
    
    \textbf{Case 2.} Consider any agent with attributes in the region $\classifier_A \setminus \tilde \classifier_B$. Since such attributes are also in $\classifier_A \setminus \classifier_B$, we know from the assumption that such attributes do not prefer a two-step strategy.
    
    \textbf{Case 3.} Consider any agent with attributes $\features$ that  satisfy neither $\classifier_A$ nor $\tilde \classifier_B$. 
    Let $(\features,\firstfeatures,\secondfeatures)$ be the least costly two-step strategy that first satisfies the test $\tilde \classifier_B$.
    Then, we have $\firstfeatures$ satisfy only test $\tilde \classifier_B$ and $\secondfeatures$ satisfy only test $\classifier_A$.
    
    Consider an agent with attributes $\firstfeatures$.
    He has one available two-step strategy, which is to wait in the first test and then change to $\secondfeatures$ in the second test.
    Such a two-step strategy yields him an expected payoff $1-q - \onecost(\firstfeatures,\secondfeatures)$.
    Let $(\firstfeatures, \features')$ be the least costly one-step strategy for the agent with attributes $\firstfeatures$. 
    Since $\firstfeatures$ satisfy $\tilde \classifier_B$ but not $\classifier_A$, we have shown that this agent with attributes $\firstfeatures$ prefers a one-step strategy. 
    This implies that $1- q - \onecost(\firstfeatures,\secondfeatures) \leq 1 - \onecost(\firstfeatures, \features')$. 
    Therefore, we have the expected utility of the two-step strategy $(\features, \firstfeatures, \secondfeatures)$ is at most
    $$
    1-q - c(\features, \firstfeatures, \secondfeatures) = 1-q - \onecost(\features, \firstfeatures) - \onecost(\firstfeatures, \secondfeatures) \leq 1- \onecost(\features, \firstfeatures) - \onecost(\firstfeatures,\features') \leq 1 - \onecost(\features,\features'),
    $$
    where the first equality is because the cost is additive and the last inequality is from the triangle inequality of the cost function. 
    Similarly, the agent with attributes $\features$ also does not prefer any two-step strategy that first satisfies $\classifier_A$.
    Thus, the agent with attributes $\features$ prefers some one-step strategy, i.e., \cref{condition:one-step} holds in the mechanism $(\classifier_A,\tilde\classifier_B,q,\nullset)$.

\paragraph{Step 2} 
   Next we show that for any fixed half plane $\classifier_A$, there is an angle $\theta^*(\classifier_A) \in (0,180^{\circ})$ such that for any half plane $\classifier_B$, if the angle between $\classifier_A$ and  $\classifier_B$ is greater than $\theta^*(\classifier_A)$, then for any $q\in [0,1]$, the mechanism $(\classifier_A,\classifier_B,q,\nullset)$, \cref{condition:one-step} holds.
   The proof is constructive.
   We first construct a half plane $\tilde \classifier_B$ (Step 2a).
   Second, we show that
   for the uninformed random order mechanism with tests $\classifier_A$ and $\tilde \classifier_B$, the best one-step strategy is less costly than the best two-step strategy that first satisfies $\tilde \classifier_B$ and then $\classifier_A$ (Step 2b). 
Let $\tilde O$ be the intersection point of the boundaries of $\classifier_A$ and $\tilde \classifier_B$.
Third, we then show that this property  is preserved when we rotate the half plane $\tilde\classifier_B$ around $\tilde O$ to increase the angle between $\classifier_A$ and $\tilde\classifier_B$ (Step 2c). 
Fourth, we show that after the rotation, there exists a half plane $\classifier_B$ that satisfies that the best one-step strategy is less costly than the best two-step strategy that first satisfies $\classifier_A$ (Step 2d). 
   
    \paragraph{Step 2a}
    Consider any attributes $\features$ that do not satisfy $\classifier_A$. Let $\features'$ be the attributes that minimize the cost $\onecost(\features,\features')$ among all attributes satisfying $\classifier_A$.
    Let $v$ be the vector orthogonal to the vector $\features' - \features$, i.e. $v \cdot (\features' - \features) = 0$.
    We pick the direction (of $v$) so that the clockwise angle from the normal vector $\weights_A$ of $\classifier_A$ to $v$ is in $(0, 180^{\circ})$ (See \cref{fig: general cost investment optimal sequential case 1}).\footnote{In this way, the angle between $\classifier_A$ and 
 any half plane $\tilde \classifier_B$ whose normal vector is $v$ is between $(0,180^{\circ})$.} 
 Consider any half plane $\tilde \classifier_B$ whose normal vector is $v$. 


\begin{figure}[t]
\centering
\begin{tikzpicture}[xscale=6,yscale=6]

\draw [domain=-0.4:0.4, thick] plot (\x, {3/4*\x});
\node [above] at (0.4,0.3 ) {$\tilde\classifier_B$};
\draw [thick] (-0.4,0) -- (0.4,0);
\node [right] at (0.4,0 ) {$\classifier_A$};
\node [above] at (0.3,0 ) {$+$};
\node [right] at (0.3,0.22) {$+$};

 \draw [<-]  (0.34,0.3) to [bend left=45]  (0.4,0.3);

\node [above] at (0,0 ) {\footnotesize$\tilde O$};

\node at (-0.2,-0.08 ) {\textbullet};
\node [left] at (-0.2,-0.08 ) {\footnotesize$\features$};

\node at (-0.28/3,0 ) {\textbullet};
\node [above] at (-0.28/3,0 ) {\footnotesize$\features'$};

\node at (0.04,-0.120 ) {\textbullet};
\node [right] at (0.04,-0.12 ) {\footnotesize$\firstfeatures$};

\node at (0.2,0 ) {\textbullet};
\node [below] at (0.22,0 ) {\footnotesize$\secondfeatures$};

\draw [ ->] (-0.2,-0.08 )  -- (-0.28/3,0 );

\draw [ ->, red] (-0.28/3,0 )-- (0.18/4-0.28/3,-0.06);
\node [right, red] at (0.18/4-0.28/3,-0.06) {\footnotesize$v$};

\draw [ ->, blue] (-0.2,-0.08 )  -- (0.04,-0.120 );
\draw [ ->, blue] (0.04,-0.120 )  -- (0.2,0);

\end{tikzpicture}
\caption{Step 2a in \cref{thm:optimal sequential metric investement}}
\label{fig: general cost investment optimal sequential case 1}
\end{figure}

\paragraph{Step 2b} We show that the best two-step strategy that first satisfies $\tilde\classifier_B$ and then $\classifier_A$ is more costly than some one-step strategy.
Suppose the best two-step strategy that first satisfies $\tilde\classifier_B$ and then $\classifier_A$ is $\strategies=(\firstfeatures,\secondfeatures)$.
Then the vector $\secondfeatures-\firstfeatures$ is parallel to the vector $\features'-\features$ by absolute homogeneity.
Since $\firstfeatures$ satisfy $\tilde\classifier_B$ and the normal vector of $\tilde\classifier_B$ is orthogonal to $\features'-\features$, the attributes $\secondfeatures$ also satisfy $\tilde\classifier_B$. 
Thus, attributes $\secondfeatures$ satisfy both $\classifier_A$ and $\tilde \classifier_B$.
By triangle inequality, $c(\features,\firstfeatures,\secondfeatures)\leq \onecost(\features,\secondfeatures)$.
Hence for any $q$, the agent receives lower expected utility from the two-step strategy $(\features,\firstfeatures,\secondfeatures)$ than the utility from one-step strategy $(\features,\features_2)$.

\paragraph{Step 2c}
     We fix the intersection point $\tilde O$ of the boundaries of $\classifier_A$ and $\tilde \classifier_B$ and rotate $\tilde \classifier_B$ to increase the angle between $\classifier_A$ and $\tilde \classifier_B$.
     We call the half plane $\hat \classifier_B$ after rotation.
    For the same attributes $\features$, let $\hat\strategies=(\features_1,\features_2)$ be the best two-step strategy that first satisfies $\hat \classifier_B$ after the rotation.
    Notice that after the rotation, $\hat\classifier_B \setminus \classifier_A$ is contained in $\tilde \classifier_B \setminus \classifier_A$. 
    Hence the two-step strategy $\hat\strategies=(\features_1,\features_2)$ is also available to the agent before the rotation.
    By revealed preference, we can infer that the total cost of the best two-step strategy $\hat\strategies=(\features_1,\features_2)$ under test $\classifier_A$ and $\hat\classifier_B$ is weakly higher than the two-step strategy $\strategies=(\firstfeatures,\secondfeatures)$ that the agent uses under test $\classifier_A$ and $\tilde\classifier_B$.
    Hence, the best one-step strategy still dominates the best two-step strategy that first satisfies $\hat \classifier_B$ as the angle between $\classifier_A$ and $\hat \classifier_B$ increases.
    
\paragraph{Step 2d}
Again, consider the half plane $\tilde \classifier_B$  and $\hat \classifier_B$, where the latter is obtained by  rotating $\tilde\classifier_B$ to increase its angle to $\classifier_A$.
Denote the angle between $\tilde\classifier_B$ and $\classifier_A$ by $\tilde \theta$ and the angle between $\hat\classifier_B$ and $\classifier_A$ by $\hat \theta$.
We have $\hat\theta >\tilde \theta$.
    Consider any attributes $ \features_1 \in \classifier_A \setminus \tilde \classifier_B$. 
    Consider any attributes $\tilde \features_2$ that satisfies only $\tilde \classifier_B$ but not $\classifier_A$. Let $\tilde u = \tilde\features_2 - \features_1$ be the direction moving to $\tilde\features_2$.

    As we have argued in step 1, Since the angle $\hat \theta >\tilde \theta$, we have (1) the set of attributes that satisfy $\hat \classifier_B$ but not $\classifier_A$, i.e., $\hat \classifier_B \setminus \classifier_A$  is contained in the region $\tilde \classifier_B \setminus \classifier_A$; (2) the set of attributes that satisfy $\classifier_A$ but not $\hat \classifier_B$, i.e., $ \classifier_A \setminus \hat\classifier_B$  is contained in the region $\classifier_A \setminus \tilde\classifier_B$; and (3) the qualified region $\classifier_A \cap \tilde \classifier_B$ is contained in $\classifier_A \cap \hat \classifier_B$.
    
    Hence we can find  attributes $\hat \features_2$ that satisfies only $\hat \classifier_B$ but not $\classifier_A$ such that $\hat\features_2 =\features_1 +\lambda \tilde u$ for some $\lambda > 1$, i.e., the direction $\hat \features_2 -\features_1$ is in parallel to $\tilde u = \tilde\features_2 - \features_1$.

    Since the cost function is absolute homogeneous, 
    the cost to reach $\hat \classifier_B \setminus \classifier_A$ from $\features_1$ only increases.
    Similarly, the cost to reach $\hat \classifier_B \cap \classifier_A$ from $\features_1$ only decreases. 
    Therefore, as the rotation angle increases, there exists a half plane $\classifier_B$ with an angle $\theta^*(\classifier_A)$ from $\classifier_A$ such that for any attribute $\features_1 \in \classifier_A \setminus \classifier_B$, the attributes $\features_2$ in $\classifier_B$ that minimizes the cost $\onecost(\features_1,\features_2)$ also satisfies $\classifier_A$.
    Consider any attributes $\features \in \classifier_A^\compl \cap \classifier_B^\compl$. Let $(\features,\features_1,\features_2)$ be the best two-step strategy that first satisfies $\classifier_A$. Since $\features_1$ satisfies only $\classifier_A$, we have $\features_2$ satisfies both $\classifier_A$ and $\classifier_B$. By the triangle inequality of the cost function, we have $\cost(\features,\features_1,\features_2) \leq \onecost(\features,\features_2)$.
    
    Thus, there exists an angle $\theta^*(\classifier_A) \in (0, 180^{\circ})$ such that when the angle between $\classifier_A$ and $\tilde \classifier_B$ is $\theta^*(\classifier_A)$, no agent prefers the two-step strategy.  

    Finally, by \cref{lem: optimal max qualified improving effort}, when no agent prefers the two-step strategy in the uninformed random order mechanism with tests $\classifier_A$ and $\classifier_B$, this uninformed random order mechanism is the optimal sequential mechanism. This random order mechanism achieves the first best.
\end{proof}

\begin{proof}[Proof of \cref{prop:investment fixed order better than informed}]
    Suppose there exists a feasible informed random order mechanism. Let $\tilde \classifier_A$ and $\tilde \classifier_B$ be the two tests used in this mechanism. 
    Denote the intersecting points of the two boundary lines of  $\tilde \classifier_A$ and $\tilde \classifier_B$ by $\tilde O$.
    
    We first show that the fixed-order mechanism $(\tilde \classifier_A,\tilde \classifier_B,1)$ is feasible. The proof for the feasibility of another fixed-order mechanism $(\tilde \classifier_A,\tilde \classifier_B,0)$ is analogous.

    Since the informed random order mechanism with two tests $\tilde \classifier_A$ and $\tilde \classifier_B$ is feasible, the region $\tilde \classifier_A \cap \tilde \classifier_B$ is contained in $\classifier_A \cap \classifier_B$. 
    Otherwise, the agent with attributes in $\tilde \classifier_A \cap \tilde \classifier_B \setminus \classifier_A \cap \classifier_B$ is accepted without improvement. In the fixed order mechanism, the agent becomes qualified if this agent eventually moves to $ \classifier_A \cap  \classifier_B$. 
    
    Suppose an agent takes a one-step strategy in the fixed order mechanism.
    For this strategy to be a best response, the agent must be accepted.
    Hence his final attributes must fall in $\tilde \classifier_A \cap \tilde \classifier_B$.
    This implies that such an agent eventually becomes qualified.
    It remains to show that any agent who takes a two-step strategy also becomes qualified.

    Suppose an agent with attributes $\features$ takes a two-step strategy $\strategies(\features)=( \firstfeatures, \secondfeatures)$ in the fixed order mechanism.
    We will show that the final attributes $\secondfeatures$ satisfies $\classifier_A$ and $\classifier_B$. 
 
    \begin{claim}\label{claim: final attributes}
        The final attributes $\secondfeatures$ satisfy $\tilde \classifier_B$ but not $\tilde \classifier_A$. 
    \end{claim}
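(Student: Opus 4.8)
\textbf{Proof proposal for Claim~\ref{claim: final attributes}.}

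The plan is to analyze the two-step strategy $\strategies(\features) = (\firstfeatures, \secondfeatures)$ used by an agent with attributes $\features$ in the fixed-order mechanism $(\tilde\classifier_A, \tilde\classifier_B, 1)$, which offers $\tilde\classifier_A$ first and then $\tilde\classifier_B$. First I would record the basic constraints that any accepting two-step strategy in this fixed-order mechanism must satisfy: $\firstfeatures$ must satisfy the first test $\tilde\classifier_A$, and $\secondfeatures$ must satisfy the second test $\tilde\classifier_B$ (so in particular $\secondfeatures \in \tilde\classifier_B$ is immediate). Since the strategy is a genuine two-step strategy and not effectively a one-step one, I would argue $\firstfeatures \neq \secondfeatures$, and by the triangle inequality of the cost function (\cref{def: trainagle inequality}) together with additivity (\cref{def: additive}), if $\firstfeatures$ already satisfied $\tilde\classifier_B$ the agent would do weakly better by skipping the intermediate step and moving directly, contradicting optimality of the genuine two-step strategy (or at least letting us reduce to the one-step case already handled). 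Hence $\firstfeatures$ satisfies $\tilde\classifier_A$ but not $\tilde\classifier_B$.

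The remaining and main point is to show $\secondfeatures$ does \emph{not} satisfy $\tilde\classifier_A$. Here I would use a revealed-preference / minimality argument on the second step. The agent, having reached $\firstfeatures$, chooses $\secondfeatures$ to minimize $\onecost(\firstfeatures, \secondfeatures)$ subject to $\secondfeatures \in \tilde\classifier_B$ (the agent is already past the first test, gets reward $1$, and only pays the second-stage cost). If the cost-minimizing point $\secondfeatures$ happened to also lie in $\tilde\classifier_A$, then $\secondfeatures \in \tilde\classifier_A \cap \tilde\classifier_B$; but then the agent could equally have chosen a one-step strategy moving $\features$ directly to $\tilde\classifier_A \cap \tilde\classifier_B$, and by the triangle inequality $\onecost(\features, \secondfeatures) \le \cost(\features, \firstfeatures, \secondfeatures)$, so the one-step strategy is weakly cheaper — contradicting that the agent strictly prefers (or is indifferent but we may take) the two-step strategy under consideration, and in any case reducing to the one-step case. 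Therefore $\secondfeatures$ satisfies $\tilde\classifier_B$ but not $\tilde\classifier_A$, which is exactly the claim.

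I expect the main obstacle to be handling the boundary/indifference cases cleanly: the cost-minimizing $\secondfeatures$ on $\tilde\classifier_B$ might lie exactly on the boundary of $\tilde\classifier_A$, or the agent might be indifferent between the one-step and two-step strategies. The fix is to adopt the tie-breaking convention (consistent with how ``two-step strategy'' has been used throughout, e.g. in \cref{subsec:example} and the proof of \cref{lem:feasible-informed-rand-distance cost}) that whenever a one-step strategy is weakly cheaper the agent is taken to use it, so that a genuinely-two-step best response must have $\firstfeatures \notin \tilde\classifier_B$ and $\secondfeatures \notin \tilde\classifier_A$ strictly. Then, combined with the earlier observation that $\tilde\classifier_A \cap \tilde\classifier_B \subset \classifier_A \cap \classifier_B$, the fact that $\secondfeatures$ satisfies $\tilde\classifier_B$ will feed into the next step of the argument (showing $\secondfeatures$ is in fact qualified, hence in $\classifier_A \cap \classifier_B$) to complete the feasibility proof of the fixed-order mechanism.
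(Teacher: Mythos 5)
Your proposal is correct and follows essentially the same route as the paper: $\secondfeatures\in\tilde\classifier_B$ because acceptance requires passing the second test, and if $\secondfeatures$ also satisfied $\tilde\classifier_A$ then the triangle inequality $\onecost(\features,\secondfeatures)\le\cost(\features,\firstfeatures,\secondfeatures)$ would make the direct one-step move to $\secondfeatures$ weakly cheaper, contradicting the choice of a two-step strategy. Your explicit handling of the weak-inequality/tie-breaking case is a small refinement over the paper's wording ("lower cost"), but the underlying argument is identical.
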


    \begin{proof}[Proof of \cref{claim: final attributes}]
        Since this agent takes a two-step strategy,
    for this strategy to be a best response, it must be that the agent is accepted by the mechanism.
    Hence we know that $\firstfeatures\in\tilde \classifier_A$ and $\secondfeatures\in \tilde\classifier_B$.
    Suppose $\secondfeatures$ satisfy $\tilde\classifier_A$.
    Then  by the triangle inequality of the cost function, this agent has a lower cost for directly moving to $\secondfeatures$.
    Note that the final attributes $\secondfeatures$ are the attributes that minimize the cost $\onecost(\firstfeatures,\secondfeatures)$ among all attributes satisfying the classifier $\tilde\classifier_B$, and the cost $\onecost(\firstfeatures,\secondfeatures) \leq 1$. 
    Note that the attributes $\firstfeatures$ satisfy the classifier $\tilde \classifier_A$ but do not satisfy $\tilde \classifier_B$. 
    We show that the attributes $\firstfeatures$ are on the boundary of the classifier $\tilde\classifier_A$. 
    Suppose the attributes $\firstfeatures$ are not on the boundary of the classifier $\tilde\classifier_A$. 
    Then, the line segment between $\firstfeatures$ and $\secondfeatures$ intersects the boundary of $\tilde \classifier_A$ at some attributes $\features'$. 
    Since the cost is additive and induced by the metric $d$, we have
    \begin{align*}
        c(\features,\firstfeatures,\secondfeatures) &= \onecost(\features,\firstfeatures) + \onecost(\firstfeatures,\secondfeatures) = \onecost(\features,\firstfeatures) + \onecost(\firstfeatures, \features') + \onecost(\features', \secondfeatures) \\
        &\geq \onecost(\features,\features') + \onecost(\features',\secondfeatures) = c(\features,\features',\secondfeatures).
    \end{align*}
    Thus, the agent will prefer a two-step strategy $(\features,\features',\secondfeatures)$, which gives a contradiction.
    \end{proof}

    \begin{claim}\label{claim: pair of attributes}
        There exists a pair of attributes $\tilde \features$ and $\tilde \features'$ that satisfies the following three properties: 
        \begin{enumerate}
            \item the attributes $\tilde \features$ lie on the boundary of the classifier $\tilde \classifier_A$;
    \item the attributes $\tilde \features'$ are the attributes that minimize the cost $\onecost(\tilde \features, \tilde \features')$ among all attributes satisfying the classifier $\tilde \classifier_B$; 
    \item the cost from $\tilde \feature$ to $\tilde \features'$ is $\onecost(\tilde\features, \tilde \features') = 1$.
        \end{enumerate}
    \end{claim}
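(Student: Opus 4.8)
The plan is to obtain the required pair by rescaling, about the corner point $\tilde O$, the pair $(\firstfeatures,\secondfeatures)$ already in hand. Recall from the analysis preceding this claim (the proof of \cref{claim: final attributes}) that $\firstfeatures$ lies on the boundary line $L$ of $\tilde\classifier_A$, does not satisfy $\tilde\classifier_B$, and that $\secondfeatures$ attains $\rho:=\onecost(\firstfeatures,\secondfeatures)=\min_{\genericfeatures\in\tilde\classifier_B}\onecost(\firstfeatures,\genericfeatures)\le 1$, the minimum being attained thanks to \cref{def: existence of minimum cost}. Since $\firstfeatures$ lies outside the closed half plane $\tilde\classifier_B$, we also have $\rho>0$.

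Next I would introduce, for $\lambda>0$, the homothety $\phi_\lambda(\features)=\tilde O+\lambda(\features-\tilde O)$ centered at $\tilde O$ with ratio $\lambda$, and record two facts. First, $\phi_\lambda$ maps $L$ onto itself (both $\tilde O$ and $\firstfeatures$ lie on $L$) and maps $\tilde\classifier_B$ onto itself, because $\tilde O$ lies on the boundary of $\tilde\classifier_B$ and a positive homothety fixing a boundary point of a half plane preserves that half plane. Second, translation invariance (\cref{def: translation invariant}) combined with absolute homogeneity (\cref{def: absolute homogenous c}) gives the scaling identity $\onecost(\phi_\lambda(\features),\phi_\lambda(\genericfeatures))=\lambda\,\onecost(\features,\genericfeatures)$ for all $\features,\genericfeatures$ --- shift by $-\tilde O$, pull out $\lambda$, shift back. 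Then I would set $\tilde\features=\phi_{1/\rho}(\firstfeatures)$ and $\tilde\features'=\phi_{1/\rho}(\secondfeatures)$, so that properties (1)--(3) become immediate: $\tilde\features\in L$; $\tilde\features'\in\tilde\classifier_B$ and, for every $\genericfeatures\in\tilde\classifier_B$ (whence $\phi_\rho(\genericfeatures)\in\tilde\classifier_B$), $\onecost(\tilde\features,\genericfeatures)=\rho^{-1}\onecost(\firstfeatures,\phi_\rho(\genericfeatures))\ge\rho^{-1}\onecost(\firstfeatures,\secondfeatures)=\onecost(\tilde\features,\tilde\features')$; and $\onecost(\tilde\features,\tilde\features')=\rho^{-1}\onecost(\firstfeatures,\secondfeatures)=1$.

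I do not expect a genuine obstacle here: the only points needing care are the two structural facts about $\phi_\lambda$ (that it preserves both the boundary line $L$ and the half plane $\tilde\classifier_B$) and the derivation of the scaling identity from translation invariance and absolute homogeneity; everything else is substitution. If one preferred to avoid homotheties altogether, the same existence statement follows from an intermediate-value argument along $L$: the function $\features\mapsto\min_{\boldsymbol{y}\in\tilde\classifier_B}\onecost(\features,\boldsymbol{y})$ vanishes at $\tilde O$, is at most $1$ at $\firstfeatures$, and grows without bound as one moves away from $\tilde O$ along $L$, hence equals $1$ somewhere --- at the cost of additionally assuming $\onecost$ continuous, which the homothety route does not require.
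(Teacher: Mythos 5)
Your proposal is correct and rests on the same idea as the paper's proof: rescale the known pair $(\firstfeatures,\secondfeatures)$ using translation invariance and absolute homogeneity until the minimal cost to $\tilde\classifier_B$ equals exactly $1$ (the paper phrases this as sliding $\hat\features$ along the boundary of $\tilde\classifier_A$ away from $\tilde\classifier_B$ and noting the minimizing displacement stays parallel to $\secondfeatures-\firstfeatures$ while its norm grows, then picking the point where $\tilde\features'-\tilde\features=(\secondfeatures-\firstfeatures)/\onecost(\firstfeatures,\secondfeatures)$). Your homothety about $\tilde O$ is simply a more explicit packaging of that construction — it produces the same pair and, usefully, verifies the minimizer property directly rather than asserting it — so it also preserves the facts needed downstream (e.g., that $\secondfeatures$ lies on the segment $\tilde O\tilde\features'$).
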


\begin{proof}[Proof of \cref{claim: pair of attributes}]
     Consider any attributes $\hat \features$ on the boundary of the classifier $\tilde \classifier_A$ that do not satisfy the classifier $\tilde \classifier_B$. Let the attributes $\hat \features'$ be the attributes that minimize the cost $\onecost(\hat \features, \hat \features')$ among all attributes satisfying the classifier $\tilde \classifier_B$.
    Since the cost is translation invariant and homogeneous, we have $\hat \features' - \hat \features$ is parallel to $\secondfeatures - \firstfeatures$. 
    Let $\weights_B$ be the weight vector of the classifier $\tilde \classifier_B$.
    As the attributes $\hat \features$ move away from the classifier $\tilde \classifier_B$, i.e. $\weights_B \cdot \hat\features$ decreases, the norm of the vector $\hat \features' - \hat\features$ increases. 
    Thus, there exists a pair of such attributes $\tilde \features$ and $\tilde \features'$ that satisfies $\tilde \features' - \tilde \features = (\secondfeatures - \firstfeatures)/ \onecost(\firstfeatures,\secondfeatures)$, which implies $\onecost(\tilde\features, \tilde \features') = 1$.
\end{proof}
   
    \begin{claim}\label{claim: qualified attributes}
         $\tilde \features'$ is qualified, i.e., $\tilde \features'\in\classifier_A \cap \classifier_B$.
    \end{claim}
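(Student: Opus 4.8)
The plan is to show that the attributes $\tilde\features'$ constructed in Claim \ref{claim: pair of attributes} are accepted by the original feasible informed random-order mechanism $(\tilde\classifier_A,\tilde\classifier_B,q,\widetilde h_1)$, and then to invoke feasibility of that mechanism to conclude that $\tilde\features'$ must be qualified (after the associated best-response behavior), i.e. $\tilde\features' \in \classifier_A\cap\classifier_B$. Concretely, I would consider an agent whose true attributes are $\tilde\features$. By Claim \ref{claim: pair of attributes}, $\tilde\features$ lies on the boundary of $\tilde\classifier_A$ (so it already satisfies $\tilde\classifier_A$) but fails $\tilde\classifier_B$, and the least-costly move from $\tilde\features$ into $\tilde\classifier_B$ lands exactly at $\tilde\features'$ with cost $\onecost(\tilde\features,\tilde\features')=1$.

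First I would exhibit a profitable strategy for this agent in the informed random-order mechanism. The idea is that $\tilde\features$ already passes $\tilde\classifier_A$; so if the realized first test happens to be $\tilde\classifier_A$ (which the agent learns, since $\disclose=\widetilde h_1$), the agent can wait in the first stage and then, knowing the second test is $\tilde\classifier_B$, invest to move to $\tilde\features'$ at cost exactly $1$. This yields utility $1-1=0$ with the conditional probability of that test order, so the overall expected utility of this strategy is nonnegative — at least weakly profitable. Hence the agent with attributes $\tilde\features$ gets accepted with positive probability in the mechanism, and on the accepting path his final attributes are $\tilde\features'$. (One must check the agent's \emph{best} response also leads to acceptance, or more carefully, that there is \emph{some} acceptance path ending at $\tilde\features'$; since we only need feasibility to bite on the attributes actually realized at acceptance, and the mechanism is feasible, whatever final attributes arise on any acceptance path must be qualified.)

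Then, since the informed random-order mechanism is feasible under program \ref{max qualified} in the investment setting, no agent who is ever accepted can end up unqualified. Therefore the realized final attributes $\tilde\features'$ must lie in $\classifier_A\cap\classifier_B$, which is the claim. The main obstacle I anticipate is the bookkeeping around exactly \emph{which} terminal attributes the feasibility constraint constrains: one needs that the specific point $\tilde\features'$ — rather than merely "some qualified point" — is the terminal attribute on an acceptance path, so that feasibility forces $\tilde\features'\in\classifier_A\cap\classifier_B$. This is handled by the construction in Claim \ref{claim: pair of attributes}: because $\onecost(\tilde\features,\tilde\features')=1$ exactly, the agent is indifferent, so the strategy ``wait, then move to $\tilde\features'$'' is a best response (any cheaper terminal point does not exist by minimality of $\tilde\features'$, and any final attributes strictly inside $\tilde\classifier_B$ would cost strictly more), and hence $\tilde\features'$ is indeed a terminal attribute realized in equilibrium on the accepting branch. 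Combined with translation invariance and homogeneity (already used to set up $\tilde\features,\tilde\features'$), this pins down $\tilde\features'\in\classifier_A\cap\classifier_B$ and completes the argument; the remaining claims in the proof of \cref{prop:investment fixed order better than informed} then transfer this qualification back to $\secondfeatures$ via the parallelism $\tilde\features'-\tilde\features = (\secondfeatures-\firstfeatures)/\onecost(\firstfeatures,\secondfeatures)$.
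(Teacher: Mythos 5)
Your proposal is correct and follows essentially the same route as the paper: both consider the agent with attributes $\tilde\features$, identify the wait-then-move-to-$\tilde\features'$ strategy (accepted with probability $q$ at cost exactly one) as the relevant best response, and invoke feasibility of the informed random-order mechanism to force $\tilde\features'\in\classifier_A\cap\classifier_B$. Your extra care about why $\tilde\features'$ specifically is the terminal attribute on the accepting branch matches the paper's (brief) assertion that the agent prefers this two-step strategy over directly improving into $\tilde\classifier_A\cap\tilde\classifier_B$.
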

    \begin{proof}[Proof of \cref{claim: qualified attributes}]
         Consider the agent with attributes $\tilde \features$. Under the informed random order mechanism, this agent prefers the two-step strategy because the cost for directly improving to $\tilde \classifier_A \cap \tilde \classifier_B$ is greater than one. 
    The best strategy for this agent is to first provide the original attributes $\tilde \features$, and if he passes the first test, improve the attributes to $\tilde\features'$ in the second test.
    If the informed random order mechanism uses the classifier $\tilde \classifier_A$ first and then uses the classifier $\tilde \classifier_B$ with probability $q$, then this agent is accepted with probability $q$, and the expected utility is zero. 
    Since the informed random order mechanism is feasible, the final attributes $\tilde \features'$ is in $\classifier_A \cap \classifier_B$. 
    \end{proof}

    According to the construction, the attributes $\secondfeatures$ lie on the line segment connecting $\tilde \features'$ and $\tilde O$.
    Notice that both $\tilde \features'$ and $\tilde O$ are qualified, and the qualified region is convex.
    Hence $\secondfeatures$ are also qualified.  
    
     To summarize, every agent accepted by the fixed order mechanism is qualified.
     Hence the fixed order mechanism is  feasible. 

    Since the two fixed-order mechanisms $(\tilde \classifier_A,\tilde \classifier_B,1)$ and $(\tilde \classifier_A,\tilde \classifier_B,0)$ are feasible, we can apply the same argument as in \cref{lem:feasible-informed-rand-distance cost} by constructing a mixed mechanism to show that one of these two fixed-order mechanism is no worse than the informed random order mechanism. 
\end{proof}
\section{Omitted proof in \cref{sec:Pii}}\label{appendix: alternative objective}

We first state the two lemmas required for the proof and then we prove the two lemmas.
\begin{lemma}\label{lem:opt_sim Pii}
Consider manipulation setting and program \ref{min unqualified}.
     Suppose $\nabla \density \cdot \weights_A \leq 0$, and $\nabla \density \cdot \weights_B \geq 0$.
    Then the optimal simultaneous mechanism $(\classifier_A^S,\classifier_B^S)$ satisfies  $\classifier_A^S\subset\classifier_A$ and $\classifier_B\subset\classifier_B^S$.
\end{lemma}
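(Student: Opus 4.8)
The plan is to turn the design problem into a two-parameter geometric optimization and then read off the signs of the two shifts from the hypotheses on $\nabla\density$.

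\textbf{Step 1: reformulation.} As in \cref{subsec: simultaneous manipulation}, in a simultaneous mechanism with tests $\widetilde\classifier_A,\widetilde\classifier_B$ and test region $\widetilde\qualregion=\widetilde\classifier_A\cap\widetilde\classifier_B$, an agent with true attributes $\features$ is selected iff $\features$ is within distance $1/\mc$ of $\widetilde\qualregion$; write $\selected(\widetilde\classifier_A,\widetilde\classifier_B)$ for this set, the $1/\mc$-neighborhood of $\widetilde\qualregion$. Program \ref{min unqualified} imposes $\qualregion\subseteq\selected(\widetilde\classifier_A,\widetilde\classifier_B)$ and minimizes $\pr[\selected(\widetilde\classifier_A,\widetilde\classifier_B)]-\pr[\qualregion]$, so, $\pr[\qualregion]$ being constant, the optimal simultaneous mechanism minimizes $\pr[\selected(\widetilde\classifier_A,\widetilde\classifier_B)]$ subject to $\qualregion\subseteq\selected(\widetilde\classifier_A,\widetilde\classifier_B)$. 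For half-planes the two claimed inclusions are equivalent to $\classifier_A^S$ being a parallel inward shift of $\classifier_A$ and $\classifier_B^S$ a parallel outward shift of $\classifier_B$; moreover covering the unbounded wedge $\qualregion$ forces the recession cone of $\widetilde\qualregion$ to contain that of $\qualregion$, and rotating a facet of $\widetilde\qualregion$ so as to shrink that cone toward $\qualregion$'s only shrinks $\selected$ while keeping feasibility, so we may take the boundary lines $\ell_A,\ell_B$ of $\widetilde\classifier_A,\widetilde\classifier_B$ parallel to those of $\classifier_A,\classifier_B$ and write $\widetilde\classifier_A=\{\weights_A\cdot\features\ge b_A+s_A\}$, $\widetilde\classifier_B=\{\weights_B\cdot\features\ge b_B+s_B\}$. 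Here $s_i>0$ means test $i$ is stricter than the true criterion, and the lemma asserts $s_A\ge 0$ and $s_B\le 0$.

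\textbf{Step 2: the two-parameter problem.} Let $O$ be the corner of $\qualregion$ and $\theta$ its angle. Write $\selected(s_A,s_B):=\selected(\widetilde\classifier_A,\widetilde\classifier_B)$; it is a rounded wedge whose $A$- and $B$-facets are $\classifier_A,\classifier_B$ shifted outward by $1/\mc-s_A$ and $1/\mc-s_B$, with the corner rounded by a radius-$1/\mc$ arc. Then $\pr[\selected(s_A,s_B)]$ is nonincreasing in each of $s_A,s_B$ (a larger shift shrinks $\widetilde\qualregion$, hence $\selected$), and the feasible set is a down-set that one writes explicitly: covering $\qualregion$ along $\ell_A$ and $\ell_B$ gives $s_A,s_B\le 1/\mc$, and covering the corner $O$ gives $s_B\le\bar s_B(s_A)$ for a decreasing function $\bar s_B$ with $\bar s_B(0)=\sin\theta/\mc$ and $\bar s_B(1/\mc)=-\cos\theta/\mc$ (clamped at $1/\mc$). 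Hence an optimum lies on the upper frontier $s_B=\bar s_B(s_A)$, and the problem reduces to minimizing $\phi(s_A):=\pr[\selected(s_A,\bar s_B(s_A))]$ over $s_A\le 1/\mc$.

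\textbf{Step 3: using the gradient conditions; the obstacle.} Along the frontier, $\phi'(s_A)=-M_A(s_A)+|\bar s_B'(s_A)|\,M_B(s_A)$, where $M_A,M_B$ are the line integrals of $\density$ over the $A$- and $B$-facets of $\selected(s_A,\bar s_B(s_A))$. The hypotheses say that $\density$ is weakly larger the farther outside $\classifier_A$ one is ($\nabla\density\cdot\weights_A\le 0$) and weakly smaller the farther outside $\classifier_B$ one is ($\nabla\density\cdot\weights_B\ge 0$), so raising $s_A$—hence, along the frontier, lowering $s_B$—withdraws selected mass from the high-density strip just outside $\classifier_A$ while re-admitting only low-density mass outside $\classifier_B$, pushing the minimizer up to the range $s_A\in[\sin\theta/\mc,\ 1/\mc]$. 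There $s_A\ge\sin\theta/\mc>0$ gives $\classifier_A^S\subsetneq\classifier_A$, and the explicit form of $\bar s_B$ gives $s_A\ge\sin\theta/\mc\iff\bar s_B(s_A)\le 0$, i.e.\ $\classifier_B\subseteq\classifier_B^S$, which is the lemma. The crux—and the main obstacle—is to make this precise: one must show $M_A(s_A)\ge|\bar s_B'(s_A)|\,M_B(s_A)$ at every frontier point with $s_A<\sin\theta/\mc$. I would do this by bounding $M_A$ from below and $M_B$ from above by integrating $\nabla\density\cdot\weights_A$ (resp.\ $\nabla\density\cdot\weights_B$) across the slabs lying between $\ell_A$ (resp.\ $\ell_B$) and the matching facet of $\selected$, carefully accounting for the contribution of the rounded corner; the obtuse case $\theta\ge 90^{\circ}$ is easier and handled separately. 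This quantitative comparison is where the real work lies.
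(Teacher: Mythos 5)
Your reduction to a two-parameter problem over parallel shifts $(s_A,s_B)$ and the observation that the optimum sits on the feasibility frontier are reasonable, but the argument has a genuine gap exactly where you flag it: the inequality $M_A(s_A)\ge|\bar s_B'(s_A)|\,M_B(s_A)$ is never established, and it is not clear that the hypotheses $\nabla\density\cdot\weights_A\le 0$ and $\nabla\density\cdot\weights_B\ge 0$ suffice to deliver it. Those conditions order the density along each normal direction separately; they do not by themselves compare the mass sitting on the $A$-facet of $\selected$ with the mass on the $B$-facet, which are half-lines with different orientations lying in different parts of the plane, weighted moreover by the geometric factor $|\bar s_B'(s_A)|$. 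Integrating $\nabla\density\cdot\weights_i$ across a slab, as you propose, relates $M_i$ to a density integral over the corresponding boundary line of the true criterion, not $M_A$ to $M_B$, so the sketched completion does not close the argument. Since this comparison is the entire content of your Step 3, the proof as written does not go through.

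The paper sidesteps the difficulty by deforming along a different one-parameter family: instead of sliding along the feasibility frontier (which tightens one test while loosening the other and therefore forces a comparison of two marginal masses), it translates the \emph{pair} of tests rigidly by a vector $v^S$ parallel to $\weights_A$ (or to $-\weights_B$), chosen so that the corner of the test region lands on the boundary of $\classifier_A$ (resp.\ $\classifier_B$). By translation invariance of the cost, the selected set under the shifted mechanism is exactly the translate by $v^S$ of the original selected set, so the two selection probabilities are compared through the bijection $\features\mapsto\features+v^S$ using only the monotonicity of $\density$ along $v^S$; no facet-by-facet marginal comparison is needed, and feasibility of the new mechanism is checked by a containment argument. (It also first rules out non-parallel test boundaries via separate feasibility and optimality claims, much as you do in Step 1.) To rescue your frontier approach you would need either to prove the marginal-mass inequality from the gradient conditions, which I do not believe follows without further assumptions on $\density$, or to replace the frontier path by the rigid-translation path, which is essentially the paper's argument.
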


\begin{lemma}\label{lem:fix>sim Pii}

    Suppose $(\classifier_A^S,\classifier_B^S)$ is a feasible simultaneous mechanism that satisfies $\classifier_i^S\subset\classifier_i$ and $\classifier_j\subset\classifier_j^S$ for $i,j\in \{A,B\}$ and $i\neq j$.
    Then there exists a feasible fixed-order mechanism such that it is weakly better than the simultaneous mechanism $(\classifier_A^S,\classifier_B^S)$ .
\end{lemma}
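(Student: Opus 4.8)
Assume without loss of generality that the stringent test is in the $A$-direction, $\classifier_A^S\subseteq\classifier_A$, and the lenient one in the $B$-direction, $\classifier_B\subseteq\classifier_B^S$. Write $R:=\classifier_A^S\cap\classifier_B^S$. The simultaneous mechanism selects exactly the set $\selected_{\mathrm{sim}}=\{\features:\onecost(\features,R)\le 1\}$ of types within cost one of $R$, and the hypothesis that it is feasible for \ref{min unqualified} is precisely that $\classifier_A\cap\classifier_B\subseteq\selected_{\mathrm{sim}}$. Since every mechanism we consider is required to select all qualified types, ``weakly better'' is equivalent to the selected set having weakly smaller $\density$-probability; so it suffices to produce a feasible fixed-order mechanism whose selected set is contained, up to a $\density$-null set, in $\selected_{\mathrm{sim}}$.

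The plan is to use the fixed-order mechanism $(\classifier_B,\classifier_A^S,1)$ that offers the \emph{true} criterion $\classifier_B$ first and the stringent test $\classifier_A^S$ second, i.e.\ to keep the ``good'' direction of the simultaneous mechanism and tighten the lenient direction back to the truth. First I would verify feasibility for \ref{min unqualified}: any qualified $\features\in\classifier_A\cap\classifier_B$ passes the first test at zero cost because $\features\in\classifier_B$, and then reaches $\classifier_A^S$ by moving to its projection $\projection_{\classifier_A^S}\features$ at cost $\onecost(\features,\classifier_A^S)\le\onecost(\features,R)\le 1$, where the first inequality holds because $R\subseteq\classifier_A^S$ and the second is the feasibility of the simultaneous mechanism. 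Hence every qualified type is selected.

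Next I would show that the set of types selected by $(\classifier_B,\classifier_A^S,1)$ lies in $\selected_{\mathrm{sim}}$, splitting types by their best response and using the response characterization of \cref{sec: characterization BR} together with the zig-zag lemma \cref{lem:zig-zag}. A type selected by a one-step strategy reaches $\classifier_B\cap\classifier_A^S\subseteq R$ at cost at most one, hence lies in $\selected_{\mathrm{sim}}$. A type $\features$ selected by a two-step (zig-zag) strategy arises only when the angle between the boundaries of $\classifier_B$ and $\classifier_A^S$ is acute (otherwise the best response is one-step, by the remark following \cref{lem:zig-zag}); by that lemma the cheapest zig-zag moves onto the boundary of $\classifier_A^S$ toward $\secondfeatures=\projection_{\classifier_A^S}(\symmetric_{\classifier_B}\features)$ with total cost $\onecost(\symmetric_{\classifier_B}\features,\classifier_A^S)\le 1$, so $\onecost(\features,\secondfeatures)\le 1$ as well. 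Since $\symmetric_{\classifier_B}\features\in\classifier_B$ and $\secondfeatures$ is obtained from it by moving a distance at most one along $\weights_A$, the point $\secondfeatures$ leaves $\classifier_B$ by at most $\cos\theta$; I would then argue that a simultaneous mechanism feasible for \ref{min unqualified} with this nesting must be lenient in $B$ by at least this much, invoking the explicit form of the optimal simultaneous mechanism from \cref{lem:opt_sim Pii} and the density conditions $\nabla\density\cdot\weights_A\le 0$, $\nabla\density\cdot\weights_B\ge 0$ in force in this subsection, so that $\secondfeatures\in\classifier_B^S$, hence $\secondfeatures\in R$ and $\features\in\selected_{\mathrm{sim}}$. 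Integrating against $\density$ then yields the weak improvement.

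The main obstacle is exactly this last containment — controlling the zig-zag types the tightened fixed-order mechanism admits. A crude triangle-inequality bound only places $\features$ within distance $2$ of $R$, so the argument genuinely needs that $\classifier_B^S$ extends beyond $\classifier_B$ by enough to absorb the (at most $\cos\theta$) overshoot of the zig-zag landing point; this is where the structure of the optimal simultaneous mechanism, rather than just the bare nesting hypothesis, does the work. If a direct inclusion is awkward to push through in full generality, the fallback is the mixed-mechanism device of \cref{lem:feasible-informed-rand-distance cost}: randomize between $(\classifier_B,\classifier_A^S,1)$ and a companion feasible fixed-order mechanism (the reversed order $(\classifier_A^S,\classifier_B,1)$, or one using a further-tightened $A$-test that remains feasible), show the mixture is feasible and its selected set lies in $\selected_{\mathrm{sim}}$ by a set-inclusion argument, and conclude by averaging that one of the two pure fixed-order mechanisms is already weakly better than $(\classifier_A^S,\classifier_B^S)$.
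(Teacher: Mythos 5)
Your high-level architecture matches the paper's: replace the lenient test with the true criterion, keep a stringent test in the other direction, and run the fixed order with the true/lenient test first. The gap is in your choice of second test. You use the simultaneous mechanism's own stringent test $\classifier_A^S$, whereas the paper uses $\classifier_A^{+}$, the true criterion $\classifier_A$ shifted inward by exactly $1/\eta$ along $\weights_A$ (in your labeling; the paper writes this as $(\classifier_A,\classifier_B^{+},1)$ under the opposite labeling). This difference is not cosmetic. The hypothesis only gives $\classifier_A^S\subseteq\classifier_A$, and feasibility forces the shift $\delta_A$ of $\classifier_A^S$ to satisfy $\delta_A\le 1/\eta$ — it can be strictly smaller, even zero up to the strictness of the inclusion. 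With $\delta_A<1/\eta$, your mechanism $(\classifier_B,\classifier_A^S,1)$ selects unqualified types $\features\in\classifier_B\setminus\classifier_A$ that simply wait at the first test and project onto $\classifier_A^S$ at cost $\delta_A+\onecost(\features,\classifier_A)\le 1$; if $\classifier_B^S$ barely extends beyond $\classifier_B$, that landing point exits $\classifier_B^S$, so such a type is farther than $1/\eta$ from $R=\classifier_A^S\cap\classifier_B^S$ and is \emph{not} selected by the simultaneous mechanism. Your selected set is then not contained in $\selected_{\mathrm{sim}}$, and on a positive-density set your mechanism can be strictly worse. You correctly identify this as the main obstacle, but the patch you propose does not close it: \cref{lem:opt_sim Pii} only gives the qualitative nesting $\classifier_B\subset\classifier_B^S$, not the quantitative lower bound ($\classifier_B^S$ exceeding $\classifier_B$ by at least $\cos\theta/\eta$) that your containment needs; moreover the lemma as stated must hold for an arbitrary feasible nested simultaneous mechanism, without the density conditions, so invoking them changes the statement being proved. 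The mixed-mechanism fallback does not help either — it addresses comparisons between random and fixed orders, not the containment failure above.

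The paper's choice of $\classifier_A^{+}$ repairs exactly this. Because $\classifier_A^{+}$ sits at distance $1/\eta$ inside $\classifier_A$, no type outside $\classifier_A$ can reach it at cost below one (one-step or, by the triangle inequality, two-step), so the fixed-order mechanism never admits anyone who fails the true stringent-direction criterion; and for a two-step type $\features\to\firstfeatures\to\secondfeatures$ with $\onecost(\firstfeatures,\secondfeatures)\le 1$ and $\secondfeatures\in\classifier_A^{+}$, one gets $\firstfeatures\in\classifier_A\cap\classifier_B$, hence $\firstfeatures$ is qualified and selected by the feasible simultaneous mechanism, and the inclusion $\classifier_A^{+}\subseteq\classifier_A^S$ (which follows from $\delta_A\le 1/\eta$) lets the paper walk the segment $\firstfeatures\secondfeatures$ back to a point of $\classifier_A^S\cap\classifier_B^S$ within total cost one of $\features$. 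Feasibility for the constraint is unchanged: qualified types pass $\classifier_B$ for free and reach $\classifier_A^{+}$ at cost at most one by construction. If you substitute $\classifier_A^{+}$ for $\classifier_A^S$ as your second test and rerun your two-case analysis along these lines, your argument goes through.
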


\begin{proof}[Proof of \cref{prop:Pii manipulation}]
    By \cref{{lem:opt_sim Pii}}, we know that the optimal simultaneous mechanism $(\classifier_A^S,\classifier_B^S)$ satisfies  $\classifier_A^S\subset\classifier_A$ and $\classifier_B\subset\classifier_B^S$.
    By \cref{lem:fix>sim Pii}, there exists a feasible fixed order mechanism that is no worse than the optimal simultaneous mechanism. 
\end{proof}

\begin{proof}[Proof of \cref{lem:opt_sim Pii}]

    It suffices to show that given any feasible simultaneous mechanism $(\tilde\classifier_A,\tilde\classifier_B)$ that violates `$\classifier_A^S\subset\classifier_A$ and $\classifier_B\subset\classifier_B^S$', there exits another feasible simultaneous mechanism  $(\classifier_A^S,\classifier_B^S)$ that (1) satisfies  $\classifier_A^S\subset\classifier_A$ and $\classifier_B\subset\classifier_B^S$; (2) is better than $(\tilde\classifier_A,\tilde\classifier_B)$.

    Suppose the simultaneous mechanism $(\tilde\classifier_A,\tilde\classifier_B)$ is feasible and  it violates `$\tilde\classifier_A\subset\classifier_A$ and $\classifier_B\subset\tilde\classifier_B$'.
    There are three possibilities: (1) $\tilde\classifier_A$ is not a subset of $\classifier_A$ and $\classifier_B$ is a subset of $\tilde\classifier_B$ (Case 1); (2) $\tilde\classifier_A$ is not a subset of $\classifier_A$ and $\classifier_B$ is not a subset of $\tilde\classifier_B$ (Case 2);
    (3) $\classifier_B$ is not a subset of $\tilde\classifier_B$ and and $\tilde\classifier_A$ is a subset of $\classifier_A$ (Case 3).

    \paragraph{Case 1:} Suppose $\tilde\classifier_A$ is not a subset of $\classifier_A$.

    We first show that the boundary of $\tilde\classifier_A$, denoted by $\tilde\Line_A$, must be in parallel with the boundary of $\classifier_A$, denoted by $\Line_A$. 
    This is true because of \cref{claim:Pii optimal sim not parallel case 1 a} and \cref{claim:Pii optimal sim not parallel case 1 b}.
    Hence $\classifier_A$ is a subset of $\tilde\classifier_A$.
    Similarly, we can show that the boundary of $\tilde\classifier_B$ must be in parallel with the boundary of $\classifier_B$.

    \begin{claim}\label{claim:Pii optimal sim not parallel case 1 a}
        Suppose $\tilde\Line_A$ intersects with $\classifier_A$ at $\classifier_A\cap\classifier_B$. Then the simultaneous mechanism $(\tilde\classifier_A,\tilde\classifier_B)$ is not feasible.
    \end{claim}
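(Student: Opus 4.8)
Recall that, under $\mathcal{P}_{II}$, a simultaneous mechanism $(\tilde\classifier_A,\tilde\classifier_B)$ is feasible exactly when it selects \emph{every} qualified agent. With the Euclidean cost and a selection reward of $1$, an agent with true attributes $\orifeatures$ is willing to move into $\tilde\classifier_A\cap\tilde\classifier_B$ iff $\min_{\genericfeatures\in\tilde\classifier_A\cap\tilde\classifier_B}\onecost(\orifeatures,\genericfeatures)\le 1$, so feasibility is equivalent to $\operatorname{dist}(\orifeatures,\tilde\classifier_A\cap\tilde\classifier_B)\le 1/\mc$ for all $\orifeatures\in\qualregion$. The plan is to exhibit a single qualified point violating this bound. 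The first simplification is to discard $\tilde\classifier_B$: since $\tilde\classifier_A\cap\tilde\classifier_B\subseteq\tilde\classifier_A$ we have $\operatorname{dist}(\orifeatures,\tilde\classifier_A\cap\tilde\classifier_B)\ge\operatorname{dist}(\orifeatures,\tilde\classifier_A)$, so it suffices to produce $\orifeatures\in\qualregion$ with $\operatorname{dist}(\orifeatures,\tilde\classifier_A)>1/\mc$. Thus the claim is really a statement about the half plane $\tilde\classifier_A$ and the wedge $\qualregion=\classifier_A\cap\classifier_B$ alone.

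Next I would place the corner $O$ of $\qualregion$ at the origin and $\Line_A$ along a coordinate axis, so that $\qualregion$ is the convex cone bounded by the two rays $\Line_A^{+}=\Line_A\cap\classifier_B$ and $\Line_B^{+}=\Line_B\cap\classifier_A$ issuing from $O$, with opening angle $\theta\in(0^{\circ},180^{\circ})$. The hypothesis places the intersection point of $\tilde\Line_A$ with $\Line_A$ on the qualified edge $\Line_A^{+}$; in particular $\tilde\Line_A$ meets $\Line_A$ transversally (a parallel line would miss $\Line_A$ or coincide with it). Call this point $P$. Deleting $P$ splits $\Line_A^{+}$ into the bounded segment $OP$ and an unbounded subray $r_P$, and exactly one of the two lies on the failing side $\tilde\classifier_A^{\compl}$. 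The crucial geometric step is to show that it is the unbounded subray $r_P$ that is excised, i.e.\ $r_P\subset\tilde\classifier_A^{\compl}$: the alternative, in which only the bounded tip $OP$ falls outside $\tilde\classifier_A$, is incompatible with the standing Case~1 hypothesis $\tilde\classifier_A\not\subset\classifier_A$ — in that alternative $\tilde\classifier_A$ would have to contain the entire far part of $\qualregion$, and tracing the second edge $\Line_B^{+}$ and the direction of $\tilde\Line_A$ then forces $\tilde\classifier_A\subseteq\classifier_A$, a contradiction. Granting this, every point of $r_P$ is a qualified point outside $\tilde\classifier_A$, and since $r_P$ carries the direction of $\Line_A$, which is transversal to $\tilde\Line_A$, the distance from a point of $r_P$ to $\tilde\Line_A=\partial\tilde\classifier_A$ grows linearly along $r_P$ and is unbounded.

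To finish: choosing $\orifeatures\in r_P$ far enough from $O$ gives a qualified point with $\operatorname{dist}(\orifeatures,\tilde\classifier_A)>1/\mc$, hence $\operatorname{dist}(\orifeatures,\tilde\classifier_A\cap\tilde\classifier_B)>1/\mc$; such an agent is unwilling to move into $\tilde\classifier_A\cap\tilde\classifier_B$ and is not selected, so $(\tilde\classifier_A,\tilde\classifier_B)$ fails to accept all qualified agents and is therefore infeasible for $\mathcal{P}_{II}$.

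The main obstacle is exactly the middle step: pinning down \emph{which} part of the qualified wedge $\tilde\Line_A$ excises, i.e.\ ruling out the orientation in which $\tilde\Line_A$ merely clips a bounded corner near $O$, and verifying that the hypothesis "the crossing happens on $\Line_A^{+}$" (rather than on the complementary ray of $\Line_A$, outside $\classifier_B$) together with $\tilde\classifier_A\not\subset\classifier_A$ genuinely forces an \emph{unbounded} qualified subray to be cut off. Once the correct side is identified, the linear distance growth and the reduction to $\tilde\classifier_A$ alone are routine.
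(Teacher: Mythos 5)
Your reduction to exhibiting a qualified point at distance greater than $1/\mc$ from $\tilde\classifier_A$ is sound, and your argument in the orientation where $\tilde\classifier_A$ excises the unbounded subray of $\Line_A^{+}$ beyond $P$ is correct: the distance to $\tilde\Line_A$ grows linearly along that transversal ray, so some qualified agent cannot reach $\tilde\classifier_A\cap\tilde\classifier_B$ at cost $1$, and feasibility for \ref{min unqualified} fails. (The paper states this claim without proof, so there is no argument of the authors' to compare against.) However, the step you yourself flag as the main obstacle --- ruling out the orientation in which only the bounded tip of the wedge is excised --- is where the proof breaks, and it cannot be repaired as written: that orientation is \emph{not} incompatible with $\tilde\classifier_A\not\subset\classifier_A$. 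Concretely, take $\mc=1$, $\classifier_A=\{x_1\geq 0\}$, $\classifier_B=\{x_2\geq \tfrac{3}{4}x_1\}$ (so $O$ is the origin and $\qualregion$ is the wedge between the upward vertical ray and the ray of slope $\tfrac{3}{4}$), and $\tilde\classifier_A=\{x_1+\tfrac{1}{10}x_2\geq \tfrac{1}{10}\}$. Then $\tilde\Line_A$ meets $\Line_A$ at $P=(0,1)\in\classifier_A\cap\classifier_B$, and $\tilde\classifier_A\not\subset\classifier_A$ since $(-1,20)\in\tilde\classifier_A\setminus\classifier_A$; yet $\qualregion\setminus\tilde\classifier_A$ is only the small triangle with vertices $O$, $P$, and the point where $\tilde\Line_A$ crosses $\Line_B$, every point of which lies within distance $\tfrac{1}{10}$ of $\tilde\Line_A$. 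Pairing this with $\tilde\classifier_B=\{x_2\geq\tfrac{3}{4}x_1-100\}\supset\classifier_B$ yields a mechanism that satisfies the claim's hypothesis (and the standing Case~1 hypothesis) and selects every qualified agent, hence is feasible. A half plane whose boundary is slightly tilted at $P$ can contain all but a bounded sliver of the wedge while still leaning past $\Line_A$ far away, so ``contains the far part of $\qualregion$'' does not force $\tilde\classifier_A\subseteq\classifier_A$.

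The upshot is that the difficulty is not merely a gap in your write-up: in the bounded-excision orientation the claim as stated is false, so no proof can close that case by an infeasibility argument. What your argument genuinely establishes is that infeasibility follows whenever $\qualregion\setminus\tilde\classifier_A$ is unbounded in a direction transversal to $\tilde\Line_A$. To cover the remaining orientation one must either argue non-optimality rather than infeasibility (as the companion claim for $P\in(\classifier_A\cap\classifier_B)^{\compl}$ does) or strengthen the hypothesis; I would prove the unbounded case cleanly and state explicitly that the bounded case needs a separate, optimality-based argument.
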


    \begin{claim}\label{claim:Pii optimal sim not parallel case 1 b}
        Suppose $\tilde\Line_A$ intersects with $\classifier_A$ at $(\classifier_A\cap\classifier_B)^\compl$. Then the simultaneous mechanism $(\tilde\classifier_A,\tilde\classifier_B)$ is not optimal.
    \end{claim}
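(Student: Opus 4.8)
The plan is to prove non-optimality of $(\tilde\classifier_A,\tilde\classifier_B)$ by exhibiting a strictly better feasible simultaneous mechanism, namely $(\classifier_A,\tilde\classifier_B)$: keep the (already loosened) $B$-test but ``straighten'' the tilted test $\tilde\classifier_A$ back to the true constraint $\classifier_A$. Throughout I would use the standard characterization in the manipulation setting that a simultaneous mechanism $(\classifier,\classifier')$ selects exactly the initial types within distance $1/\eta$ of the test region $\classifier\cap\classifier'$; hence feasibility for \ref{min unqualified} is equivalent to $\qualregion$ being contained in that neighborhood, and the objective is the $\density$-mass of the part of the neighborhood lying in $\qualregion^{\compl}$. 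Note also that in the case under consideration we have $\classifier_B\subset\tilde\classifier_B$, and the hypothesis of the claim places the portion of $\tilde\Line_A$ lying inside $\classifier_A$ entirely in $(\classifier_A\cap\classifier_B)^{\compl}$, so $\tilde\Line_A$ does not meet the qualified region $\classifier_A\cap\classifier_B$; in particular $\qualregion$ sits on the $\tilde\classifier_A$-side of $\tilde\Line_A$, so $\qualregion\subset\tilde\classifier_A\cap\classifier_A$, the wedge with apex $Q=\tilde\Line_A\cap\Line_A$.

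Feasibility of $(\classifier_A,\tilde\classifier_B)$ is then immediate: $\qualregion=\classifier_A\cap\classifier_B\subset\classifier_A\cap\tilde\classifier_B$, so $\qualregion$ is already contained in the new test region, a fortiori in its $1/\eta$-neighborhood. The crux is to show $(\classifier_A,\tilde\classifier_B)$ selects strictly less unqualified $\density$-mass than $(\tilde\classifier_A,\tilde\classifier_B)$. Passing from $\tilde\classifier_A$ to $\classifier_A$ removes from the test region the ``outer lune'' $\tilde\classifier_A\setminus\classifier_A$ and adds the ``inner lune'' $\classifier_A\setminus\tilde\classifier_A$; both lunes have apex $Q$, the outer lune lies wholly in $\classifier_A^{\compl}$ (hence is entirely unqualified) by the claim's hypothesis, while the inner lune lies in $\classifier_A$. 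The induced change in the selected set is localized near these two lunes: a region of unqualified mass on the $\classifier_A^{\compl}$-side of $\Line_A$ is dropped, while a region of mass on the $\classifier_A$-side is added, of which only the part outside $\classifier_B$ (i.e.\ in $\classifier_A\cap\classifier_B^{\compl}$) counts toward the objective. I would make this rigorous by matching the added region to the dropped region via reflection across $\Line_A$ (equivalently, translation along $-\weights_A$): moving from the $\classifier_A$-side of $\Line_A$ to its $\classifier_A^{\compl}$-side is movement in the $-\weights_A$ direction, along which $\density$ is weakly increasing since $\nabla\density\cdot\weights_A\le 0$, so every added point is paired with a dropped point of at least as large density. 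Hence the selected-unqualified mass does not increase, and since $\tilde\Line_A\neq\Line_A$ both lunes have positive area, so (given $\density>0$) the inequality is strict, and $(\tilde\classifier_A,\tilde\classifier_B)$ is not optimal.

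I expect the main obstacle to be the bookkeeping in this last step: the selected set is the $1/\eta$-\emph{neighborhood} of the test region, not the test region itself, so ``drop the outer lune, add the inner lune'' must be propagated through the neighborhood operation and then intersected with $\classifier_A\cap\classifier_B^{\compl}$ (for the added side) and $\classifier_A^{\compl}$ (for the dropped side) before the densities are compared; producing a genuinely measure-non-increasing pairing that respects all of these intersections, rather than the mere heuristic ``the dropped side is denser,'' is where the argument needs care. A clean way to organize it is to rotate $\tilde\Line_A$ continuously about $Q$ toward $\Line_A$: every intermediate half-plane still contains the wedge $\tilde\classifier_A\cap\classifier_A\supset\qualregion$ (so feasibility, together with $\qualregion\subset\tilde\classifier_B$, is preserved along the entire rotation), and the objective is monotone along the rotation because an infinitesimal rotation sweeps only a thin sliver on each side of $\Line_A$, whose net $\density$-contribution has the favorable sign by $\nabla\density\cdot\weights_A\le 0$; integrating from $\tilde\Line_A$ to $\Line_A$ then yields the strict improvement.
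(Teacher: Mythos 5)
First, a point of reference: the paper states \cref{claim:Pii optimal sim not parallel case 1 b} inside the proof of \cref{lem:opt_sim Pii} but never actually proves it, so there is no official argument to match yours against; your attempt has to stand on its own. Your setup is right (feasibility $=$ $\qualregion$ contained in the $1/\eta$-neighborhood of the test region; objective $=$ $\density$-mass of that neighborhood in $\qualregion^{\compl}$), and your feasibility argument for $(\classifier_A,\tilde\classifier_B)$ is correct and immediate, since $\qualregion\subset\classifier_A\cap\tilde\classifier_B$. The gap is in the decisive comparison step, and it is not just ``bookkeeping.'' The added set $S_{new}\setminus S_{old}$ and the dropped set $S_{old}\setminus S_{new}$ (both of which, as you can check, are automatically entirely unqualified because $\qualregion$ sits inside both selected sets) are \emph{not} images of each other under reflection across $\Line_A$ or under a translation along $-\weights_A$: the dropped set is an unbounded region receding ``up along $\Line_A$'' into $\classifier_A^{\compl}$, while the added set hugs the inner lune, which recedes ``down'' away from $\classifier_B$. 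Worse, the pairing direction is not controlled by $\nabla\density\cdot\weights_A\le 0$ alone. Take $\theta$ acute (say $\classifier_B=\{y\ge 3x\}$, $\classifier_A=\{x\ge0\}$) and $\tilde\classifier_B$ relaxed far below $\classifier_B$: then $D^{+}$ contains points such as $(0,-9)$ while $D^{-}$ contains points such as $(-2,0)$, and the displacement $(-2,9)$ lies outside the cone generated by $-\weights_A$ and $\weights_B$ along which $\density$ is guaranteed monotone; a density of the form $g(3x+y)$ with $g$ decreasing satisfies both gradient conditions yet is strictly larger at the $D^{+}$ point than at the $D^{-}$ point. Any correct pairing must therefore invoke \emph{both} gradient conditions together with the unboundedness of $D^{-}$ (which, for an integrable density, forces $\density$ to vanish on much of $D^{+}$); your single-direction reflection does not deliver this. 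The rotation-about-$Q$ variant has the same defect plus an extra one: the slivers swept in the \emph{test} region do not correspond to slivers of the \emph{selected} set (a point removed from the test region stays selected if it remains within $1/\eta$ of what is left), and the natural pairing of the two swept slivers is the point reflection through $Q$, which again moves points in an uncontrolled direction.

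There is a cleaner route that sidesteps the two-sided comparison entirely, and it mirrors what the paper does in Step~2 of the proof of \cref{prop:optimal simultaneous manipulation general cost} for program $\mathcal{P}_I$: instead of replacing $\tilde\classifier_A$ by $\classifier_A$ (which both adds to and removes from the test region), replace it by a half-plane $\classifier_A''$ satisfying $\qualregion\subset\classifier_A''\subset\tilde\classifier_A$ --- for instance the supporting half-plane of $\qualregion$ with the same normal as $\tilde\classifier_A$, or the half-plane with normal $\weights_A$ through the apex $\tilde O$ of the old test region. Then the new test region is \emph{nested} inside the old one, so the new selected set is a subset of the old one, feasibility is preserved because $\qualregion$ still lies in the new test region, and the unqualified selected mass weakly decreases by set inclusion alone --- no density-pairing argument is needed at all (strictness then only requires $\density$ to be positive somewhere on the removed set, a caveat your proposal shares with the paper's own arguments). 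If you insist on landing on the test $\classifier_A$ itself (which is what the surrounding lemma ultimately wants), you must either carry out the genuinely two-sided measure comparison sketched above or chain the nested replacement with a further step.
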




Denote the intersection of the boundary lines of $\tilde\classifier_i$, $i\in \{A,B\}$ by point $\tilde O$.
Let $r=\min \metric(\tilde O,\classifier_A)$.
Define point $O^S$ as the point that lies on the boundary line of $\classifier_A$ and is obtained by shifting $\tilde O$ along $\weights_A$ as $O^S= \tilde O+r\cdot \frac{\weights_A}{\lVert \weights_A \rVert }$.
Then the vector $v^S = O^S- \tilde O$ is parallel to $\weights_A$.
    
    Next, consider the following simultaneous mechanism: 
    let $\classifier_A^S$ be the stringent test $\classifier_A^S=\classifier_A$, and let another test $\classifier_B^S =\tilde\classifier_B+v^S$.
    See \cref{fig: Pii case 1} for illustrations.
    Then the intersecting point of the boundary lines of $\classifier_i^S$ is $O^S$.
    Moreover, by construction, $\classifier_A^S\subset \classifier_A$ and $\classifier_B\subset\classifier_B^S$.
    Thus, $\classifier_A\cap\classifier_B\subset \classifier_A^S\cap\classifier_B^S$, i.e., the simultaneous mechanism $(\classifier_A^S,\classifier_B^S)$ is feasible.

    It remains to show that the simultaneous mechanism $(\classifier_A^S,\classifier_B^S)$ is better than the simultaneous mechanism $(\tilde\classifier_A,\tilde\classifier_B)$.
    Since feasibility requires that $\classifier_A\cap\classifier_B\subset \classifier_A^S\cap\classifier_B^S$ and $\classifier_A\cap\classifier_B\subset \tilde\classifier_A\cap\tilde\classifier_B$, to compare the probability of selecting an unqualified agent under the two mechanisms, it suffices to compare the probability of selecting an agent under the two mechanisms.

   We first argue that there is a one-to-one mapping between the set of agent selected under simultaneous mechanism $(\classifier_A^S,\classifier_B^S)$ and the set of agent selected under simultaneous mechanism $(\tilde\classifier_A,\tilde\classifier_B)$.
    This is because by construction and translation invariance of the cost function, for any attributes $\features$ selected under the original simultaneous mechanism $(\tilde\classifier_A,\tilde\classifier_B)$, the attributes $\features+v^S$ are also selected under the newly constructed simultaneous mechanism $(\classifier_A^S,\classifier_B^S)$.

    Since the density function $\density$ is weakly decreasing along $v^S$ (or $\weights_A$), the density of each point in the set of agent selected under simultaneous mechanism $(\classifier_A^S,\classifier_B^S)$ is weakly smaller than the corresponding point in the set of agent selected under simultaneous mechanism $(\tilde\classifier_A,\tilde\classifier_B)$.
    Hence the probability of selecting an agent under  $(\classifier_A^S,\classifier_B^S)$ is smaller.



\begin{figure}[t]
\centering
\begin{tikzpicture}[xscale=6,yscale=6]

\draw [domain=0.76:1.36, thick] plot (\x, {3/4*\x+1/4});
\node [left] at (1.32, 1.25 ) {$\classifier_B$};
\draw [thick] (1,0.76) -- (1,1.25);
\node [right] at (1, 1.25 ) {$\classifier_A $};
\node [left] at (1.27,1.2) {$+$};
 \node [right] at (1, 1.18) {$+$};

\draw [thick, blue] (0.8,0.74) -- (0.8,1.38);
\node [right, blue] at (0.8, 1.36 ) {$\tilde\classifier_A$};
\node [right, blue] at (1.2, 1.08 ) {$\tilde\classifier_B$};
\draw [domain=0.76:1.2, thick, blue] plot (\x, {3/4*(\x-0.8)+0.78});
\node [left, blue] at (0.8, 0.78 ) {\footnotesize$\tilde O$};

\node [right, red] at (1, 0.78 ) {\footnotesize$O^S$};
\draw [thick, red] (1,0.7) -- (1,1.38);
\node [above, red] at (1, 1.38 ) {$\classifier_A^S$};

\node [ red, right] at (1.36, 1.05 ) {$\classifier_B^S$};
\draw [domain=0.86:1.36, thick, red] plot (\x, {3/4*(\x-1)+0.78});

\draw [->, thick]  (0.8, 0.78 ) --(1, 0.78 ) ;
\node [below] at (0.9, 0.79 ) {\footnotesize$v^S$};

\end{tikzpicture}
\caption{Case 1 in \cref{lem:opt_sim Pii}}
\label{fig: Pii case 1}
\rule{0in}{1.2em}$^\dag$\scriptsize In this graph, the gray line represents the boundary line of $\tilde\classifier_B+v^S$, which does not cover
$\classifier_B$. Hence we pick $\classifier_B^S=\classifier_B$ so that  the simultaneous mechanism $(\classifier_A^S,\classifier_B^S)$ is feasible and improves upon the simultaneous mechanism $(\tilde\classifier_A,\tilde\classifier_B)$.
\end{figure}

\paragraph{Case 2:}  Suppose $\tilde\classifier_A$ is not a subset of $\classifier_A$.
    and $\classifier_B$ is not a subset of $\tilde\classifier_B$.
    
    The analysis is similar to case 1.
    Denote the intersection of the boundary lines of $\tilde\classifier_i$, $i\in \{A,B\}$ by point $\tilde O$.
Let $r=\min \metric(\tilde O,\classifier_A)$.
Define point $O^S$ as the point that lies on the boundary line of $\classifier_A$ and is obtained by shifting $\tilde O$ along $\weights_A$ as $O^S= \tilde O+r\cdot \frac{\weights_A}{\lVert \weights_A \rVert }$.
Then the vector $v^S = O^S- \tilde O$ is parallel to $\weights_A$.

If $\tilde\classifier_B+v^S$ covers $\classifier_B$, 
consider the following simultaneous mechanism: 
    let $\classifier_A^S$ be the stringent test $\classifier_A^S=\classifier_A$, and let another test $\classifier_B^S =\tilde\classifier_B+v^S$.
    We can use the same argument as in case 1 to show that  the simultaneous mechanism $(\classifier_A^S,\classifier_B^S)$ is feasible and improves upon the simultaneous mechanism $(\tilde\classifier_A,\tilde\classifier_B)$.
    See \cref{fig: Pii case 2} for illustrations.
    
If $\tilde\classifier_B+v^S$ does not cover $\classifier_B$, 
    we consider the following simultaneous mechanism instead: 
    let $\classifier_A^S$ be the stringent test $\classifier_A^S=\classifier_A$, and let another test $\classifier_B^S =\tilde\classifier_B$.

    The simultaneous mechanism $(\classifier_A^S,\classifier_B^S)$ is feasible. This is because $\classifier_A^S\cap\classifier_B^S$  covers $\classifier_A\cap\classifier_B$.

 Consider the vector $v^S =O- \tilde O$.
    Then there is a one-to-one mapping between the set of agent selected under simultaneous mechanism $(\classifier_A^S,\classifier_B^S)$ and the set of agent selected under the simultaneous mechanism $(\tilde\classifier_A,\tilde\classifier_B)$.
    For any attributes $\features$ selected under the original simultaneous mechanism $(\tilde\classifier_A,\tilde\classifier_B)$, the attributes $\features+v^S$ are also selected under the newly constructed simultaneous mechanism $(\classifier_A^S,\classifier_B^S)$.
    
Since the density function $\density$ is weakly decreasing along  $\weights_A$ and $-\weights_B$, 
the density function $\density$ is also weakly decreasing along the vector $v^S$. 
Hence, 
the density of each point in the set of agent selected under simultaneous mechanism $(\classifier_A^S,\classifier_B^S)$ is weakly smaller than the corresponding point in the set of agent selected under simultaneous mechanism $(\tilde\classifier_A,\tilde\classifier_B)$.
    Hence the probability of selecting an agent under  $(\classifier_A^S,\classifier_B^S)$ is smaller, implying that  $(\classifier_A^S,\classifier_B^S)$ improves upon the simultaneous mechanism $(\tilde\classifier_A,\tilde\classifier_B)$.

\begin{figure}[t]
\centering
\begin{tikzpicture}[xscale=6,yscale=6]

\draw [domain=0.76:1.36, thick] plot (\x, {3/4*\x+1/4});
\node [above] at (1.36, 1.24 ) {$\classifier_B$};
\draw [thick] (1,0.78) -- (1,1.25);
\node [right] at (1, 0.8 ) {$\classifier_A \text{ }+$};
\node [left] at (1.27,1.2) {$+$};

\draw [thick, blue] (0.8,0.78) -- (0.8,1.38);
\node [right, blue] at (0.8, 1.36 ) {$\tilde\classifier_A$};
\node [left, blue] at (1.16, 1.36 ) {$\tilde\classifier_B$};
\draw [domain=0.76:1.2, thick, blue] plot (\x, {3/4*(\x-0.8)+1.08});
\node [left, blue] at (0.8, 1.08 ) {\footnotesize$\tilde O$};

\node [right, gray] at (1, 1.08 ) {\footnotesize$O^S$};
\draw [thick, red] (1,1) -- (1,1.38);
\node [above, red] at (1, 1.38 ) {$\classifier_A^S$};

\node [ gray] at (1.35, 1.35 ) {$\tilde\classifier_B+v^S$};
\draw [domain=0.86:1.36, thick, gray] plot (\x, {3/4*(\x-1)+1.08});


\end{tikzpicture}
\caption{Case 2 in \cref{lem:opt_sim Pii}}
\label{fig: Pii case 2}
\rule{0in}{1.2em}$^\dag$\scriptsize In this graph, the gray line represents the boundary line of $\tilde\classifier_B+v^S$, which does not cover
$\classifier_B$. Hence we pick $\classifier_B^S=\classifier_B$ so that  the simultaneous mechanism $(\classifier_A^S,\classifier_B^S)$ is feasible and improves upon the simultaneous mechanism $(\tilde\classifier_A,\tilde\classifier_B)$.
\end{figure}

\paragraph{Case 3:} Suppose $\classifier_B$ is not a subset of $\tilde\classifier_B$
 and  $\tilde\classifier_A$ is a subset of $\classifier_A$.

    Using similar arguments in  \cref{claim:Pii optimal sim not parallel case 1 a} and \cref{claim:Pii optimal sim not parallel case 1 b}, we can show that the boundary of $\tilde\classifier_B$ must be in parallel with the boundary of $\classifier_B$. 

   Let $\weights_A^{\perp}$ be the vector that is orthogonal to $\weights_A$ and satisfies $\weights_A^{\perp} \cdot \weights_B<0$.
   Since $\nabla \density \cdot \weights_A \leq 0$, and $\nabla \density \cdot \weights_B \geq 0$, we must have $\nabla \density \cdot \weights_A^{\perp} \leq 0$, i.e., the density function is weakly decreasing along the direction of $\weights_A^{\perp}$.
  Denote the intersection of the boundary lines of $\tilde\classifier_i$, $i\in \{A,B\}$ by point $\tilde O$.
Let $r=\min \metric(\tilde O,\classifier_B)$.
Define point $O^S$ as the point that lies on the boundary line of $\classifier_B$ and is obtained by shifting $\tilde O$ along $-\weights_B$ as $O^S= \tilde O-r\cdot \frac{\weights_B}{\lVert \weights_B \rVert }$.
Then the vector $v^S = O^S- \tilde O$ is parallel to $-\weights_B$.
    
    Next, consider the following simultaneous mechanism: 
    let $\classifier_A^S$ be the stringent test $\classifier_A^S=\tilde\classifier_A+v^S$, and let another test $\classifier_B^S =\classifier_B$.
    Then the intersecting point of the boundary lines of $\classifier_i^S$ is $O^S$.
    Moreover, by construction, $\classifier_B\subset \classifier_B^S$ and $\classifier_A^S\subset\classifier_A$.
    
    We show that  the simultaneous mechanism $(\classifier_A^S,\classifier_B^S)$ is feasible.
    Using the characterization of the agent's best response in a simultaneous mechanism, it suffices to show that attributes $O$ is accepted by  the simultaneous mechanism $(\classifier_A^S,\classifier_B^S)$.
    Since $(\tilde\classifier_A,\tilde\classifier_B)$ is feasible, then the cost for attributes $O$ to move to some $\features\in \tilde\classifier_A\cap\tilde\classifier_B$ must be less than one, i.e., $\onecost(O,\features)\leq 1$.
    Under the cost function, we know that such attributes $\features=\tilde O$.
    By triangle inequality, $\onecost(O,O^S)\leq \onecost(O,\tilde O) - \onecost(O^S,\tilde O)< 1$.
    Hence attributes $O$ is accepted by  the simultaneous mechanism $(\classifier_A^S,\classifier_B^S)$.
    This implies that the distance between the boundary line of $\classifier_A$ and $\classifier_A^S$ is less than $1/\eta$.
    Hence the simultaneous mechanism $(\classifier_A^S,\classifier_B^S)$ is feasible.

Similarly, there is a one-to-one mapping between the set of agent selected under simultaneous mechanism $(\classifier_A^S,\classifier_B^S)$ and the set of agent selected under the simultaneous mechanism $(\tilde\classifier_A,\tilde\classifier_B)$.
Since the density function $\density$ is weakly decreasing along $v^S$ (or $-\weights_B$), the density of each point in the set of agent selected under simultaneous mechanism $(\classifier_A^S,\classifier_B^S)$ is weakly smaller than the corresponding point in the set of agent selected under simultaneous mechanism $(\tilde\classifier_A,\tilde\classifier_B)$.
    Hence the probability of selecting an agent under  $(\classifier_A^S,\classifier_B^S)$ is smaller, implying that  $(\classifier_A^S,\classifier_B^S)$ improves upon the simultaneous mechanism $(\tilde\classifier_A,\tilde\classifier_B)$.
    
\end{proof}

\begin{proof}[Proof of \cref{lem:fix>sim Pii}]

    Without loss of generality, consider the case $\classifier_B^S\subset\classifier_B$ and $\classifier_A\subset\classifier_A^S$.

    \paragraph{Step 1: constructing stringent test $\classifier_i^{+}, i\in \{A,B\}$.}
    Let $\classifier_i^{+}$ to be the half plane obtained from shifting $\classifier_i$ along the direction of $\weights_i$ by a distance of $1/\eta$.

\paragraph{Step 2: constructing the feasible fixed order mechanism.} Consider the fixed order mechanism that uses  $\classifier_A$  as the first test and $\classifier_B^+$ as the second test.
For any qualified attributes $\features\in \classifier_A\cap\classifier_B$, if they are also in $\features\in \classifier_A\cap\classifier_B^+$, then they are selected in the fixed order mechanism.
If they are not in $\features\in \classifier_A\cap\classifier_B^+$, then it must fall in $\classifier_B^+\cap\classifier_B$.
Since such qualified attributes satisfy $\classifier_A$, they can pass the first test under the fixed order mechanism by not making any changes.
Moreover, by the construction of $\classifier_B^+$, such attributes can change to some attributes that pass $\classifier_B^+$ with cost less than one.
Therefore, this fixed order mechanism selects all qualified attributes.

\paragraph{Step 3: fixed order mechanism $(\classifier_A,\classifier_B^+,1)$ is no worse than the simultaneous mechanism.} 
To show this, we want to show that any unqualified attributes that are not selected by the simultaneous mechanism are not selected by the fixed order mechanism.

\textbf{Case 1.} Consider some  attributes $\features$  that do not satisfy $\classifier_B$. By the construction of $\classifier_B^+$, such attributes do not have a profitable one-step strategy to pass $\classifier_B^+$. We want to show that there does not exist any profitable two-step strategy for such attributes to get selected either. Suppose not and now such attributes find it profitable to first change to some attributes $\firstfeatures$ that satisfy $\classifier_A$ and then other attributes $\secondfeatures$ that satisfy $\classifier_B^+$.
Since such a strategy is profitable we must have $\onecost(\features,\firstfeatures)+\onecost(\firstfeatures,\secondfeatures)\leq1$.
By triangle inequality, we have $\onecost(\features,\secondfeatures)\leq1$, which implies the existence of a profitable one-step strategy for such attributes. A contradiction. Hence attributes that do not satisfy $\classifier_B$ can never be selected by the fixed mechanism.

\textbf{Case 2.} Consider some  attributes $\features$  that do not satisfy $\classifier_A$. Suppose such  attributes $\features$  are not selected by the simultaneous mechanism. This implies that such attributes can only get selected by the fixed order mechanism through some two-step strategy. Suppose such attributes find it profitable to first change to some attributes $\firstfeatures$ that satisfy $\classifier_A$ and then other attributes $\secondfeatures$ that satisfy $\classifier_B^+$. Since it is a profitable strategy, we must have $\onecost(\firstfeatures,\secondfeatures)\leq1$ and hence $\firstfeatures\in \classifier_B$.
Therefore, $\firstfeatures\in \classifier_A\cap \classifier_B$, i.e., $\firstfeatures$ are qualified and selected by the simultaneous mechanism.
Since $\features$  are not selected by the simultaneous mechanism, $\firstfeatures$ must not satisfy the two tests $\classifier_i^S,i\in \{1,2\}$ of the simultaneous mechanism at the same time, i.e., $\firstfeatures\not\in\classifier_A^S\cap \classifier_B^S$. We know that $\firstfeatures$ satisfies $\classifier_A$ and therefore must satisfy $\classifier_A^S$. This implies that $\firstfeatures$ do not satisfy $\classifier_B^S$.
However, since the simultaneous mechanism is feasible, $\firstfeatures$ are selected under the simultaneous mechanism.
This implies that there exists at least one profitable one-step strategy for $\firstfeatures$ to satisfy $\classifier_B^S$, i.e., $\min_{\genericfeatures\in \classifier_B^S}\onecost(\firstfeatures,\genericfeatures)<1$.
Since $\classifier_B^+\subset\classifier_B^S$, we can find the attributes $\features'$ that are the intersection of the line connecting $\features$ and $\secondfeatures$ and the boundary of $\classifier_B^S$.
We then have  $\onecost(\firstfeatures,\features')\leq \min_{\genericfeatures\in \classifier_B^S}\onecost(\firstfeatures,\genericfeatures)<1$.
Since $\classifier_B^+\subset\classifier_B^S$, we have $\onecost(\firstfeatures,\features')< \onecost(\firstfeatures,\secondfeatures)=\min_{\genericfeatures\in \classifier_B^+}\onecost(\firstfeatures,\genericfeatures)$.
By triangle inequality, we have $\onecost(\features,\features')\leq \onecost(\features,\firstfeatures)+\onecost(\firstfeatures,\features')\leq \onecost(\features,\firstfeatures)+\onecost(\firstfeatures,\secondfeatures)\leq1$.
Since $\features'$ satisfy the two tests $\classifier_i^S,i\in\{1,2\}$ under the simultaneous mechanism, this implies that $\features$ are also selected under the simultaneous mechanism. A contradiction.

We conclude that any unqualified attributes that are not selected by the simultaneous mechanism are not selected by the fixed order mechanism. Hence the fixed order mechanism is no worse than the simultaneous mechanism.

\end{proof}

\end{document}